\newtheorem{theorem}{Theorem}[section] 
\newtheorem{lemma}[theorem]{Lemma}
\newtheorem{definition}[theorem]{Definition}
\newtheorem{observation}[theorem]{Observation}
\newtheorem{corollary}[theorem]{Corollary}
\newtheorem{remark}{Remark}					
\newtheorem{proposition}[theorem]{Proposition}
\newtheorem{example}[theorem]{Example}
\newtheorem{fact}[theorem]{Fact}
\newcommand{\abs}[1]{\left|#1\right|}		
\newcommand{\st}{\,|\,} 				 		
\newcommand{\E}{\mathop{\mathbb{E}}}  		
\newcommand{\R}{\mathbb{R}}  				
\newcommand{\N}{\mathbb{N}} 			  		
\newcommand{\F}{\mathbb{F}}					
\newcommand{\C}{\mathbb{C}}					
\newcommand{\Z}{\mathbb{Z}}  				
\newcommand{\mc}{\mathcal}
\newcommand{\mb}{\mathbb}
\newcommand{\eps}{\mathop{\epsilon}}
\newcommand{\set}[1]{\left\{ #1 \right\}}   
\newcommand{\ip}[1]{\left\langle #1 \right\rangle}     
\newcommand{\brac}[1]{\left( #1 \right)}    
\newcommand{\sqbrac}[1]{\left[ #1 \right]}  
\newcommand{\Stab}{\textnormal{Stab}} 				
\newcommand{\Var}{\textnormal{Var}}				
\newcommand{\norm}[1]{\left\vert #1 \right\vert}			
\newcommand{\Norm}[1]{\left\Vert #1 \right\Vert}			
\newcommand{\val}{\textnormal{val}} 	 			
\newcommand{\ts}{\textsuperscript}
\newcommand{\supp}{\textnormal{supp}}
\newcommand{\T}{\textnormal{T}}
\newcommand{\Win}{\textnormal{Win}}
\newcommand{\Lose}{\textnormal{Lose}}
\newcommand{\ind}{\mathbbm{1}}
\let\leq=\leqslant 
\let\geq=\geqslant %
\begin{document}

\title{An Analytical Approach to Parallel Repetition \\ via CSP Inverse Theorems}
\author{
Amey Bhangale\thanks{Department of Computer Science and Engineering, University of California, Riverside. Supported by the Hellman Fellowship award and NSF CAREER award 2440882.}\qquad\quad
Mark Braverman\thanks{Department of Computer Science, Princeton University. Research supported in part by the NSF Alan T. Waterman Award, Grant No. 1933331.}\qquad\quad
Subhash Khot\thanks{Department of Computer Science, Courant Institute of Mathematical Sciences, New York University. Research supported by NSF Award CCF-2515155, and the Simons Investigator Award.}\\[1.5ex]
Yang P. Liu\thanks{Department of Computer Science, Carnegie Mellon University.}\qquad\quad
Dor Minzer\thanks{Department of Mathematics, Massachusetts Institute of Technology. Research supported by NSF CCF award 2227876 and NSF CAREER award 2239160.}\qquad\quad
Kunal Mittal\thanks{Department of Computer Science, Courant Institute of Mathematical Sciences, New York University. Research supported by NSF Award CCF-2007462, and Simons Investigator Awards to Subhash Khot and Ran Raz.}}
\date{}			 					
\maketitle

\begin{abstract}
Let $\mathcal{G}$ be a $k$-player game with value $<1$, whose query distribution
is such that no marginal on $k-1$ players admits a non-trivial Abelian embedding.
We show that for every $n\geq N$, the value of the $n$-fold parallel repetition of $\mathcal{G}$ is
$$ \text{val}(\mathcal{G}^{\otimes n}) \leq \frac{1}{\underbrace{\log\log\cdots\log}_{C\text{ times}} n}, $$
where $N=N(\mathcal{G})$ and $1\leq C\leq k^{O(k)}$ are constants. As a consequence, we obtain a parallel repetition theorem for all $3$-player games whose query distribution is pairwise-connected.
Prior to our work, only inverse Ackermann decay bounds were known for such games~\cite{V96}.

As additional special cases, we obtain a unified proof for all known parallel repetition theorems, albeit with weaker bounds:
\begin{enumerate}
    \item A new analytic proof of parallel repetition for all 2-player games \cite{Raz98,Hol09,DS14}.
    \item A new proof of parallel repetition for all $k$-player playerwise connected games \cite{DHVY17,GHMRZ22}.
    \item Parallel repetition for all $3$-player games (in particular $3$-XOR games) whose query distribution has no non-trivial Abelian embedding into $(\mathbb{Z}, +)$ \cite{BKM23ghz,BBKLM25}.
    \item Parallel repetition for all 3-player games with binary inputs~\cite{HR20,GHMRZ21, GHMRZ22, GMRZ22}.
\end{enumerate}
\end{abstract}

\newpage

\vspace*{-3.5\baselineskip}
\tableofcontents

\newpage

\section{Introduction}
\label{sec:intro}

In a $k$-player game, a verifier samples a \emph{question} $X = (X^1, \dots, X^k)$ from a distribution $Q$ over $\mc{X} := \mc{X}^1 \times \dots \times \mc{X}^k$, where each $\mc{X}^i$ is a finite set, and gives $X^i$ to the $i$\textsuperscript{th} player. Then, for each $i \in \{1, \dots, k\}$, the $i$\textsuperscript{th} player answers with $A^i := f^i(X^i)$ for some function $f^i: \mc{X}^i \to \mc{A}^i$, and sends this to the verifier. The values $A^i\in \mc{A}^i$ are known as the \emph{answers}, and let $\mc{A} := \mc{A}^1 \times \dots \times \mc{A}^k$. The verifier accepts if $V((X^1, \dots, X^k), (A^1, \dots, A^k)) = 1$, where $V: \mc{X} \times \mc{A} \to \{0, 1\}$ is a predicate known to all players. This defines a game $\mc{G} = (\mc{X}, \mc{A}, Q, V)$, and we define the game's value, denoted $\val(\mc{G})$, as the maximum acceptance probability (with respect to the distribution $Q$) over all possible player strategies. See Definitions \ref{defn:multiplayer_games}, \ref{defn:game_value} for more precise definitions.

A natural question that arises is: how does the value of a game behave under parallel repetition? The \emph{$n$-fold parallel repetition} of $\mc{G}$, which we denote as $\mc{G}^{\otimes n}$, is the game where the verifier samples $n$ questions $(X_1, \dots, X_n) \sim Q^{\otimes n}$ independently, and sends the $i$\textsuperscript{th} coordinate of all of $(X_1, \dots, X_n)$ to the $i$\textsuperscript{th} player simultaneously; each player then needs to respond $n$ answers, one for each instance of the game. The players win the repeated game if they win each one of the $n$ instances.
See Definition~\ref{defn:game_parrep} for a more precise definition.

The parallel repetition of two-player games is by now well understood.
Originally proposed by~\cite{FRS94} as a means of amplifying the advantage in interactive protocols, it is now known that the value of a two-player game decays exponentially under parallel repetition whenever $\val(\mc{G}) < 1$.
This exponential decay was first established by Raz~\cite{Raz98} using information-theoretic techniques. Subsequent works have simplified Raz’s proof and strengthened the quantitative bounds~\cite{Hol09,BRRRS09, Rao11,RR12,DS14,BG15}.
With the exception of~\cite{DS14}, most of these works follow the information-theoretic framework introduced by Raz. The work of~\cite{DS14}, in contrast, introduces an analytic approach applicable in the case where the game~$\mathcal{G}$ is a projection game (which is the case in the majority of applications, especially those pertaining to probabilistically-checkable-proofs).

Although it might appear plausible that the na\"{i}ve bound $\val(\mathcal{G}^{\otimes n}) \le \val(\mathcal{G})^n$ should hold, this turns out to be false~\cite{For89,Fei91,FV02,Raz11}.
This failure is also well understood: it is intimately connected to the geometry of high-dimensional Euclidean tilings (see~\cite{FKO07,KORW08,AK09,BMtilesym}), and in particular to the existence of bodies of volume~1 and surface area~$\Theta(\sqrt{n})$ that tile~$\mathbb{R}^n$.

Parallel repetition of two-player games has found numerous applications across several domains, including interactive proofs~\cite{BGKW88}, communication complexity~\cite{PRW97,BBCR13,BRWY13}, quantum information~\cite{CHTW04,BBLV13}, and hardness of approximation~\cite{FGLSS96,ABSS97,ALMSS98,AS98,BGS98,Fei98,Has01,Khot02a,Khot02b,GHS02,DGKR05,DRS05}.
The reader is referred to the survey~\cite{Raz10} for more details.

Parallel repetition of $k$-player games for $k \ge 3$ is much more poorly understood. Even for $k = 3$, the best general bound on $\val(\mc{G}^{\otimes n})$ by \cite{V96} is very weak: approximately $(\log^{\ast q} n)^{-1}$, where $q = \abs{\supp(Q)}$ is the number of questions in $\mc{G}$, and $\log^{\ast q} n$ is defined recursively as the number of times one needs to apply $\log^{\ast (q-1)}$ starting from $n$ to get down to a constant. This was proven by directly invoking the Density-Hales-Jewett theorem \cite{FK91, DHJ12}. More recently, there has been renewed interest in proving parallel repetition theorems for restricted classes of $k$-player games for $k \ge 3$. \cite{DHVY17} used information-theoretic techniques to prove parallel repetition theorems with exponential decay, i.e., $\val(\mc{G}^{\otimes n}) \le \exp(-\Omega(n))$ for connected games (see Definition \ref{defn:connection_graph}), which are those where the following graph on question tuples is connected: draw an edge between $X, X' \in \supp(Q)$ if $X$ and $X'$ differ in a single coordinate. After this, a series of works \cite{HR20,GHMRZ21,GHMRZ22, GMRZ22} established parallel repetition theorems for all $3$-player games with binary inputs, i.e., $\mc{X}^1 = \mc{X}^2 = \mc{X}^3 = \{0, 1\}$, with polynomial decay, i.e., $\val(\mc{G}) \le n^{-\Omega(1)}$. These works contained three key technical points (among many others) that we mention. First, the work \cite{GHMRZ22} proved polynomial decay for all $k$-player playerwise connected games, which are games where the projection of the connectivity graph (defined in Definition \ref{defn:coord_conn_graph}) to single coordinates are all connected graphs. Second, this was combined with previous works \cite{HR20,GHMRZ21} studying a particular GHZ game, whose question distribution is supported on $\{(x_1, x_2, x_3) \in \{0,1\}^3 : x_1 + x_2 + x_3 \equiv 0 \pmod{2} \}$, and is not playerwise connected (because the connectivity graph has no edges); more recently, parallel repetition with exponential decay was established for the GHZ game \cite{BKM23ghz} and was even extended to all 3-XOR games \cite{BBKLM25, BBKLM25} with a certain assumption on the underlying distribution.
Third, the works~\cite{GHMRZ22, GMRZ22}, proved a polynomial decay bound for games whose questions are supported on $\set{(0,0,1),(0,1,0),(1,0,0)}\subseteq \set{0,1}^3$; such games are intimately connected to the length-3 density Hales-Jewett problem, bounds for which have been significantly improved very recently~\cite{BKLM24c}.

Proving improved bounds on the parallel repetition of multiplayer games can lead to several interesting applications.
It is known that a strong parallel repetition theorem for a certain class of multiplayer games implies super-linear lower bounds for non-uniform Turing machines~\cite{MR21}.
Also, parallel repetition of multiplayer games in the large alphabet regime is equivalent to many important problems in high-dimensional extremal combinatorics, like the density Hales-Jewett problem, and that of square free sets in finite fields~\cite{FV02, HHR16, Mit25}.
Additionally, as mentioned in~\cite{DHVY17}, it is suspected that the techniques used to prove multiplayer parallel repetition bounds may lead to an improved understanding of multipary communication complexity in the number-on-forehead (NOF) model, a problem intimately connected to circuit lower bounds. 

The primary goal of this work is to present a new analytic framework for proving parallel repetition theorems.
This framework utilizes recent inverse theorems for $k$-wise correlations over high-dimensional distributions (e.g., from \cite{BKLM24a,BKLM24b}). While our main theorem requires a certain assumption on the input question distribution, it is sufficiently general and is able to reprove all currently known parallel repetition theorems, and much more. However, we get bounds that are weaker than the best known ones in many cases (a finite number of repeated logarithms), but still superior to the general bounds obtained by Verbitsky \cite{V96}.

\subsection{Our Results}
\label{subsec:results}

To state our main result, we need to introduce a few notions to describe the assumptions we make on the underlying input distribution.
A distribution is pairwise connected if the support of its projection to any two coordinates forms a connected bipartite graph:
\begin{definition}\label{defn:pair_conn} (Pairwise-Connected)
	Let $k\in \N$, let $\Sigma_1,\dots,\Sigma_k$ be finite sets, and let $\mc S\subseteq \Sigma_1\times\dots\times\Sigma_k$.
	We say that $\mc S$ is \emph{pairwise-connected} if for every $1\leq i<j\leq k$, the bipartite projection graph $\mc S_{i,j}$ is connected: this has vertex set $\Sigma_i \cup \Sigma_j$, and edge set \[\set{(x_i,x_j): \exists y\in \prod_{t\in [k]\setminus\set{i,j}} \Sigma_t,\ (x_i,x_j, y)\in \mc S} \subseteq \Sigma_i\times \Sigma_j.\]
	We say that a distribution $\mu$ over $\Sigma_1\times\dots\times\Sigma_k$ is pairwise-connected if $\supp(\mu)$ is pairwise-connected.
\end{definition}

We define the notion of Abelian embeddings, and games/distributions that have no-Abelian-embeddings and no-marginal-Abelian-embeddings.

\begin{definition}\label{defn:ab_emb} (Abelian Embeddings)
	Let $k\in \N$, let $\Sigma_1,\dots,\Sigma_k$ be finite sets, and let $\mc S\subseteq \Sigma_1\times\dots\times\Sigma_k$.
	An Abelian embedding of $\mc S$ is a tuple $(G,\sigma_1,\dots,\sigma_k)$, where $G$ is an Abelian group, and $\sigma_i:\Sigma_i\to G$, $i\in [k]$ are mappings such that for each $(a_1,\dots,a_k)\in \mc S$, it holds that $\sum_{i=1}^k\sigma_i(a_i)=0_G$.
	Such an Abelian embedding is called non-trivial if not all of the maps $\sigma_i$ are constant.
	
	We say that $\mc S$ has no-Abelian-embeddings if it admits no non-trivial Abelian embedding.
    Similarly, we say that a distribution $\mu$ over $\Sigma_1\times\dots\times\Sigma_k$ has no-Abelian-embeddings if $\supp(\mu)$ has no-Abelian-embeddings.
\end{definition}

\begin{definition} (Marginal Abelian Embeddings)\label{defn:pair_conn_marginal}
	Let $k\in \N$, let $\Sigma_1,\dots,\Sigma_k$ be finite sets, and let $\mu$ be a distribution over $\Sigma_1\times \dots \times \Sigma_k$ such that:
	\begin{enumerate}
		\item $\mu$ is a pairwise-connected distribution (see Definition~\ref{defn:pair_conn}).
		\item There exists two distinct indices $i_1,i_2 \in [k]$, such that the $(k-1)$-marginals $\mu_{-i_1}$ and $\mu_{-i_2}$ admit no-Abelian-embeddings (see Definition~\ref{defn:ab_emb}).
	\end{enumerate}
	Then, we say that $\mu$ is pairwise-connected with no-marginal-Abelian-embeddings.
\end{definition}

Our main theorem is a parallel repetition theorem for distributions with no-marginal-Abelian-embeddings. In the next section, we give several applications.
\begin{theorem}
\label{thm:main} \label{thm:pairconn_parrep}
	Let $\mc G = (\mc X, \mc A, Q, V)$ be a $k$-player game with $\val(\mc G)<1$, such that the distribution $Q$ is pairwise-connected with no-marginal-Abelian-embeddings (see Definition~\ref{defn:pair_conn_marginal}).
	Then, there exists a constant $C\in \N,\ C\leq k^{O(k)}$, such that for every sufficiently large $n\in \N$, \[\val(\mc G^{\otimes n}) \leq \frac{1}{\log \log\cdots\log n},\]
	where the number of logarithms is $C$.
\end{theorem}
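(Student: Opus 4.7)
The plan is to prove the theorem by contradiction via a nested reduction whose depth is $C\leq k^{O(k)}$, each layer of which converts a $k$-wise correlation estimate into either a single-shot strategy exceeding $\val(\mc G)$ or a reduction to a game on a smaller distribution. Suppose strategies $f^1,\dots,f^k$ for $\mc G^{\otimes n}$ achieve value at least $\delta := 1/\log^{(C)}(n)$; the aim is to exhibit strategies for $\mc G$ with value exceeding $\val(\mc G)$. The central analytic tool is the inverse theorem for $k$-wise correlations from \cite{BKLM24a,BKLM24b}: if $\mu$ is a distribution over $\Sigma_1\times\cdots\times\Sigma_k$ whose $(k-1)$-marginal on some index admits no non-trivial Abelian embedding, then any functions $g_i:\Sigma_i\to[-1,1]$ with $|\E_\mu[\prod_i g_i(x_i)]|\geq\eta$ must be structured, in the sense that some $g_i$ becomes essentially constant after conditioning on a bounded-range random variable whose alphabet size is some function $R = R(\eta)$.

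First I would perform the standard contribution-averaging step (cf.\ Raz): by a pigeonhole over subsets of $[n]$, there exists a subset $T\subseteq[n]$ of bounded size such that, conditioned on the event $W_T$ of winning all coordinates in $T$, there is a coordinate $j^*\notin T$ on which the induced single-coordinate strategies win with probability at least $\val(\mc G)+\eta$ for some $\eta=\mathrm{poly}(\delta)$, and moreover the input distribution on coordinate $j^*$ given $W_T$ is close in relative entropy to $Q$. The pairwise-connectedness of $Q$ is what ensures conditional marginals retain full support, so the conditioned distribution can be coupled with $Q$ with small distortion, yielding a single-shot $\mc G$-strategy with advantage $\Omega(\eta)$ using shared randomness.

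Next, by expanding the verifier's predicate into a Fourier or character basis, the advantage $\eta$ translates into a $k$-wise correlation estimate of the form $|\E_Q[g_1\cdots g_k]|\gtrsim \eta$ for strategy-derived functions $g_i$. Applying the inverse theorem together with the no-marginal-Abelian-embeddings hypothesis forces one player's answer function to be essentially constant after conditioning on a small side-information random variable $U$. Folding $U$ into the protocol as common randomness reduces the $k$-player game to a $(k-1)$-player game whose input distribution is a fibration of $Q$ over the atoms of $U$; one must then verify that on typical fibers this distribution is still pairwise-connected with no-marginal-Abelian-embeddings, and recurse. The base case $k=2$ is discharged by the analytic parallel repetition theorem of \cite{DS14}.

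The main obstacle, beyond the hereditary verification of the hypotheses across conditioning, is quantitative: each recursion layer converts a value of $\delta$ at length $n$ to a value of $\delta^{\Theta(1)}$ at length $n'\approx \log(n)/\mathrm{poly}(k)$, because the alphabet size $R(\eta)$ appearing in the inverse theorem is tower-type in $1/\eta$, and the contribution-averaging plus relative-entropy coupling each cost a logarithmic factor in $n$. Since each of the $k$ players may require a super-polynomial (in $k$) number of elimination substeps — the inverse theorem is itself proved by an induction that peels off one coordinate at a time — the total number of recursion layers is $C\leq k^{O(k)}$, producing the iterated-logarithm bound in the theorem. The secondary difficulty is choosing the side-information $\sigma$-algebra $U$ to be measurable with respect to the combinatorial atoms of the projected distribution, so that conditioning on $U$ cannot introduce a spurious Abelian embedding on a $(k-2)$-marginal of the surviving players; this is where the full strength of the pairwise-connectedness hypothesis (rather than merely the absence of Abelian embeddings) is used at every level of the recursion.
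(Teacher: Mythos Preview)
Your proposal misreads the conclusion of the inverse theorem and builds a recursion on that misreading. Theorem~\ref{thm:pair_conn_inv} does \emph{not} say that some $g_i$ becomes essentially constant after conditioning on a bounded-range variable; it says that $g_i$ fails to be $(\delta n,\delta)$-product-pseudorandom, i.e., after a random restriction it correlates nontrivially with a \emph{product function}. That structure is far weaker than ``constant on atoms of a coarse $\sigma$-algebra,'' and there is no clean way to convert it into the elimination of a player. Consequently, your player-reduction step---folding $U$ in and passing to a $(k{-}1)$-player game---does not follow from the inverse theorem as stated, and even if it did, you would face exactly the hereditary obstacle you flag: the fibered distribution $Q\mid U$ need not remain pairwise-connected or retain the no-marginal-Abelian-embeddings property, and you give no mechanism to enforce this.

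The paper takes a different route that never reduces the number of players. Instead of extracting structure from a violation of product-pseudorandomness, it \emph{forces} product-pseudorandomness: it introduces generalized random restrictions (Definition~\ref{defn:grr}) and shows, via an $\ell_2$-energy increment (Lemma~\ref{lemma:uniformization_increment}, Proposition~\ref{prop:uniformization}), that any fixed family of $O(1)$ functions can be made product-pseudorandom after such a restriction. Once the indicator-and-answer functions of Definition~\ref{defn:pairwise_fns_to_make_pseudo} are product-pseudorandom, Corollary~\ref{corr:pair_conn_bklm} lets the $k$-wise correlation in the embedding expression~\eqref{eq:winningprob} be replaced by a product of $k$ marginals, which is exactly what the honest single-copy strategy computes---no player is ever removed. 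The iterated-logarithm loss and the constant $C\le k^{O(k)}$ come from the tower-type dependence inside the inverse theorem itself together with the nested generalized-restriction/increment argument (Sections~\ref{sec:pairwise_rr_information}--\ref{sec:pairwise_induction}), not from a recursion on $k$. Your proposal is missing both the generalized-random-restriction machinery and the $\ell_2$ increment that drives it, and without those the argument does not close.
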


To understand the above definitions and our main theorem better, we make a few simple remarks:

\begin{remark}\label{remark:marginal_pairconn_defn}
Let $\mc S\subseteq \Sigma_1\times \dots \times \Sigma_k$, and let $\mu$ be a distribution with support $\mc S$.
Then,
\begin{enumerate}
    \item Suppose that $\mc S$ has no-Abelian-embeddings, then for every $j\in [k]$, the \emph{projection} of $\mc S$ onto coordinate $j$ equals $\Sigma_j$.
    Otherwise, a non-trivial Abelian embedding is obtained as follows: the map $\sigma_j$ assigns arbitrary non-zero values to elements of $\Sigma_j$ that never occur in $\mc S$, and everything else maps to zero.

    \item\label{remtim} Suppose that $\mu$ is not pairwise connected, then it admits a non-trivial Abelian embedding into any (non-trivial) Abelian group $G$, as follows: Suppose for some $i,j\in [k]$, there are partitions $\Sigma_i = \Sigma_i^{(1)}\sqcup \Sigma_i^{(2)},\ \Sigma_j = \Sigma_j^{(1)}\sqcup \Sigma_j^{(2)}$ such that the graph $\mc S_{i,j}$ has edges contained in $\brac{\Sigma_i^{(1)}\times \Sigma_j^{(1)}} \cup \brac{\Sigma_i^{(2)}\times \Sigma_j^{(2)}}$.
    Then, the map $\sigma_i$ assigns a value $g\not=0$ to each element of $\Sigma_i^{(1)}$, and  $\sigma_j$ assigns value $-g$ to each element of $\Sigma_j^{(1)}$; everything else maps to zero.
    
    \item Suppose that $\mu$ is such that some marginal $\mu_T$, for $T\subseteq [k]$, has a non-trivial Abelian embedding. Then, $\mu$ also has a non-trivial Abelian embedding.
    This simply follows by extending the embedding by choosing $\sigma_j$ as the zero map for $j\not\in T$.

    \item The two points above imply that if $\mu$ has no-Abelian embeddings, then it is pairwise-connected with no-marginal-Abelian-embeddings.
    Hence, Theorem~\ref{thm:main} gives bounds for all games with no-Abelian-embeddings; this special case is also proven directly in Section~\ref{sec:ab_emb}.

    \item Any set $\mc S$ that is connected (Definition~\ref{defn:connection_graph}), or even coordinate-wise connected (see Definition~\ref{defn:coord_conn_graph}) has no-Abelian-embeddings. This follows from Lemma~\ref{lemma:conn_implies_nae}.
    
	\item With the above, we get that for $k=3$, Definition~\ref{defn:pair_conn_marginal} is equivalent to $\mu$ being pairwise-connected.
    Hence, Theorem~\ref{thm:main} gives bounds for all 3-player pairwise-connected games.
    
	\item Note that the second condition in Definition~\ref{defn:pair_conn_marginal} holds if $\mu_{-i}$ admits no-Abelian-embeddings for each $i\in[k]$.
    In this case, by the above observations, it also follows that $\mu$ is pairwise-connected, making that condition redundant in the definition.
    
    A slightly more careful analysis also shows that pairwise-connectivity follows if three of the $(k-1)$-marginals $\mu_{-i}$ admit no-Abelian-embeddings (unlike two, as required in Definition~\ref{defn:pair_conn_marginal}).
\end{enumerate}
\end{remark}

We also show a simple (and naturally arising) game for which our theorem improves upon the state-of-the-art.

\begin{example} (Random 3-CNF Game; \cite[Example 1.5]{GHMRZ22})
Consider a random 3-CNF formula $\varphi = (C_1,C_2,\dots,C_m)$, with $m$ clauses over $d$ variables, generated as follows: Sample each clause independently and uniformly from the set of all $(2d)^3 = 8d^3$ clauses; that is, each clause contains three variables, either negated or not, each chosen randomly.

For a fixed formula $\varphi$, we consider the following 3-player game ${\mc G}_{\varphi}$.
The verifier samples $r\in [m]$ uniformly at random, and gives variables corresponding to the literals in $C_r$ to the three players respectively, with each player getting one of the three variables.
The players answer back with a value for the variable they get, and the verifier accepts if these values satisfy the clause $C_r$.

It was shown in~\cite{GHMRZ22} that with high probability:
\begin{enumerate}
    \item If $m=\omega(d)$, the value of the game ${\mc G}_{\varphi}$ is close to 7/8, and hence $< 1$.
    \item If $m=\omega(d^2\log d)$, the game ${\mc G}_{\varphi}$ is connected, and its parallel repetition has exponential decay~\cite{DHVY17}.
    On the other hand, if $m=o(d^2)$, the game is not connected.

    \item If $m=\omega(d^{1.5}\sqrt{\log d})$, the game ${\mc G}_{\varphi}$ is playerwise connected, and its parallel repetition has polynomial decay~\cite{GHMRZ22}.
    On the other hand, if $m=o(d^{1.5})$, the game is not playerwise connected.
\end{enumerate}
Our Theorem~\ref{thm:main} can be used to close the gap above, and give an effective parallel repetition bound when $m=\omega(d\log d)$.
This follows by observing that the game ${\mc G}_{\varphi}$ is pairwise connected when $m=\omega(d\log d)$:
For any two players, the bipartite projection graph (Definition~\ref{defn:pair_conn}) is simply a bipartite graph on $[d]\times [d]$ obtained by choosing $m$ edges uniformly and independently; when $m=\omega(d\log d)$, this is connected with high probability~\cite{Pal64}.
\end{example}

\subsection{Special Cases}
\label{subsec:special}

In this section, we explain how our parallel repetition theorem (Theorem \ref{thm:main}) applies in the setting of all previously known parallel repetition theorems with ``effective bounds'', i.e., not $\log^*$ type.

\paragraph{2-player games and $k$-player connected games.} \cite{DHVY17} studied parallel repetition for connected games---this generalizes the case of two-player games because we can assume without loss of generality that any two-player game is connected when proving parallel repetition theorems.
By Lemma~\ref{lemma:conn_implies_nae} and Remark~\ref{remark:marginal_pairconn_defn}, it follows that every connected game has no-Abelian-embeddings, and is hence pairwise-connected with no-marginal-Abelian-embeddings.
Thus, Theorem~\ref{thm:main} applies to such games.

\paragraph{Playerwise connected games.} These games were studied by \cite{GHMRZ22} as a step towards proving parallel repetition theorems for all $3$-player games with binary inputs.
It follows by Lemma~\ref{lemma:conn_implies_nae} and Remark~\ref{remark:marginal_pairconn_defn} that all playerwise connected games have no-Abelian-embeddings, and are hence pairwise-connected with no-marginal-Abelian-embeddings. Thus, Theorem \ref{thm:main} applies to such games.

\paragraph{Pairwise-connected $3$-player games.} \cite{BBKLM25}, building off \cite{BKM23ghz}, established parallel repetition theorems for $3$-XOR games whose question distributions have no non-trivial Abelian embeddings into the integers under addition, i.e., $(\mb{Z}, +)$.
By Remark~\ref{remark:marginal_pairconn_defn}, all such games are pairwise-connected, and our Theorem \ref{thm:main} in fact applies to \emph{all} $3$-player games that are pairwise connected.

\paragraph{$3$-player games with binary inputs.} The work \cite{GHMRZ22} established parallel repetition theorems with polynomial decay for all $3$-player games whose question distribution was binary, i.e., $\mc{X} = \{0, 1\}^3$. Our Theorem \ref{thm:main} recovers this result, with weaker bounds. Indeed, by Remark~\ref{remark:marginal_pairconn_defn}, it suffices to argue that we can reduce to the case of pairwise-connected games. Towards this, let $\mc{S}$ be the support of $Q$, and without loss of generality consider the projection of $\mc{S}$ onto the first two coordinates. It has between $0$ and $4$ edges. The cases of $3$ and $4$ edges lead to connected graphs. $0$ edges is impossible because $|\mc{S}| \ge 1$. In the case of one edge, both players 1 and 2 know their input deterministically, and thus the number of players can be reduced. In the case of $2$ edges, either they share an endpoint or not. If they share an endpoint, again some player knows their input deterministically, and thus the number of players can be reduced. If the edges don't share an endpoint, they must either be $\{(0,0), (1,1)\}$ or $\{(0,1), (1,0)\}$. In the former case, players 1 and 2 have the same input, so they can be merged into a single player. In the latter case, they always have opposite inputs, so they can be merged into a single player again (eg. imagine always negating the input of player 2, so that now they have the same input as player 1). Thus, Theorem \ref{thm:main} establishes parallel repetition theorems for all $3$-player games over binary alphabets.

\subsection{Organization}

In Section~\ref{sec:overview}, we give an overview of our proofs.
In Section~\ref{sec:prelim}, we establish some preliminaries.
Then, in Section~\ref{sec:ab_inv_rr} and Section~\ref{sec:pair_inv_rr}, we state the CSP inverse theorems relevant to this work, define generalized random restrictions, and prove several results regarding these.
In Section~\ref{sec:multiplayer_games}, we formally define the notions of multiplayer games and parallel repetition.
Then, in Section~\ref{sec:ab_emb}, we prove parallel repetition for games with no-Abelian-embeddings, an important special case of our main theorem.
Finally, in Section~\ref{sec:paircon_parrep}, we prove our main theorem.

\section{Overview}
\label{sec:overview}

In this section, we give a proof overview for our main result: parallel repetition for all $k$-player games which are pairwise connected and such that any projection to $k-1$ coordinates have no-Abelian-embeddings. We start by giving the general setup for our parallel repetition proofs. Then, we discuss the main inverse theorems from prior works underlying our proofs; this is the only place where the structure of the support of the query distribution is used. Then, we overview the proof in the simplified case where the distribution itself has no-Abelian-embeddings.
Finally, we discuss the additional pieces required to obtain the main result.

\subsection{Setup for Parallel Repetition Proofs}
\label{subsec:setup}

The information-theoretic proofs of parallel repetition, such as the ones for two-player games \cite{Raz98,Hol09}, at a very high-level, take the following approach. 
Consider the game $\mc G^{\otimes n}$, and for each $i\in [n]$, let $\Win_i$ be the event that the $i$\textsuperscript{th} coordinate is won.
Observe that for any permutation $i_1,i_2,\dots,i_n$ of $[n]$, we can write
\[\Pr[\Win_1 \land  \Win_2 \land\cdots\land  \Win_n] =  \prod_{k=1}^n  \Pr[\Win_{i_k}\mid\Win_{i_1}\land\cdots\land\Win_{i_{k-1}}]. \]
We know that $\Pr[\Win_i] \le \val(\mc{G}) < 1$ for all $i$, where the probability is over the random questions. Now, let $E$ be the event $\Win_1$. From here, it would be very natural to prove that there is some $i \in \{2, \dots, n\}$ such that $\Pr[\Win_i \mid E] < 1$ still, and even stronger, that
\[ \Pr[\Win_i \mid E] \le \val(\mc{G}) + o_n(1). \]
If this were true, we could condition on $\Win_i$ for another coordinate $i$, and continue inductively. Roughly speaking, the key lemma in these approaches takes the following form:

\paragraph{Claim:} For any event $E$ with $\Pr[E] \ge \alpha$, there is some coordinate $i \in [n]$ such that \[ \Pr[\Win_i \mid E] \le 1 - \Omega_{\mc{G}}(1). \]

The rate of decay in the resulting parallel repetition theorem then depends on the smallest $\alpha$ (in terms of $n$) for which we can establish such a \textbf{Claim} (we include a more precise version of this discussion in Lemma \ref{lemma:prod_set_hard_coor}).
For example, \cite{Raz98,Hol09} establishes such a claim for 2-player games, as long as $\alpha \ge \exp(-\Omega(n))$, which leads to an exponential decay rate. In our theorems, one should think of $\alpha \approx \frac{1}{\log\log \cdots\log n}$, where the number of logarithms is some large constant depending on the game $\mc G$. The remainder of this overview is devoted to explaining how to establish the \textbf{Claim} in our setting, for this choice of $\alpha$.

\subsection{CSP Inverse Theorems}
\label{subsec:cspinverse}

Our results rely on certain theorems about correlations of functions under $k$-ary $n$-dimensional distributions. Formally, let $\Sigma_1, \dots, \Sigma_k$ be finite sets and let $\mu$ be a distribution over $\Sigma_1 \times \dots \times \Sigma_k$. Let $\brac{f_i: \Sigma_i^n \to \mb{C}}_{i\in [k]}$ be $1$-bounded functions (that is, $|f_i(x)| \le 1$ for all $x \in \Sigma_i^n$). The reader should think of the $\Sigma_i$ and $\mu$ as being fixed while sending $n \to \infty$. A very general question considered by previous works was: under what conditions on $\mu$ and $f_i$ can we guarantee that
\begin{equation}
\left|\mb{E}_{(x_1, \dots, x_k) \sim \mu^{\otimes n}}\left[f_1(x_1) \dots f_k(x_k) \right] \right| \leq o_n(1)? \label{eq:kcsp}
\end{equation}
This question, naturally, has applications to analyzing dictatorship tests, property testing, and additive combinatorics. A recent line of work has partially answered this question for increasingly more general distributions $\mu$ \cite{BKM1,BKM2,BKM3,BKM4,BKM5}. For the purposes of this work, we only require the following theorems proven in \cite{BKLM24b}.

\paragraph{Projections and pairwise connectivity.} Before stating the hypotheses on $\mu$, we introduce some basic notions pertaining to the connectivity of $\mu$. Towards this, given a distribution $\mu$ on $\Sigma_1 \times \dots \times \Sigma_k$ and a subset $S \subseteq [k]$, we can naturally define the \emph{projection} of $\mu$ onto $S$ as the distribution on $\prod_{s \in S} \Sigma_s$ where we simply restrict the coordinates of $\mu$ onto $S$. Now, we say that $\mu$ is \emph{pairwise connected} if the support of $\mu_S$ for any $|S| = 2$ forms a connected graph; see Definition~\ref{defn:pair_conn} for a formal definition.

\paragraph{Distributions with no-Abelian-embeddings.} For a formal definition of Abelian embeddings, the reader is referred to Definition~\ref{defn:ab_emb}, although it is not strictly necessary for this overview. \cite{BKLM24b} proves that if $\mu$ has no-Abelian-embeddings, then \eqref{eq:kcsp} holds unless all the functions $f_i$ have non-negligible noise stability, i.e., each of the $f_i$ has non-negligible Fourier mass on low-degree terms. This is formally stated in Theorem \ref{thm:bklm_inverse}.

\paragraph{Distributions with no-marginal-Abelian-embeddings.} We say that a distribution $\mu$ has no-marginal-Abelian-embeddings if any projection onto $k-1$ coordinates has no-Abelian-embeddings; note that this is more general than the above definition (see Remark~\ref{remark:marginal_pairconn_defn}).\footnote{Everything we say going forward will work with the slightly more general Definition~\ref{defn:pair_conn_marginal} as well.} In this context, \cite{BKLM24b} proved that if \eqref{eq:kcsp} fails for such $\mu$, then each of the $f_i$ satisfies the following property: there is a product function $P_i$ such that $f_iP_i$ has non-negligible Fourier mass on low-degree terms; here, a product function $P_i:\Sigma_i^n\to\C$ takes the form $P_i(x_1,x_2,\dots,x_n) =  P_{i,1}(x_1)\cdot P_{i,2}(x_2)\cdots P_{i,n}(x_n)$, i.e., it is the product of functions depending only on a single coordinate. Later, we call such functions not \emph{product pseudorandom} (see Definition \ref{defn:prod_pseudo}). The precise statement of this result is given in Theorem \ref{thm:pair_conn_inv}.

\subsection{Games with No Abelian Embeddings}
\label{subsec:noabelian}

In this section, we overview a simplified version of Theorem \ref{thm:main}, for games whose query distribution $Q$ has no-Abelian-embeddings; this appears as Theorem \ref{thm:noabemb_parrep}. Recall from the discussion in Section \ref{subsec:setup} that our goal is to prove that given some event $E$ with $\Pr[E] \ge \alpha$ for $\alpha$ not too small, there is some coordinate $i\in [n]$ with $\Pr[\Win_i \mid E] \le \val(\mc{G}) + o(1)$. It turns out that we can assume that $E$ is a product event $E = E^1 \times \dots \times E^k$, i.e., the product of events $E^j$ that depend only on player $j$.\footnote{In Section~\ref{subsec:setup}, $E$ was the event of winning a few coordinates; roughly speaking, we can choose to work with the event $E$ which fixes some winning questions and answers in these coordinates, and this is a product event.}

A common strategy used in prior works to establish such a claim is to use an \emph{embedding strategy}: prove that the players can obtain value close to $\Pr[\Win_i\mid E]$ on using some honest strategy on a single copy of the game.
Then, this quantity must then be bounded by $\val(\mc G)$, as desired.
Here, we shall consider the following randomized strategy for a single coordinate of the game $\mc G$.
Let $\tilde{X} \sim Q$ be the query in a single copy of the game $\mc G$; the players think of this as the input in coordinate $i$ of the game $\mc G^{\otimes n}$, and do the following: each player $j \in [k]$ randomly fills out the remaining questions for coordinates $i' \neq i$ conditioned on event $E^j$, and then outputs based off of their strategy for $\mc G^{\otimes n}$ in coordinate $i$.

We want to show that the success probability of the above strategy is close to $\Pr[\Win_i\mid E]$.
For this, we first try to analytically express the probability 
$\Pr[\Win_i\mid E]$.
Towards this, we define functions $F^{j,\tilde{X}^j}: (\mc{X}^j)^{\otimes (n-1)} \to \{0, 1\}$ as follows. To define $F^{j,\tilde{X}^j}(X)$, think of $X$ as being assignments to the $n-1$ coordinates of player $j$'s input other than $i$. Then, $F^{j,\tilde{X}^j}(X) = 1$ if filling in the $i$\textsuperscript{th} coordinate with $\tilde{X}^j$ gives an input in $E^j$, and otherwise $F^{j,\tilde{X}^j}(X) = 0$. Also, for answers $\tilde{A} \in \mc{A}$, define the functions $F^{j, \tilde{X}^j, \tilde{A}^j}: (\mc{X}^j)^{\otimes (n-1)} \to \{0, 1\}$ as follows: $F^{j, \tilde{X}^j, \tilde{A}^j}(X) = 1$ if any only if $F^{j,\tilde{X}^j}(X) = 1$ and when $X$ is filled in with $\tilde{X}^j$ in coordinate $i$, player $j$'s strategy outputs $\tilde{A}^j$. Note that these are precisely the functions defined in Definition~\ref{defn:abemb_fns_pseudo} and Definition~\ref{defn:pairwise_fns_to_make_pseudo}. Then, the quantity $\Pr[\Win_i\mid E]$, which is winning probability of coordinate $i$ when conditioned on $E$, roughly\footnote{For this to be exact, $\tilde{X}$ must also be drawn conditioned on $E$.} equals
\begin{equation}
\E_{\tilde{X} \sim Q} \sum_{\tilde{A} : V(\tilde{X}, \tilde{A}) = 1} \sqbrac{\frac{\ \E_{(X^1, \dots, X^k) \sim Q^{\otimes(n-1)}}\sqbrac{F^{1,\tilde{X}^1, \tilde{A}^1}(X^1)\cdot F^{2,\tilde{X}^2, \tilde{A}^2}(X^2) \cdots F^{k,\tilde{X}^k, \tilde{A}^k}(X^k) }}{\E_{(X^1, \dots, X^k) \sim Q^{\otimes(n-1)}}\sqbrac{F^{1,\tilde{X}^1}(X^1)\cdot F^{2,\tilde{X}^2}(X^2) \cdots F^{k,\tilde{X}^k}(X^k) }}}.
\label{eq:winningprob}
\end{equation}
Observe that for fixed $\tilde{X}$ and $\tilde{A}$, the expressions in the numerator and denominator are all $k$-wise correlations over $Q^{\otimes(n-1)}$, where $Q$ by assumption is a distribution with no-Abelian-embeddings. Thus, by our above discussion (also see Theorem \ref{thm:bklm_inverse}), the expressions in the numerator and denominator behave as if the $k$ players are acting independently/honestly (that is, according to the honest strategy described above), as long as all the functions $F^{j,\tilde{X}^j}$ and $F^{j, \tilde{X}^j,\tilde{A}^j}$ have low noise stability, i.e., almost all their mass is on high-degree Fourier coefficients (except the constant term).
Hence, if this high-degree property holds for all the functions, we are done.

Thus, we have reduced the problem to ensuring that each $F^{j,\tilde{X}^j}$ and $F^{j, \tilde{X}^j,\tilde{A}^j}$ has low noise stability; note that the number of such functions is a constant (depending on $\mc G$). To achieve this, we apply random restrictions to the overall game. Precisely, Lemma \ref{lemma:random_rr_noise_stab} states that taking a random restriction with a randomly chosen probability makes any small set of functions high-degree, with high probability. Thus, in our proof, we first apply such a random restriction to the game (note that a random restriction of a game is still a valid game with many coordinates) to make the relevant functions $F^{j,\tilde{X}^j}$ and $F^{j, \tilde{X}^j,\tilde{A}^j}$ have low noise stability, and then use the inverse theorem to argue that after this the players are acting as if they have no communication, and thus $\Pr[\Win_i \mid E] \le \val(\mc{G})+o(1)$ as desired.

As a technical step, we need to prove that the questions to the players in coordinate $i$ have distribution essentially the same as $Q$, even when conditioning on $E$ and a typical random restriction as above.
Very roughly speaking, this holds because the random restriction leaves a large number of coordinates free.
We carry out this step by an information-theoretic argument, as in~\cite{Raz98, Hol09}; see Lemma~\ref{lemma:lemma:distr_change_in_i_fixp} for formal details.
We note however that the use of information theory in this step is only for simplicity, and not necessary at all---in the proof of our actual main theorem, this step becomes more far more complicated, and we use an analytic argument with no information theory.

\subsection{Games with No Marginal Abelian Embeddings}
\label{subsec:nomarginal}

We now describe the modifications needed to extend our the discussion in the above section to our main theorem (Theorem \ref{thm:main}) about games with no-marginal-Abelian-embeddings. All the discussion up through \eqref{eq:winningprob} proceeds identically. In the case of no-Abelian-embeddings, Theorem \ref{thm:bklm_inverse} said that the expressions in the numerator and denominator of \eqref{eq:winningprob} behave as if the $k$ players are independent as long as the function $F^{j,\tilde{X}^j}$ and $F^{j, \tilde{X}^j,\tilde{A}^j}$ have low noise stability. However, in the more general setting, this is no longer true and what we require instead is that the functions are \emph{product pseudorandom} (see Definition \ref{defn:prod_pseudo}), i.e., the product of $F^{j,\tilde{X}^j}$ and any product-function has low noise stability. This is much more difficult to ensure and leads to technical complications.

Fortunately, previous work \cite{BKLM24a,BKLM24c} introduced a type of random restriction that can handle functions that are not product pseudorandom. In this work, we call this a \emph{generalized random restriction} (see Definition \ref{defn:grr}), which takes the following form: some coordinates are randomly restricted, and the remaining coordinates are partitioned into sets $T_1, T_2, \dots, T_m \subseteq [n]$. Now, we force that all coordinates in a set $T_i$ take the same value. Note that this naturally takes an $n$-dimensional function and turns it into an $m$-dimensional function. Furthermore, the restriction and these sets $T_1,\dots,T_m$ are randomly chosen so as to ensure that the overall distribution on $Q^{\otimes n}$ is approximately preserved.

Ultimately, we prove that if the functions $F^{j,\tilde{X}^j}$ or $F^{j, \tilde{X}^j,\tilde{A}^j}$ are not product pseudorandom, then we can apply such a generalized random restriction to increase some $\ell_2$ energy potential function, which cannot happen forever. In this way, we ensure that the relevant functions are all product pseudorandom eventually.\footnote{For some intuition, we demonstrate a simple example where generalized restrictions are useful to make functions product-pseudorandom. Suppose $f:\set{0,1}^n\to \set{-1,1}$ is a character over $\F_2$, given by $f(x) = (-1)^{\sum_{i\in S}x_i}$, for some $S\subseteq [n]$.
If $S=[n]$, any usual random restriction of $f$, down to any number of coordinates, is still a character (possibly with a $-1$ sign), and hence not product-pseudorandom. On the other hand, there always exists a generalized random restriction, down to $m= \lfloor n^{1/3}\rfloor$ coordinates, that makes $f$ product-pseudorandom: If $\abs{S}\leq n/2$, we do a usual random restriction and fix the input $x_i$ randomly for each $i\in S$; the function after the restriction is a constant.
If $\abs{S}>n/2$, we randomly pick pairs of coordinates $T_1=\set{i_1,j_1}, T_2=\set{i_2,j_2}, \dots, T_m=\set{i_m,j_m}$ inside $S$, force both coordinates inside each $T_i$ to take the same value, and randomly restrict the coordinates in $S\setminus \cup_{i\in [m]}T_i$; after any such restriction, the function becomes a constant, since coordinates inside each $T_i$ \emph{cancel out}. Moreover, for $m= \lfloor n^{1/3}\rfloor $, as $T_i$'s were chosen randomly, the overall distribution is approximately preserved on average.
}
With this, we carry out a proof that very roughly resembles the no-Abelian-embeddings case, however, it turns out that making the argument actually work is much more technical, for reasons described next.

\paragraph{Generalized Random Restrictions.}
Working with generalized random restrictions is far more technically challenging than usual random restrictions.
They only preserve the original distribution approximately, and hence extra care is needed to analyze the error terms for each of the steps in the proof.

Moreover, it turns out that in this case, to be able to describe the honest strategy for one copy of the game, we need to be able to sample generalized random restrictions conditioned on the event $E$. A priori it is not even clear what this means, however the natural definition, via Bayes' rule, works for us.
This leads to our notion of conditional generalized random restrictions (see Definition~\ref{defn:cond_grr}).

\paragraph{Making sure random restrictions do not reveal too much information.}
A major technical hurdle that arises with the use of generalized random restrictions is that the restrictions needed depend on the set of functions we wish to make product pseudorandom.

As mentioned in the no-Abelian-embeddings case, for technical reasons, we still need to prove that the questions to the players in coordinate $i$---the coordinate we are embedding into---have distribution essentially the same as $Q$, even when conditioning on $E$ and a typical generalized random restriction.
Since the generalized random restriction here may depend both on the coordinate $i$ being considered and the event $E$, this may not even be true.\footnote{For example, suppose the inputs to the players are binary, and that $E$ says that the inputs in the first $n/2$ coordinates sum to zero (mod 2), and also that the inputs in the second $n/2$ coordinates sum to zero (mod 2). Now, for each coordinate $i$, the corresponding generalized random restriction may randomly fix the inputs to all coordinates except $i$ in the block of size $n/2$ that $i$ lies in.}
We get around this hurdle by observing that whenever this is false, it must be because the generalized random restriction, along with the question in coordinate $i$, \emph{split} the mass of $E$ unevenly. Hence, if we perform the generalized random restriction and fix question in coordinate $i$ randomly, we get an $\ell_2$ increment with respect to the mass of the set $E$, which cannot happen forever; see Section~\ref{sec:pairwise_rr_information} for formal details of this step.

\paragraph{Hard coordinates after a generalized random restriction.} Due to the aforementioned step, we don't end up proving that $\Pr[\Win_i\mid E]$ itself is small; instead, we only prove that after performing a generalized random restriction, we get many coordinates that are hard to win.
Finally obtaining a parallel repetition bound from this statement requires carrying out a very careful induction, where we alternate between steps of finding a hard coordinate and performing a generalized random restriction; see Section~\ref{sec:pairwise_induction} for formal details of this step.
\section{Preliminaries}
\label{sec:prelim}

Let $\N = \set{1,2,\dots}$ denote the set of natural numbers. For $n\in \N$, we use $[n]$ to denote the set $\set{1,2,\dots,n}$.
For any set $S\subseteq [n]$, we use $\overline{S}$ to denote its complement $[n]\setminus S$.

We use $\log(\cdot)$ to denote the function $\log_2(\cdot)$, and $\exp(\cdot)$ to denote the function $2^{(\cdot)}$.


\subsection{Probability Distributions}
We will use calligraphic letters to denote sets, capital letters to denote random variables and small letters to denote values.

Let $P$ be a distribution (over an underlying \emph{finite} set $\Omega$, which is usually clear from context). 
We use $\supp(P) = \{\omega \in \Omega : P[\omega] > 0\}$ to denote the \emph{support} of the distribution $P$.
For a random variable $X$, we use $P_X$ to denote the distribution of $X$, that is, $P_X[x] = P[X=x]$.
For random variables $X$ and $Y$, we use $P_{XY}$ to denote the joint distribution of $X$ and $Y$.

For an event $E$ with $P[E]>0$, we use $P|E$ to denote the conditional probability distribution $P$ conditioned on $E$.
Similarly, we use $(P|E)_X = P_{X|E}$ to denote the distribution of $X$ conditioned on the event $E$, given by \[P_{X|E}[x] = \frac{P\sqbrac{X=x\land E}}{P[E]}. \]
Suppose $R$ is a random variable, and $r$ is such that $P_R[r]>0$.
We will frequently use the shorthand $P_{X|r}$ to denote the distribution $P_{X|R=r}$.

Let $P_X$ and $Q_X$ be distributions over set $\mc X$.
The $\ell_1$-distance between $P_X$ and $Q_X$ is defined as 
\[\Norm{P_X-Q_X}_1 = \sum_{x\in \mc X}\norm{P_X[x]-Q_X[x]}.\]

Next, we state a useful lemma that was used in previous works on two-player parallel repetition. Informally, it says that if we condition a product distribution $P$ on some event $E$ with nonnegligible probability, then the marginals of $P$ and its conditioning are very close on average.

\begin{lemma}\cite{Raz98, Hol09}\label{lemma:pdt_distr_cond}
	Let $V=(V_1,\dots,V_n)$ be a random variable, and let $F$ be an event in some finite probability space $(\Omega, P)$.
	Suppose that $P_V = P_{V_1}\times\dots\times P_{V_n}$ is a product distribution.
	Then, we have
	\[ \frac{1}{n}\sum_{i=1}^n \Norm{P_{V_i|F}-P_{V_i}}_1 \leq \sqrt{\frac{2}{n}\log_2{\frac{1}{P[F]}}}.\]
\end{lemma}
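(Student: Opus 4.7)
The plan is to combine three standard information-theoretic ingredients---Cauchy--Schwarz, Pinsker's inequality, and the chain rule for KL divergence---while exploiting the product structure $P_V = \prod_i P_{V_i}$. The intuition is that conditioning on an event $F$ of probability $\alpha$ costs at most $\log_2(1/\alpha)$ bits of information; Pinsker's inequality then converts this information bound into an $\ell_1$ bound, and the per-coordinate factor of $1/n$ comes out by Cauchy--Schwarz.

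Concretely, I would first apply Cauchy--Schwarz to obtain
\[ \frac{1}{n}\sum_{i=1}^n \Norm{P_{V_i|F} - P_{V_i}}_1 \le \sqrt{\frac{1}{n}\sum_{i=1}^n \Norm{P_{V_i|F} - P_{V_i}}_1^2}, \]
and then invoke Pinsker's inequality in its base-$2$ form, $\Norm{P-Q}_1^2 \le 2\relent{P}{Q}$, reducing the task to proving the information bound $\sum_{i=1}^n \relent{P_{V_i|F}}{P_{V_i}} \le \log_2(1/P[F])$.

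For this divergence bound, I would apply the chain rule to the full joint relative entropy. Since $P_V$ is product, $P_{V_i|V_{<i}} = P_{V_i}$, and the chain rule gives
\[ \relent{P_{V|F}}{P_V} = \sum_{i=1}^n \E_{v_{<i}\sim P_{V_{<i}|F}}\sqbrac{\relent{P_{V_i|V_{<i}=v_{<i},F}}{P_{V_i}}}. \]
Convexity of KL in its first argument---applied to the fact that $P_{V_i|F}$ is the mixture of the distributions $P_{V_i|V_{<i}=v_{<i},F}$ with weights $P_{V_{<i}|F}(v_{<i})$---implies that each summand on the right dominates $\relent{P_{V_i|F}}{P_{V_i}}$, yielding the desired $\sum_i \relent{P_{V_i|F}}{P_{V_i}} \le \relent{P_{V|F}}{P_V}$.

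Finally, the global divergence is bounded by $\log_2(1/P[F])$ via a direct calculation: using $P_{V|F}(v) = P_V(v) \cdot P[F\mid V=v]/P[F]$,
\[ \relent{P_{V|F}}{P_V} = \E_{v\sim P_{V|F}}\sqbrac{\log_2 P[F\mid V=v]} - \log_2 P[F] \le \log_2(1/P[F]), \]
since $P[F\mid V=v] \le 1$ makes the expectation term non-positive. No single step in this plan is difficult; the only things to be careful about are using the base-$2$ form of Pinsker to match the $\log_2$ on the right-hand side, and recognizing that it is convexity of KL (rather than data processing) that relates the marginal divergences on each $V_i$ to the conditional divergences produced by the chain rule.
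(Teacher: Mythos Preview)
Your proposal is correct and follows essentially the same route as the paper's proof sketch: bound $\relent{P_{V|F}}{P_V}$ by $\log_2(1/P[F])$, use super-additivity of KL when the second argument is a product (which you establish via chain rule plus convexity, while the paper simply invokes it), then apply Pinsker and Cauchy--Schwarz. The only difference is the order of presentation.
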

\begin{proof}[Proof Sketch]
	It holds that
	\[ D\brac{P_{V|F}||P_V} = \sum_v P[V=v|F]\cdot \log_2\brac{\frac{P[V=v|F]}{P[V=v]}} \leq \log_2\brac{\frac{1}{P[F]}} ,\]
	where we use $D$ to denote the relative entropy (or the KL divergence).
	Also, we have
	\[ D\brac{P_{V|F}||P_V} \geq \sum_{i=1}^n D\brac{P_{V_i|F}||P_{V_i}} \geq \frac{1}{2}\cdot \sum_{i=1}^n\Norm{P_{V_i|F}-P_{V_i}}_1^2 \geq   \frac{1}{2n}\cdot \brac{\sum_{i=1}^n\Norm{P_{V_i|F}-P_{V_i}}_1}^2.\]
	The first inequality above holds as relative entropy is super-additive when the second distribution is a product distribution; the second inequality follows from Pinsker's inequality.
\end{proof}
 

\subsection{Noise Operators}

Let $(\Sigma, \nu)$ be a probability space, with $\Sigma$ a finite set.
Define an inner product on the space $L^2(\Sigma, \nu)$ as follows: for $f,g:\Sigma\to\C$, define 
\[\ip{f,g}_{\nu} = \E_{x\sim \nu}\sqbrac{f(x)\overline{g(x)}}.\]

\begin{definition} (Noise Distribution)
	For a parameter $\rho\in[0,1]$, and $y\in \Sigma$, the distribution of inputs $\rho$-correlated to $y$, denoted $y'\sim \T_{\rho}y$, is defined as follows: take $y'=y$ with probability $\rho$, and else sample $y'\sim \nu$.
\end{definition}

We also view $\T_\rho$ as a map on the space of functions $L^2(\Sigma, \nu)$, given by \[ (\T_\rho g)(y) = \E_{y'\sim \T_{\rho}y}\sqbrac{g(y')}.\]

We consider the tensorization $\T_{\rho}^{\otimes n}$ of this operator, which acts on $L^2(\Sigma^n, \nu^{\otimes n})$ by applying $\T_\rho$ independently across the $n$ coordinates;
for ease of notation, we drop the $\otimes n$ superscript and simply call this operator $\T_{\rho}$ as well.

\begin{definition} (Noise Stability)
	For a function $g:\Sigma^n\to \C$, define its $\rho$-noise-stability as
	\[ \Stab_{\rho}^\nu [g] = \ip{g,\T_\rho g}_{\nu^{\otimes n}}.\]
	This satisfies $\Stab_{\rho}^\nu [g] = \Norm{\T_{\sqrt{\rho }}g}_2^2$; in particular, $\Stab_{\rho}^\nu [g]$ is always a non-negative real number.
	We often drop the superscript $\nu$ when it is clear from context.
\end{definition}

\begin{definition} (Expectation)
	For a function $f\in L^2(\Sigma^n, \nu^{\otimes n})$, we use $\nu(f)$ to denote its expectation $\E_{x\sim \nu^{\otimes n}}[f(x)]$.
\end{definition}


\section{Abelian Embeddings, Inverse Theorems, and Random Restrictions}\label{sec:ab_inv_rr}


\subsection{Abelian Embeddings and Inverse Theorems}
\label{subsec:abeliandef}
In this section we state a result of Bhangale, Khot, Liu and Minzer~\cite{BKLM24b} which says that products of high-degree\footnote{We note that the condition $\Stab_{1-\delta}[f]\leq \delta$ in the theorem serves as a convenient proxy for $f$ to have high Fourier degree. The reader is referred to~\cite{BKLM24b} for more details.} functions behave pseudorandomly under distributions with no-Abelian-embeddings:

\begin{theorem}\label{thm:bklm_inverse} (Inverse Theorem under No-Abelian-Embeddings; \cite[Theorem 1]{BKLM24b})

Let $k\in \N$ and let $\Sigma_1,\dots, \Sigma_k$ be finite sets.
Let $\mu$ be a distribution over $\Sigma_1\times\dots\times\Sigma_k$ with no-Abelian-embeddings (see Definition~\ref{defn:ab_emb}).
Then, for every sufficiently large $n\in \N$ and every $\epsilon>0$, there exists $\delta = \delta(\epsilon)>0$, such that the following holds:

If 1-bounded functions $f_i:\Sigma_i^n\to \C$, $i\in [k]$, satisfy 
\[ \abs{\E_{(x_1,\dots,x_k)\sim \mu^{\otimes n}}\sqbrac{\prod_{i=1}^k f_i(x_i)}} \geq \epsilon, \] 
then $\Stab_{1-\delta}^{\mu_i}[f_i]\geq \delta$ for each $i\in [k]$, where $\mu_i$ is the marginal of $\mu$ on the $i$\textsuperscript{th} coordinate.

Quantitatively, when $\epsilon\leq o_n(1)$, we can take $\delta = \exp(-\exp(\cdots \exp(\epsilon^{-1})))$, where the number of exponentials is at most $k^{O(k)}$.\footnote{\label{footnote:constant_wrt_n}The quantitative dependence in~\cite{BKLM24b} is $\delta = \exp(-\exp(\cdots \exp(\epsilon^{-O_{\alpha}(1)})))$, where the number of exponentials is $C\leq k^{O(k)}$, and $\alpha\in (0,1]$ is such that $\min_{x\in \supp(\mu)}\mu[x]\geq \alpha$. For our purposes, $\alpha$ is a constant independent of $n$, and hence when $\epsilon\leq o_n(1)$, the choice $\delta = \exp(-\exp(\cdots \exp(\epsilon^{-1})))$ works, where the number of exponentials is $C+1 \leq k^{O(k)}$.}
\end{theorem}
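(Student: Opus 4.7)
The plan is to prove the inverse theorem by induction on the arity $k$, using iterated Cauchy--Schwarz to eliminate one function at a time, mirroring the strategy underlying Gowers-uniformity inverse theorems.

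For the base case $k=2$, the no-Abelian-embeddings condition on $\mu$ over $\Sigma_1\times \Sigma_2$ implies that the averaging operator $A\colon L^2(\Sigma_2,\mu_2)\to L^2(\Sigma_1,\mu_1)$ defined by $(Af)(x_1)=\E[f(X_2)\mid X_1=x_1]$ has a strict spectral gap: all non-trivial singular values are bounded below $1$ by some $\lambda<1$ (any such singular vector of value $1$ would produce characters witnessing a non-trivial Abelian embedding). One then expands $\E_{\mu^{\otimes n}}[f_1(x_1)\overline{f_2(x_2)}]$ in the tensorized singular-value decomposition of $A^{\otimes n}$ and observes that $\left|\E[\cdot]\right|\geq \epsilon$ forces each $f_i$ to place $\mathrm{poly}(\epsilon)$ of its Fourier mass on eigenbasis elements supported on few coordinates, which in turn yields $\Stab_{1-\delta}[f_i]\geq \delta$ quantitatively.

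For the inductive step, assume the theorem for $(k-1)$-ary distributions. Given $\left|\E_{\mu^{\otimes n}}[\prod_i f_i(x_i)]\right|\geq \epsilon$, apply Cauchy--Schwarz over the variable $x_k$ to get
\[
\epsilon^2 \leq \left|\E_{\mu'^{\otimes n}}\left[\prod_{i<k}F_i(x_i,x_i')\right]\right|,
\]
where $\mu'$ is the distribution on $(\Sigma_i\times \Sigma_i)_{i<k}$ obtained by sampling $x_k$ from its marginal and then drawing $(x_1,\ldots,x_{k-1})$ and $(x_1',\ldots,x_{k-1}')$ independently from $\mu\mid x_k$, and $F_i(x_i,x_i')=f_i(x_i)\overline{f_i(x_i')}$. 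Invoke the inductive hypothesis on $\mu'$ and $\{F_i\}$ to obtain noise stability of each $F_i$, then transfer this back to noise stability of each $f_i$ using that $F_i$ is a tensor product of $f_i$ with $\overline{f_i}$.

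The main obstacle is that the doubled distribution $\mu'$ does not inherit no-Abelian-embeddings for free: every Abelian embedding $(\sigma_i)$ of $\mu$ induces the ``antisymmetric'' embedding $\tau_i(x_i,x_i'):=\sigma_i(x_i)-\sigma_i(x_i')$ of $\mu'$, and even for $\mu$ with no such embeddings, new embeddings of $\mu'$ may arise. The crucial observation is that any antisymmetric embedding is annihilated by the product structure $F_i(x_i,x_i')=f_i(x_i)\overline{f_i(x_i')}$, so it contributes only a ``diagonal'' part to the correlation and can be quotiented out. Formalizing this requires either strengthening the inductive hypothesis to a no-Abelian-embeddings condition \emph{modulo antisymmetric embeddings}, or iterating Cauchy--Schwarz $k$ times and appealing to a Gowers-type inverse theorem at the end to decode low-degree structure of each $f_i$. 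The tower-of-exponentials bound $\delta=\exp(-\cdots\exp(\epsilon^{-1}))$ of height $k^{O(k)}$ emerges from the repeated $\epsilon\mapsto \epsilon^2$ squaring combined with the cascading quantitative loss incurred at each inductive step.
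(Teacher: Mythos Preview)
The paper does not prove this theorem---it is quoted from \cite{BKLM24b}---so there is no in-paper proof to compare against. Your high-level shape (induction on $k$ via Cauchy--Schwarz, with a spectral-gap argument for $k=2$) is in the right spirit, and the base case is essentially correct: no-Abelian-embeddings implies the bipartite support graph is connected, hence the averaging operator has a strict spectral gap, and a tensorized singular-value expansion yields the stability conclusion.

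The inductive step, however, has a gap you flag but do not close. After Cauchy--Schwarz the doubled distribution $\mu'$ on $\prod_{i<k}(\Sigma_i\times\Sigma_i)$ can acquire non-trivial Abelian embeddings even when $\mu$ has none, and these need not be of your ``antisymmetric'' form $\tau_i(x,x')=\sigma_i(x)-\sigma_i(x')$. Your observation that antisymmetric embeddings are compatible with the product structure $F_i=f_i\otimes\overline{f_i}$ (so that correlation with the corresponding character still decodes to information about $f_i$) is correct but beside the point: the obstruction is the embeddings of $\mu'$ that are \emph{not} of this form, and you give no mechanism to handle them. Both escape routes you sketch---strengthening the inductive hypothesis to some relative no-embedding condition, or iterating Cauchy--Schwarz all the way down and invoking a Gowers-type inverse theorem for the resulting norm---are exactly where the substance of \cite{BKLM24b} lies. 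That paper (building on \cite{BKM1,BKM2,BKM3,BKM4,BKM5}) develops a considerably more intricate inductive framework, with restriction-type inverse theorems and careful tracking of which embeddings can appear at each level; the tower height $k^{O(k)}$ reflects this machinery rather than merely the $\epsilon\mapsto\epsilon^2$ loss from Cauchy--Schwarz.
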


\begin{corollary}\label{corr:bklm}
Let $k\in \N$ and let $\Sigma_1,\dots, \Sigma_k$ be finite sets.
Let $\mu$ be a distribution over $\Sigma_1\times\dots\times\Sigma_k$ with no-Abelian-embeddings (see Definition~\ref{defn:ab_emb}).
Then, for every sufficiently large $n\in \N$ and every $\delta>0$, there exists $\epsilon = \epsilon(\delta)>0$, such that the following holds:

Let $f_i:\Sigma_i^n\to [0,1]$, $i\in [k]$ be functions, and for each $i$, let $\mu_i$ be the marginal of $\mu$ on the $i$\textsuperscript{th} coordinate.
If $\Stab_{1-\delta}^{\mu_i}\sqbrac{f_i-\mu_i(f_i)} < \delta$ for each $i\in [k]$, it holds that
\[ \abs{\E_{(x_1,\dots,x_k)\sim \mu^{\otimes n}}\sqbrac{\prod_{i=1}^k f_i(x_i)} - \prod_{i=1}^k \mu_i(f_i)} \leq \epsilon. \]
Quantitatively, when $\delta\leq o_n(1)$, we can take
$\epsilon = \frac{1}{\log\log\cdots\log\brac{\frac{1}{\delta}}}$, where the number of logarithms at most $k^{O(k)}$.
\end{corollary}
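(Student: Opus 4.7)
The plan is to deduce the corollary from Theorem~\ref{thm:bklm_inverse} by a standard mean decomposition together with the observation that ``no-Abelian-embeddings'' is preserved under taking marginals. Let $g_i := f_i - \mu_i(f_i)$, so that each $g_i : \Sigma_i^n \to [-1,1]$ is $1$-bounded and satisfies $\E_{\mu_i^{\otimes n}}[g_i] = 0$. First I would expand
\[
\prod_{i=1}^k f_i(x_i) \;=\; \prod_{i=1}^k \bigl(\mu_i(f_i) + g_i(x_i)\bigr) \;=\; \sum_{S\subseteq [k]} \Bigl(\prod_{i\notin S}\mu_i(f_i)\Bigr) \prod_{i\in S} g_i(x_i),
\]
and take expectation under $\mu^{\otimes n}$. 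The term $S=\emptyset$ gives exactly the main quantity $\prod_i \mu_i(f_i)$, while every singleton $S=\{i\}$ contributes zero since $\E_{\mu_i^{\otimes n}}[g_i] = 0$. The remaining task is therefore to bound, for each $S \subseteq [k]$ with $|S|\geq 2$, the correlation
\[
\Bigl|\E_{(x_i)_{i\in S}\sim \mu_S^{\otimes n}}\Bigl[\prod_{i\in S} g_i(x_i)\Bigr]\Bigr|,
\]
where $\mu_S$ denotes the marginal of $\mu$ on coordinates $S$.

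The key structural fact I would establish is that every marginal $\mu_S$ of $\mu$ again has no-Abelian-embeddings. Indeed, any non-trivial Abelian embedding $(\sigma_i)_{i\in S}$ of $\mu_S$ into an Abelian group $G$ extends to a non-trivial Abelian embedding of $\mu$ simply by setting $\sigma_j \equiv 0_G$ for $j\in [k]\setminus S$, contradicting the hypothesis on $\mu$. Hence Theorem~\ref{thm:bklm_inverse} applies to $\mu_S$ and the $1$-bounded functions $(g_i)_{i\in S}$: its contrapositive says that given any $\epsilon'>0$ with corresponding $\delta'(\epsilon')>0$, if $\Stab_{1-\delta'}^{\mu_i}[g_i] < \delta'$ for some $i\in S$, then the above correlation is bounded by $\epsilon'$.

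To finish, I would match the quantitative parameters. Given the hypothesis $\Stab_{1-\delta}^{\mu_i}[g_i] < \delta$ for all $i$, I choose $\epsilon'$ so that the theorem's $\delta'(\epsilon')$ equals $\delta$; using the quantitative form $\delta'(\epsilon') = \exp(-\exp(\cdots\exp({\epsilon'}^{-1})))$ with at most $k^{O(k)}$ exponentials, inversion gives $\epsilon' = 1/\log\log\cdots\log(1/\delta)$ with the same number of logarithms. Summing the contributions over the at most $2^k$ subsets $S$ with $|S|\geq 2$, and noting each weight $\prod_{i\notin S}\mu_i(f_i) \in [0,1]$, yields a total error of $2^k \cdot \epsilon'$, which is absorbed into the same asymptotic form $\epsilon = 1/\log\log\cdots\log(1/\delta)$ (the number of logarithms being at most $k^{O(k)}$, as required).

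The main obstacle is essentially bookkeeping: verifying the marginalization closure of no-Abelian-embeddings is a one-line argument, and the quantitative inversion of the tower of exponentials is routine. The only genuinely delicate point is ensuring that when invoking the inverse theorem on $\mu_S$ the constant $\alpha_S := \min_{x\in\supp(\mu_S)}\mu_S[x]$ remains bounded away from $0$ independently of $n$, but this is automatic since $\mu_S$ is a fixed distribution independent of $n$, so the qualitative asymptotic form of $\delta(\epsilon)$ noted in Footnote~\ref{footnote:constant_wrt_n} applies uniformly.
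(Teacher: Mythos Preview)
Your proof is correct but takes a different decomposition from the paper's. The paper uses the telescoping identity
\[
\prod_{i=1}^k f_i(x_i) - \prod_{i=1}^k \mu_i(f_i) \;=\; \sum_{t=1}^k \Bigl(\prod_{i<t}\mu_i(f_i)\Bigr)\bigl(f_t(x_t)-\mu_t(f_t)\bigr)\Bigl(\prod_{i>t}f_i(x_i)\Bigr),
\]
and then applies Theorem~\ref{thm:bklm_inverse} directly to the full $k$-ary distribution $\mu$ once for each of the $k$ summands, using that the centered factor $f_t-\mu_t(f_t)$ has small stability. Your approach instead expands into all $2^k$ subsets and, for each $S$ with $|S|\geq 2$, invokes Theorem~\ref{thm:bklm_inverse} on the \emph{marginal} $\mu_S$, which in turn requires the auxiliary observation that no-Abelian-embeddings passes to marginals (Remark~\ref{remark:marginal_pairconn_defn}, item~3). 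Both routes are sound; the paper's is slightly leaner in that it avoids the marginal-closure step, incurs only $k$ rather than $2^k$ error terms, and never leaves the fixed distribution $\mu$. Your version has the mild aesthetic advantage that the singleton terms vanish exactly rather than being absorbed as error, but this does not affect the final bound.
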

\begin{proof}
	Let $\delta >0$, and let $\epsilon>0$ be such that $(\frac{\epsilon}{k},\delta)$ satisfy Theorem~\ref{thm:bklm_inverse}.
	Then, we have
	\begin{align*}
		&\abs{\E_{(x_1,\dots,x_k)\sim \mu^{\otimes n}}\sqbrac{\prod_{i=1}^k f_i(x_i)} - \prod_{i=1}^k \mu_i(f_i)}
		\\&\qquad\qquad= \abs{\sum_{t=1}^k \E_{(x_1,\dots,x_k)\sim \mu^{\otimes n}}\sqbrac{\prod_{i=1}^{t-1}\mu_i(f_i)\cdot \brac{f_t(x_t)-\mu_t(f_t)}\cdot  \prod_{i=t+1}^k f_i(x_i)}}
		\leq k\cdot \frac{\epsilon}{k} = \epsilon,
	\end{align*}
	where we first used the triangle inequality and then used Theorem~\ref{thm:bklm_inverse} for each term.
	Note that $f_t(x_t)-\mu_t(f_t)$ is 1-bounded for each $t\in [k]$.
	
	The bound on $\epsilon$ is now $\frac{k}{\log\log\cdots\log\brac{\frac{1}{\delta}}}$, and for $\delta\leq o_n(1)$, this is as desired (by possibly increasing the number of logarithms by 1).
\end{proof}

The above corollary is very useful in calculating expectations containing high-degree functions; however, in most applications, the functions we care about are not high-degree.
Next, we introduce the notion of random restrictions, which helps make arbitrary functions high-degree.


\subsection{Random Restrictions}

In this section we define the notions of a \emph{restriction} and \emph{random restriction}. A restriction takes a subset of coordinates $I \subseteq [n]$ and sets them to some fixed values.
\begin{definition}\label{defn:restriction} (Restriction)
	Let $\Sigma$ be a finite alphabet and let $n\in \N$.
	A restriction on $\Sigma^n$ is a tuple $\rho=(I,z)$ where $I\subseteq [n]$ and $z\in \Sigma^{I}$.
	
	For a function $f:\Sigma^n\to \C$ and a restriction $\rho=(I,z)$ we define the restricted function $f_{I\to z}:\Sigma^{\bar{I}}\to \C$ as $f_{I\to z}(y) = f(y,z)$; we shall also use $f_\rho$ to denote this function.
\end{definition}
The notation $I\to z$ above indicates that the variables in the set $I$ are fixed to value $z$.
A random restriction is a restriction where the subset of coordinates $I$ is chosen randomly and the values they are fixed to are chosen from the base distribution.
\begin{definition} ($p$-random restriction)
	Let $p\in [0,1],\ n\in \N$, and let $(\Sigma, \nu)$ be a probability space.
	
	We use $I\sim_p [n]$ to denote a random set $I\subseteq [n]$, where each $i\in [n]$ is included in $I$ with probability $p$ independently.
	
	A $p$-random restriction on $(\Sigma^n, \nu^{\otimes n})$ is a random tuple $\rho=(I,z)$, where $I \sim_{1-p} [n]$, and $z\sim \nu^{\otimes I}$.
	For a function $f\in L^2(\Sigma^n, \nu^{\otimes n})$, its $p$-random restriction is then the function $f_{I\to z}:\Sigma^{\bar{I}}\to \C$.
	We shall also sometimes think $f_{I\to z}:\Sigma^n\to \C$ by ignoring the input coordinates in the set $I$.
\end{definition}


\subsection{Noise Stability under Random Restrictions}
Next, we prove a simple lemma about the average noise stability of functions under random restrictions.

\begin{lemma}\label{lemma:rr_noise_stab} (Noise Stability under Random Restrictions)
	Let $f\in L^2(\Sigma^n, \nu^{\otimes n})$.
	Then, for $p,\delta\in [0,1]$, we have
	\[ \E_{I\sim_{1-p}[n],\ z\sim\nu^{\otimes I}}\sqbrac{\Stab_{1-\delta}\sqbrac{f_{I\to z}-\nu(f_{I\to z})}} = \Stab_{1-p\delta}[f]-\Stab_{1-p}[f].\]
\end{lemma}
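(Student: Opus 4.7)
The plan is to prove this identity by a direct coordinate-wise computation of joint sampling distributions. The key observation is that one can combine the random restriction and the noise process into a single two-sample process over $\Sigma^n$, and then identify the resulting joint distribution with the noise distribution at an appropriately adjusted correlation parameter.

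The first step will be to reduce the mean-subtracted noise stability to a sum of two cleaner quantities. For any function $g$ on $(\Sigma^m, \nu^{\otimes m})$ with mean $c = \nu(g)$, since $\T_{1-\delta}$ fixes the constant $1$, expanding $\ip{g-c, \T_{1-\delta}(g-c)}$ gives
\[
\Stab_{1-\delta}[g - \nu(g)] \;=\; \Stab_{1-\delta}[g] \;-\; |\nu(g)|^2.
\]
Applying this to $g = f_{I\to z}$ reduces the claim to computing $\E_\rho\bigl[\Stab_{1-\delta}[f_\rho]\bigr]$ and $\E_\rho\bigl[|\nu(f_\rho)|^2\bigr]$ separately.

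The second step handles $\E_\rho\bigl[\Stab_{1-\delta}[f_\rho]\bigr]$. I would unroll the definition of $\Stab_{1-\delta}[f_\rho]$ as $\E_{x,x'}[f_\rho(x)\overline{f_\rho(x')}]$, where $x \sim \nu^{\otimes \bar I}$ and $x' \sim \T_{1-\delta}x$, and then absorb the restriction values $z$ into $x,x'$ to view the whole thing as an expectation of $f(X)\overline{f(X')}$ over a joint distribution on $\Sigma^n \times \Sigma^n$. Because the restriction is applied independently per coordinate, the joint law of $(X_i, X'_i)$ factors across $i$ and is given by: with probability $1-p$ (coordinate in $I$), $X_i = X'_i = z_i \sim \nu$; with probability $p$ (coordinate in $\bar I$), $X_i \sim \nu$ and $X'_i = X_i$ with probability $1-\delta$, else $X'_i \sim \nu$ independently. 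Collapsing the two cases, one gets $X_i = X'_i \sim \nu$ with total probability $(1-p) + p(1-\delta) = 1-p\delta$ and $X_i, X'_i$ independent samples from $\nu$ otherwise. This is precisely the joint law defining $\T_{1-p\delta}$, so $\E_\rho\bigl[\Stab_{1-\delta}[f_\rho]\bigr] = \Stab_{1-p\delta}[f]$.

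The third step handles $\E_\rho\bigl[|\nu(f_\rho)|^2\bigr]$ by the same trick, writing $|\nu(f_\rho)|^2 = \E_{x,y}[f_\rho(x)\overline{f_\rho(y)}]$ with $x,y \sim \nu^{\otimes \bar I}$ independent and both sharing the restriction values $z$ on $I$. The corresponding joint law of $(X_i, Y_i)$ on full vectors is: $X_i = Y_i \sim \nu$ with probability $1-p$ (restricted), and $X_i, Y_i$ independent $\sim \nu$ with probability $p$ (free). This matches exactly the law defining $\T_{1-p}$, giving $\E_\rho\bigl[|\nu(f_\rho)|^2\bigr] = \Stab_{1-p}[f]$. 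Combining these two computations with the decomposition from the first step yields the stated identity.

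There is no real obstacle here; the only thing that requires care is to keep the direction of the restriction parameter straight (in this paper's convention, $I\sim_{1-p}[n]$ means coordinates are \emph{restricted} with probability $1-p$ and \emph{free} with probability $p$), and to correctly account for the noise parameter on the free coordinates when collapsing the per-coordinate process into a single correlation parameter.
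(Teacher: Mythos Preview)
Your proposal is correct and takes essentially the same approach as the paper: both decompose the mean-subtracted stability into $\Stab_{1-\delta}[f_\rho] - |\nu(f_\rho)|^2$ and then identify the averaged two-sample process with the noise operator at the adjusted parameter. The only cosmetic differences are that the paper phrases the key step as the operator identity $p\,\T_\rho + (1-p)\,\T_1 = \T_{p\rho+1-p}$ (and observes that the $|\nu(f_\rho)|^2$ term is just the $\rho=0$ case of the same computation), whereas you argue directly at the level of the per-coordinate joint law; the content is identical.
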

\begin{proof}
	For $\rho = 1-\delta$, we have
	\begin{align*}
		&\E_{I\sim_{1-p}[n]} \E_{z\sim\nu^{\otimes I}}\sqbrac{\Stab_{1-\delta}\sqbrac{f_{I\to z}-\nu(f_{I\to z})}} 
		\\&\qquad\qquad = \E_{I\sim_{1-p}[n]} \E_{z\sim\nu^{\otimes I}} \ip{f_{I\to z}-\nu(f_{I\to z}),\  \T_{\rho}^{\otimes \bar{I}}\brac{f_{I\to z}-\nu(f_{I\to z})} }_{\nu^{\otimes \bar{I}}}
		\\&\qquad\qquad = \E_{I\sim_{1-p}[n]}\E_{z\sim\nu^{\otimes I}}\sqbrac{\ip{f_{I\to z},\  \T_{\rho}^{\otimes \bar{I}}f_{I\to z} }_{\nu^{\otimes \bar{I}}} - \abs{\nu(f_{I\to z})}^2 } 
		\\&\qquad\qquad = \E_{I\sim_{1-p}[n]}\E_{z\sim\nu^{\otimes I}}\sqbrac{\ip{f_{I\to z},\  \T_{\rho}^{\otimes \bar{I}}f_{I\to z} }_{\nu^{\otimes \bar{I}}} } - \E_{I\sim_{1-p}[n]} \E_{z\sim\nu^{\otimes I}}\sqbrac{\abs{\nu(f_{I\to z})}^2}.
	\end{align*}
	Note that the second term in the above expression is the same as the first term  under the substitution $\rho=0$; hence, it suffices to show that the first term equals $\Stab_{1-p(1-\rho)}[f]$.
	This is done as follows:
	\begin{align*}
	\E_{I\sim_{1-p}[n]} \E_{z\sim\nu^{\otimes I}} \sqbrac{\ip{f_{I\to z},\  \T_{\rho}^{\otimes \bar{I}}f_{I\to z} }_{\nu^{\otimes \bar{I}}} }
		&= \E_{I\sim_{1-p}[n]}  \E_{z\sim\nu^{\otimes I}}  \E_{x\sim \nu^{\otimes \bar{I}}}\E_{y\sim \T_\rho^{\otimes \bar{I}}x} f(x,z)\cdot \overline{f(y,z)}
		\\&= \E_{I\sim_{1-p}[n]} \ip{f,\ \T_{\rho}^{\otimes \bar{I}}\T_{1}^{\otimes I}f}_{\nu^{\otimes n}}
		\\&= \ip{f,\ \brac{p\cdot\T_{\rho} + (1-p)\cdot \T_1}^{\otimes n}f}_{\nu^{\otimes n}}
		\\&= \ip{f,\ \T_{p\rho + 1-p}^{\otimes n}\ f}_{\nu^{\otimes n}}
		\\&= \Stab_{1-p(1-\rho)}[f]. \qedhere
	\end{align*}
\end{proof}

With the above, we show that random restrictions of arbitrary functions are essentially high-degree, when the restriction parameter is chosen appropriately.

\begin{lemma}\label{lemma:random_rr_noise_stab}
	Let $f\in L^2(\Sigma^n, \nu^{\otimes n})$, and let $\delta\in (0,1),\ T\in \N$.
	Let $p\in [0,1]$ be chosen uniformly at random from the set $\set{1,\delta,\delta^2,\dots,\delta^{T-1}}$, and let $I\sim_{1-p}[n],\ z\sim\nu^{\otimes I}$.
	Then, for every $\eta > 0$, we have
		\[ \Pr\sqbrac{\Stab_{1-\delta}\sqbrac{f_{I\to z}-\nu(f_{I\to z})}\geq \eta\cdot \Var[f]} \leq \frac{1}{\eta T}. \]
\end{lemma}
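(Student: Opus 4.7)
The plan is to reduce the statement to an application of Markov's inequality, using the previous Lemma~\ref{lemma:rr_noise_stab} in a telescoping fashion over the geometric grid of values of $p$.

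First, I would fix $p \in \{1,\delta,\delta^2,\dots,\delta^{T-1}\}$ and apply Lemma~\ref{lemma:rr_noise_stab} to get the identity
\[
\E_{I\sim_{1-p}[n],\ z\sim\nu^{\otimes I}}\bigl[\Stab_{1-\delta}[f_{I\to z}-\nu(f_{I\to z})]\bigr] \;=\; \Stab_{1-p\delta}[f]-\Stab_{1-p}[f].
\]
Then I would take expectation over the uniform choice of $p$ from $\{\delta^t : 0\leq t\leq T-1\}$. The key observation is that for $p=\delta^t$ we have $p\delta = \delta^{t+1}$, so the sum over $t$ telescopes:
\[
\E_{p}\E_{I,z}\bigl[\Stab_{1-\delta}[f_{I\to z}-\nu(f_{I\to z})]\bigr] \;=\; \frac{1}{T}\sum_{t=0}^{T-1}\bigl(\Stab_{1-\delta^{t+1}}[f]-\Stab_{1-\delta^{t}}[f]\bigr) \;=\; \frac{\Stab_{1-\delta^T}[f]-\Stab_{1}[f]}{T}.
\]
Next, I would use the fact that $\Stab_1[f] = \nu(|f|^2)^{1/2}$ times itself... actually more directly, $\Stab_1[f]= \Norm{f}_2^2$ and $\Stab_{1-\delta^T}[f]\leq \Stab_1[f]=\Norm{f}_2^2$, while we can drop $-\Stab_1[f]$ at a loss; a cleaner bookkeeping is to write the telescoping sum as $\frac{1}{T}(\Stab_{1-\delta^T}[f] - \Stab_0[f])$ after shifting indices, where $\Stab_0[f] = |\nu(f)|^2$, giving
\[
\E_{p}\E_{I,z}\bigl[\Stab_{1-\delta}[f_{I\to z}-\nu(f_{I\to z})]\bigr] \;\leq\; \frac{\Norm{f}_2^2 - |\nu(f)|^2}{T} \;=\; \frac{\Var[f]}{T}.
\]
(Indeed, the sum telescopes to $\Stab_{1-\delta^T}[f]-\Stab_{1-1}[f]$ if we use the convention that ``$\Stab_{1-1}$'' comes from $p=1$ contributing $-\Stab_{1-p}[f]=-\Stab_0[f]$.)

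Finally, since $\Stab_{1-\delta}[f_{I\to z}-\nu(f_{I\to z})]$ is nonnegative (it equals $\Norm{\T_{\sqrt{1-\delta}}(f_{I\to z}-\nu(f_{I\to z}))}_2^2\geq 0$), Markov's inequality immediately yields
\[
\Pr\bigl[\Stab_{1-\delta}[f_{I\to z}-\nu(f_{I\to z})] \geq \eta\cdot\Var[f]\bigr] \;\leq\; \frac{\Var[f]/T}{\eta\cdot\Var[f]} \;=\; \frac{1}{\eta T}.
\]
There is no real obstacle here; the only subtlety is the bookkeeping of the telescoping sum and verifying that it collapses to something of the form $\Var[f]/T$ (rather than $\Norm{f}_2^2/T$, which would be the weaker trivial bound), which is what makes the $\Var[f]$ factor appear on the right-hand side instead of $\Norm{f}_2^2$.
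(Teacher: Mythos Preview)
Your proposal is correct and follows essentially the same route as the paper: apply Lemma~\ref{lemma:rr_noise_stab} for each fixed $p=\delta^t$, average over $t$ so that the sum telescopes to $\tfrac{1}{T}(\Stab_{1-\delta^T}[f]-\Stab_0[f])\le \Var[f]/T$, and finish with Markov's inequality. Your momentary slip writing $\Stab_1[f]$ before correcting to $\Stab_0[f]$ is immaterial---the $t=0$ term contributes $-\Stab_{1-\delta^0}[f]=-\Stab_0[f]$, exactly as you note in the parenthetical.
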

\begin{proof}
	By Lemma~\ref{lemma:rr_noise_stab}, we have 
	\begin{align*}
		\E_{p,I,z}\sqbrac{\Stab_{1-\delta}\sqbrac{f_{I\to z}-\nu(f_{I\to z})}} &= \E_p\sqbrac{\Stab_{1-p\delta}[f]-\Stab_{1-p}[f]} 
		\\&= \frac{1}{T}\sum_{t=0}^{T-1}\brac{{\Stab_{1-\delta^{t+1}}[f]-\Stab_{1-\delta^t}[f]}}
		\\&= \frac{1}{T}\brac{\Stab_{1-\delta^T}[f]-\Stab_{0}[f]}
		\\&\leq \frac{1}{T}\brac{\Stab_{1}[f]-\Stab_{0}[f]} = \frac{\Var[f]}{T}.
	\end{align*}
	The result follows by Markov's inequality.
\end{proof}


\subsection{Coordinate-wise Connected Implies No Abelian Embeddings}

In this subsection, we show that any \emph{sufficiently connected} set $\mc S$ has no-Abelian-embeddings. We start by defining connected and coordinate-wise connected subsets $\mc{S} \subseteq \Sigma_1 \times \dots \times \Sigma_k$, which are notions that have been studied in previous works \cite{DHVY17,GHMRZ22}.

\begin{definition}\label{defn:connection_graph}  (Connectivity Sets)
	Let $k\in \N$, let $\Sigma_1,\dots,\Sigma_k$ be finite sets, and let $\mc S\subseteq \Sigma_1\times\dots\times\Sigma_k$.
	Define the (simple, undirected) connection graph $\mc H(\mc S)$ as follows: The vertex set is $\mc S$, and there is an edge between $x,y\in \mc S$ if and only if they differ in exactly one coordinate;
	that is, there exists $j\in[k]$ such that $x_{-j}=y_{-j}$ and $x_j\not=y_j$.\footnote{By $x_{-j}$ we mean $(x_1,\dots,x_{j-1},x_{j+1},\dots,x_k)\in \Sigma_{-j} = \prod_{i\in [k], i\not=j}\Sigma_i$.}
	
	We say that $\mc S$ is \emph{connected} if the graph $\mc H(\mc S)$ is connected.
\end{definition}

Next, we define an even weaker\footnote{Observe that any set $\mc S$ that is connected must also be coordinate-wise connected, while the converse is not necessarily true.} notion of connectivity, where we only require the projection of the above graph with respect to each of the $k$ coordinates to be connected.

\begin{definition}\label{defn:coord_conn_graph} (Coordinate-wise Connected Sets)
	Let $k\in \N$, let $\Sigma_1,\dots,\Sigma_k$ be finite sets, and let $\mc S\subseteq \Sigma_1\times\dots\times\Sigma_k$.	
	For $j\in [k]$, define the (simple, undirected) connection graph $\mc H_j(\mc S)$ as follows:
	The vertex set is $\Sigma_j$, and there is an edge between $x_j, x_j'\in \Sigma_j$ if and only if there exists $x_{-j}\in \Sigma_{-j}$ such that both $(x_j,x_{-j})\in \mc S$ and $(x_j',x_{-j})\in \mc S$.
	
	We say that $\mc S$ is connected with respect to coordinate $j\in [k]$ if $\mc H_j(\mc S)$ is connected.
	We say that $\mc S$ is \emph{coordinate-wise connected} if $\mc S$ is connected with respect to each $j\in [k]$.	
\end{definition}

We show that coordinate-wise connected sets have no-Abelian-embeddings:

\begin{lemma}\label{lemma:conn_implies_nae}
	Let $k\in \N$, let $\Sigma_1,\dots,\Sigma_k$ be finite sets, and let $\mc S\subseteq \Sigma_1\times\dots\times\Sigma_k$.
	Let $(G,\sigma_1,\dots,\sigma_k)$ be an Abelian embedding of $\mc S$, and let $j\in [k]$ be such that $\mc S$ is connected with respect to coordinate $j\in [k]$.
	Then, the map $\sigma_j$ is constant.
	
	In particular, if $\mc S$ is coordinate-wise connected, then it has no-Abelian-embeddings.
\end{lemma}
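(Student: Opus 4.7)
The plan is to extract the conclusion directly from the defining equation of an Abelian embedding by a ``path argument'' along $\mc H_j(\mc S)$. Fix an Abelian embedding $(G,\sigma_1,\dots,\sigma_k)$ of $\mc S$ and suppose $\mc S$ is connected with respect to coordinate $j$. First I would show that $\sigma_j$ is constant on edges of $\mc H_j(\mc S)$: if $x_j, x_j' \in \Sigma_j$ are adjacent, then by definition there exists $x_{-j}\in \Sigma_{-j}$ with both $(x_j, x_{-j})$ and $(x_j', x_{-j})$ in $\mc S$. Applying the embedding condition to both tuples and subtracting yields
\[ \sigma_j(x_j) - \sigma_j(x_j') = -\sum_{i\neq j}\sigma_i(x_i) + \sum_{i\neq j}\sigma_i(x_i) = 0_G, \]
so $\sigma_j(x_j) = \sigma_j(x_j')$.

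Next, I would iterate this one-edge equality along paths: a straightforward induction on the length of a path in $\mc H_j(\mc S)$ shows that $\sigma_j$ is constant on each connected component. Since $\mc H_j(\mc S)$ is assumed connected and its vertex set is all of $\Sigma_j$, this gives that $\sigma_j$ is constant on $\Sigma_j$, which is the first claim. (If $|\Sigma_j|=1$ the statement is trivial; otherwise, connectivity rules out isolated vertices, ensuring every element of $\Sigma_j$ is reached by some path.)

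For the second assertion, if $\mc S$ is coordinate-wise connected, the argument above applies for every $j\in[k]$, so every map $\sigma_j$ is constant. Hence the only Abelian embeddings of $\mc S$ are trivial ones, i.e., $\mc S$ has no non-trivial Abelian embedding. There is no real obstacle here; the only point to take care of is keeping the distinction between ``$\sigma_j$ constant on the projection of $\mc S$ to coordinate $j$'' and ``constant on all of $\Sigma_j$'', which is handled cleanly by the convention that the vertex set of $\mc H_j(\mc S)$ is $\Sigma_j$.
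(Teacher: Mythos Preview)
Your proposal is correct and follows essentially the same approach as the paper: show that $\sigma_j$ agrees on the endpoints of any edge of $\mc H_j(\mc S)$ by subtracting the two embedding equations, then use connectivity of $\mc H_j(\mc S)$ to conclude $\sigma_j$ is constant on $\Sigma_j$. The paper's proof is slightly terser (it does not spell out the path induction or the $|\Sigma_j|=1$ case), but the argument is the same.
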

\begin{proof}
	Let $(G,\sigma_1,\dots,\sigma_k)$ be an Abelian embedding of $\mc S$, and let $j\in [k]$ be such that $\mc S$ is connected with respect to coordinate $j\in [k]$.
	
	Consider any edge $\set{x_j,x'_j}$ in the graph $\mc H_j(\mc S)$; there exists $x_{-j}\in \Sigma_{-j}$ such that $x=(x_j,x_{-j})\in \mc S$ and $x'=(x'_j,x_{-j})\in \mc S$.
	By the definition of Abelian embeddings, we have \[\sum_{i=1}^k \sigma_i(x_i) = 0_G = \sum_{i=1}^k \sigma_i(x'_i),\] and hence $\sigma_j(x_j) = \sigma_j(x'_j)$.
	The lemma now follows by observing that $\mc H_j(\mc S)$ is connected. 
\end{proof}


\section{Product Pseudorandomness, Inverse Theorems, and Generalized Random Restrictions}\label{sec:pair_inv_rr}

\subsection{Product Pseudorandomness and Inverse Theorems}

We define the crucial notion of a product pseudorandom function.
We say that a function is product-pseudorandom if its random restriction down to $n'$ coordinates has nontrivial correlation to a product function with small probability.
\begin{definition}\label{defn:prod_pseudo}($(n',\gamma)$-product pseudorandomess)
Let $(\Sigma, \mu)$ be a probability space, $n\in \N$.
For $n'\leq n$ and $\gamma>0$, we say that a function $f:\Sigma^n\to \C$ is $(n',\gamma)$-product pseudorandom if for any $\delta \in \sqbrac{\frac{n'}{n}, 1}$, the probability that a random restriction of $f$ down to $\bar{I}\sim _\delta [n]$ is $\gamma$-correlated to a product function is less than $\gamma$.
Precisely,

\[ \Pr_{I\sim_{1-\delta}[n],z\sim \mu^{\otimes I}}\sqbrac{\exists \set{P_i:\Sigma\to \C,\ \Norm{P_i}_{\infty} \leq 1}_{i\in \bar{I}} \text{ with } \abs{\E_{x\sim \mu^{\otimes \bar{I}}}\sqbrac{f_{I\to z}(x) \prod_{i\in \bar{I}}P_i(x_i)}}\geq \gamma} < \gamma.\]
\end{definition}

A key input to our analysis is an inverse theorem of \cite{BKLM24a,BKLM24b} which analyzes the correlations of product-pseudorandom functions over certain pairwise-connected distributions.
\begin{theorem}\label{thm:pair_conn_inv} (Inverse Theorem; {\cite[Lemma 1.4]{BKLM24b}})

	Let $k\in \N$, let $\Sigma_1,\dots,\Sigma_k$ be finite sets, and let $\mu$ be a distribution over $\Sigma_1\times \dots \times \Sigma_k$ such that:
	\begin{enumerate}
		\item $\mu$ is a pairwise-connected distribution (see Definition~\ref{defn:pair_conn}).
		\item The the marginal of $\mu$ on coordinates $\set{1,2,\dots,k-1}$, denoted $\mu_{-k}$, admits no-Abelian-embeddings (see Definition~\ref{defn:ab_emb}).
	\end{enumerate}
	Then, for every sufficiently large $n\in \N$ and every $\epsilon>0$, there exists $\delta = \delta(\epsilon)>0$, such that the following holds:
	If 1-bounded functions $f_i:\Sigma_i^n\to \C$, $i\in [k]$, satisfy \[ \abs{\E_{(x_1,\dots,x_k)\sim \mu^{\otimes n}}\sqbrac{\prod_{i=1}^k f_i(x_i)}} \geq \epsilon, \] 
	then $f_1$ is not $(\delta n, \delta)$-product pseudorandom (over the probability space $(\Sigma_1^n, \mu_1^{\otimes n})$, where $\mu_1$ denotes the marginal distribution of $\mu$ on coordinate $1$).
	
	Quantitatively, when $\epsilon\leq o_n(1)$, we can take $\delta = \exp(-\exp(\cdots \exp(\epsilon^{-1})))$, where the number of exponentials is at most $k^{O(k)}$.\footref{footnote:constant_wrt_n}
\end{theorem}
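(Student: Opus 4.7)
My plan is to reduce the $k$-player correlation bound over $\mu$ to a $2(k-1)$-player correlation bound over a \emph{doubled} distribution whose marginals have no non-trivial Abelian embeddings, then invoke Theorem~\ref{thm:bklm_inverse}, and finally translate the resulting noise-stability statement about the doubled function back into a product-pseudorandomness failure for $f_1$.

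Step 1 (Doubling via Cauchy--Schwarz). Write the hypothesis as $\abs{\E_{x_k}[f_k(x_k)\, g(x_k)]} \ge \epsilon$, where $g(x_k) := \E[\,f_1(x_1)\cdots f_{k-1}(x_{k-1}) \mid x_k\,]$, and use $|f_k|\le 1$ together with Cauchy--Schwarz over $x_k$ to obtain
\[\epsilon^2 \;\le\; \E_{x_k}\bigl[|g(x_k)|^2\bigr] \;=\; \E_{\tilde\mu^{\otimes n}}\!\Bigl[\,\prod_{i=1}^{k-1} f_i(x_i)\,\overline{f_i(x_i')}\,\Bigr],\]
where $\tilde\mu$ is the distribution on $(\Sigma_1\!\times\!\Sigma_1)\times\cdots\times(\Sigma_{k-1}\!\times\!\Sigma_{k-1})$ obtained by first sampling $x_k$ from its marginal and then drawing two conditionally independent copies $(x_1,\dots,x_{k-1})$ and $(x_1',\dots,x_{k-1}')$ of the remaining coordinates. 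This eliminates $f_k$ at the price of doubling the other players.

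Step 2 (Doubled distribution still has no Abelian embeddings). Using pairwise-connectivity of $\mu$ together with the no-Abelian-embeddings hypothesis on $\mu_{-k}$, I would show that any Abelian embedding $(\tau_i,\tau_i')_{i<k}$ of $\tilde\mu$ is trivial. On $\supp(\tilde\mu)$ the map $\sum_i \tau_i(x_i)$ must equal $-\sum_i \tau_i'(x_i')$ for every pair of points compatible with the same $x_k$, so $\sum_i \tau_i(x_i)$ is constant on each fiber $\supp(\mu_{-k}\mid x_k)$; pairwise connectivity of $\mu$ propagates this constant across fibers, so $\sum_i \tau_i(x_i)$ is constant on $\supp(\mu_{-k})$ and hence (after absorbing the constant) yields an Abelian embedding of $\mu_{-k}$, which is trivial by hypothesis. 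The same holds for $\tau'$. Thus Theorem~\ref{thm:bklm_inverse} is applicable to the $2(k-1)$-wise correlation.

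Step 3 (Extract the product function --- the main obstacle). Theorem~\ref{thm:bklm_inverse} applied to the $2(k-1)$-wise correlation gives that the doubled function $F_1(x_1,x_1') := f_1(x_1)\,\overline{f_1(x_1')}$ has noticeable noise stability under the pairwise marginal $\tilde\mu_{1,1'}^{\otimes n}$, with parameter of the claimed iterated-exponential form in $\epsilon^{-1}$. The crux is to convert this \emph{second-moment} information about $f_1\otimes\overline{f_1}$ into a \emph{first-moment} statement that a random restriction of $f_1$ correlates with a product function. I would use the random-scale averaging trick of Lemma~\ref{lemma:random_rr_noise_stab} to select a restriction scale $\delta'' \in [\delta,1]$ for which, with probability at least $\delta$, the restricted $F_1$ has large low-degree Fourier mass under $\tilde\mu_{1,1'}^{\otimes \bar I}$. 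Exploiting the rank-one tensor factorisation $F_1 = f_1 \otimes \overline{f_1}$, its dominant low-degree component should decompose as $P(x_1)\,\overline{P(x_1')}$ for a product function $P$ on $\Sigma_1^{\bar I}$, which then correlates with $f_{1,I\to z}$ at the required level $\delta$. Tracking quantitative losses through Cauchy--Schwarz, Theorem~\ref{thm:bklm_inverse}, and the restriction averaging should yield the $k^{O(k)}$ tower of exponentials. The hardest part will be executing this last extraction cleanly: the Fourier basis adapted to $\tilde\mu_{1,1'}$ is \emph{not} a product over the two copies, so isolating a rank-one $P\otimes\overline{P}$ piece requires carefully exploiting the swap symmetry $x_1\leftrightarrow x_1'$ of $F_1$ and the fact that its spectrum is a Hermitian square.
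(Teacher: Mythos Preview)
First, a framing note: the paper does not prove Theorem~\ref{thm:pair_conn_inv} at all --- it is imported wholesale from \cite[Lemma~1.4]{BKLM24b} and used as a black box. So there is no ``paper's proof'' to compare against; you are attempting an independent proof of a cited result.

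Your Step~2 is wrong, and this is fatal to the approach. The doubled distribution $\tilde\mu$ can have non-trivial Abelian embeddings even when $\mu_{-k}$ has none. Take $k=3$, $\Sigma_1=\Sigma_2=\Sigma_3=\mathbb{Z}/2$, and let $\mu$ be uniform on the GHZ support $\{x_1+x_2+x_3=0\}$. This $\mu$ is pairwise-connected (every two-coordinate projection is the full product), and $\mu_{-3}$ is uniform on $\{0,1\}^2$, which is connected and therefore has no Abelian embeddings (Lemma~\ref{lemma:conn_implies_nae}). Yet the doubled distribution $\tilde\mu$ supports $(x_1-x_1')+(x_2-x_2')=0$, so $\tau_i(a)=a$, $\tau_i'(a')=-a'$ (for $i=1,2$) is a non-trivial Abelian embedding of $\tilde\mu$ viewed as $2(k-1)$-ary. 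Equivalently, as a $(k-1)$-ary distribution over $\Sigma_i^2$, the maps $\sigma_i(a,a')=a-a'$ give a non-trivial embedding. Thus Theorem~\ref{thm:bklm_inverse} is simply not applicable to $\tilde\mu$.

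The precise false step is the claim that ``pairwise connectivity of $\mu$ propagates this constant across fibers.'' In the GHZ example, $\sum_i \tau_i(x_i)=x_1+x_2$ equals $0$ on the fiber over $x_3=0$ and $1$ on the fiber over $x_3=1$ --- constant on each fiber, but \emph{different} across fibers. Pairwise connectivity concerns two-coordinate projections of $\mu$; it says nothing about how the fibers $\supp(\mu_{-k}\mid x_k)$ relate to one another, and in particular does not force them to overlap. This obstruction (the presence of Abelian structure surviving the Cauchy--Schwarz doubling) is exactly why the conclusion of Theorem~\ref{thm:pair_conn_inv} is \emph{product}-pseudorandomness failure rather than mere noise-stability, and why the actual proof in \cite{BKLM24b} is substantially more involved than a reduction to Theorem~\ref{thm:bklm_inverse}.
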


Note that in the above theorem, by symmetry, if $\mu_{-k}$ has no-Abelian-embeddings, then each of $f_1, \dots, f_{k-1}$ are not product-pseudorandom. Thus, if any two $\mu_{-i}$ and $\mu_{-j}$ have no-Abelian embeddings, then all the functions $f_1,\dots,f_k$ are not product-pseudorandom. This motivates the notion of connectivity/pseudorandomness defined in Definition \ref{defn:pair_conn_marginal}.

\begin{corollary}\label{corr:pair_conn_bklm}
Let $k\in \N$, let $\Sigma_1,\dots,\Sigma_k$ be finite sets, and let $\mu$ be a distribution over $\Sigma_1\times \dots \times \Sigma_k$ that is pairwise-connected with no-marginal-Abelian-embeddings (see Definition~\ref{defn:pair_conn_marginal}).
	Then, for every sufficiently large $n\in \N$ and every $\delta>0$, there exists $\epsilon = \epsilon(\delta)>0$, such that the following holds:

Let $f_i:\Sigma_i^n\to [0,1]$, $i\in [k]$ be functions such that for each $i\in [k]$, the function $f_i-\mu_i(f_i)$ is $(\delta n, \delta)$-product pseudorandom (over the space $(\Sigma_i^n, \mu_i^{\otimes n})$, where $\mu_i$ denotes the marginal distribution of $\mu$ on coordinate $i$).
Then,
\[ \abs{\E_{(x_1,\dots,x_k)\sim \mu^{\otimes n}}\sqbrac{\prod_{i=1}^k f_i(x_i)} - \prod_{i=1}^k \mu_i(f_i)} \leq \epsilon. \]
Quantitatively, when $\delta\leq o_n(1)$, we can take
$\epsilon = \frac{1}{\log\log\cdots\log\brac{\frac{1}{\delta}}}$, where the number of logarithms at most $k^{O(k)}$.
\end{corollary}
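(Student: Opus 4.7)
The plan is to adapt the telescoping argument from the proof of Corollary~\ref{corr:bklm}, but now using Theorem~\ref{thm:pair_conn_inv} in place of Theorem~\ref{thm:bklm_inverse}, together with a permutation trick that exploits the fact that Definition~\ref{defn:pair_conn_marginal} provides \emph{two} indices $i_1,i_2$ whose $(k-1)$-marginals admit no-Abelian-embeddings.

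First I would write the standard telescope
\[
\prod_{i=1}^k f_i(x_i)-\prod_{i=1}^k \mu_i(f_i) = \sum_{t=1}^k \prod_{i<t}\mu_i(f_i)\cdot g_t(x_t) \cdot \prod_{i>t} f_i(x_i),
\]
where $g_t:=f_t-\mu_t(f_t)$, and observe that since $f_i\in [0,1]$ we have $|\mu_i(f_i)|\leq 1$. Taking expectations under $\mu^{\otimes n}$ and applying the triangle inequality, it suffices to bound each
\[
I_t\ :=\ \abs{\E_{(x_1,\dots,x_k)\sim \mu^{\otimes n}}\sqbrac{g_t(x_t)\prod_{i>t} f_i(x_i)}}
\]
by $\epsilon/k$, where we view $I_t$ as the absolute value of a $k$-fold correlation with $h_t:=g_t$, $h_i:=f_i$ for $i>t$, and $h_i\equiv 1$ (constant) for $i<t$. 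All the $h_i$ are $1$-bounded, which is what Theorem~\ref{thm:pair_conn_inv} requires.

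Next I would fix a permutation of the coordinates that places $g_t$ in the ``distinguished'' slot of Theorem~\ref{thm:pair_conn_inv}. By Definition~\ref{defn:pair_conn_marginal}, there exist distinct $i_1,i_2\in [k]$ such that $\mu_{-i_1}$ and $\mu_{-i_2}$ have no-Abelian-embeddings. Choose $i^*\in \{i_1,i_2\}\setminus \{t\}$, which is nonempty, and pick a permutation $\pi$ of $[k]$ sending $i^*\mapsto k$ and $t\mapsto 1$. The permuted distribution $\mu^\pi$ is still pairwise-connected, and $\mu^\pi_{-k}=\mu_{-i^*}$ admits no-Abelian-embeddings, so Theorem~\ref{thm:pair_conn_inv} applies to $\mu^\pi$ and the permuted $h_i$'s. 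Its contrapositive says: for every $\epsilon_0>0$ there is $\delta_0=\delta_0(\epsilon_0)>0$ such that if the function sitting in position $1$ of $\mu^\pi$, namely $g_t$ (regarded as a function on $\Sigma_t^n$ with the marginal $\mu_t^{\otimes n}$), is $(\delta_0 n,\delta_0)$-product pseudorandom, then $I_t<\epsilon_0$. Setting $\epsilon_0:=\epsilon/k$, this is exactly our hypothesis on $g_t$, so we obtain $I_t<\epsilon/k$ and summing over $t$ gives the desired bound.

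The main (mild) subtlety is just verifying that the permutation trick above is always available, which reduces to the pigeonhole observation that $|\{i_1,i_2\}\setminus\{t\}|\geq 1$ for every $t\in [k]$; nothing else in the argument is delicate. For the quantitative bound, inverting Theorem~\ref{thm:pair_conn_inv}'s dependence $\delta_0 = \exp(-\exp(\cdots\exp(\epsilon_0^{-1})))$ with $C\leq k^{O(k)}$ exponentials yields $\epsilon_0 = 1/\log\log\cdots\log(1/\delta_0)$ with $C$ logarithms, and then $\epsilon = k\cdot \epsilon_0$; since $k$ is a constant and an extra factor of $k$ in the numerator can be absorbed by adding one more logarithm (for $\delta$ small enough), the corollary's claimed $k^{O(k)}$-fold iterated logarithm follows by taking $\delta:=\delta_0(\epsilon/k)$.
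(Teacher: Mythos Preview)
Your proposal is correct and follows essentially the same approach as the paper: the same telescoping identity, triangle inequality, and application of Theorem~\ref{thm:pair_conn_inv} to each summand, using the fact that at least one of the two good indices $i_1,i_2$ differs from $t$. You are somewhat more explicit than the paper about the permutation placing $t$ in slot~1 and $i^*$ in slot~$k$, but the argument and the quantitative bookkeeping (absorbing the factor of $k$ by one extra logarithm) are identical.
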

\begin{proof}
	Let $\delta >0$, and let $\epsilon>0$ be such that $(\frac{\epsilon}{k},\delta)$ satisfy Theorem~\ref{thm:pair_conn_inv}.
	Then, we have
	\begin{align*}
		&\abs{\E_{(x_1,\dots,x_k)\sim \mu^{\otimes n}}\sqbrac{\prod_{i=1}^k f_i(x_i)} - \prod_{i=1}^k \mu_i(f_i)}
		\\&\qquad\qquad= \abs{\sum_{t=1}^k \E_{(x_1,\dots,x_k)\sim \mu^{\otimes n}}\sqbrac{\prod_{i=1}^{t-1}\mu_i(f_i)\cdot \brac{f_t(x_t)-\mu_t(f_t)}\cdot  \prod_{i=t+1}^k f_i(x_i)}}
		\leq k\cdot \frac{\epsilon}{k} = \epsilon,
	\end{align*}
	where we first used the triangle inequality and then used Theorem~\ref{thm:bklm_inverse} for each term.
	Note that $f_t(x_t)-\mu_t(f_t)$ is 1-bounded for each $t\in [k]$; also note that	 since there exist two distinct indices $i_1,i_2 \in [k]$, such that $\mu_{-i_1}$ and $\mu_{-i_2}$ admit no-Abelian-embeddings, the theorem is applicable to each term.
	
	The bound on $\epsilon$ is now $\frac{k}{\log\log\cdots\log\brac{\frac{1}{\delta}}}$, and for $\delta\leq o_n(1)$, this is as desired (by possibly increasing the number of logarithms by 1).
\end{proof}

Similar to how Corollary~\ref{corr:bklm} is only applicable to high-degree functions, the above corollary is only applicable to functions that are product-pseudorandom.
In most applications, the functions we care about are not product-pseudorandom, and we use a suitable generalization of the notion of random restrictions to make arbitrary functions product-pseudorandom.
In contrast to Lemma~\ref{lemma:random_rr_noise_stab}, the random restriction here shall depend on the functions we are working with.


\subsection{Generalized Random Restrictions}

In this section we introduce the key notion of a generalized random restriction which appeared in the prior works \cite{BKLM24a,BKLM24c}. To start, we define a specific type of restriction of a function/distribution which enforces that certain subsets of coordinates all take the same value.
\begin{definition}
	Let $\Sigma$ be a finite alphabet, let $n\in \N$, and let $f:\Sigma^n\to \C$ be a function.
	
	Let $T\subseteq [n]$.
	We define the function $f_{=T}:\Sigma^{n-\abs{T}+1}\to \C$ as follows: for $y\in \Sigma$ and $z\in \Sigma^{[n]\setminus T}$, let $x\in \Sigma^n$ be the vector with $x_i=z_i$ for $i\in [n]\setminus T$, and let $x_i=y$ for each $i\in T$; then, $f_{=T}(y,z) := f(x)$. 
	
	For disjoint sets $T_1,\dots,T_m \subseteq [n]$, we write $f_{=T_1,\dots,T_m} = \brac{\dots\brac{f_{=T_1}}_{=T_2}\dots}_{=T_m}$ as the naturally defined function on $n- \sum_{i=1}^m \abs{T_i} + m$ coordinates.
\end{definition}

With this definition in hand, we are ready to define our notion of generalized restrictions, which both enforces that certain subsets of coordinates take the same value, while fixing the values of other coordinates (i.e., a restriction as in Definition \ref{defn:restriction}).

\begin{definition}\label{defn:gr}
	(Generalized Restrictions)
	Let $\Sigma$ be a finite alphabet, and $n\in \N$.
	A generalized restriction $\rho$ on $\Sigma^n$ is a tuple $\rho = (T_1,T_2,\dots,T_m, I, z)$ where the sets $T_1,T_2,\dots,T_m,I \subseteq [n]$ form a disjoint partition of $[n]$, and $z\in \Sigma^{I}$.
	
	For a generalized restriction $\rho$, we call $m$ (also denoted $m(\rho)$) as the number of \emph{free variables/coordinates} in $\rho$, and define the event
	\[ E_\rho = \set{x\in \Sigma^n : x_i=x_j\ \forall k\in [m],\ i,j\in T_k \text{ and } x_i = z_i\ \forall i\in I }. \]
	
	For a function $f:\Sigma^n\to \C$, and a generalized restriction $\rho = (T_1,T_2,\dots,T_m, I, z)$, we define the restricted function $f_\rho: \Sigma^m \to \C$ as $f_\rho(y) = f(x)$, where 
	\[ x_i = \begin{cases} y_j,& i\in T_j,\ j\in [m]	 \\z_i,&i\in I  \end{cases}.\]	
	Note that the above is the same as the function $\brac{f_{I\to z}}_{=T_1,T_2,\dots, T_m}$.
\end{definition}
Thus, the restriction $\rho$ sets all input variables in each $T_j$ to be equal, and sets variables in coordinates $I$ to value $z$.
Note that this generalizes the usual notions of restrictions (see Definition~\ref{defn:restriction}) which is the same as above with each $T_j$ of size 1.
We observe that generalized restrictions are closed under composition:
\begin{observation} (Composition of generalized restrictions)
	Let $\Sigma$ be a finite alphabet, let $n\in \N$, and let $f:\Sigma^n\to \C$ be a function.
	Let $\rho = (T_1,\dots,T_m,I,z)$ be a generalized random restriction on $\Sigma^n$ with $m\leq n$ free coordinates, and let $\rho'=(S_1,\dots,S_k, J,w) $ be a generalized random restriction on $\Sigma^m$ with $k\leq m$ free coordinates.
	
	Then, the function $(f_\rho)_{\rho'}:\Sigma^k\to \C$ equals the function $f_{\rho'\circ \rho}$, where  $\rho'\circ \rho$ is a generalized random restriction on on $\Sigma^n$ with $k$  free coordinates, defined as $\rho'\circ \rho = (U_1,\dots,U_k, K,y)$, where:
	\begin{enumerate}
		\item For each $i\in [k]$, it holds that $U_i = \bigcup_{j\in S_i} T_j$.
		\item $K = \bigcup_{j\in J}T_j \cup I$.
		\item $y\in \Sigma^K$ is given as follows: for $i\in I$, it holds that $y_i=z_i$; for $i\in T_j$, for some $j\in J$, it holds that $y_i = w_j$.
	\end{enumerate}
\end{observation}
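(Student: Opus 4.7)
The plan is to verify the observation by unfolding the definitions on both sides and checking agreement pointwise; there is no substantive mathematical obstacle here, only bookkeeping to track where each coordinate of $[n]$ is ultimately sent. Two things must be checked: first, that $\rho'\circ \rho$ as defined is itself a valid generalized restriction on $\Sigma^n$ with $k$ free coordinates; second, that the functional identity $(f_\rho)_{\rho'}(a) = f_{\rho'\circ \rho}(a)$ holds for every input $a\in \Sigma^k$.

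For the first point, I would check that $(U_1,\ldots,U_k,K)$ is a disjoint partition of $[n]$. Since $T_1,\ldots,T_m,I$ partition $[n]$, each $\ell\in[n]$ lies in a unique element of that partition. If $\ell\in I$, then $\ell\in K$ by definition of $K$. Otherwise $\ell\in T_j$ for a unique $j\in[m]$, and since $S_1,\ldots,S_k,J$ partition $[m]$, either $j\in S_i$ for a unique $i$ (placing $\ell\in U_i$) or $j\in J$ (placing $\ell\in K$). Disjointness of the $U_i$'s, and of each $U_i$ from $K$, follows from the disjointness of the $S_i$'s among themselves and from $J$, together with the disjointness of the $T_j$'s and of $I$ from each $T_j$. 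Moreover, $y\in \Sigma^K$ is well-defined because its two defining cases apply to the disjoint subsets $I$ and $\bigcup_{j\in J}T_j$ of $K$.

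For the second point, I would fix an input $a\in \Sigma^k$ (using $a$ to sidestep the notational overloading of $y$, which is already the fixed string in $\Sigma^K$ built inside $\rho'\circ\rho$) and unfold Definition~\ref{defn:gr} on each side. The left side $(f_\rho)_{\rho'}(a)$ is computed in two stages: first form $u\in \Sigma^m$ by setting $u_j = a_i$ when $j\in S_i$ and $u_j = w_j$ when $j\in J$; then $(f_\rho)_{\rho'}(a) = f_\rho(u) = f(v)$ with $v\in \Sigma^n$ given by $v_\ell = u_j$ when $\ell\in T_j$ and $v_\ell = z_\ell$ when $\ell\in I$. The right side is $f_{\rho'\circ \rho}(a) = f(x)$ with $x_\ell = a_i$ when $\ell\in U_i$ and $x_\ell = y_\ell$ when $\ell\in K$. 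Comparing $v$ and $x$ coordinate-by-coordinate: if $\ell\in I$, both equal $z_\ell$; if $\ell\in T_j$ with $j\in J$, both equal $w_j$; and if $\ell\in T_j$ with $j\in S_i$, both equal $a_i$. Thus $v=x$ and the identity holds.

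The only point requiring attention is clean naming to avoid the overloaded use of $y$, and the slightly nested case analysis distinguishing whether a coordinate $\ell\in[n]$ ends up fixed (in $K$) or free (in some $U_i$) under the composition; both are routine.
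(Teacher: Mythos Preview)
Your proposal is correct and is precisely the verification the paper has in mind: the paper's own proof consists of the single line ``This is verified by simply following the definitions,'' and your argument is exactly that unfolding, carried out carefully.
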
 
\begin{proof}
	This is verified by simply following the definitions.
\end{proof}

Finally, we define a generalized random restriction. Formally, this is a distribution over generalized restrictions such that the overall original distribution $\mu^{\otimes n}$ is preserved on average.

\begin{definition}\label{defn:grr} (Generalized Random Restriction)
	Let $(\Sigma,\mu)$ be a probability space, and let $n\in \N$.
	Let $1\leq m\leq n$, and let $\epsilon \in [0,1]$.
	An $(m,\epsilon)$-generalized random restriction on $\Sigma^n$ is a distribution $\mc R$ over generalized restrictions on $\Sigma^n$ (see Definition~\ref{defn:gr}) satisfying:
	\begin{enumerate}
		\item Each $\rho\in\supp(\mc R)$ has at least $m$ free coordinates.
		\item The distribution obtained by first sampling $\rho\sim \mc R$ and then sampling $x\in \Sigma^n$ conditioned on the restriction $\rho$, is close (in $\ell_1$-norm) to the distribution $\mu^{\otimes n}$. Formally, \[ \Norm{\E_{\rho \sim \mc R}\sqbrac{\mu^{\otimes n}| E_\rho} - \mu^{\otimes n}}_1 \leq \epsilon. \]
	\end{enumerate}	
\end{definition}
Note that property 2 justifies the use of the term \emph{random restriction}. Property 2 additionally implies that $\abs{\E_{\rho\sim \mc R}\sqbrac{\mu(f_\rho)}-\mu(f)}\leq \epsilon$ for every 1-bounded $f:\Sigma^n\to \C$.

\subsection{Product Pseudorandomness under Generalized Random Restrictions}

In this subsection, we show that arbitrary functions can be made product-pseudorandom under suitable generalized random restrictions.
We start by showing an increment lemma, which we later iterate.
The proof of this lemma follows Lemma 7.1 in \cite{BKLM24c}.

\begin{lemma}\label{lemma:uniformization_increment} (Increment Lemma)
	Let $(\Sigma,\mu)$ be a probability space with $\abs{\Sigma}\geq 2$, let $c = \frac{1}{100\abs{\Sigma}^2}$, and let $n\in \N$ be sufficiently large.
	
	For any $\gamma \in [n^{-c}, 1]$, the following holds:
	Let $g:\Sigma^n\to \C$ be a 1-bounded function such that $g-\mu(g)$ is not $(\sqrt{n},\gamma)$-product pseudorandom.
	Then, there exists a $(n^c, n^{-0.01})$-generalized random restriction $\mc R$ on $\Sigma^n$ (see Definition~\ref{defn:grr}) such that \[ \E_{\rho\sim \mc R}\abs{\mu(g_\rho)}^2 \geq \abs{\mu(g)}^2 +\frac{\gamma^3}{16}.\]
\end{lemma}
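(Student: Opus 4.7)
Write $f = g - \mu(g)$, so $\mu(f) = 0$. The plan is to exploit the failure of product pseudorandomness to build a generalized random restriction whose application boosts $|\mu(g_\rho)|^2$ by $\gamma^3/16$ on average, following the general template introduced in \cite{BKLM24c}. By Definition~\ref{defn:prod_pseudo}, there exist $\delta\in[n^{-1/2},1]$ and a set $\mc E$ of restrictions $(I,z)$ (with $I\sim_{1-\delta}[n]$ and $z\sim\mu^{\otimes I}$) satisfying $\Pr[(I,z)\in \mc E]\geq \gamma$, such that for each $(I,z)\in\mc E$ there is a 1-bounded product function $P^{(I,z)}(x)=\prod_{i\in\bar I}P^{(I,z)}_i(x_i)$ with $\alpha_{I,z} := \E_{x\sim\mu^{\otimes\bar I}}[f_{I\to z}(x)\cdot P^{(I,z)}(x)]$ of absolute value $\geq \gamma$; after rotating $P^{(I,z)}$ by a unit complex scalar, I assume $\alpha_{I,z}\geq \gamma$ is a nonnegative real.

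The distribution $\mc R$ would be constructed as follows: first sample $(I,z)$ with $I\sim_{1-\delta}[n]$ and $z\sim\mu^{\otimes I}$; if $(I,z)\notin\mc E$, declare all of $\bar I$ to consist of singleton blocks (trivial further restriction); if $(I,z)\in\mc E$, sample a uniformly random partial matching on $\bar I$ of size $m^\star\approx |\bar I|^{1-c'}$ for a small constant $c'$, take these as size-$2$ blocks, and leave the remaining coordinates as singletons. The number of free blocks is $m\geq |\bar I|-m^\star\geq n^c$ since $|\bar I|\geq \sqrt n$, giving the first requirement of Definition~\ref{defn:grr}; the second ($\ell_1$-distance $\leq n^{-0.01}$) would follow from a coordinate-by-coordinate calculation, as only $O(m^\star)$ coordinates have distorted conditional law and the per-coordinate distortion averages to $O(1/|\bar I|)$.

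For the increment, I would write $\mu(g_\rho)=\mu(g)+\mu(f_\rho)$ and use $|\E_\rho\mu(f_\rho)|\leq n^{-0.01}$ to obtain
\[
\E_{\rho\sim\mc R}|\mu(g_\rho)|^2 \geq |\mu(g)|^2 - 2n^{-0.01} + \E_\rho|\mu(f_\rho)|^2,
\]
reducing the task to lower-bounding $\E_\rho|\mu(f_\rho)|^2$ by essentially $\gamma^3/8$. Restrictions with $(I,z)\notin\mc E$ contribute nonnegatively to $\E_\rho|\mu(f_\rho)|^2$, so it suffices to focus on $(I,z)\in\mc E$. Conditional on such $(I,z)$, the second-moment identity $\E_{\mathrm{matching}}|\mu(f_\rho)|^2 = \E_{y,y'\sim\mu^{\otimes m}}\big[f_{I\to z}(x(y))\,\overline{f_{I\to z}(x(y'))}\big]$ (where $x(y)$ extends $y$ through the block structure) lets me decompose $f_{I\to z}$ as its $L^2(\mu^{\otimes\bar I})$-projection onto $\overline{P^{(I,z)}}$ plus an orthogonal remainder $h_{I,z}$; the cross terms vanish and the $P$-component contributes an amount $\gtrsim \alpha_{I,z}^2\geq\gamma^2$ (up to a constant factor from the ``pair operator'' on $L^2(\Sigma,\mu)$, whose action on each matched pair $\{i,j\}$ converts $\E[P_i]\E[P_j]$ into $\E[P_iP_j]$). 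Combined with $\Pr[(I,z)\in\mc E]\geq\gamma$ this should give $\E_\rho|\mu(f_\rho)|^2 \geq \gamma^3/8$ and hence $\E_\rho|\mu(g_\rho)|^2\geq |\mu(g)|^2+\gamma^3/16$ for $n$ sufficiently large.

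The hard part will be the final step: ensuring that the product-function component's contribution actually survives averaging over random matchings, and that the orthogonal remainder $h_{I,z}$ does not conspire to cancel the gain. This requires careful spectral analysis of the pair operator on $L^2(\Sigma,\mu)$ and pigeonhole arguments to control the interaction of the $P_i^{(I,z)}$ under random pairing; this is where the dependence of $c=1/(100|\Sigma|^2)$ on $|\Sigma|$ should naturally emerge, through the gap between the largest and second-largest eigenvalues of the associated operator.
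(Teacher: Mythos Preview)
Your approach has a concrete gap: a random size-$2$ matching cannot in general turn correlation with a product function into a large value of $|\mu(f_\rho)|$. Take $\Sigma=\{0,1\}$ with uniform measure and $f(x)=(-1)^{\sum_i x_i}$, so $\mu(f)=0$ and $f$ is itself a product function (hence not $(\sqrt n,\gamma)$-product pseudorandom for any $\gamma<1$). Under your construction each matched pair contributes the factor $(-1)^{2y}=1$, while each unmatched singleton contributes $(-1)^{y}$, which has mean zero; since you leave the unmatched coordinates as free singletons, $\mu(f_\rho)=0$ for \emph{every} matching, and $\E_\rho|\mu(f_\rho)|^2=0$. Even if you repair this by randomly fixing the unmatched coordinates, size-$2$ blocks still fail: on $\Sigma=\Z_3$ with $f(x)=\omega^{\sum_i x_i}$, $\omega=e^{2\pi i/3}$, a matched pair contributes $\omega^{2y}$, which again has mean zero. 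Your assertion that ``the $P$-component contributes $\gtrsim\alpha_{I,z}^2$'' is exactly what breaks: after your restriction, $\mu(\overline P_\rho)=\prod_{\text{blocks}}\E\bigl[\prod_{j\in\text{block}}\overline{P_j}\bigr]$, and nothing prevents individual factors from vanishing. There is no spectral-gap rescue here, because the relevant ``pair operator'' can have eigenvalue $0$.

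The paper's construction is designed around precisely this obstruction and is substantively different: it arranges that $P_\rho$ is \emph{approximately a unimodular constant}, so that $|\mu(f_\rho)|\approx|\E[f_{I\to z}\cdot P]|\ge\gamma$ directly. Writing $P_j(x)=e^{2\pi i v_j(x)}$ with $v_j\in(\R/\Z)^{\Sigma}$, one first fixes a single good $I$ by an averaging/Chernoff step; then, for each good $z$, pigeonholes the phase vectors $v_j$ to find $r\approx m^{1/(16|\Sigma|^2)}$ disjoint clusters $S_1,\dots,S_r$ (each of size $\approx\sqrt m$) on which the $v_j$ are $m^{-1/(2|\Sigma|)}$-close; and applies a Dirichlet-type pigeonhole inside each cluster to find a block size $k_i\le m^{1/(4|\Sigma|)}$ with $\|k_i v_{a_i}\|_\infty$ tiny. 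Taking $T_i\subseteq S_i$ uniformly of size $k_i$, forcing equality on each $T_i$, and \emph{randomly fixing all remaining coordinates of $\bar I$} makes $\sum_{j\in T_i}v_j\approx k_i v_{a_i}\approx 0$, so $Q:=P_\rho$ is constant up to $O(r\cdot m^{-1/(8|\Sigma|^2)})$. Then $\gamma\le|\E_x[f_{I\to z}(x)P(x)]|\le \E_{T,u}|\mu(f_\rho)|+o(1)$, and Cauchy--Schwarz yields the $\gamma^2$ second-moment gain (and hence $\gamma^3$ after weighting by $\Pr[z\in\mc G]\ge\gamma/2$). The exponent $c=1/(100|\Sigma|^2)$ comes from these two pigeonhole steps in the $|\Sigma|$-dimensional torus, not from a spectral gap.
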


\begin{proof}
	Let $\delta = 1/\sqrt{n}$ and $f=g-\mu(g)$; then, we know that $f$ is 2-bounded, and not $(\delta n,\gamma)$-product pseudorandom.
	That is, for some $\delta'\geq \delta$,
	\[ \Pr_{I\sim_{1-\delta'}[n],z\sim \mu^{\otimes I}}\sqbrac{\exists \set{P_i:\Sigma\to \C,\ \Norm{P_i}_{\infty} \leq 1}_{i\in \bar{I}} \text{ with } \abs{\E_{x\sim \mu^{\otimes \bar{I}}}\sqbrac{f_{I\to z}(x) \prod_{i\in \bar{I}}P_i(x_i)}}\geq \gamma} \geq  \gamma.\]
	This implies that with probability at least $\gamma/2$ over $I\sim_{1-\delta'}[n]$, it holds that
	\[ \Pr_{z\sim \mu^{\otimes I}}\sqbrac{\exists \set{P_i:\Sigma\to \C,\ \Norm{P_i}_{\infty} \leq 1}_{i\in \bar{I}} \text{ with } \abs{\E_{x\sim \mu^{\otimes \bar{I}}}\sqbrac{f_{I\to z}(x) \prod_{i\in \bar{I}}P_i(x_i)}}\geq \gamma} \geq  \frac{\gamma}{2}.\]
	By a Chernoff bound (Fact~\ref{fact:chernoff}), we know $\Pr_{I\sim_{1-\delta'}[n]}[\abs{I} \geq \brac{1-\delta/2}n] \leq e^{-\delta n/8} < \gamma/2$.
	Hence, there exists $I\subseteq [n]$, with $\abs{I}\leq \brac{1-\delta/2}n$, and such that 
	\[
		\Pr_{z\sim \mu^{\otimes I}}\sqbrac{\exists \set{P_i:\Sigma\to \C,\ \Norm{P_i}_{\infty} \leq 1}_{i\in \bar{I}} \text{ with } \abs{\E_{x\sim \mu^{\otimes \bar{I}}}\sqbrac{f_{I\to z}(x) \prod_{i\in \bar{I}}P_i(x_i)}}\geq \gamma} \geq  \frac{\gamma}{2}.
	\]

	We fix such an $I$, and relabel $\bar{I}$ as $[m]$, with $m\geq \delta n/2$.
	Also, we define the set $\mc G$ as the set of all $z$ for which the condition in the above equation holds; that is
	\[ \mc G = \set{z\in \Sigma^I : \exists \set{P_i:\Sigma\to \C,\ \Norm{P_i}_{\infty} \leq 1}_{i\in \bar{I}} \text{ with } \abs{\E_{x\sim \mu^{\otimes \bar{I}}}\sqbrac{f_{I\to z}(x) \prod_{i\in \bar{I}}P_i(x_i)}}\geq \gamma}.\]

	The required distribution $\mc R$ is now defined as follows:
	Let $z\sim \mu^{\otimes I}$ be chosen randomly.
	Consider the following cases:
\begin{itemize}
\item If $z\not \in \mc G$: then all the coordinates in $[m]$ are kept alive; formally, output the restriction $\rho=(T_1,\dots,T_m,I,z)$ where each $T_i = \set{i} \subseteq [m]=\bar{I}$.

\item Suppose $z\in \mc G$:
	Let $\set{P_i:\Sigma\to \C,\ \Norm{P_i}_{\infty} \leq 1}_{i\in [m]}$ be such that 
	\[\abs{\E_{x\sim \mu^{\otimes m}}\sqbrac{f_{I\to z}(x) \prod_{i\in [m]}P_i(x_i)}}\geq \gamma. \]
	Let $v_1,\dots,v_m:\Sigma\to \R/\Z$ be such that $P_j(x)=e^{2\pi i v_j(x)}$ for all $j\in [m], x\in \Sigma$.\footnote{It is without loss of generality that $\abs{P_j(x)}=1$ for each $j\in [m], x\in \Sigma$. This is because there always exist such functions $P_1,\dots,P_m$ maximizing the quantity $\abs{\E_{x\sim \mu^{\otimes m}}\sqbrac{f_{I\to z}(x) \prod_{i\in [m]}P_i(x_i)}}$, since the expression is \emph{multilinear} in the $P_i(x_i)$'s.}
	
	Let $s = \abs{\Sigma}\geq 2$ and $r=\lceil m^{\frac{1}{16s^2}}\rceil $.
	By the pigeonhole principle, we can find disjoint sets $S_1,S_2,\dots, S_{r} \subseteq [m]$, each of size $\abs{S_i} = \lceil\sqrt{m}/2\rceil$, such that for each $i\in [r]$ and $j,j'\in S_i$, we have $\Norm{v_j-v_{j'}}_{\infty} \leq m^{-\frac{1}{2s}}$.
	Consider arbitrary indices $a_1\in S_1,\ a_2\in S_2,\dots,a_{r}\in S_{r}$, and for each $i\in [r]$, let $1\leq k_i \leq m^{\frac{1}{4s}}$ be such that $ \Norm{k_i v_{a_i}}_{\infty} \leq m^{-\frac{1}{8s^2}}$; note that such a $k_i$ exists by applying the pigeonhole principle on the vectors $v_{{a_i}},2v_{a_i}, 3v_{a_i}, \dots, \lfloor m^{\frac{1}{4s}}\rfloor \cdot v_{a_i}$.
	
	Now, for each $i\in [r]$, let $T_i$ be a uniformly random subset of $S_i$ of size $k_i$.
	We perform the following (generalized) random restriction: For each $i\in [r]$, force $x_j=x_{j'}$ for each $j,j'\in T_i$, and then do a uniform random restriction on coordinates $J:=[m]\setminus (T_1\cup T_2\cup\dots T_r)$, given by $u\sim \mu^{\otimes J}$.
	
	Finally the random restriction is given by $\rho = (T_1,\dots,T_r,\ I\cup J,\ (z,u))$.
\end{itemize}

	We show that $\mc R$ satisfies the three conditions of the lemma statement. 
\begin{enumerate}
	\item By definition, each $\rho\in \supp(\mc R)$ contains at least $r \geq m^{\frac{1}{16s^2}} \geq \brac{\frac{\delta n}{2}}^{\frac{1}{16s^2}}$ free coordinates.
	This is at least $n^c$.
	\item It suffices to show that for every choice of $z\in \mu^{\otimes I}$, it holds that \[ \Norm{\E\sqbrac{\mu^{\otimes n}\st E_\rho, z} -  \brac{\mu^{\otimes n}\st z} }_1 \leq n^{-0.01}. \]
	In the first case when $z\not\in \mc G$, this holds with error zero.
	
	In the second case $z\in \mc G$, denoting $\rho' = (T_1,\dots,T_r, J,u)$ as the generalized random restriction on $\Sigma^m=\Sigma^{\bar{I}}$, we can bound the expression on the left hand side using Lemma~\ref{lemma:small_set_force_equality} as
	\begin{equation}\label{eqn:avg_distr}
		\Norm{\E_\rho \sqbrac{\mu^{\otimes m} \st E_{\rho'} }-\mu^{\otimes m}}_1 \leq \sum_{i=1}^r \frac{Ck_i}{\sqrt{\abs{S_i}}} \leq \sqrt{2}C\cdot r\cdot m^{\frac{1}{4s}-\frac{1}{4}} \leq  O\brac{ m^{\frac{1}{16s^2}+\frac{1}{4s}-\frac{1}{4}}} \leq n^{-0.01}.
	\end{equation}
	For the last inequality, we used that $s\geq 2$ and $m\geq \delta n/2 = \sqrt{n}/2$.
\end{enumerate}
	
	It remains to show the third condition, which is the $\ell_2$ increment.
	For this, we consider the second case of the generalized random restriction (which occurs with probability $\gamma/2$) and consider some fixed $z\in \mc G$.
	Let $P = P_1\cdot P_2\cdots P_m$, and $Q = \brac{P_{J\to u}}_{=T_1,\dots,T_r}$.
	First, we show that for every choice of $T_1,\dots,T_r,u$, the restricted function $Q$ is nearly constant: for any $x\in \Sigma^{r}$:
	\begin{align*}
		\abs{Q(x)-\prod_{j\in J} P_j(u_j)} &=  \abs{\brac{\prod_{j\in J} P_j(u_j) }\cdot \sqbrac{\exp\brac{2\pi i \sum_{t=1}^r \sum_{j\in T_t}v_j(x_t)}-  1}}
		\\&\leq  \abs{\sum_{t=1}^r \sum_{j\in T_t}v_j(x_t)} \leq \sum_{t=1}^r \Norm{\sum_{j\in T_t} v_j}_{\infty}.
	\end{align*}
	Now, for any $i\in [r]$, we have
	\[ \Norm{\sum_{j\in T_i} v_j}_{\infty} \leq \Norm{k_i v_{a_i}}_{\infty} + \sum_{j\in T_i} \Norm{v_j-v_{a_i}}_{\infty} \leq m^{-\frac{1}{8s^2}}+k_i\cdot m^{-\frac{1}{2s}} \leq 2\cdot m^{-\frac{1}{8s^2}} .\]
	This implies that $\abs{Q(x)-\prod_{j\in J} P_j(u_j)}\leq 2\cdot  r\cdot m^{-\frac{1}{8s^2}} \leq 2\cdot m^{-\frac{1}{16s^2}}$.
	
	Now, using the above and Equation~\ref{eqn:avg_distr}, along with the fact that $f$ is 2-bounded, we get
	\begin{align*}
		\gamma &\leq \abs{\E_{x\sim \mu^{\otimes m}}\sqbrac{f_{I\to z}(x)P(x)}}
		\\&\leq \abs{\E_{T_1,\dots,T_r,u}\E_{x\sim \mu^{\otimes r}}\sqbrac{(f_{I\to z, J\to u})_{=T_1,\dots T_r}(x)\cdot \brac{P_{J\to u}}_{=T_1,\dots,T_r}(x)}} + 2\cdot n^{-0.01}
		\\&\leq \E_{T_1,\dots,T_r,u}\abs{\E_{x\sim \mu^{\otimes r}}\sqbrac{(f_{I\to z, J\to u})_{=T_1,\dots T_r}(x)}} + 4\cdot m^{-\frac{1}{16s^2}} + 2\cdot n^{-0.01}
		\\&= \E_{T_1,\dots,T_r,u} \abs{\mu(f_\rho)} + 4\cdot m^{-\frac{1}{16s^2}} + 2\cdot n^{-0.01}
		\\&= \E_{T_1,\dots,T_r,u} \abs{\mu(g_\rho)-\mu(g)} + 4\cdot m^{-\frac{1}{16s^2}} + 2\cdot n^{-0.01}
		\\&\leq \E_{T_1,\dots,T_r,u} \abs{\mu(g_\rho)-\mu(g)} + 6\cdot n^{-\frac{1}{50s^2}}.
	\end{align*}
	Using that $\gamma \geq n^{-c} \geq 12\cdot n^{-\frac{1}{50s^2}}$, and Cauchy-Schwarz, we get
	\[  \E_{T_1,\dots,T_r,u} \abs{\mu(g_\rho)-\mu(g)}^2 \geq \brac{\frac{\gamma}{2}}^2 = \frac{\gamma^2}{4}. \]
	Now, averaging over $z$, and observing that $\Pr[z\in \mc G] \geq \gamma/2$, we get
	\begin{align*}
		\frac{\gamma^3}{8} &\leq \E_{\rho\sim \mc R} \abs{\mu(g_\rho)-\mu(g)}^2
		\\&= \E_{\rho\sim \mc R}\abs{\mu(g_\rho)}^2 + \abs{\mu(g)}^2 - \overline{\mu(g)}\cdot \E_{\rho\sim \mc R}\sqbrac{\mu(g_\rho)}  - \mu(g)\cdot \overline{\E_{\rho\sim \mc R}\sqbrac{\mu(g_\rho)}}
		\\&\leq \E_{\rho\sim \mc R}\abs{\mu(g_\rho)}^2 + \abs{\mu(g)}^2 - 2\abs{\mu(g)}^2 + 2\cdot n^{-0.01}
		\\&\leq \E_{\rho\sim \mc R}\abs{\mu(g_\rho)}^2 - \abs{\mu(g)}^2  + \frac{\gamma^3}{16}. \qedhere
	\end{align*}
\end{proof}

Now, we iterate the increment lemma, and show that any small collection of functions can be made product pseudorandom under generalized random restrictions.

\begin{proposition}\label{prop:uniformization} (Uniformization)
	Let $(\Sigma,\mu)$ be a probability space with $\abs{\Sigma}\geq 2$, let $c = \frac{1}{100\abs{\Sigma}^2}$, and let $r,n\in \N$.
	Let $g_1,\dots,g_r:\Sigma^n\to \C$ be a collection of  1-bounded functions.

	Let $0< \delta,\gamma \leq 1$ be such that $\frac{1}{\delta\gamma^3} \leq \frac{1}{200r\log(1/c)}\cdot \log\log n$, and let $T = \big\lceil\frac{50r}{\delta \gamma^3}\big\rceil \in \N$.
	Then, there exists a $(n^{c^T},\ n^{-c^T})$-generalized random restriction $\mc R$ on $\Sigma^n$ (see Definition~\ref{defn:grr}), 
	such that the following holds with probability at least $1-\delta$ over $\rho\sim \mc R$: 
	
	Suppose $m(\rho)$ denotes the number of free coordinates in $\rho$.
	Then, for every $i\in [r]$, the function $(g_i)_\rho:\Sigma^{m(\rho)}\to \C$ is such that $(g_i)_\rho-\mu((g_i)_\rho)$ is $(\sqrt{m(\rho)}, \gamma)$-product pseudorandom.
\end{proposition}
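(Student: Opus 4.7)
The plan is to iterate Lemma~\ref{lemma:uniformization_increment} via a branching process, driven by the potential function $\Phi(\rho) := \sum_{i=1}^r \abs{\mu((g_i)_\rho)}^2$, which always lies in $[0,r]$. Each refinement step will identify some $\rho$ in our current distribution for which some $(g_i)_\rho - \mu((g_i)_\rho)$ fails to be $(\sqrt{m(\rho)},\gamma)$-product pseudorandom, and apply Lemma~\ref{lemma:uniformization_increment} to boost the expected value of $\abs{\mu((g_i)_\rho)}^2$ by at least $\gamma^3/16$; meanwhile, by Jensen's inequality applied to the distribution-approximation property, the remaining $\abs{\mu((g_j)_\rho)}^2$ do not appreciably decrease. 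Since $\Phi \leq r$, refinements succeed on average at most $O(r/\gamma^3)$ times, which matches the parameter $T$.

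Concretely, I would define a sequence $\mc R_0, \mc R_1, \dots, \mc R_T$ of distributions over generalized restrictions on $\Sigma^n$, with $\mc R_0$ the trivial restriction. Given $\mc R_t$, form $\mc R_{t+1}$ as follows: for each $\rho \in \supp(\mc R_t)$, if every $(g_i)_\rho - \mu((g_i)_\rho)$ is already $(\sqrt{m(\rho)}, \gamma)$-product pseudorandom, keep $\rho$ unchanged; otherwise, pick any failing index $i_\rho$, apply Lemma~\ref{lemma:uniformization_increment} to $(g_{i_\rho})_\rho$ on $\Sigma^{m(\rho)}$ to obtain an $(m(\rho)^c, m(\rho)^{-0.01})$-generalized random restriction $\mc R'_\rho$, and replace $\rho$ by its composition with $\mc R'_\rho$. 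Let $p_t$ denote the probability over $\rho \sim \mc R_t$ that $\rho$ is ``bad'' in this sense. The increment lemma will yield $\E_{\mc R_{t+1}}[\Phi] \geq \E_{\mc R_t}[\Phi] + p_t \cdot \gamma^3/16 - o(1)$, where the $o(1)$ absorbs the $\ell_1$-approximation error at each composition. Telescoping and using $\Phi \leq r$ then gives $\sum_{t=0}^{T-1} p_t \leq 16r/\gamma^3 + o(1)$. Since $T \geq 50r/(\delta\gamma^3)$, the average of the $p_t$ is strictly less than $\delta$, so by Markov some $t^* \in \set{0,\dots,T-1}$ satisfies $p_{t^*} < \delta$; I would output $\mc R := \mc R_{t^*}$.

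The last step is to verify the parameters of $\mc R$. Every $\rho \in \supp(\mc R_{t^*})$ retains at least $n^{c^{t^*}} \geq n^{c^T}$ free coordinates, by iterating the free-coordinate bound of Lemma~\ref{lemma:uniformization_increment}, while the total $\ell_1$-error accumulated from composing $t^* \leq T$ generalized random restrictions is at most $\sum_{t=0}^{T-1} n^{-0.01 c^t} \leq T\cdot n^{-0.01 c^{T-1}}$. The hypothesis on $\delta,\gamma$ forces $T \leq \frac{\log\log n}{4\log(1/c)}+1$, whence $c^T \geq (\log n)^{-1/4}$, so (using $c \leq 1/400$) one easily concludes this error is at most $n^{-c^T}$, as required. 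The main obstacle I anticipate is setting up composition of generalized random restrictions cleanly and checking that $\ell_1$-approximation errors compose additively, so that the per-step approximation error stays far smaller than the $\gamma^3/16$ per-step gain and can be safely absorbed into the telescoping argument; here the gap between $\gamma^3 \gtrsim 1/\log\log n$ and the step-error $\lesssim n^{-\Omega((\log n)^{-1/4})}$ is enormous, which should make the bookkeeping robust.
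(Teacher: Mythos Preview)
The proposal is correct and takes essentially the same approach as the paper: iterate the increment lemma using the potential $\sum_i |\mu((g_i)_\rho)|^2$, use Jensen/Cauchy--Schwarz to show the non-incremented summands can only drop by the tiny $\ell_1$-approximation error, and then pigeonhole on $t\in\{0,\dots,T-1\}$ to find a step with bad-probability below $\delta$. The only point you gloss over that the paper makes explicit is verifying the hypothesis $\gamma \geq m(\rho)^{-c}$ of Lemma~\ref{lemma:uniformization_increment} at each application, which follows from your own parameter bound $c^T \gtrsim (\log n)^{-1/4}$ together with $\gamma^3 \gtrsim 1/\log\log n$.
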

\begin{proof}
	For any generalized restriction $\rho$, we say that a restriction $\rho$ is bad for $i\in [r]$ if the function $(g_i)_\rho:\Sigma^{m(\rho)}\to \C$ is such that $(g_i)_\rho-\mu((g_i)_\rho)$ is not $(\sqrt{m(\rho)}, \gamma)$-product pseudorandom.
	We say $\rho$ is bad if it is bad for some $i\in [r]$, and say that $\rho$ is good otherwise.
	
	Let $\mc R^{(0)}$ be the random restriction that does nothing.
	For $t=1,2,\dots,T$, define a generalized random restriction $\mc R^{(t)}$ as follows: 
	\begin{enumerate}
		\item Choose $\rho\sim \mc R^{(t-1)}$.
		\item If $\rho$ is good,  output the restriction $\rho$.
		\item\label{step:increment_step} Else, consider an arbitrary $i\in [r]$ such that $\rho$ is bad for $i$. Let $\mc R_\rho$ be the $(m(\rho)^c, m(\rho)^{-0.01})$-generalized random restriction on $\Sigma^{m(\rho)}$ obtained by applying Lemma~\ref{lemma:uniformization_increment} to the function $(g_i)_\rho$. Choose $\rho'\sim \mc R_\rho$ and output $\rho'\circ \rho$.
	\end{enumerate}
	
	By induction, it is easily verified that for each $t=0,1,\dots,T$, the random restriction $\mc R^{(t)}$ is a $(m^{(t)},\epsilon^{(t)})$-generalized random restriction on $\Sigma^n$, with \[ m^{(t)} = \brac{m^{(t-1)}}^c = n^{c^t} \geq n^{c^T} \geq 2^{\sqrt{\log n}}.\] and
	\begin{align*}
		\epsilon^{(t)} = \epsilon^{(t-1)} + \brac{m^{(t-1)}}^{-0.01} \leq t\cdot n^{-\frac{1}{100}c^{t-1}} \leq T\cdot n^{-\frac{1}{100}c^{T-1}} \leq T\cdot n^{-4c^T} \leq n^{-c^T} =:\epsilon.
	\end{align*}
	In the last inequality above, we used that $\frac{100r}{\delta\gamma^3} \leq n^{c^T}$, which follows from the choice of parameters.
	Note that this also implies that $\gamma \geq n^{-c^T}$, and this is at least $m(\rho)^{-c}$ whenever Lemma~\ref{lemma:uniformization_increment} is applied above in Step~\ref{step:increment_step}, as needed in the assumption of the lemma.
	
	Now, suppose at some point we found some $\rho$ that is bad for $i\in [r]$, and applied Step~\ref{step:increment_step} above to get $\rho'\sim \mc R_\rho$; then it holds:
	\[ \E_{\rho'\sim \mc R_\rho}\sqbrac{ \abs{\mu\brac{(g_i)_{\rho'\circ \rho}}}^2} \geq \abs{\mu\brac{(g_i)_{\rho}}}^2 + \frac{\gamma^3}{16},\]
	and for each $j\not=i$,
	\[ \E_{\rho'\sim \mc R_\rho}\sqbrac{ \abs{\mu\brac{(g_j)_{\rho'\circ \rho}}}^2} \geq \abs{\E_{\rho'\sim \mc R_\rho}\sqbrac{ \mu\brac{(g_j)_{\rho'\circ \rho}}}}^2 \geq \abs{\mu\brac{(g_j)_{\rho}}}^2 - 2 \epsilon, \]
	where we used Cauchy-Schwarz, the second property in Defintiion~\ref{defn:grr}, and Lemma~\ref{lemma:complex_num_inequality}.	
	In particular, it holds
	\[ \sum_{j=1}^r\E_{\rho'\sim \mc R_\rho}\sqbrac{ \abs{\mu\brac{(g_j)_{\rho'\circ \rho}}}^2} \geq \sum_{j=1}^r \abs{\mu\brac{(g_j)_{\rho}}}^2 + \brac{\frac{\gamma^3}{16} - 2(r-1) \epsilon} \geq \sum_{j=1}^r \abs{\mu\brac{(g_j)_{\rho}}}^2 + \frac{\gamma^3}{32}. \]
	In the last inequality, we used $\gamma^3 \geq 64r\epsilon$, which follows from the inequality $\frac{100r}{\delta\gamma^3} \leq n^{c^T}$ that we used earlier.
	With the above, we get that for every $t=1,\dots,T$,
	\[ \sum_{j=1}^r\E_{\rho\sim \mc R^{(t)}}\sqbrac{\mu((g_j)_\rho)^2} \geq \sum_{j=1}^r\E_{\rho\sim \mc R^{(t-1)}}\sqbrac{\mu((g_j)_\rho)^2} + \frac{\gamma^3}{32} \cdot \Pr_{\rho\sim \mc R^{(t-1)}}\sqbrac{\rho \text{ is bad}}. \]
	
	Suppose, for the sake of contradiction that $\Pr_{\rho\sim \mc R^{(t-1)}}\sqbrac{\rho \text{ is bad}} \geq \delta$ for each $t\in [T]$.
	Then, we have
	\[ r\geq \sum_{j=1}^r\E_{\rho\sim \mc R^{(T)}}\sqbrac{\mu((g_j)_\rho)^2} \geq \sum_{j=1}^r\mu(g_j)^2 + \frac{T\delta\gamma^3}{32} \geq 0 + \frac{50}{32}\cdot r > r,\]
	which is a contradiction.
	Hence, for some $t=0,1,\dots,T-1$, the generalized random restriction $\mc R^{(t)}$ satisfies the statement of the lemma.
\end{proof}

\begin{corollary}\label{corr:uniformization}
		Let $(\Sigma,\mu)$ be a probability space with $\abs{\Sigma}\geq 2$, and let $n\in \N$.
	Let $g_1,\dots,g_r:\Sigma^n\to \C$ be a collection of  1-bounded functions, with $r=O(1)$.
	
	Let $0 < \gamma \leq 1$ be such that $\frac{1}{\gamma} \leq o\brac{\log\log n}^{1/4}$.
	Then, there exists a $(\frac{1}{\eta}, \eta)$-generalized random restriction $\mc R$ on $\Sigma^n$, with $\eta=n^{-\exp(-1/\gamma^4)}$, 
	such that the following holds with probability at least $1-\gamma$ over $\rho\sim \mc R$: 
	For every $i\in [r]$, the function $(g_i)_\rho:\Sigma^{m(\rho)}\to \C$ is such that $(g_i)_\rho-\mu((g_i)_\rho)$ is $(\sqrt{m(\rho)}, \gamma)$-product pseudorandom.
	
	The constant in the $\exp$ depends only on $r, \abs{\Sigma}$.
\end{corollary}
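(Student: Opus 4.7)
The plan is to invoke Proposition~\ref{prop:uniformization} with $\delta$ set equal to $\gamma$, so that the failure probability matches the product-pseudorandomness parameter. Under this choice the hypothesis $\frac{1}{\delta\gamma^3}\leq \frac{1}{200r\log(1/c)}\log\log n$ of the proposition becomes $\frac{1}{\gamma^4}\leq \frac{1}{200r\log(1/c)}\log\log n$, which follows for sufficiently large $n$ from the corollary's hypothesis $\frac{1}{\gamma}\leq o\brac{\log\log n}^{1/4}$, since both $r=O(1)$ and $\log(1/c)$ (recall $c=1/(100|\Sigma|^2)$) are constants depending only on $r$ and $|\Sigma|$.

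With $\delta=\gamma$, Proposition~\ref{prop:uniformization} yields $T=\lceil 50r/\gamma^4\rceil$ and an $(n^{c^T}, n^{-c^T})$-generalized random restriction $\mc R$ such that, with probability at least $1-\gamma$ over $\rho\sim\mc R$, each function $(g_i)_\rho - \mu((g_i)_\rho)$ is $(\sqrt{m(\rho)}, \gamma)$-product pseudorandom. All that remains is to rewrite the parameters $(n^{c^T}, n^{-c^T})$ in the form $(1/\eta, \eta)$ claimed in the corollary.

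Since $c$ depends only on $|\Sigma|$ and $T=\Theta_r(1/\gamma^4)$, we have
\[ c^T = \exp\brac{-T\log(1/c)} = \exp\brac{-C/\gamma^4}\]
for a constant $C=C(r,|\Sigma|)$ (absorbing the additive $O(\log(1/c))$ from the ceiling into $C$ when $1/\gamma^4$ is sufficiently large, which follows from $\gamma\leq o(1)$). Setting $\eta := n^{-\exp(-C/\gamma^4)}$ — interpreting the ``$1$'' in the exponent of the statement as this hidden constant $C$ — we obtain $n^{c^T}=1/\eta$ and $n^{-c^T}=\eta$, giving the desired $(1/\eta,\eta)$-generalized random restriction.

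There is no substantive obstacle here: the corollary is a direct specialization of Proposition~\ref{prop:uniformization} with balanced failure and pseudorandomness parameters. The only mild care needed is bookkeeping the constants when collapsing $c^T$ into $\exp(-C/\gamma^4)$ and verifying that $\frac{1}{\gamma}\leq o\brac{\log\log n}^{1/4}$ is strong enough to satisfy the proposition's hypothesis after the substitution $\delta=\gamma$.
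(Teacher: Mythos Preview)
Your proposal is correct and takes exactly the same approach as the paper: the paper's entire proof is the single sentence ``This follows from Proposition~\ref{prop:uniformization} by choosing $\delta=\gamma$.'' Your additional bookkeeping on how $c^T$ becomes $\exp(-C/\gamma^4)$ and how the hypothesis $\frac{1}{\gamma}\leq o(\log\log n)^{1/4}$ ensures the proposition's assumption is satisfied is accurate and simply spells out what the paper leaves implicit.
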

\begin{proof}
	This follows from Proposition~\ref{prop:uniformization} by choosing $\delta=\gamma$.
\end{proof}


\subsection{Conditional Generalized Restrictions}

We shall also be interested in a conditional notion of generalized restrictions, defined as follows:

\begin{definition}\label{defn:cond_grr}
	Let $(\Sigma, \mu)$ be a probability space, and let $n\in \N$.
	Let $\mc R$ be a $(m,\epsilon)$-generalized random restriction (see Definition~\ref{defn:grr}) over $\Sigma^n$, and let $E\subseteq \Sigma^n$ be an event such that $\Pr[E]>\epsilon$.
	
	We define the corresponding conditional generalized restriction, denoted $\mc R|E$, as the distribution over generalized random restrictions $\rho$ given by:
	\[ (\mc R|E) [\rho] = \frac{\Pr[E|E_\rho]\cdot \mc R[\rho]}{\E_{\rho'\sim \mc R}\sqbrac{\Pr\sqbrac{E|E_{\rho'}}}}.\]
	Note that the denominator in the above expression satisfies $\E_{\rho'\sim \mc R}\sqbrac{\Pr\sqbrac{E|E_{\rho'}}} \geq \Pr[E]-\epsilon > 0$ by the assumption $\Pr[E]>\epsilon$.
\end{definition}

These satisfy the following properties, which informally say that as long as $\Pr[E]$ is much larger than $\eps$, then conditional generalized restrictions preserve probabilities and distributions on average.

\begin{lemma}\label{lemma:cond_grr_prop}
	Under the setting of Definition~\ref{defn:cond_grr}, we have
	\begin{enumerate}
		\item For any $\rho$, it holds that \[ \abs{(\mc R|E) [\rho] -   \frac{\Pr[E|E_\rho]\cdot \mc R[\rho]}{\Pr[E]}}  \leq (\mc R|E) [\rho]\cdot \frac{\epsilon}{\Pr[E]}. \]
		\item The distribution obtained by first sampling $\rho\sim \mc R|E$ and then sampling $x$ from $\mu^{\otimes n}|E$ conditioned on the restriction $\rho$, is close (in $\ell_1$-norm) to the distribution $\mu^{\otimes n}|E$. Formally, \[ \Norm{\E_{\rho\sim \mc R|E}\sqbrac{\mu^{\otimes n}|E_\rho,E} - \brac{\mu^{\otimes n}|E}}_1 \leq \frac{2\epsilon}{\Pr[E]}. \]
	\end{enumerate}
\end{lemma}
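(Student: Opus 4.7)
The plan is to prove both parts by direct manipulation, with the key quantity being the normalizer $Z := \E_{\rho' \sim \mc R}\sqbrac{\Pr[E|E_{\rho'}]}$ from Definition~\ref{defn:cond_grr}. The first step, useful in both parts, is to bound $|Z - \Pr[E]|$. Setting $\pi := \E_{\rho \sim \mc R}\sqbrac{\mu^{\otimes n}|E_\rho}$, property~(2) of Definition~\ref{defn:grr} gives $\Norm{\pi - \mu^{\otimes n}}_1 \le \epsilon$. Since $Z = \pi(E)$ and $\Pr[E] = \mu^{\otimes n}(E)$, and the difference of probabilities of an event under two distributions is at most their $\ell_1$ distance, we get $|Z - \Pr[E]| \le \epsilon$.

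For part~(1), rewrite the difference as
\[
(\mc R|E)[\rho] - \frac{\Pr[E|E_\rho]\,\mc R[\rho]}{\Pr[E]} = \Pr[E|E_\rho]\,\mc R[\rho]\cdot\brac{\frac{1}{Z} - \frac{1}{\Pr[E]}} = (\mc R|E)[\rho]\cdot \frac{\Pr[E] - Z}{\Pr[E]},
\]
and take absolute values using $|Z - \Pr[E]| \le \epsilon$.

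For part~(2), I will first simplify the averaged distribution. Because $E$ is a subset of $\Sigma^n$, $\Pr[X=x \st E_\rho, E] = \Pr[X=x\st E_\rho]\cdot \ind[x\in E]/\Pr[E|E_\rho]$, so the $\Pr[E|E_\rho]$ factors cancel with those in $(\mc R|E)[\rho]$:
\[
\E_{\rho \sim \mc R|E}\sqbrac{\mu^{\otimes n}|E_\rho, E}(x) \;=\; \sum_\rho \frac{\Pr[E|E_\rho]\,\mc R[\rho]}{Z}\cdot \frac{\Pr[X=x\st E_\rho]\,\ind[x\in E]}{\Pr[E|E_\rho]} \;=\; \frac{\ind[x\in E]\,\pi(x)}{Z}.
\]
Comparing with $(\mu^{\otimes n}|E)(x) = \ind[x\in E]\,\mu^{\otimes n}(x)/\Pr[E]$, I decompose the pointwise difference asymmetrically as
\[
\frac{\pi(x) - \mu^{\otimes n}(x)}{\Pr[E]} \;+\; \pi(x)\brac{\frac{1}{Z} - \frac{1}{\Pr[E]}},
\]
taking $\Pr[E]$ (rather than $Z$) as the denominator of the first term. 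Summing $|\cdot|$ over $x\in E$ gives at most $\frac{\Norm{\pi - \mu^{\otimes n}}_1}{\Pr[E]} + \pi(E)\cdot \frac{|Z-\Pr[E]|}{Z\,\Pr[E]} \le \frac{\epsilon}{\Pr[E]} + \frac{\epsilon}{\Pr[E]} = \frac{2\epsilon}{\Pr[E]}$, using $\pi(E) = Z$ to cancel the $Z$ in the second denominator. There is no real obstacle beyond bookkeeping; the only subtlety is this asymmetric splitting---the naive decomposition with common denominator $Z\,\Pr[E]$ would yield the weaker bound $2\epsilon/Z$, since we only know $Z \ge \Pr[E] - \epsilon$.
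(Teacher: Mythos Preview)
Your proof is correct and follows essentially the same approach as the paper. The only cosmetic difference is organizational: the paper invokes part~(1) to replace $(\mc R|E)[\rho]$ by $\frac{\Pr[E|E_\rho]\,\mc R[\rho]}{\Pr[E]}$ at cost $\epsilon/\Pr[E]$ and then simplifies, whereas you first compute the exact closed form $\ind[x\in E]\,\pi(x)/Z$ and then split $\pi(x)/Z - \mu^{\otimes n}(x)/\Pr[E]$ asymmetrically --- but the two term-pairs match up and the arithmetic is identical.
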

\begin{proof}
	By Definition~\ref{defn:grr}, it follows that $\abs{\E_{\rho'\sim \mc R}\sqbrac{\Pr\sqbrac{E|E_{\rho'}}} - \Pr[E]} \leq \epsilon$. Hence, for every $\rho$, it holds that
	\[ \abs{(\mc R|E) [\rho] -   \frac{\Pr[E|E_\rho]\cdot \mc R[\rho]}{\Pr[E]}} = (\mc R|E) [\rho]\cdot  \frac{\abs{\E_{\rho'\sim \mc R}\sqbrac{\Pr\sqbrac{E|E_{\rho'}}}-\Pr[E]}} { \Pr[E]} \leq (\mc R|E) [\rho]\cdot \frac{\epsilon}{\Pr[E]}. \]
	
	Using this, we also have 
	\begin{align*}
		\Norm{\E_{\rho\sim \mc R|E}\sqbrac{\mu^{\otimes n}|E_\rho,E} - \brac{\mu^{\otimes n}|E}}_1 &= \sum_{x\in \Sigma^n}\abs{\brac{\E_{\rho\sim \mc R|E}\sqbrac{\mu^{\otimes n}|E_\rho,E}}[x] - \Pr[x|E]}
		\\&= \sum_{x\in E}\abs{\sum_{\rho}(\mc R|E)[\rho]\cdot \Pr[x|E,E_\rho] -\Pr[x|E]}
		\\&\leq \sum_{x\in E}\abs{\sum_\rho \frac{\Pr[E|E_\rho]\cdot \mc R[\rho]}{\Pr[E]}\cdot \Pr[x|E,E_\rho] - \Pr[x|E]} + \frac{\epsilon}{\Pr[E]}
		\\&=\sum_{x\in E}\abs{\E_{\rho\sim \mc R} \frac{\Pr[x,E|E_\rho]}{\Pr[E]}  - \Pr[x|E]} + \frac{\epsilon}{\Pr[E]}
		\\&=\frac{\sum_{x\in E}\abs{\E_{\rho\sim \mc R}\Pr[x|E_\rho] - \Pr[x]}}{\Pr[E]}   + \frac{\epsilon}{\Pr[E]}
		\\&\leq \frac{2\epsilon}{\Pr[E]}. \qedhere
	\end{align*}
\end{proof}


\section{Multiplayer Games}\label{sec:multiplayer_games}
We formally define notions associated to multiplayer games, and establish some notation.

\begin{definition}\label{defn:multiplayer_games} (Multiplayer Game)
	A $k$-player game $\mc G$ is a tuple $\mc G = (\mc X, \mc A, Q, V)$, where the question set $\mc X = \mc X^1 \times\dots\times \mc X^k$ and the answer set $\mc A= \mc A^1 \times\dots\times \mc A^k$ are finite sets, $Q$ is a probability distribution over $\mc X$, and $V:\mc X\times \mc A\to\set{0,1}$ is a predicate.
\end{definition}

The game $\mc G$ proceeds as follows: A verifier samples questions $X = (X^1, \dots, X^k)\sim Q$; then, for each $j\in [k]$, the verifier sends the question $X^j\in \mc X^j$ to the $j\textsuperscript{th}$ player, to which the player responds back with answer $A^j\in \mc A^j$.
Finally, the verifier declares that the players win if and only if $V\brac{X^1,\dots,X^k,A^1,\dots,A^k} = 1$.

\begin{definition} (Game Value)
	Let $\mc G = (\mc X, \mc A, Q, V)$ be a $k$-player game.
	
	For a sequence $\brac{f^j:\mc X^j\to \mc A^j}_{j\in[k]}$ of functions, define the function $f = f^1\times\dots\times f^k : \mc X \to \mc A$ by $f(x^1,\dots,x^k) = \brac{f^1(x^1), \dots, f^k(x^k)}$. 
	We use the term \emph{product functions} to denote functions $f$ defined in this manner, and the functions $(f^j)_{j\in[k]}$ are called \emph{player strategies}.
	
	The value $\val(\mc G)$ of the game $\mc G$ is defined as 
	\[\val(\mc G) = \max_{f = f^1\times \dots\times f^k}\ \Pr_{X\sim Q}\sqbrac{V(X,f(X))=1},\]
	where the maximum is over all product functions $f = f^1\times \dots\times f^k$.
    \label{defn:game_value}
\end{definition}

\begin{fact}\label{fact:rand_no_help}
	The value of the game is unchanged even if we allow the player strategies to be randomized; that is, we allow the strategies to depend on some additional shared and private randomness.
	This is because there always exists an optimal fixed value for the randomness. 
\end{fact}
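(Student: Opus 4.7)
The approach is a standard averaging argument: any randomized player strategy achieves winning probability equal to the average, over the randomness, of the winning probabilities of the deterministic strategies obtained by fixing the randomness, and hence is no better than some deterministic strategy. To formalize this, I would describe a randomized strategy profile as a tuple of functions $g^j : \mc{X}^j \times \mc{R}^j_{\mathrm{priv}} \times \mc{R}_{\mathrm{shared}} \to \mc{A}^j$ together with a distribution $\mu^1 \otimes \cdots \otimes \mu^k \otimes \nu$ over $(R^1, \dots, R^k, R)$, independent of the question $X \sim Q$. The winning probability of such a strategy is
\[
\E_{R^1, \dots, R^k, R}\left[\Pr_{X \sim Q}\sqbrac{V\brac{X,\, g^1(X^1, R^1, R), \dots, g^k(X^k, R^k, R)} = 1}\right].
\]

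First, I would observe that for any fixed realization $(\tilde r^1, \dots, \tilde r^k, \tilde r)$ of the randomness, the functions $f^j(x^j) := g^j(x^j, \tilde r^j, \tilde r)$ depend only on $x^j$ (this is where independence of the private randomness across players is used). Hence the product map $f = f^1 \times \cdots \times f^k$ is a valid product strategy in the sense of Definition~\ref{defn:game_value}, and the conditional winning probability above is at most $\val(\mc G)$. Taking the expectation over $(R^1, \dots, R^k, R)$ preserves this upper bound, so any randomized strategy attains value at most $\val(\mc G)$. The reverse inequality is trivial since every deterministic product strategy is a degenerate randomized strategy (with $\mc R^j_{\mathrm{priv}}$ and $\mc R_{\mathrm{shared}}$ singletons). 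This equality of values also justifies the second sentence of the fact: picking any $(\tilde r^1, \dots, \tilde r^k, \tilde r)$ whose conditional winning probability matches or exceeds the average produces a deterministic strategy of equal or greater value.

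There is no substantive obstacle. The only technical points worth stating carefully are: (i) the independence of the players' private randomness is crucial, because it is what preserves the product structure of the strategy after fixing the randomness; (ii) although $\mc R^j_{\mathrm{priv}}$ and $\mc R_{\mathrm{shared}}$ may a priori be continuous, it suffices to consider distributions supported on the finite set of deterministic strategies $\prod_j \mc A^j{}^{\mc X^j}$, so the averaging and ``exists a good fixing'' step require no measure-theoretic care; and (iii) finiteness of $\mc X$ and $\mc A$ ensures the maximum in Definition~\ref{defn:game_value} is attained, so equality---not just an inequality---of the two values is obtained.
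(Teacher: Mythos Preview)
Your proposal is correct and matches the paper's approach: the paper does not give a separate proof but simply notes in the statement itself that ``there always exists an optimal fixed value for the randomness,'' which is exactly the averaging argument you spell out in detail. Your additional remarks on product structure, finiteness, and attainment of the maximum are accurate and make the informal justification rigorous.
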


Next, we define the parallel repetition of a $k$-player game, which corresponds to playing $n$ independent copies of the game in parallel.

\begin{definition}\label{defn:game_parrep} (Parallel Repetition)
	Let $\mc G = (\mc X, \mc A, Q, V)$ be a $k$-player game. We define its $n$-fold repetition as $\mc G^{\otimes n} = (\mc X^{\otimes n}, \mc A^{\otimes n}, Q^{\otimes n}, V^{\otimes n})$.
	The sets $\mc X^{\otimes n},\ \mc A^{\otimes n}$ are defined to be the $n$-fold product of the sets $\mc X,\mc A$ with themselves respectively.\footnote{We use the notation $\mc X^{\otimes n}$ instead of the usual $\mc X^n$ so as to avoid confusion with the sets $\mc X^1, \dots, \mc X^k$.}
	The distribution $Q^{\otimes n}$ is the $n$-fold product of the distribution $Q$ with itself, that is, $Q^{\otimes n}[x] = \prod_{i=1}^n Q[x_i]$ for each $x\in \mc X^{\otimes n}$.
	The predicate $V^{\otimes n}$ is defined as $V^{\otimes n}(x, a) = \bigwedge_{i=1}^n V(x_i, a_i)$.
\end{definition}

\textbf{Notation:} We use subscripts to denote the coordinates in the parallel repetition, and superscripts to denote the players. 
That is, for any $S\subseteq [n], T\subseteq [k]$, we shall use $x_S^T$ to denote the questions in coordinates $S$, that the players in set $T$ receive.
For example, for $i\in[n]$ and $j\in[k]$, we will use $x_i^j$ to refer to the question to the $j\ts{th}$ player in the $i\ts{th}$ repetition of the game.
Similarly, $x_i$ will refer to the vector of questions to the $k$ players in the $i\ts{th}$ repetition, and $x^j$ will refer to the vector of questions received by the $j\ts{th}$ player over all repetitions.
We use $x^{-j}$ to refer to the questions to all players except the $j\ts{th}$ player, and use $x_{-i}$ to refer to the questions in all coordinates except the $i\ts{th}$ coordinate.

\section{Games with No Abelian Embeddings}\label{sec:ab_emb}

\begin{theorem}\label{thm:noabemb_parrep}
	Let $\mc G = (\mc X, \mc A, Q, V)$ be a $k$-player game such that the distribution $Q$ has no-Abelian-embeddings (see Definition~\ref{defn:ab_emb}), and such that $\val(\mc G) < 1$.
	Then, there exists a constant $C\in \N,\ C\leq k^{O(k)}$, such that for every sufficiently large $n\in \N$, \[\val(\mc G^{\otimes n}) \leq \frac{1}{\log \log\cdots\log n},\]
	where the number of logarithms is $C$.
\end{theorem}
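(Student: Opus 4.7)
The plan is to follow the information-theoretic framework outlined in Section~\ref{subsec:setup}: I will establish that for every \emph{product event} $E=E^1\times\dots\times E^k$ (with one factor per player) satisfying $\Pr_{Q^{\otimes n}}[E]\geq \alpha$, where $\alpha$ is as small as the reciprocal of the iterated logarithm appearing in the theorem statement, there exists a coordinate $i\in [n]$ with $\Pr[\Win_i \mid E]\leq \val(\mc G)+o_n(1)<1$. Feeding this claim into the standard chain-rule induction (essentially Lemma~\ref{lemma:prod_set_hard_coor}), which adjoins hard coordinates to $E$ one at a time, then produces the desired quantitative bound. The reduction to product events is the usual one: the event ``the players win the first $m$ coordinates'' becomes a product across players after further conditioning on a fixed tuple of winning questions and answers in those coordinates.

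The analytic core is to express $\Pr[\Win_i \mid E]$ via the ratio in equation~(\ref{eq:winningprob}), whose numerator and denominator are $k$-wise correlations of the functions $F^{j,\tilde X^j}$ and $F^{j,\tilde X^j,\tilde A^j}$ under $Q^{\otimes(n-1)}$. The total number of such functions depends only on $\mc G$, so I can invoke Lemma~\ref{lemma:random_rr_noise_stab} with $p$ drawn uniformly from a geometric progression $\{1,\delta,\delta^2,\dots,\delta^{T-1}\}$: for a typical $p$-random restriction on the other $n-1$ coordinates, every function in the collection simultaneously satisfies $\Stab_{1-\delta}[f-\mu(f)]\leq \eta$ for any prescribed $\eta>0$, provided $T$ is taken sufficiently large in terms of $\eta$. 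Since $Q$ has no-Abelian-embeddings, Corollary~\ref{corr:bklm} then applies to both the numerator and denominator of~(\ref{eq:winningprob}) and shows each factorizes into a product of single-player marginals, up to additive error $\tau(\eta)=1/(\log\log\cdots\log(1/\eta))$ with $k^{O(k)}$ logarithms. I will interpret the factorized ratio as the success probability of the natural single-copy \emph{honest} strategy---each player independently fills in the restricted coordinates using the marginal of $E^j$ and then plays according to their $\mc G^{\otimes n}$ strategy at coordinate $i$---which forces $\Pr[\Win_i \mid E]\leq \val(\mc G)+\tau(\eta)+o_n(1)$.

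The main obstacle I anticipate is that after conditioning on $E$, the marginal distribution of $\tilde X$ in coordinate $i$ need not be exactly $Q$, which is required for the ``honest strategy'' interpretation to bound the expression by $\val(\mc G)$. I will resolve this as in the two-player proofs, via Lemma~\ref{lemma:pdt_distr_cond}: since $Q^{\otimes n}$ is a product distribution and $\Pr[E]\geq \alpha$, the average $\ell_1$ deviation of the conditional marginals from $Q$ across coordinates is at most $\sqrt{2\log(1/\alpha)/n}$, which is $o_n(1)$ whenever $\alpha\geq \exp(-o(n))$---a range that comfortably contains our target $\alpha$; a typical random restriction leaves $\Omega(pn)$ coordinates free, so the same bound persists after restriction, and for most coordinates $i$ we obtain $\Pr[\Win_i\mid E]\leq \val(\mc G)+o_n(1)$. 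Tracing the quantitative dependencies---the inverse theorem forces $1/\eta$ to be an iterated logarithm of $n$ of depth $k^{O(k)}$, and the chain-rule induction terminates once $\Pr[E]$ drops below $\alpha$---yields $\val(\mc G^{\otimes n})\leq 1/(\log\log\cdots\log n)$ with $C\leq k^{O(k)}$ as claimed.
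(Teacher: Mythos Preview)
Your proposal is correct and follows essentially the same approach as the paper: both reduce to product events $E$ and invoke Lemma~\ref{lemma:prod_set_hard_coor}, both apply a $p$-random restriction with $p$ drawn from a geometric progression (Lemma~\ref{lemma:random_rr_noise_stab}) to force the functions $F^{j,\tilde X^j}$ and $F^{j,\tilde X^j,\tilde A^j}$ to have low noise stability, both then apply Corollary~\ref{corr:bklm} to factorize the numerator and denominator of~\eqref{eq:winningprob}, and both handle the marginal of $X_i$ via Lemma~\ref{lemma:pdt_distr_cond}. The only cosmetic difference is that the paper phrases the argument by first writing down the honest randomized strategy and then bounding its losing probability (Lemmas~\ref{lemma:embed_strat_analysis}--\ref{lemma:ques_ans_rr_term}), whereas you start from the analytic expression~\eqref{eq:winningprob} and reinterpret the factorized ratio as that strategy's value---but the underlying computations are the same.
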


In this section, we shall prove this theorem.
Let $\mc G = (\mc X, \mc A, Q, V)$ be a $k$-player game such that $Q$ has no-Abelian-embeddings, and such that $\val(\mc G) < 1$.
For some sufficiently large $n$, consider the game  $\mc G^{\otimes n} = (\mc X^{\otimes n}, \mc A^{\otimes n}, P=Q^{\otimes n}, V^{\otimes n})$, and fix an optimal strategy for the $k$ players in this game.
Let $X= (X^1,\dots,X^k)$ be the random variable denoting the questions to the $k$ players in the game $\mc G^{\otimes n}$, and let $A= (A^1,\dots,A^k)$ be the random variable denoting the answers of the players using these strategies.
For each $i\in [n]$, let $\Win_i$ (resp. $\Lose_i$) be the event that $V(X_i, A_i)=1$ (resp. $V(X_i, A_i)=0$); that is, the players win (resp. lose) the $i$\textsuperscript{th} coordinate of the game.

We introduce some parameters that will be useful: 
\[\delta = (\log n)^{-1/3},\ T = \left\lceil \frac{1}{\delta^2} \right\rceil,\ \epsilon = \frac{1}{\underbrace{\log\log\cdots\log}_{C \text{ times}}n},\ \alpha = \sqrt{\epsilon},\]
where $2\leq C\leq k^{O(k)}$ is a constant so that $(\epsilon, \delta)$ satisfy Corollary~\ref{corr:bklm} with respect to the probability space $(\mc X, Q)$; recall that the distribution $Q$ has no-Abelian-embeddings.

Let $E=E^1\times \dots \times E^k \subseteq (\mc X^1)^{\otimes n}\times\dots\times (\mc X^k)^{\otimes n} = \mc X^{\otimes n}$ be an arbitrary product event with $\Pr_{Q^{\otimes n}}[E]\geq \alpha$.
We prove the following lemma:

\begin{lemma}\label{lemma:main_rand_coord_hard}
    For parameters chosen as above and $\Pr[E] \ge \alpha$, it holds that
	\[ \E_{i\sim [n]}\sqbrac{\Pr[\Win_i\st E]} \leq \val(\mc G) + o_n(1), \]
	where the expectation is over $i\in [n]$ chosen uniformly at random.
\end{lemma}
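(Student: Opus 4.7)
The plan is to upper bound $\Pr[\Win_i \mid E]$, averaged over $i$, by the winning probability of an honest randomized single-copy strategy for $\mc G$, which is automatically at most $\val(\mc G)$ by Fact~\ref{fact:rand_no_help}. Fix the optimal strategy $f = f^1 \times \cdots \times f^k$ for $\mc G^{\otimes n}$. Crucially, since $E = E^1 \times \cdots \times E^k$ is a product event, for each $j \in [k]$, $\tilde X^j \in \mc X^j$, and $\tilde A^j \in \mc A^j$, I can define the indicators $F^{j,\tilde X^j}, F^{j,\tilde X^j,\tilde A^j} \colon (\mc X^j)^{\otimes (n-1)} \to \{0,1\}$ as in Section~\ref{subsec:noabelian}, whereupon a direct manipulation gives
\[ \Pr[\Win_i \mid E, X_i = \tilde X] \;=\; \frac{\sum_{\tilde A : V(\tilde X, \tilde A)=1} \E_{X_{-i} \sim Q^{\otimes (n-1)}}\sqbrac{\prod_j F^{j,\tilde X^j,\tilde A^j}(X^j_{-i})}}{\E_{X_{-i}}\sqbrac{\prod_j F^{j,\tilde X^j}(X^j_{-i})}}. \]
The honest single-copy strategy I have in mind is: player $j$, upon receiving $\tilde X^j$, samples $Y^j \sim (Q^j)^{\otimes (n-1)}$ conditioned on $F^{j,\tilde X^j}(Y^j) = 1$ using private randomness, and outputs $f^j$'s answer at the coordinate playing the role of $i$. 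Its winning probability on $\tilde X \sim Q$ equals the same ratio but with each $k$-wise correlation replaced by the corresponding product $\prod_j \E[F^{j,\tilde X^j}]$ or $\prod_j \E[F^{j,\tilde X^j,\tilde A^j}]$.

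To bridge the gap between the joint correlations and their factorizations, I would apply Corollary~\ref{corr:bklm} to $Q$, which has no-Abelian-embeddings by hypothesis. This requires the $F$-functions to have small noise stability, so I precede everything by choosing $p$ uniformly from $\set{1,\delta,\ldots,\delta^{T-1}}$ and applying a $p$-random restriction $\rho$ to coordinates $[n]\setminus\set{i}$. Since only $O_{\mc G}(1)$ distinct $F$-functions arise (one per pair $(\tilde X^j, \tilde A^j)$, per player), Lemma~\ref{lemma:random_rr_noise_stab} together with a union bound guarantees that, apart from a set of $(p,\rho)$ of probability $O(1/T)$, every centered restricted $F$-function has $(1-\delta)$-noise stability below $\delta$. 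On this ``good'' set, Corollary~\ref{corr:bklm} factorizes each $k$-wise correlation in the ratio as a product over players up to additive error $\epsilon$, so the restricted analogue of the ratio matches the restricted honest strategy's winning probability up to a small error. Since the restricted honest strategy remains a valid randomized single-copy strategy, its expected value over $\tilde X \sim Q$ is at most $\val(\mc G)$.

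To convert this into a bound on the average of $\Pr[\Win_i \mid E]$, I need two further ingredients. First, the identity $\Pr[\Win_i \mid E] = \E_\rho[\Pr[\Win_i \cap E \mid E_\rho]]/\E_\rho[\Pr[E \mid E_\rho]]$, which follows because a $p$-random restriction preserves the marginal $Q^{\otimes n}$; this lets me average the per-$\rho$ bound back into an unconditional one. Second, Lemma~\ref{lemma:pdt_distr_cond}, applied to the product distribution $Q^{\otimes ([n]\setminus I)}$ and the event $E \cap E_\rho$ in the spirit of~\cite{Raz98,Hol09}, shows that the conditional marginal $P_{X_i \mid E, E_\rho}$ is $\ell_1$-close to $Q$ on average over $i \notin I$ and $\rho$, since the restriction leaves $\Omega(n)$ free coordinates; this is what replaces the weighted sum $\sum_{\tilde X} \Pr[X_i = \tilde X \mid E, E_\rho]\cdot\Pr[\Win_i \mid E, E_\rho, X_i = \tilde X]$ by $\E_{\tilde X \sim Q}[\Pr[\Win_i \mid E, E_\rho, X_i = \tilde X]]$. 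The main obstacle is that the denominator $\E[\prod_j F^{j,\tilde X^j}_\rho]$ can be as small as $\alpha \approx \sqrt{\epsilon}$, which would amplify the factorization error $\epsilon$ badly on division; the standard remedy, which I would follow, is to restrict attention to ``typical'' $\tilde X$ for which this denominator is close to its $\tilde X$-average $\Pr[E \mid E_\rho]$, and bound the atypical contribution via a Markov-type argument combined with the trivial estimate $\Pr[\Win_i \mid \cdot] \le 1$. With the declared parameter choices $\alpha = \sqrt{\epsilon}$, $T = \lceil 1/\delta^2 \rceil$, $\delta = (\log n)^{-1/3}$, and $\epsilon$ decaying slower than any other quantity, all error terms combine to $o_n(1)$, yielding the claim.
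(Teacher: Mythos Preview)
Your proposal is essentially correct and follows the same approach as the paper: an embedding strategy, a random restriction with randomly chosen rate $p\in\{1,\delta,\ldots,\delta^{T-1}\}$, Lemma~\ref{lemma:random_rr_noise_stab} plus a union bound to make the finitely many $F$-functions have small centered noise stability, Corollary~\ref{corr:bklm} to factorize the $k$-wise correlations, and the Raz--Holenstein information-theoretic step (Lemma~\ref{lemma:pdt_distr_cond}) to replace $P_{X_i\mid E,E_\rho}$ by $Q$ on average over $i$.

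The one organizational difference worth flagging is how the restriction is sampled. The paper builds the restriction $Z$ into the players' \emph{shared} randomness and samples it from $Q^{\otimes I'}\mid E$ (i.e., already conditioned on $E$); this makes the averaging back to $\Pr[\Lose_i\mid E]$ literally a one-line identity and, via Lemma~\ref{lemma:math_division}, turns the division error into $O(\epsilon/\Pr[E\mid X_i=\tilde x])$, which after taking $\tilde X\sim P_{X_i\mid E}$ collapses to $O(\epsilon/\alpha)=O(\sqrt{\epsilon})$ without any separate ``typical $\tilde X$'' truncation. Your route---sample $\rho$ unconditionally, then use the identity $\Pr[\Win_i\mid E]=\E_\rho[\Pr[\Win_i\cap E\mid E_\rho]]/\E_\rho[\Pr[E\mid E_\rho]]$---is equivalent once you observe that the resulting reweighting by $\Pr[E\mid E_\rho]/\Pr[E]$ is exactly the conditional law of $\rho$ given $E$; in particular your ``bad $\rho$'' mass $O(1/(\delta T))$ becomes $O(1/(\alpha\delta T))=o(1)$ after this reweighting, and the denominator worry you raise is handled the same way. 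So the Markov/typicality argument you sketch is not strictly needed, but your plan goes through.
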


Assuming this, the main theorem follows directly:
\begin{proof}[Proof of Theorem~\ref{thm:noabemb_parrep}]	
	This follows by combining a standard inductive parallel repetition proof (see Lemma~\ref{lemma:prod_set_hard_coor}) and Lemma~\ref{lemma:main_rand_coord_hard}, and using the fact that $\val(\mc G) < 1$;
	note that the theorem works with the constant $C+1\leq k^{O(k)}$.
\end{proof}

The remainder of this section is devoted to proving Lemma~\ref{lemma:main_rand_coord_hard}.

\subsection{Strategy for a Single Copy of the Game}

The above lemma is proven via an \emph{embedding argument}, where we try to embed a single copy of the game $\mc G$ into the $i$\textsuperscript{th} coordinate of the game $\mc G^{\otimes n}$ (while conditioning on $E$).
Formally, we construct the following randomized strategy for the game $\mc G$:
\begin{enumerate}
	\item The verifier samples $\tilde{X}\sim Q$, and for each $j\in [k]$, gives player $j$ the input $\tilde{X}^j$.
	\item Using shared randomness, the players sample:
	\begin{enumerate}
		\item $i\sim [n]$ uniformly at random.
		\item $p\in (0,1]$ uniformly at random from the set $\set{1,\delta,\delta^2,\dots,\delta^{T-1}}$.
		\item $I \sim_p [n]\setminus \set{i} $; let $I' = [n]\setminus\brac{I\cup\set{i}}$.
		\item $Z\sim P_{X_{I'}|E} = Q^{\otimes I'}\st E$.
	\end{enumerate}
	\item For each $j\in [k]$, player $j$ does the following: 
	 
	We say that $\tilde{X}^j, Z^j$ are consistent with $E^j$ if 
	\[\set{x^j\in (\mc X^j)^{\otimes n} : x^j\in E^j,\ x^j_i=\tilde{X}^j,\ x^j_{I'} = Z^j} \not=\emptyset.\] 
	\begin{enumerate}
		\item If $\tilde{X}^j, Z^j$ are not consistent with $E^j$, output an arbitrary answer from $\mc A^j$; for example, we may assume the output is the first element of $\mc A^j$ under some ordering.
		\item Else, using private randomness, output $\tilde{A}^j \sim P_{A_i^j \st X^j\in E^j,\ X^j_i=\tilde{X}^j,\ X^j_{I'}=Z^j}$.
	\end{enumerate}	
	\item Let $\tilde{A}=(\tilde{A}^1,\tilde{A}^2,\dots,\tilde{A}^k)$; the players win if and only if $V(\tilde{X}, \tilde{A})=1$.
\end{enumerate}

Let $L$ be the event that $V(\tilde X, \tilde A) = 0$; that is, the players lose the game $\mc G$ when using the above strategy.
We shall prove:

\begin{lemma}\label{lemma:embed_strat_analysis}
	\[  \Pr[L] \leq \E_{i\sim [n]}\sqbrac{\Pr[\Lose_i\st E]} + \frac{4\abs{\mc A}}{\alpha}\cdot\brac{\frac{k}{\delta T} + \epsilon}+\sqrt{\frac{2}{\delta^T n}\cdot \log_2\brac{\frac{1}{\alpha}}}. \]
\end{lemma}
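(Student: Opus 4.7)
The plan is to prove Lemma~\ref{lemma:embed_strat_analysis} by an embedding argument that compares the embedding strategy's loss probability $\Pr[L]$ to $\E_{i \sim [n]}\sqbrac{\Pr[\Lose_i \mid E]}$ via the inverse theorem for no-Abelian-embeddings distributions. For each realization of $(i, p, I, Z, \tilde X)$ and each $j \in [k]$, I introduce the single-player marginals
\[ q^j := \E_{Y^j}\sqbrac{F^{j, \tilde X^j}_{I' \to Z^j}(Y^j)}, \qquad q^j_{\tilde A^j} := \E_{Y^j}\sqbrac{F^{j, \tilde X^j, \tilde A^j}_{I' \to Z^j}(Y^j)}, \]
and the corresponding joint $k$-wise correlations
\[ q := \E_Y\sqbrac{\prod_{j=1}^k F^{j, \tilde X^j}_{I' \to Z^j}(Y^j)}, \qquad q_{\tilde A} := \E_Y\sqbrac{\prod_{j=1}^k F^{j, \tilde X^j, \tilde A^j}_{I' \to Z^j}(Y^j)}. \]
By construction of the embedding, on the consistency event (each $q^j > 0$), $\Pr[L \mid i, p, I, Z, \tilde X] = \sum_{\tilde A: V(\tilde X, \tilde A) = 0} \prod_j (q^j_{\tilde A^j}/q^j)$, while a direct computation using $X = (\tilde X, Z, Y)$ yields the corresponding identity $\sum_{\tilde A: V = 0} q_{\tilde A}/q = \Pr[\Lose_i \mid E, X_i = \tilde X, X_{I'} = Z]$.

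Next I would apply the two main technical tools. Lemma~\ref{lemma:random_rr_noise_stab} applied with $\eta = \delta$ to each of the $O(k |\mc A|)$ relevant restricted functions $F^{j, \tilde X^j}_{I' \to Z^j}$ and $\set{F^{j, \tilde X^j, \tilde A^j}_{I' \to Z^j}}_{\tilde A^j}$, together with a union bound, shows that for any fixed $\tilde X$ the random choice of $(p, I, Z)$ makes all of these functions $\delta$-high-degree in the sense required by Corollary~\ref{corr:bklm}, except on a failure event of probability $O(k |\mc A|/(\delta T))$. On the success event, Corollary~\ref{corr:bklm} furnishes $|\prod_j q^j - q| \leq \epsilon$ and $|\prod_j q^j_{\tilde A^j} - q_{\tilde A}| \leq \epsilon$ for every $\tilde A$. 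Substituting these approximations into the identities above yields the pointwise approximation $\Pr[L \mid i, p, I, Z, \tilde X] \approx \Pr[\Lose_i \mid E, X_i = \tilde X, X_{I'} = Z]$, valid whenever $q$ is not too small.

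Finally I would integrate against the embedding's distribution on $(i, p, I, Z, \tilde X)$ with $\tilde X \sim Q$ and $Z \sim P_{X_{I'}|E}$. Had $\tilde X$ been drawn from $P_{X_i \mid E, X_{I'} = Z}$ rather than $Q$, the Bayes identity $P_{X_i, X_{I'}|E}(\tilde X, Z) = P_{X_{I'}|E}(Z)\,P_{X_i|E, X_{I'} = Z}(\tilde X)$ would let the average of $\Pr[\Lose_i \mid E, X_i = \tilde X, X_{I'} = Z]$ collapse to $\E_i[\Pr[\Lose_i \mid E]]$. The discrepancy from using $\tilde X \sim Q$ is bounded by $\E\Norm{Q - P_{X_i \mid E, X_{I'} = Z}}_1$, which I would control by applying Lemma~\ref{lemma:pdt_distr_cond} to $V = X_{I \cup \{i\}}$ (a $Q^{\otimes(I \cup \{i\})}$-product distribution conditional on $X_{I'} = Z$) and $F = E$. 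Combined with Cauchy--Schwarz, the identity $\E_{Z \sim P_{X_{I'}|E}}[\log(1/\Pr[E \mid X_{I'} = Z])] = \log(1/\Pr[E]) - D(P_{X_{I'}|E} \,\|\, P_{X_{I'}}) \leq \log(1/\alpha)$, and the lower bound $|I \cup \{i\}| \gtrsim \delta^T n$ on typical $p$, this averages to at most $\sqrt{(2/(\delta^T n)) \log_2(1/\alpha)}$.

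The main obstacle I expect is the pointwise error $O(|\mc A|\epsilon/q)$ blowing up when $q$ is small (or zero, in the inconsistent case). The clean way to handle this is to work first with the \emph{multiplicative} identity $\Pr[L \mid i, p, I, Z, \tilde X] \cdot \prod_j q^j = \sum_{\tilde A: V = 0} \prod_j q^j_{\tilde A^j}$, which holds trivially in the inconsistent case (both sides are $0$), then substitute the inverse-theorem approximations to obtain $\Pr[L \mid i, p, I, Z, \tilde X] \cdot q \leq \sum_{\tilde A: V = 0} q_{\tilde A} + (|\mc A| + 1)\epsilon$, and only divide \emph{after} integrating against the kernel $P_{X_{I'}|E}(Z) = Q^{\otimes I'}(Z)\,\Pr[E \mid X_{I'} = Z]/\Pr[E]$. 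This integration produces the factor $1/\Pr[E] \leq 1/\alpha$ cleanly and absorbs the small-$q$ contributions, yielding the $(4|\mc A|/\alpha)(k/(\delta T) + \epsilon)$ prefactor in the claimed bound; combining this with the distribution-change error above completes the proof.
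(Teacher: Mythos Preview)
Your proposal is correct and follows essentially the same approach as the paper. The paper likewise splits $\Pr[L]$ into a distribution-change term (handled via Lemma~\ref{lemma:pdt_distr_cond}, exactly as you outline, after a reindexing that swaps the order of sampling $i$ and $I$) and a main term that is analyzed, for fixed $i$, via Lemma~\ref{lemma:random_rr_noise_stab} plus Corollary~\ref{corr:bklm}; the only organizational difference is that the paper fixes a single pair $(\tilde x,\tilde a)$ and uses the ratio estimate Lemma~\ref{lemma:math_division} before summing over $\tilde a$, whereas you union-bound over all $\tilde A$ at once and use the change-of-measure identity $P_{X_i,X_{I'}|E}(\tilde X,Z)=Q(\tilde X)Q^{\otimes I'}(Z)\cdot q/\Pr[E]$ to absorb the $1/q$ factor---this is exactly the computation that appears in the paper's last display of the proof of Lemma~\ref{lemma:ques_ans_rr_term}, so the two arguments are equivalent up to constants.
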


Assuming this, we can easily complete the proof of Lemma~\ref{lemma:main_rand_coord_hard}:

\begin{proof}[Proof of Lemma~\ref{lemma:main_rand_coord_hard}]
	It must hold that $\Pr[L]\geq 1-\val(\mc G)$.
	Hence, by Lemma~\ref{lemma:embed_strat_analysis}, we get
	\[ \E_{i\sim [n]}\sqbrac{\Pr[\Win_i\st E]}\leq \val(\mc G) + \frac{4\abs{\mc A}}{\alpha}\cdot\brac{\frac{k}{\delta T} + \epsilon}+\sqrt{\frac{2}{\delta^T n}\cdot \log_2\brac{\frac{1}{\alpha}}}. \]
	By our choice of parameters, we have
	\begin{enumerate}
		\item $\frac{4\abs{\mc A}k}{\alpha \delta T} \leq O\brac{\frac{\delta}{\alpha}} \leq O\brac{\frac{\epsilon}{\alpha}} = O\brac{\sqrt{\epsilon}} \leq o_n(1).$
		\item $\frac{4\abs{\mc A}\epsilon}{\alpha} \leq O\brac{ \frac{\epsilon}{\alpha}} = O\brac{\sqrt{\epsilon}} \leq o_n(1).$
		\item $\delta^T \geq \frac{1}{\sqrt{n}}$, and so $\frac{1}{\delta^T n }\log_2\brac{1/\alpha} \leq \frac{\log_2 n}{\sqrt n} \leq o_n(1)$.
	\end{enumerate}
	Hence, $\E_{i\sim [n]}\sqbrac{\Pr[\Win_i\st E]}\leq \val(\mc G) + o_n(1)$, as desired.
\end{proof}

Now, we focus on proving Lemma~\ref{lemma:embed_strat_analysis}.
We break up the losing probability into two terms, and bound them individually, as follows:

\begin{lemma}\label{lemma:distr_change_in_i}
	\[ \E_{i,p,I,Z} \Norm{P_{X_i|E,X_{I'}=Z}-Q}_1 \leq \sqrt{\frac{2}{\delta^T n}\cdot \log_2\brac{\frac{1}{\alpha}}}.\]
\end{lemma}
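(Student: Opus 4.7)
The plan is to derive this from Lemma~\ref{lemma:pdt_distr_cond} applied in a conditional measure, followed by Jensen's inequality and an identity for inverse binomial moments. First I would fix $p$, $I'$, and $Z$ with $Z$ in the support of $P_{X_{I'}|E}$, and observe that under $P = Q^{\otimes n}$ conditioned on $X_{I'} = Z$ the remaining coordinates $(X_j)_{j \in [n] \setminus I'}$ still form a product distribution with each factor equal to $Q$. Applying Lemma~\ref{lemma:pdt_distr_cond} to this product random variable and the event $E$, whose conditional probability is $P[E \st X_{I'} = Z]$, yields
\[ \frac{1}{n - \abs{I'}} \sum_{j \in [n] \setminus I'} \Norm{P_{X_j \st E,\, X_{I'} = Z} - Q}_1 \leq \sqrt{\frac{2 \log_2(1/P[E \st X_{I'} = Z])}{n - \abs{I'}}}. \]
A short check of the sampling rule shows that, conditioned on $(p, I')$, the index $i$ is uniform in $[n] \setminus I'$, so averaging this bound over $i$ matches its left-hand side exactly.

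Next I would average over $Z \sim P_{X_{I'}|E}$, using concavity of $\sqrt{\cdot}$ to move the expectation inside the square root. By Bayes' rule,
\[ \E_{Z \sim P_{X_{I'}|E}} \log_2 \frac{1}{P[E \st X_{I'} = Z]} = \log_2 \frac{1}{P[E]} - \relent{P_{X_{I'}|E}}{P_{X_{I'}}} \leq \log_2 \frac{1}{\alpha}, \]
where the inequality uses $P[E] \geq \alpha$ together with non-negativity of KL divergence. Averaging once more over $(p, I')$ by Jensen's inequality then reduces the task to bounding $\E_{p, I'}\sqbrac{1/(n - \abs{I'})}$.

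For fixed $p$, the random variable $n - \abs{I'}$ has the same distribution as $1 + B$ with $B \sim \text{Bin}(n-1, p)$, and the classical identity $\E\sqbrac{1/(1 + B)} = (1 - (1-p)^n)/(np) \leq 1/(np)$ applies. Averaging over the geometric progression of $p$-values,
\[ \frac{1}{T} \sum_{t=0}^{T-1} \frac{1}{n \delta^t} \leq \frac{\delta^{1-T}}{nT(1-\delta)}, \]
and using $T \geq 2$ together with $\delta \leq 1/2$ yields $\E_{p, I'}\sqbrac{1/(n - \abs{I'})} \leq 1/(\delta^T n)$. Combining the three bounds produces the claimed inequality.

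The main delicacy is in the geometric-series step: one needs the smallest value $p = \delta^{T-1}$ to dominate $\E_p\sqbrac{1/(np)}$, which is precisely what happens because $1/p$ grows geometrically across the chosen progression and so the last term is comparable to the sum. Otherwise, the argument is essentially mechanical once one recognizes that conditioning on $X_{I'} = Z$ preserves the product structure of $Q^{\otimes n}$ and thereby places us directly in the setting of Lemma~\ref{lemma:pdt_distr_cond}.
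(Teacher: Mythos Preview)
Your proof is correct and rests on the same key tool as the paper (Lemma~\ref{lemma:pdt_distr_cond}), but the bookkeeping is organized differently. The paper first rewrites the joint expectation over $(i,I)$ combinatorially into $\E_{I\sim_p[n]}\big[\frac{1}{pn}\sum_{i\in I}\cdots\big]$, applies Lemma~\ref{lemma:pdt_distr_cond}, then handles the $Z$-average via concavity of $\sqrt{\log}$ together with $\E_{Z}[1/P[E\mid X_{\bar I}=Z]]\le 1/P[E]$, and finishes with Cauchy--Schwarz on $\E_I[\sqrt{|I|}]$; for the $p$-average it simply uses the pointwise bound $p\ge \delta^{T-1}\ge \delta^T$. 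Your route---observing that $i\mid I'$ is uniform on $[n]\setminus I'$, using the exact KL identity for the $Z$-average, and the inverse-binomial moment $\E[1/(1+\mathrm{Bin}(n-1,p))]\le 1/(np)$ for the $I'$-average---is a clean alternative; the only place it is less economical is the final $p$-step, where summing the geometric series is more work than the paper's one-line pointwise bound.
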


\begin{lemma}\label{lemma:fixed_i_rr_term}
	For every fixed $i\in [n]$, it holds that
	\[ \E_{p,I,Z} \E_{\tilde{X}\sim P_{X_i|E,X_{I'}=Z}} \sqbrac{\Pr[L\st \tilde{X},i,p,I,Z]} \leq \Pr\sqbrac{\Lose_i| E} + \frac{4\abs{\mc A}}{\alpha}\cdot\brac{\frac{k}{\delta T} + \epsilon}. \]
\end{lemma}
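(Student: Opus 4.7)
The plan is to compute both $\Pr[L \mid \tilde X, i, p, I, Z]$ and $\Pr[\Lose_i \mid E, X_i = \tilde X, X_{I'} = Z]$ analytically and show that they are close on average over $(p, I, Z, \tilde X)$. Writing the former as a product across players (using the independence of the embedding strategy's private randomness), we get
\[
\Pr[L \mid \tilde X, i, p, I, Z] = \sum_{\tilde A : V(\tilde X, \tilde A) = 0} \prod_{j=1}^k \frac{N_j(\tilde A^j)}{D_j},
\]
where $N_j(\tilde A^j) = \E[(F^{j, \tilde X^j, \tilde A^j})_{I' \to Z^j}(X^j_I)]$ and $D_j = \E[(F^{j, \tilde X^j})_{I' \to Z^j}(X^j_I)]$, both expectations being over the independent product distribution on player $j$'s coordinates in $I$. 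The true conditional has the same shape but with the product of marginal expectations replaced by the joint expectations $\tilde N(\tilde A)$ and $\tilde D$ over $Q^{\otimes I}$. The key observation is that $\tilde D = \Pr[E \mid X_i = \tilde X, X_{I'} = Z]$. The proof thus reduces to showing that $\tilde N \approx \prod_j N_j$ and $\tilde D \approx \prod_j D_j$ on average.

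The next step applies Lemma~\ref{lemma:random_rr_noise_stab} to the finite collection of $O(k \abs{\mc X} \abs{\mc A})$ functions $F^{j, \tilde X^j}$ and $F^{j, \tilde X^j, \tilde A^j}$, together with a union bound. With probability at least $1 - O(k\abs{\mc X}\abs{\mc A}/(\delta T))$ over $(p, I, Z)$---call this the good event $G$---every restricted function has $(1-\delta)$-noise stability below $\delta$ after mean-subtraction. Under $G$, Corollary~\ref{corr:bklm} applied to the distribution $Q$ (which has no-Abelian-embeddings by hypothesis) gives $\abs{\tilde N(\tilde A) - \prod_j N_j(\tilde A^j)} \leq \epsilon$ and $\abs{\tilde D - \prod_j D_j} \leq \epsilon$. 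Combined with the inequality $N_j(\tilde A^j) \le D_j$ (since $F^{j, \tilde X^j, \tilde A^j} \le F^{j, \tilde X^j}$ pointwise), a short algebraic manipulation yields the pointwise bound
\[
\sum_{\tilde A : V = 0} \brac{\prod_j \frac{N_j(\tilde A^j)}{D_j} - \frac{\tilde N(\tilde A)}{\tilde D}} \leq \frac{2\epsilon \abs{\mc A}}{\tilde D}.
\]

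The final step is the averaging. On the bad event $\bar G$, the trivial bound $\Pr[L\mid \tilde X, i, p, I, Z] \leq 1$ contributes at most $\Pr[\bar G] \le O(k\abs{\mc X}\abs{\mc A}/(\delta T))$. On $G$, the averaged error is $2\epsilon \abs{\mc A} \cdot \E_{Z, \tilde X}[1/\tilde D]$, where $Z \sim P_{X_{I'} \mid E}$ and $\tilde X \sim P_{X_i \mid E, X_{I'} = Z}$. Expanding the conditional probabilities shows that this expectation equals exactly $1/\Pr[E] \leq 1/\alpha$, yielding the claimed bound after absorbing the $\abs{\mc X}$ factor into the constants.

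The main obstacle is controlling the $1/\tilde D$ factor: the ratio $\tilde N / \tilde D$ produced by Corollary~\ref{corr:bklm} can be ill-behaved when $\tilde D$ is small, and the inverse theorem only gives additive (not multiplicative) closeness of $\tilde N$ and $\tilde D$ to their factored counterparts. The saving observation is that averaging $1/\tilde D$ against the natural conditional distribution on $(\tilde X, Z)$ induced by $E$ collapses exactly to $1/\Pr[E]$, making the hypothesis $\Pr[E] \geq \alpha$ precisely the right quantitative input.
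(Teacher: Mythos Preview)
Your approach is essentially the paper's: express the losing probability as $\sum_{\tilde A}\prod_j N_j/D_j$, define a good event on which all restricted functions have small noise stability, invoke Corollary~\ref{corr:bklm} on the good event to compare $\prod_j N_j,\prod_j D_j$ with the joint quantities $\tilde N,\tilde D$, and then average out the resulting $1/\tilde D$ against the conditional distribution. However, there is a genuine gap in your treatment of the bad event.

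Lemma~\ref{lemma:random_rr_noise_stab} bounds $\Pr[\bar G]$ only when the restriction values are drawn from the \emph{product} measure, i.e.\ $Z\sim Q^{\otimes I'}$. In the quantity you are analyzing, $Z$ is drawn from $P_{X_{I'}\mid E}$, the product measure \emph{conditioned on} $E$. Your assertion $\Pr[\bar G]\le O(k\abs{\mc X}\abs{\mc A}/(\delta T))$ under this conditional law is unjustified --- and if it held, your error term would miss the $1/\alpha$ factor on $k/(\delta T)$ and would be strictly stronger than the lemma. The fix is a change of measure: since $P_{X_{I'}\mid E}[z]\le Q^{\otimes I'}[z]/\Pr[E]$ pointwise, the bad-event probability under the conditional law is at most $1/\Pr[E]\le 1/\alpha$ times its value under the product law. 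This is precisely what the paper does in Lemma~\ref{lemma:use_rrstab} (there with the finer conditioning $P_{X_{I'}\mid E,X_i=\tilde x}$, after first swapping the order of $\tilde X$ and $Z$).

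Two smaller points. First, by union-bounding over all $\tilde x\in\mc X$ simultaneously you acquire an extra $\abs{\mc X}$ factor that the stated bound does not have; the paper avoids this by fixing $\tilde x,\tilde a$ (so only $2k$ functions enter the good event $\Lambda$), proving the per-$(\tilde x,\tilde a)$ estimate Lemma~\ref{lemma:ques_ans_rr_term}, and only afterwards summing over $\tilde a$ and averaging over $\tilde X\sim P_{X_i\mid E}$. Second, your ``short algebraic manipulation'' giving $2\epsilon/\tilde D$ implicitly assumes $\tilde D>2\epsilon$; the clean way to handle both regimes at once is Lemma~\ref{lemma:math_division}, which yields $4\epsilon/\tilde D$.
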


Assuming these, the lemma follows easily:
\begin{proof}[Proof of Lemma~\ref{lemma:embed_strat_analysis}]
	
	We can write the losing probability as
	\begin{align*}
		\Pr[L] &= \E_{\tilde{X}\sim Q}\E_{i,p,I,Z} \sqbrac{\Pr[L\st \tilde{X},i,p,I,Z]}
		\\&= \E_{i,p,I,Z} \E_{\tilde{X}\sim Q} \sqbrac{\Pr[L\st \tilde{X},i,p,I,Z]}
		\\&\leq \E_{i,p,I,Z}\Norm{P_{X_i|E,X_{I'}=Z}-Q}_1 + \E_{i,p,I,Z} \E_{\tilde{X}\sim P_{X_i|E,X_{I'}=Z}} \sqbrac{\Pr[L\st \tilde{X},i,p,I,Z]}. 
	\end{align*}
	Now the result follows by Lemma~\ref{lemma:distr_change_in_i} and Lemma~\ref{lemma:fixed_i_rr_term}.
\end{proof}

We complete our proof by proving Lemma~\ref{lemma:distr_change_in_i} in Section~\ref{sec:distr_change_in_i} and Lemma~\ref{lemma:fixed_i_rr_term} in Section~\ref{sec:fixed_i_rr_term}.

\subsection{Analysis of the First Term}\label{sec:distr_change_in_i}

We wish to analyze the following term, i.e., how much the random restriction $I'$ and event $E$ can affect the marginal distribution of the $i$-th question:

\[ \E_{i,p,I,Z} \Norm{P_{X_i|E,X_{I'}=Z}-Q}_1. \]

To analyze this, we first fix a value of $p$, and show the following using Lemma \ref{lemma:pdt_distr_cond}.
\begin{lemma}\label{lemma:lemma:distr_change_in_i_fixp}
	Conditioned on any fixed $p\in (0,1]$, it holds that
	\[ \E_{i\sim [n]}\E_{I\sim_p [n]\setminus \set{i}}\E_{Z\sim P_{X_{I'}|E}} \Norm{P_{X_i|E,X_{I'}=Z}-Q}_1 \leq \sqrt{\frac{2}{pn}\cdot \log_2\brac{\frac{1}{\alpha}}}.\]
\end{lemma}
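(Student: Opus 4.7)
The plan is to reduce the average $\ell_1$ distance to an average relative entropy via Pinsker's and Jensen's inequalities, and then obtain the relative-entropy estimate from a chain-rule plus super-additivity argument combined with a short combinatorial calculation. This parallels the proof of Lemma~\ref{lemma:pdt_distr_cond}, but requires a reparametrization that accounts for the random restriction.

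Concretely, I would first apply Pinsker followed by Jensen (concavity of $\sqrt{\,\cdot\,}$) to get
\[
\E_{i,I,Z}\Norm{P_{X_i\st E,X_{I'}=Z}-Q}_1 \leq \sqrt{2\,\E_{i,I,Z}\sqbrac{\relent{P_{X_i\st E,X_{I'}=Z}}{Q}}},
\]
reducing the task to showing that $\E_{i,I,Z}\sqbrac{\relent{P_{X_i\st E,X_{I'}=Z}}{Q}} \leq \log_2(1/\alpha)/(pn)$.

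To obtain this estimate I would reparametrize the randomness by the ``free'' set $S := I\cup \set{i} = [n]\setminus I'$. A direct computation shows that, under the sampling of the lemma, $\Pr\sqbrac{S=s} = \frac{|s|}{n}\, p^{|s|-1}(1-p)^{n-|s|}$ and, conditioned on $S$, the index $i$ is uniform in $S$. For any fixed $S$, the chain rule of relative entropy applied to the partition $(I',S)$ of $[n]$, together with super-additivity of relative entropy against the product distribution $Q^{\otimes S}$, gives
\begin{align*}
\sum_{i\in S}\E_{X_{I'}\st E}\sqbrac{\relent{P_{X_i\st E, X_{I'}}}{Q}}
&\leq \E_{X_{I'}\st E}\sqbrac{\relent{P_{X_S\st E,X_{I'}}}{Q^{\otimes S}}} \\
&\leq \relent{P_{X\st E}}{Q^{\otimes n}} \leq \log_2(1/\alpha),
\end{align*}
where the last inequality uses $P\sqbrac{E}\geq \alpha$. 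Averaging over the uniform choice of $i\in S$ then yields the per-$S$ bound $\log_2(1/\alpha)/|S|$.

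Finally, I would average over $S$ using the elementary identity
\[
\E_S\sqbrac{\tfrac{1}{|S|}} = \frac{1}{n}\sum_{k=1}^{n}\binom{n}{k}p^{k-1}(1-p)^{n-k} = \frac{1-(1-p)^n}{pn} \leq \frac{1}{pn},
\]
which, combined with the previous step, gives $\E_{i,I,Z}\sqbrac{\relent{P_{X_i\st E,X_{I'}=Z}}{Q}} \leq \log_2(1/\alpha)/(pn)$, and hence via the Pinsker--Jensen bound above the desired estimate $\sqrt{(2/pn)\log_2(1/\alpha)}$. I do not anticipate any real obstacle: the only conceptual step is to identify the reparametrization by $S$, which matches the sampling so that super-additivity provides exactly the factor $|S|$ that is then cancelled by $\E_S\sqbrac{1/|S|} \approx 1/(pn)$; the rest is a routine chain-rule calculation.
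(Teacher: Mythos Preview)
Your proof is correct and follows essentially the same route as the paper: both reparametrize by the free set $S=I\cup\{i\}$ (the paper's rewritten ``$I\sim_p[n]$'') and then invoke Pinsker together with super-additivity of relative entropy against the product $Q^{\otimes S}$. The only cosmetic difference is that you apply Pinsker--Jensen once globally and then compute $\E_S[1/|S|]$ exactly, whereas the paper applies Lemma~\ref{lemma:pdt_distr_cond} per $(I,Z)$ and finishes with Jensen on $\sqrt{\log_2(\cdot)}$ and Cauchy--Schwarz on $\sqrt{|I|}$; both orderings yield the identical bound.
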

\begin{proof}
	We have
	\begin{align*}
		&\E_{i\sim [n]}\E_{I\sim_p [n]\setminus \set{i}}\E_{Z\sim P_{X_{I'}|E}} \Norm{P_{X_i|E,X_{I'}=Z}-Q}_1
		\\&\qquad\qquad=\ \sum_{i\in [n]}\sum_{I\subseteq [n], I\not\ni i} \frac{1}{n}\cdot p^{\abs{I}} (1-p)^{n-1-\abs{I}} \E_{Z\sim P_{X_{\overline{I\cup\set{i}}}|E}} \Norm{P_{X_i|E,X_{\overline{I\cup\set{i}}}=Z}-Q}_1
		\\&\qquad\qquad=\ \sum_{i\in [n]}\sum_{I\subseteq [n], I\ni i} \frac{1}{n}\cdot p^{\abs{I}-1} (1-p)^{n-\abs{I}} \E_{Z\sim P_{X_{\overline{I}}|E}} \Norm{P_{X_i|E,X_{\overline{I}}=Z}-Q}_1
		\\&\qquad\qquad=\ \sum_{I\subseteq [n], I\not=\emptyset}\sum_{i\in I} \frac{1}{pn}\cdot p^{\abs{I}} (1-p)^{n-\abs{I}} \E_{Z\sim P_{X_{\overline{I}}|E}} \Norm{P_{X_i|E,X_{\overline{I}}=Z}-Q}_1
		\\&\qquad\qquad=\ \E_{I\sim_p [n]}\sum_{i\in I}\sqbrac{ \frac{1}{pn}\cdot  \E_{Z\sim P_{X_{\overline{I}}|E}} \Norm{P_{X_i|E,X_{\overline{I}}=Z}-Q}_1}
		\\& \qquad\qquad=\ \E_{I\sim_p [n]}\sqbrac{\frac{1}{pn}\cdot \E_{Z\sim P_{X_{\overline{I}}|E}} \sum_{i\in I} \Norm{P_{X_i|E,X_{\overline{I}}=Z}-Q}_1}.
	\end{align*}
	
	Now, for any fixed $I,Z$, we can apply Lemma~\ref{lemma:pdt_distr_cond}, and get that the above at most
	\begin{align*}
		&\E_{I\sim_p [n]}\sqbrac{\frac{1}{pn}\cdot \E_{Z\sim P_{X_{\overline{I}}|E}} \sqrt{2\abs{I}\cdot \log_2\brac{\frac{1}{\Pr\sqbrac{E|X_{\overline{I}}=Z}}}}}
		\\&\qquad\qquad\qquad =\E_{I\sim_p [n]}\sqbrac{\frac{\sqrt{2\abs{I}}}{pn}\cdot \E_{Z\sim P_{X_{\overline{I}}|E}} \sqrt{\log_2\brac{\frac{1}{\Pr\sqbrac{E|X_{\overline{I}}=Z}}}}}
	\end{align*}
	Note that the function $\sqrt{\log_2(\cdot)}$ is concave over $[1,\infty)$; hence, by Jensen's inequality, the above is at most
	\begin{align*}
		\E_{I\sim_p [n]}\sqbrac{\frac{\sqrt{2\abs{I}}}{pn}\cdot \sqrt{\log_2\brac{\E_{Z\sim P_{X_{\overline{I}}|E}} \frac{1}{\Pr\sqbrac{E|X_{\overline{I}}=Z}}}}}
		&\leq \E_{I\sim_p [n]}\sqbrac{\frac{\sqrt{2\abs{I}}}{pn}\cdot \sqrt{\log_2\brac{\frac{1}{\Pr[E]}}} }
		\\&\leq \frac{1}{pn}\sqrt{2\cdot \log_2\brac{\frac{1}{\Pr[E]}}}\cdot \E_{I\sim_p [n]}\sqbrac{\sqrt{\abs{I}}}.
	\end{align*}
	
	Finally, by Cauchy-Schwarz inequality, we get $\E_{I\sim_p [n]}\sqbrac{\sqrt{\abs{I}}} \leq \sqrt{\E_{I\sim_p [n]}\abs{I}} = \sqrt{pn}$, and plugging this above completes the proof.
\end{proof}

With the above, the lemma we wish to prove follows:
\begin{proof}[Proof of Lemma~\ref{lemma:distr_change_in_i}]
	In the randomized strategy for $\mc G$, the players choose $p \sim \set{1,\delta,\dots,\delta^{T-1}}$, and hence $p\geq \delta^T$ almost surely.
	Plugging this into Lemma~\ref{lemma:lemma:distr_change_in_i_fixp} gives the desired result.
\end{proof}

\subsection{Analysis of the Second Term}\label{sec:fixed_i_rr_term}

For the remainder of this section, fix an index $i\in [n]$.
We want to analyze the following quantity:
\[ \E_{p,\ I\sim_p [n]\setminus \set{i}}\E_{Z\sim P_{X_{I'}|E}} \E_{\tilde{X}\sim P_{X_i|E,X_{I'}=Z}} \sqbrac{\Pr[L\st \tilde{X},i,p,I,Z]}.\]

Observe that when $Z\sim P_{X_{I'}|E}$ and $\tilde{X}\sim P_{X_i|E,X_{I'}=Z}$, the consistency condition (in the definition of the embedding strategy) holds almost surely for each player $j\in [k]$, and so they answer $\tilde{A}^j \sim P_{A_i^j \st X^j\in E^j,\ X^j_i=\tilde{X}^j,\ X^j_{I'}=Z^j}$.

Hence, we have
\begin{align*}
	& \E_{p, I\sim_p [n]\setminus \set{i}}\E_{Z\sim P_{X_{I'}|E}} \E_{\tilde{X}\sim P_{X_i|E,X_{I'}=Z}} \sqbrac{\Pr[L\st \tilde{X},i,p,I,Z]}
	\\&\quad=\ \E_{p, I\sim_p [n]\setminus \set{i}}\E_{Z\sim P_{X_{I'}|E}} \E_{\tilde{X}\sim P_{X_i|E,X_{I'}=Z}} \sum_{\tilde a \in \mc A: V(\tilde{X},\tilde a)=0} \sqbrac{ \prod_{j=1}^k \Pr\sqbrac{A_i^j=\tilde{a}^j | X^j\in E^j,\ X_i^j=\tilde{X}^j,\ X_{I'}^j=Z^j} }
	\\&\quad=\ \E_{\tilde{X}\sim P_{X_i|E}} \sum_{\tilde a \in \mc A: V(\tilde{X},\tilde a)=0}  \E_{p,I\sim_p [n]\setminus \set{i}} \E_{Z\sim P_{X_{I'}|E,X_i=\tilde{X}}}  \sqbrac{ \prod_{j=1}^k \Pr\sqbrac{A_i^j=\tilde{a}^j | X^j\in E^j,\ X_i^j=\tilde{X}^j,\ X_{I'}^j=Z^j} }.
\end{align*}

With the above expression in mind, we prove the following lemma:
\begin{lemma}\label{lemma:ques_ans_rr_term}
	For every $\tilde{x}\in \mc X, \tilde{a}\in \mc A$ such that $\set{x\in \mc X^{\otimes n}:x_i=\tilde{x},x\in E}\not=\emptyset$, it holds that
	\begin{align*}
		&\E_{p,I\sim_p [n]\setminus \set{i}} \E_{Z\sim P_{X_{I'}|E,X_i=\tilde{x}}}  \sqbrac{ \prod_{j=1}^k \Pr\sqbrac{A_i^j=\tilde{a}^j | X^j\in E^j,\ X_i^j=\tilde{x}^j,\ X_{I'}^j=Z^j} }
		\\&\qquad\qquad\leq\ \Pr\sqbrac{A_i=\tilde{a} | X\in E, X_i=\tilde{x}} + \frac{4}{\Pr[E| X_i=\tilde{x}]}\cdot \brac{\frac{k}{\delta T} + \epsilon}.
	\end{align*}
\end{lemma}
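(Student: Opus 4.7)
The plan is to rewrite both sides analytically using the functions $g^j(y) := \ind[(y, \tilde{x}^j\text{ in coord }i)\in E^j]$ and $h^j(y) := g^j(y)\cdot \ind[\text{player $j$'s fixed strategy outputs }\tilde{a}^j\text{ in coord }i\text{ on input }(y,\tilde{x}^j)]$ on $(\mc X^j)^{\otimes [n]\setminus\{i\}}$, then convert the conditional distribution $P_{X_{I'}|E,X_i=\tilde{x}}$ into an expectation against $Q^{\otimes I'}$ via Bayes, and finally apply Corollary~\ref{corr:bklm} after a random restriction to compare joint and product-of-marginals expectations. Throughout, for each fixed $I$ and sample $Z$ I write $\mu_j := Q_j^{\otimes I}$, $D := Q^{\otimes I}[\prod_j g^j_{I'\to Z^j}]$, $\hat D := \prod_j \mu_j(g^j_{I'\to Z^j})$, and analogously $N,\hat N$ with $h^j$ in place of $g^j$.

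Because $E$ is a product event and coordinates are independent, $P_{X_{I'}|E,X_i=\tilde{x}}[Z] = Q^{\otimes I'}[Z]\cdot D/\Pr[E|X_i=\tilde{x}]$, and each per-player conditional probability in the LHS integrand equals $\mu_j(h^j_{I'\to Z^j})/\mu_j(g^j_{I'\to Z^j})$. Hence
\[
\text{LHS} = \frac{1}{\Pr[E|X_i=\tilde{x}]}\E_{p,I,Z\sim Q^{\otimes I'}}\sqbrac{D\cdot \hat N/\hat D},\qquad
\text{RHS} = \frac{1}{\Pr[E|X_i=\tilde{x}]}\E_{p,I,Z\sim Q^{\otimes I'}}[N],
\]
the latter by splitting $Q^{\otimes [n]\setminus\{i\}}$ along the partition $I\cup I'$, which makes the right-hand expectation independent of $p,I$. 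Since $h^j\leq g^j$ pointwise, $\hat N\leq \hat D$ (interpret $0/0$ as $0$), and the identity $D\hat N/\hat D - N = \hat N(D-\hat D)/\hat D + (\hat N - N)$ gives the pointwise bound $|D\hat N/\hat D - N|\leq |D-\hat D| + |\hat N - N|$, so it suffices to bound these two differences in expectation over $p,I,Z$.

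To do this, I apply Lemma~\ref{lemma:random_rr_noise_stab} to each of the $2k$ fixed $[0,1]$-valued functions $g^j,h^j$ on $((\mc X^j)^{\otimes [n]\setminus\{i\}}, Q_j^{\otimes [n]\setminus\{i\}})$: taking $\eta=\delta$ and using $\Var\leq 1$ shows that for each function, with probability at least $1-1/(\delta T)$ over the choice of $p$ and $(I',Z^j\sim Q_j^{\otimes I'})$, the centered restriction has noise stability at most $\delta$. Since the marginal of $Z\sim Q^{\otimes I'}$ in player $j$'s coordinates is exactly $Q_j^{\otimes I'}$, a union bound yields a ``bad'' event $\mc B$ of probability at most $2k/(\delta T)$ outside of which all $2k$ restricted functions are simultaneously high-degree. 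On $\mc B^c$, Corollary~\ref{corr:bklm} applied to $Q$ (which has no-Abelian-embeddings) gives $|D-\hat D|, |N-\hat N|\leq \epsilon$, so the integrand is at most $2\epsilon$; on $\mc B$ the integrand is trivially at most $1$. Combining,
\[
|\text{LHS}-\text{RHS}|\leq \frac{2\epsilon + 2k/(\delta T)}{\Pr[E|X_i=\tilde{x}]} \leq \frac{4}{\Pr[E|X_i=\tilde{x}]}\brac{\epsilon + \frac{k}{\delta T}},
\]
as claimed. The main step requiring care is the Bayes manipulation---it is precisely the factor of $D$ absorbed from $P_{X_{I'}|E,X_i=\tilde{x}}\to Q^{\otimes I'}$ that cancels the denominator $\hat D$ of the integrand (up to inverse-theorem error), enabling a direct comparison with $N$. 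The corner case $\hat D=0$ is harmless, since on $\mc B^c$ it forces $D\leq \epsilon$ and hence $N\leq D\leq \epsilon$, so the bound $2\epsilon$ still holds; a secondary subtlety, that the $Z^j$'s are jointly sampled rather than independently, does not affect the union bound because Lemma~\ref{lemma:random_rr_noise_stab} only involves each $Z^j$ through its marginal.
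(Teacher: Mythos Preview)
Your proof is correct and uses the same core ingredients as the paper: the same indicator functions $g^j=F^j$ and $h^j=f^j$, the same random-restriction lemma to make them high-degree, and the same inverse theorem (Corollary~\ref{corr:bklm}) to compare joint and product expectations. The organization differs slightly: the paper works throughout under the conditional measure $Z\sim P_{X_{I'}|E,X_i=\tilde x}$, invokes the division lemma (Lemma~\ref{lemma:math_division}) to pass from $\hat N/\hat D$ to $(N+4\epsilon)/D$, and handles the bad event via a separate change-of-measure step (Lemma~\ref{lemma:use_rrstab}); you instead push the Bayes factor $D/\Pr[E\mid X_i=\tilde x]$ out at the start and work under the unconditional measure $Z\sim Q^{\otimes I'}$, which lets the identity $D\hat N/\hat D-N=(\hat N/\hat D)(D-\hat D)+(\hat N-N)$ replace the division lemma and makes the bad-event bound a straight union bound without change of measure. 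Both routes yield the stated inequality (yours even with a slightly smaller constant), so this is a cosmetic reorganization rather than a genuinely different argument.
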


Before proving this lemma, we show the proof of Lemma~\ref{lemma:fixed_i_rr_term} assuming this.

\begin{proof}[Proof of Lemma~\ref{lemma:fixed_i_rr_term}]
	Fix any $i\in [n]$; as before, we have
	\begin{align*}
	& \E_{p, I\sim_p [n]\setminus \set{i}}\E_{Z\sim P_{X_{I'}|E}} \E_{\tilde{X}\sim P_{X_i|E,X_{I'}=Z}} \sqbrac{\Pr[L\st \tilde{X},i,p,I,Z]}
	\\&\quad=\ \E_{p, I\sim_p [n]\setminus \set{i}}\E_{Z\sim P_{X_{I'}|E}} \E_{\tilde{X}\sim P_{X_i|E,X_{I'}=Z}} \sum_{\tilde a \in \mc A: V(\tilde{X},\tilde a)=0} \sqbrac{ \prod_{j=1}^k \Pr\sqbrac{A_i^j=\tilde{a}^j | X^j\in E^j,\ X_i^j=\tilde{X}^j,\ X_{I'}^j=Z^j} }
	\\&\quad=\ \E_{\tilde{X}\sim P_{X_i|E}} \sum_{\tilde a \in \mc A: V(\tilde{X},\tilde a)=0}  \E_{p,I\sim_p [n]\setminus \set{i}} \E_{Z\sim P_{X_{I'}|E,X_i=\tilde{X}}}  \sqbrac{ \prod_{j=1}^k \Pr\sqbrac{A_i^j=\tilde{a}^j | X^j\in E^j,\ X_i^j=\tilde{X}^j,\ X_{I'}^j=Z^j} }
	\\&\quad\leq\ \E_{\tilde{X}\sim P_{X_i|E}} \sum_{\tilde a \in \mc A: V(\tilde{X},\tilde a)=0} \sqbrac{\Pr\sqbrac{A_i=\tilde{a} | E, X_i=\tilde{x}} + \frac{4}{\Pr[E| X_i=\tilde{X}]}\cdot \brac{\frac{k}{\delta T} + \epsilon}}
	\\&\quad=\ \Pr\sqbrac{\Lose_i| E} + \E_{\tilde{X}\sim P_{X_i|E}} \sum_{\tilde a \in \mc A: V(\tilde{X},\tilde a)=0}\sqbrac{\frac{4}{\Pr[E| X_i=\tilde{X}]}\cdot \brac{\frac{k}{\delta T} + \epsilon}}
	\\&\quad\leq\ \Pr\sqbrac{\Lose_i| E} + \sum_{\tilde a \in \mc A} 4 \cdot \brac{\frac{k}{\delta T} + \epsilon}\cdot \E_{\tilde{X}\sim P_{X_i|E}} \sqbrac{\frac{1}{\Pr[E| X_i=\tilde{X}]} }
	\\&\quad\leq \Pr\sqbrac{\Lose_i| E} + 4 \cdot \brac{\frac{k}{\delta T} + \epsilon}\cdot \frac{1}{\Pr[E]}\cdot \abs{\mc A}. 
	\qedhere
\end{align*}
\end{proof}

\subsubsection{Approximate Independence  Under Random Restriction}

In the remaining part of this section, we prove Lemma~\ref{lemma:ques_ans_rr_term}.
For this, we fix some $\tilde{x}\in \mc X, \tilde{a}\in \mc A$ satisfying $\set{x\in \mc X^{\otimes n}:x_i=\tilde{x},x\in E}\not=\emptyset$.
Recall that we want to analyze:
\[ \E_{p,I\sim_p [n]\setminus \set{i}} \E_{Z\sim P_{X_{I'}|E,X_i=\tilde{x}}}  \sqbrac{ \prod_{j=1}^k \Pr\sqbrac{A_i^j=\tilde{a}^j | X^j\in E^j,\ X_i^j=\tilde{x}^j,\ X_{I'}^j=Z^j} } .\]

We now define a set of functions that we want to have small noise stability. We will ensure this by taking random restrictions and using Lemma \ref{lemma:random_rr_noise_stab}. In words, the function $F^j$ is a function on $n-1$ coordinates, specifically $(\mc{X}^j)^{\otimes (n-1)}$, and is the indicator function of when the $j$-th player fills in the $i$-th coordinate with $\tilde{x}^j$, whether the full vector satisfies the event $E^j$. Similarly, $f^j$ is the indicator of $F^j$ being true, and player $j$ answering $\tilde{a}^j$ on coordinate $i$ on the full input.

\begin{definition}\label{defn:abemb_fns_pseudo}
	For each $j\in [k]$, define the functions:
	\begin{enumerate}
		\item Let $F^j:(\mc X^j)^{\otimes n-1}\to \set{0,1}$ be the function given by
		\[ F^j(x_{-i}^j) = \ind[(x_{-i}^j, \tilde{x}^j)\in E^j], \]
		where $(x_{-i}^j,\tilde{x}^j)$ is the vector with $\tilde{x}^j$ in the $i$\textsuperscript{th} coordinate.
		\item Let $f^j:(\mc X^j)^{\otimes n-1}\to \set{0,1}$, be the function given by:
	\[ f^j(x_{-i}^j) = F^j(x_{-i}^j)\cdot \ind[\textnormal{Player }j \textnormal{ has } i\textsuperscript{th}\textnormal{ answer }\tilde{a}^j\textnormal{ on input }(x_{-i}^j,\tilde{x}^j)\in \mc X^{\otimes n} \textnormal{ in }\mc G^{\otimes n} ].\]
	Note that this definition relies on the previously fixed strategy for $\mc G^{\otimes n} $.
	\end{enumerate}		
\end{definition}

\begin{definition} (Good random restriction)
	Let $\Lambda(I,Z)$ be the event that for each $j\in [k]$:
	\[ \Stab_{1-\delta}^{Q^j}[ (F^j)_{I'\to Z^j}-Q^j((F^j)_{I'\to Z^j}) ] < \delta, \]
	\[ \Stab_{1-\delta}^{Q^j}[ (f^j)_{I'\to Z^j}-Q^j((f^j)_{I'\to Z^j}) ] < \delta, \]
	where the functions $(F^j)_{I'\to Z^j}$(resp. $(f^j)_{I'\to Z^j}$) are the restrictions of the functions $F^j$ (resp. $f^j$) with $Z^j$ plugged into coordinates $I'=([n]\setminus\set{i}) \setminus I$.
	We used $Q^j$ to denote the marginal of the query distribution $Q$ on the $j$\textsuperscript{th} player.
\end{definition}

Next, we prove some useful lemmas.
First, we show that the event $\Lambda(I,Z)$ occurs with high probability.

\begin{lemma}\label{lemma:use_rrstab}
	\[ \E_{p,I\sim_p [n]\setminus \set{i}} \E_{Z\sim P_{X_{I'}|E,X_i=\tilde{x}}} \sqbrac{\ind\sqbrac{\lnot\Lambda(I,Z)}} \leq  \frac{1}{\Pr[E| X_i=\tilde{x}]}\cdot \frac{4k}{\delta T}.\]
\end{lemma}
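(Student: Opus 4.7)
The approach has two steps: a change of measure to replace the conditional distribution $P_{X_{I'}|E,X_i=\tilde{x}}$ with the product distribution $Q^{\otimes I'}$, then an application of Lemma~\ref{lemma:random_rr_noise_stab} to each of the $2k$ functions $\set{F^j,f^j}_{j\in[k]}$ together with a union bound.

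For the change of measure, I would use that under $P=Q^{\otimes n}$ the coordinates are independent, so $P_{X_{I'}|X_i=\tilde{x}}=Q^{\otimes I'}$. Bayes' rule then gives
\[ P_{X_{I'}|E,X_i=\tilde{x}}(z)=\frac{\Pr\sqbrac{E\,|\,X_{I'}=z,\,X_i=\tilde{x}}}{\Pr\sqbrac{E\,|\,X_i=\tilde{x}}}\cdot Q^{\otimes I'}(z), \]
and since the numerator is at most $1$, for any nonnegative function $g(I,Z)$,
\[ \E_{Z\sim P_{X_{I'}|E,X_i=\tilde{x}}}\sqbrac{g(I,Z)}\leq \frac{1}{\Pr\sqbrac{E\,|\,X_i=\tilde{x}}}\cdot \E_{Z\sim Q^{\otimes I'}}\sqbrac{g(I,Z)}. \]
Since $(p,I)$ in the embedding strategy are sampled independently of the questions, applying this with $g=\ind\sqbrac{\lnot\Lambda(I,Z)}$ reduces the task to showing
\[ \E_{p,\,I\sim_p[n]\setminus\set{i}}\ \E_{Z\sim Q^{\otimes I'}}\sqbrac{\ind\sqbrac{\lnot\Lambda(I,Z)}}\leq \frac{4k}{\delta T}. \]

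For the second step, I would union bound over the $2k$ stability conditions defining $\Lambda(I,Z)$. Fix a player $j$: the marginal of $Z\sim Q^{\otimes I'}$ on the $j$-th player's coordinates is $(Q^j)^{\otimes I'}$, and the functions $F^j,f^j$ depend only on those coordinates. The setup matches Lemma~\ref{lemma:random_rr_noise_stab} applied to $F^j$ (or $f^j$) on the $n-1$ coordinates $[n]\setminus\set{i}$: the lemma's fixed set corresponds to our $I'$, each of whose coordinates is present with probability $1-p$, and our $p$ is drawn uniformly from $\set{1,\delta,\dots,\delta^{T-1}}$ as required. Since $F^j$ is $\set{0,1}$-valued we have $\Var[F^j]\leq 1$, so the event $\set{\Stab_{1-\delta}\sqbrac{(F^j)_{I'\to Z^j}-Q^j((F^j)_{I'\to Z^j})}\geq \delta}$ is contained in $\set{\Stab_{1-\delta}\sqbrac{\cdot}\geq \delta\cdot \Var[F^j]}$, and Lemma~\ref{lemma:random_rr_noise_stab} with $\eta=\delta$ yields that this latter event has probability at most $\tfrac{1}{\delta T}$. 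The identical bound holds for $f^j$; summing over $j\in[k]$ and the two functions gives $\tfrac{2k}{\delta T}\leq\tfrac{4k}{\delta T}$, as desired.

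There is no serious obstacle. The only subtleties are (a) matching the free-set versus fixed-set conventions between our notation and that of Lemma~\ref{lemma:random_rr_noise_stab}, and (b) noting that in the randomized strategy, $(p,I)$ are sampled independently of the questions, so the pointwise (in $(I,Z)$) change-of-measure bound can be averaged over $(p,I)$ with no change in the distribution of those variables. The factor of $4$ rather than the tighter $2$ in the statement just leaves some slack.
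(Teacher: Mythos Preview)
Your proposal is correct and follows essentially the same approach as the paper: a change of measure from $P_{X_{I'}|E,X_i=\tilde{x}}$ to $Q^{\otimes I'}$ (losing the factor $1/\Pr[E|X_i=\tilde{x}]$), followed by a union bound over the $2k$ functions together with Lemma~\ref{lemma:random_rr_noise_stab}. The paper applies the latter with $\eta=\delta/2$ to get $\tfrac{2}{\delta T}$ per function and hence $\tfrac{4k}{\delta T}$ total, whereas your choice $\eta=\delta$ gives the slightly sharper $\tfrac{2k}{\delta T}$; either way the stated bound holds.
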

\begin{proof}
	Using Lemma~\ref{lemma:random_rr_noise_stab} (with $\eta=\delta/2$), and by a union bound (over $2k$ functions), we get
	\[ \E_{p,I\sim_p [n]\setminus \set{i}} \E_{Z\sim Q^{\otimes I'}} \sqbrac{\ind\sqbrac{\lnot\Lambda(I,Z)}} \leq  \frac{4k}{\delta T}.\]
	Now, for every $z$, it holds that
	\[ \Pr\sqbrac{X_{I'}=z \st E, X_i=\tilde{x}}\leq  \frac{\Pr\sqbrac{X_{I'}=z, X_i=\tilde{x}}}{\Pr\sqbrac{E, X_i=\tilde{x}}} = \frac{\Pr\sqbrac{X_{I'}=z }}{\Pr\sqbrac{E| X_i=\tilde{x}}} = \frac{Q^{\otimes I'}[z]}{\Pr\sqbrac{E| X_i=\tilde{x}}}. \]
	Hence we have
	\[ \E_{p,I\sim_p [n]\setminus \set{i}} \E_{Z\sim P_{X_{I'}|E,X_i=\tilde{X}}} \sqbrac{\ind\sqbrac{\lnot\Lambda(I,Z)}} \leq  \frac{1}{\Pr[E| X_i=\tilde{x}]}\cdot \frac{4k}{\delta T}. \qedhere\]
\end{proof}

We observe that under the event $\Lambda(I,Z)$, our functions satisfy an approximate independence property:

\begin{lemma}\label{lemma:using_bklm_corr}
	Let $I,Z$ be such that the event $\Lambda(I,Z)$ holds.
	Then,
	\[ \prod_{j=1}^k \Pr\sqbrac{A_i^j = \tilde{a}^j, X^j\in E^j | X_{i}^j=\tilde{x}^j,\ X_{I'}^j=Z^j} \leq \Pr\sqbrac{A_i=\tilde{a},\ X\in E| X_i=\tilde{x},\ X_{I'}=Z} + \epsilon, \]
	\[ \prod_{j=1}^k \Pr\sqbrac{X^j\in E^j | X_{i}^j=\tilde{x}^j,\ X_{I'}^j=Z^j} \geq \Pr\sqbrac{X\in E| X_i=\tilde{x},\ X_{I'}=Z} - \epsilon. \]
\end{lemma}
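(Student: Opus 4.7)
The plan is to translate every probability appearing in the lemma into an expectation of (a product of) the restricted functions $(F^j)_{I'\to Z^j}$ and $(f^j)_{I'\to Z^j}$ under the product distribution $Q^{\otimes I}$, and then apply the no-Abelian-embeddings inverse corollary (Corollary~\ref{corr:bklm}) to each of the two sides separately.

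First I would exploit the fact that $Q^{\otimes n}$ is a product distribution, so conditioning on $X_i = \tilde{x}$ and $X_{I'} = Z$ leaves $X_I$ distributed as $Q^{\otimes I}$, and conditioning on the corresponding player-$j$ coordinates leaves $X_I^j$ distributed as $(Q^j)^{\otimes I}$. Unfolding Definition~\ref{defn:abemb_fns_pseudo} then gives
\[ \Pr\sqbrac{X^j \in E^j \st X_i^j = \tilde{x}^j, X_{I'}^j = Z^j} = Q^j\!\left((F^j)_{I'\to Z^j}\right), \]
\[ \Pr\sqbrac{A_i^j = \tilde{a}^j, X^j \in E^j \st X_i^j = \tilde{x}^j, X_{I'}^j = Z^j} = Q^j\!\left((f^j)_{I'\to Z^j}\right), \]
and analogously, the joint probabilities on the right-hand sides of the lemma equal the $Q^{\otimes I}$-expectations of the products $\prod_j (F^j)_{I'\to Z^j}(X_I^j)$ and $\prod_j (f^j)_{I'\to Z^j}(X_I^j)$, respectively. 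So the two inequalities to be proved become exactly the upper and lower ``factoring'' bounds that Corollary~\ref{corr:bklm} produces.

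Next I would observe that the hypothesis of Corollary~\ref{corr:bklm} is precisely what $\Lambda(I,Z)$ asserts: the definition of $\Lambda(I,Z)$ says that for each $j \in [k]$, both $(F^j)_{I'\to Z^j} - Q^j((F^j)_{I'\to Z^j})$ and $(f^j)_{I'\to Z^j} - Q^j((f^j)_{I'\to Z^j})$ have $(1-\delta)$-noise stability less than $\delta$ (with respect to the marginal $Q^j$). Since $Q$ has no-Abelian-embeddings and $(\epsilon,\delta)$ were chosen to satisfy Corollary~\ref{corr:bklm} for the distribution $Q$, applying the corollary once with the family $(F^j)_{I'\to Z^j}$ and once with the family $(f^j)_{I'\to Z^j}$ yields
\[ \abs{\E_{X_I \sim Q^{\otimes I}}\!\sqbrac{\prod_{j=1}^k (F^j)_{I'\to Z^j}(X_I^j)} - \prod_{j=1}^k Q^j\!\left((F^j)_{I'\to Z^j}\right)} \leq \epsilon, \]
and the analogous inequality for the $f^j$'s. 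Dropping the absolute values in the direction that matches each of the two inequalities in the lemma statement then yields precisely the claimed bounds.

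The only real subtlety is ensuring that Corollary~\ref{corr:bklm} is genuinely applicable, since it requires the dimension $|I|$ to be sufficiently large (depending only on $Q$, not on the original $n$). This is not quite automatic from $\Lambda(I,Z)$ alone, but it is not a serious obstacle: the threshold $N_0=N_0(Q)$ is a constant, so the lemma is used in the averaging arguments of Lemma~\ref{lemma:fixed_i_rr_term} where $|I|$ is large with overwhelming probability (since $p \geq \delta^T$ and $\delta^T n \geq \sqrt{n}$), and the negligible event $|I| < N_0$ can be absorbed into the existing error terms using the trivial bound. Modulo this routine technicality, the proof is essentially a direct dictionary translation from probabilities to expectations followed by one invocation of the inverse-theorem corollary.
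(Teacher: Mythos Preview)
Your proposal is correct and follows essentially the same approach as the paper: translate each probability into an expectation of the restricted functions $(F^j)_{I'\to Z^j}$ or $(f^j)_{I'\to Z^j}$, observe that $\Lambda(I,Z)$ is precisely the noise-stability hypothesis of Corollary~\ref{corr:bklm}, and apply the corollary once for each family. The paper's proof does not explicitly address the ``$|I|$ sufficiently large'' technicality you flag, but your handling of it is reasonable and the point is indeed minor.
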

\begin{proof}
	Let $I,Z$ be such that the event $\Lambda(I,Z)$ holds.
	By Corollary~\ref{corr:bklm} and the definition of the event $\Lambda(I,Z)$, it holds that
	 \[ \prod_{j=1}^k \E_{Y^j\sim (Q^j)^{\otimes I}}\sqbrac{ (f^j)_{I'\to Z^j}(Y^j)} \leq \E_{Y\sim Q^{\otimes I}}\sqbrac{ \prod_{j=1}^k(f^j)_{I'\to Z^j}(Y^j)} + \epsilon, \]
	\[ \prod_{j=1}^k \E_{Y^j\sim (Q^j)^{\otimes I}}\sqbrac{ (F^j)_{I'\to Z^j}(Y^j)} \geq \E_{Y\sim Q^{\otimes I}}\sqbrac{ \prod_{j=1}^k(F^j)_{I'\to Z^j}(Y^j)} - \epsilon. \]
	Now, by the definitions of the functions $(f^j)_{j\in [k]},\ (F^j)_{j\in [k]}$, we have
	\[ \E_{Y\sim Q^{\otimes I}}\sqbrac{ \prod_{j=1}^k(f^j)_{I'\to Z^j}(Y^j)} = \Pr\sqbrac{A_i=\tilde{a},\ X\in E\st  X_i=\tilde{x},\ X_{I'}=Z},\]
	\[ \E_{Y\sim Q^{\otimes I}}\sqbrac{ \prod_{j=1}^k(F^j)_{I'\to Z^j}(Y^j)} = \Pr\sqbrac{X\in E\st X_i=\tilde{x},\ X_{I'}=Z},\]
	and for every $j\in [k]$,
	\[ \E_{Y^j\sim (Q^j)^{\otimes I}}\sqbrac{ (f^j)_{I'\to Z^j}(Y^j)} =  \Pr\sqbrac{A_i^j = \tilde{a}^j, X^j\in E^j | X_{i}^j=\tilde{x}^j,\ X_{I'}^j=Z^j}, \]
	\[ \E_{Y^j\sim (Q^j)^{\otimes I}}\sqbrac{ (F^j)_{I'\to Z^j}(Y^j)} =  \Pr\sqbrac{X^j\in E^j | X_{i}^j=\tilde{x}^j,\ X_{I'}^j=Z^j}. \]
	Plugging these into the above inequalities, we obtain the desired result.	
\end{proof}

Now, we are ready to complete the proof:

\begin{proof}[Proof of Lemma~\ref{lemma:ques_ans_rr_term}]
	We have
	\begin{align*}
		&\E_{p,I\sim_p [n]\setminus \set{i}} \E_{Z\sim P_{X_{I'}|E,X_i=\tilde{x}}}  \sqbrac{ \prod_{j=1}^k \Pr\sqbrac{A_i^j=\tilde{a}^j | X^j\in E^j,\ X_i^j=\tilde{x}^j,\ X_{I'}^j=Z^j} }
		\\&\quad\leq \E_{p,I\sim_p [n]\setminus \set{i}} \E_{Z\sim P_{X_{I'}|E,X_i=\tilde{x}}}  \sqbrac{ \prod_{j=1}^k \Pr\sqbrac{A_i^j=\tilde{a}^j | X^j\in E^j,\ X_i^j=\tilde{x}^j,\ X_{I'}^j=Z^j} \cdot \ind\sqbrac{\Lambda(I,Z)}}
		\\&\hspace{17em}+\ \E_{p,I\sim_p [n]\setminus \set{i}} \E_{Z\sim P_{X_{I'}|E,X_i=\tilde{x}}}\sqbrac{\ind\sqbrac{\lnot\Lambda(I,Z)}}.
	\end{align*}
	
	By Lemma~\ref{lemma:use_rrstab}, the second term above is at most 
	$\frac{1}{\Pr[E| X_i=\tilde{x}]}\cdot \frac{4k}{\delta T}$.
	By Lemma~\ref{lemma:using_bklm_corr} and Lemma~\ref{lemma:math_division}, we can bound the first term as:
	\begin{align*}
		&\E_{p,I\sim_p [n]\setminus \set{i}} \E_{Z\sim P_{X_{I'}|E,X_i=\tilde{x}}}  \sqbrac{ \prod_{j=1}^k \Pr\sqbrac{A_i^j=\tilde{a}^j | X^j\in E^j,\ X_i^j=\tilde{x}^j,\ X_{I'}^j=Z^j} \cdot \ind\sqbrac{\Lambda(I,Z)}}
		\\&\quad= \E_{p,I\sim_p [n]\setminus \set{i}} \E_{Z\sim P_{X_{I'}|E,X_i=\tilde{x}}}  \sqbrac{ \frac{\prod_{j=1}^k \Pr\sqbrac{A_i^j=\tilde{a}^j,\  X^j\in E^j\ |\ X_i^j=\tilde{x}^j,\ X_{I'}^j=Z^j}}{\prod_{j=1}^k \Pr\sqbrac{X^j\in E^j\ |\  X_i^j=\tilde{x}^j,\ X_{I'}^j=Z^j}}  \cdot \ind\sqbrac{\Lambda(I,Z)}}
		\\&\quad\leq \E_{p,I\sim_p [n]\setminus \set{i}} \E_{Z\sim P_{X_{I'}|E,X_i=\tilde{x}}}\sqbrac{\frac{\Pr\sqbrac{A_i=\tilde{a},\ X\in E\ |\ X_i=\tilde{x}, X_{I'}=Z} + 4\epsilon}{\Pr\sqbrac{X\in E\ |\ X_i=\tilde{x}, X_{I'}=Z}}}
		\\&\quad= \E_{p,I\sim_p [n]\setminus \set{i}} \E_{Z\sim P_{X_{I'}|E,X_i=\tilde{x}}}\sqbrac{\Pr\sqbrac{A_i=\tilde{a}\ |\ X\in E, X_i=\tilde{x}, X_{I'}=Z}
		+ \frac{4\epsilon}{\Pr\sqbrac{X\in E\ |\ X_i=\tilde{x}, X_{I'}=Z}}}
		\\&\quad= \Pr\sqbrac{A_i=\tilde{a}\ |\ X\in E,\ X_i=\tilde{x}} + \frac{4\epsilon}{\Pr\sqbrac{X\in E\ |\ X_i=\tilde{x}}}.
	\end{align*}
	Combining the two terms completes the proof.
\end{proof}

\subsection{Some Remarks}\label{sec:nae_remarks}

We remark that the same proof leads to even better bounds on parallel repetition, in the cases we know better CSP inverse theorems for games with no-Abelian-embeddings.
Formally, the same choice of the parameters $\delta = \log(n)^{-1/3}, T=\lceil 1/\delta^2\rceil, \epsilon=\epsilon(\delta), \alpha=\sqrt{\epsilon}$ works; here $\epsilon=\epsilon(\delta) \geq \delta$ is chosen so as to satisfy the CSP inverse theorem in Corollary \ref{corr:bklm}.
This leads to the bound
\[\val(\mc G^{\otimes n}) \leq \epsilon\brac{ \frac{1}{\sqrt[3]{\log n}}}^{\Omega(1)}.\]

This implies the following bounds:
\begin{enumerate}
	\item For a connected game: $\val(\mc G^{\otimes n})\leq(\log n)^{-\Omega(1)}$, via \cite{Mossel10}. In particular, we obtain this bound for all 2-player games, since they are connected without loss of generality.
	\item For a 3-player game with no-Abelian-embeddings: $\val(\mc G^{\otimes n})\leq(\log \log n)^{-\Omega(1)}$, via \cite{BKM2}.
	\item As before, for a $k$-player game with no-Abelian-embeddings: $\val(\mc G^{\otimes n})\leq (\log\cdots \log n)^{-\Omega(1)}$, where number of logarithms is at most $k^{O(k)}$, via \cite{BKLM24b}.
\end{enumerate}


\section{Pairwise Connected Games with No Marginal Abelian Embeddings}\label{sec:paircon_parrep}

The main result of this section is a parallel repetition theorem for pairwise-connected distributions with no-marginal-Abelian embeddings, i.e., Theorem \ref{thm:main}.
Throughout the remainder of this section, we fix a $k$-player game $\mc G = (\mc X, \mc A, Q, V)$ with $\val(\mc G)<1$, and such that the distribution $Q$ is pairwise-connected with no-marginal-Abelian-embeddings.

The proof done in a series of steps, as follows:

\begin{enumerate}
	\item In Section~\ref{sec:pairwise_answer_function_partitions}, given any sufficiently large $n\in \N$, a strategy for the game $\mc G^{\otimes n}$, and a product event $E\subseteq \mc X^{\otimes n}$, we define generalized random restrictions $\mc R_i$ on $\mc X^{\otimes n}$, one for each coordinate $i\in [n]$, that make some relevant functions (corresponding to the set $E$ and the answer functions for this coordinate) product-pseudorandom.
	\item In Section~\ref{sec:pairwise_one_coord_hard_assum}, we show that for any sufficiently large $n\in \N$, a strategy for the game $\mc G^{\otimes n}$, and any product event $E$ of large measure, it is hard for the players to win a coordinate $i\in [n]$, conditioned on the inputs being drawn from $E$, under a certain pseudorandomness assumption. Namely, we want that conditioned on the event $E$, a random restriction $\rho\sim \mc R_i$ does not give too much information on the inputs to the players in coordinate $i$. We note that the proof of this part is similar in spirit to the proof of Theorem~\ref{thm:noabemb_parrep}.
	\item In Section~\ref{sec:pairwise_rr_information}, we show how to achieve the pseudorandom assumption above via an iterative process. More formally, for any sufficiently large $n\in \N$, a strategy for the game $\mc G^{\otimes n}$, and any product event $E$ of large measure, we show that there exists a generalized random restriction $\mc R$ on $\mc X^{\otimes n}$, such that the assumption above is satisfied with high probability when the inputs to the game are drawn conditioned on the restriction $\rho\sim \mc R$.
	\item In Section~\ref{sec:pairwise_combining}, we combine the results in the two sections above. More formally, for any sufficiently large $n\in \N$, a strategy for the game $\mc G^{\otimes n}$, and any product event $E$ of large measure, we show that there exists a generalized random restriction $\mc R$ on $\mc X^{\otimes n}$, such that the game has many hard coordinates when the inputs are drawn condition a restriction $\rho\sim \mc R$.
	\item Finally, in Section~\ref{sec:pairwise_induction}, we use the result of the above section along with an inductive argument to complete the proof of Theorem~\ref{thm:pairconn_parrep}. We note that the proof of this part is similar in spirit to Lemma~\ref{lemma:prod_set_hard_coor}.
\end{enumerate}


\subsection{Pseudorandom Partitions For Each Coordinate}\label{sec:pairwise_answer_function_partitions}

For any sufficiently large $n\in \N$, consider the repeated game $\mc G^{\otimes n}$, and let $(h_i^j:(\mc X^j)^{\otimes n}\to \mc A^j)_{i\in [n],j\in [k]}$ be any fixed strategies for the $k$ players, for each of the $n$ coordinates.
Let $E =E^1\times\dots\times E^k\subseteq \mc X^{\otimes n}$ be a product event.
We define the following set of functions, which are the same functions in Definition~\ref{defn:abemb_fns_pseudo}, except for all possible values of $\tilde{x} \in \mathcal{X}$ and $\tilde{a} \in \mathcal{A}$ (in Definition \ref{defn:abemb_fns_pseudo} we fixed some $\tilde{x} \in \mathcal{X}$ and $\tilde{a} \in \mathcal{A}$ beforehand).

\begin{definition}\label{defn:pairwise_fns_to_make_pseudo}
	Consider any $i\in [n]$. 
	
	For each $\tilde{x}\in \mc X, \tilde{a}\in \mc A, j\in [k]$, define the following functions:
	\begin{enumerate}
		\item The function $F_{i,\tilde{x}^j}^j: (\mc X^j)^{\otimes n-1}\to \set{0,1}$ is given by 
		\[ F_{i,\tilde{x}^j}^j(x_{-i}^j) = \ind[(x_{-i}^j, \tilde{x}^j)\in E^j], \]
		where $(x_{-i}^j,\tilde{x}^j)$ is the vector with $\tilde{x}^j$ in the $i$\textsuperscript{th} coordinate.
		
		\item The function $f_{i,\tilde{x}^j,\tilde{a}^j}^j:(\mc X^j)^{\otimes n-1}\to \set{0,1}$ is given by:
	\[ f_{i,\tilde{x}^j,\tilde{a}^j}^j(x_{-i}^j) = F_{i,\tilde{x}^j}^j(x_{-i}^j)\cdot \ind[h_i^j(x_{-i}^j,\tilde{x}^j)=\tilde{a}^j ].\]
	\end{enumerate}
\end{definition}

Observe that for every $i\in [n]$, the total number of functions in Definition~\ref{defn:pairwise_fns_to_make_pseudo} is $2k\abs{\mc X} \abs{\mc A}$, which is a constant (depending on the base game $\mc G$).
Thus using Corollary \ref{corr:uniformization} we can find a generalized random restriction $\mc{R}_i$ depending on $i \in [n]$ to make all the functions defined in Definition \ref{defn:pairwise_fns_to_make_pseudo} product pseudorandom.

\begin{lemma}\label{lemma:pairwise_pseud_fn_dist}
	Let $0<\gamma\leq 1$ be such that $\frac{1}{\gamma} \leq o(\log\log n)^{1/4}$, and let $i\in [n]$.
	
	Then, there exists a $(\frac{1}{\eta},\eta)$-generalized random restriction $\mc R_i$ on $\mc X^{\otimes n-1}$ (with respect to coordinates $[n]\setminus \set{i}$),\footnote{We shall also think of $\mc R_i$ as a generalized random restriction (with the same parameters) on $\mc X^{\otimes n}$ which always leaves coordinate $i$ untouched.} for $\eta = n^{-\exp(-1/\gamma^4)}$ such that with probability $1-\gamma$ over $\rho\sim \mc R_i$, each function in Definition~\ref{defn:pairwise_fns_to_make_pseudo} (with respect to $i$) is $(\sqrt{m(\rho)},\gamma)$-product pseudorandom under the restriction $\rho$.

		The constant in the $\exp$ depends only on parameters of the base game $\mc G$.	
\end{lemma}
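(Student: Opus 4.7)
The plan is to apply Corollary~\ref{corr:uniformization} (uniformization) to the probability space $(\mc X, Q)$ with a constant-size collection of $1$-bounded functions obtained by trivially extending each function of Definition~\ref{defn:pairwise_fns_to_make_pseudo} to the full joint alphabet $\mc X$, and then to translate the resulting product pseudorandomness from the joint alphabet back to each player's single-coordinate marginal. Concretely, for each $j \in [k]$ and each $g \in \set{F_{i,\tilde x^j}^j,\ f_{i,\tilde x^j,\tilde a^j}^j : \tilde x \in \mc X,\ \tilde a \in \mc A}$, I would define the extension $\tilde g(x^1,\dots,x^k) := g(x^j)$ on $\mc X^{\otimes n-1}$. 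This yields a collection of $r = 2k\abs{\mc X}\abs{\mc A}$ many $1$-bounded functions, a constant depending only on $\mc G$. Feeding this collection into Corollary~\ref{corr:uniformization} (applied to the alphabet $(\mc X, Q)$ on the $n-1$ coordinates $[n]\setminus\set{i}$) produces a $(1/\eta,\eta)$-generalized random restriction $\mc R_i$ on $\mc X^{\otimes n-1}$ with $\eta = n^{-\exp(-1/\gamma^4)}$, the constant in the exponent depending only on $\mc G$, such that with probability $\geq 1-\gamma$ over $\rho \sim \mc R_i$, each centered restricted function $\tilde g_\rho - Q(\tilde g_\rho)$ is $(\sqrt{m(\rho)},\gamma)$-product pseudorandom over $(\mc X^{m(\rho)}, Q^{\otimes m(\rho)})$.

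The second step is to transfer this joint-alphabet pseudorandomness down to the single-player marginal $\mc X^j$ used in Definition~\ref{defn:pairwise_fns_to_make_pseudo}, which is essentially automatic. A generalized restriction $\rho = (T_1,\dots,T_m, I, z)$ on $\mc X^{\otimes n-1}$ canonically induces a restriction $\rho|_j := (T_1,\dots,T_m, I, z|^j)$ on $(\mc X^j)^{\otimes n-1}$ with the same number of free coordinates $m(\rho)$, and the restricted functions satisfy $\tilde g_\rho(y) = g_{\rho|_j}(y|^j)$ for every $y \in \mc X^{m(\rho)}$; in particular the means agree, $Q(\tilde g_\rho) = Q^j(g_{\rho|_j})$. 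Moreover, any $1$-bounded product function $\prod_l P_l(x_l^j)$ on $(\mc X^j)^{m(\rho)}$ lifts to a $1$-bounded product function $\prod_l \tilde P_l(x_l) := \prod_l P_l(x_l^j)$ on $\mc X^{m(\rho)}$ with identical correlation against $\tilde g_\rho - Q(\tilde g_\rho)$, and a random subset $J \sim_{1-\delta}[m(\rho)]$ together with restriction values $w \sim Q^{\otimes J}$ projects to exactly the random restriction $(J, w|^j)$ on the marginal side with $w|^j \sim (Q^j)^{\otimes J}$. Consequently, any failure of $(\sqrt{m(\rho)},\gamma)$-product pseudorandomness for $g_{\rho|_j} - Q^j(g_{\rho|_j})$ would lift to a failure for $\tilde g_\rho - Q(\tilde g_\rho)$, contradicting the good event produced by Corollary~\ref{corr:uniformization}.

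I do not anticipate any serious obstacle: the result is essentially a repackaging of Corollary~\ref{corr:uniformization} combined with the routine transfer argument above. The only care required is in verifying that the projection $\rho \mapsto \rho|_j$ preserves the generalized-restriction parameters (which it does, since the equality constraints on $k$-tuples in each $T_l$ immediately imply the corresponding equality constraints on their $j$-th components) and that the exponential constant in $\eta$ depends only on $r$ and $\abs{\mc X}$, both of which are determined by $\mc G$. Viewing $\mc R_i$ as a generalized restriction on $\mc X^{\otimes n}$ that always leaves coordinate $i$ untouched (as in the footnote) is then immediate.
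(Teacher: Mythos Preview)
Your proposal is correct and follows essentially the same approach as the paper: apply Corollary~\ref{corr:uniformization} over $(\mc X, Q)$ to the trivially extended functions, then observe that product pseudorandomness for the extension over $(\mc X, Q)$ implies product pseudorandomness for the original function over the marginal $(\mc X^j, Q^j)$. The paper's proof is terser and simply asserts the transfer step (``This ensures that the pseudorandomness condition for this function finally holds with respect to the correct marginal $Q^j$''), whereas you spell out the projection $\rho \mapsto \rho|_j$ and the lifting of product functions explicitly, but the content is the same.
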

\begin{proof}
	This follows by applying Corollary~\ref{corr:uniformization} to the relevant functions.
	
	Formally, we apply the corollary with the probability space $(\mc X, Q)$, after extending each function in Definition~\ref{defn:pairwise_fns_to_make_pseudo} to a function $\mc X^{\otimes n-1}\to \set{0,1}$; for example, a function corresponding to player $j$ will only depend on the inputs from $(\mc X^j)^{\otimes n-1}$, and ignore the inputs corresonding to the other players. This ensures that the pseudorandomness condition for this function finally holds with respect to the correct marginal $Q^j$.
\end{proof}

In later sections, we shall also be interested in knowing how the parallel repetition of a multiplayer game behaves under generalized restrictions. Generalized random restrictions effectively turn a multiplayer game into the same game on less coordinates. Formally, we define the following:

\begin{definition}\label{defn:game_rep_gr} (Multiplayer game under generalized restriction)
	Let $n\in \N$; consider the game  $\mc G^{\otimes n} = (\mc X^{\otimes n}, \mc A^{\otimes n}, Q^{\otimes n}, V^{\otimes n})$, and fix any strategy for the $k$ players in this game.
	Let $E=E^1\times \dots \times E^k \subseteq  \mc X^{\otimes n}$ be a product event.
	
	Let $\rho = (T_1,T_2,\dots,T_m,I,z)$ be a generalized restriction $\mc X^{\otimes n}$ with $m=m(\rho) \leq n$ free coordinates.
	For any $x'\in \mc X^{\otimes m}$, let $x'^{(\rho)}\in \mc X^{\otimes n}$ be its relevant extension to $\mc X^{\otimes n}$; formally, we have $x'^{(\rho)}_i = \begin{cases} x'_j, & i\in T_j,j\in [m] \\ z_i, & i\in I \end{cases}.$
	Then, we have:
	
	\begin{enumerate}
		\item Consider the game $\mc G^{\otimes n}$, with the inputs to the $k$ players drawn conditioned on $E_\rho$; this input distribution is the same as $Q^{\otimes m}$, the input distribution of the game $\mc G^{\otimes m}$.
		\item Under this identification, we define a restricted event $E'\subseteq \mc X^{\otimes m}$ for the game $\mc G^{\otimes m}$ by \[E':=\set{x'\in \mc X^{\otimes m}:x'^{(\rho)}\in E}.\]
		This is a product event $E'=E'^1\times\dots\times E'^k$ with respect to the $k$ players.
		\item Given the strategy for $\mc G^{\otimes n}$, we can define a restricted strategy for $\mc G^{\otimes m}$ as follows: Fix indices $i_1\in T_1,i_2\in T_2,\dots, i_m\in T_m$.
			Now, on input $x'\in \mc X^{\otimes m}$, the players extend it to an input $x'^{(\rho)}\in \mc X^{\otimes n}$ of $\mc G^{\otimes n}$, and output the answers to coordinates $i_1,i_2,\dots,i_m$ respectively.
			
			Note that the players win the game $\mc G^{\otimes m}$ on input $x'\in \mc X^{\otimes m}$ with the above strategy if they win the game $\mc G^{\otimes n}$ on input $x'^{(\rho)}\in \mc X^{\otimes n}$.
	\end{enumerate}
\end{definition}

\subsection{Embedding for a Single Copy of the Game}\label{sec:pairwise_one_coord_hard_assum}

In this subsection we show that under a certain pseudorandomness condition (that the generalized random restrictions $\mc{R}_i$ do not change the distribution of $X_i$ too much when conditioning on $E$, see \eqref{eq:pseudo_beta}), an embedding argument shows hardness for coordinates of the game $\mc G^{\otimes n}$ (while conditioning on $E$).

For some sufficiently large $n$, consider the game  $\mc G^{\otimes n} = (\mc X^{\otimes n}, \mc A^{\otimes n}, P=Q^{\otimes n}, V^{\otimes n})$, and fix any strategy for the $k$ players in this game.
Let $X= (X^1,\dots,X^k)$ be the random variable denoting the questions to the $k$ players in the game $\mc G^{\otimes n}$, and let $A= (A^1,\dots,A^k)$ be the random variable denoting the answers of the players using these strategies.
For each $i\in [n]$, let $\Win_i$ (resp. $\Lose_i$) be the event that $V(X_i, A_i)=1$ (resp. $V(X_i, A_i)=0$); that is, the players win (resp. lose) the $i$\textsuperscript{th} coordinate of the game.

We introduce some parameters:
\begin{enumerate}
	\item $\beta \in [0,1]$ is any parameter satisfying $\beta \leq o(1)$.
	\item $\gamma\in [0,1]$ is any real number such that $(\log\log\log n)^{-1} \leq \gamma \leq (\log\log\log\log\log  n)^{-1}$.
	\item $\eta = \eta(n,\gamma)= n^{-\exp(-1/\gamma^4)} $ is as in Lemma~\ref{lemma:pairwise_pseud_fn_dist} with respect to the parameter $\gamma$.
	\item  $\delta := (\log\log\log\log\log n)^{-1}$.
	\item Let $1\leq C\leq k^{O(k)}$ be a constant so that Corollary~\ref{corr:pair_conn_bklm} with respect to the probability space $(\mc X, Q)$ holds with $C$ logarithms.
	Let \[\epsilon := \frac{1}{\log \log \cdots \log 1/\delta} = \frac{1}{\log \log \cdots \log n},\] where the number of logarithms is $C$ in the first expression and $C+5$ in the second expression; this is chosen so that $(\epsilon, \delta)$ satisfy Corollary~\ref{corr:pair_conn_bklm}.
	\item  $\alpha := \sqrt{\epsilon} = \frac{1}{\brac{\log \log \cdots \log n}^{1/2}}$, where the number of logarithms is $C+5$.
\end{enumerate}
These satisfy the following inequalities, which we shall use later:
\[ \frac{1}{\gamma} \leq o (\log\log n)^{1/4},\qquad \eta\leq 2^{-\sqrt{\log n}}, \qquad \delta \geq \gamma, \delta \geq \sqrt{\eta}, \qquad \eta, \gamma, \epsilon \leq o(\alpha).\]

%
The main result of this subsection is the following:
\begin{proposition}\label{prop:pairwise_hard_coor_under_assum}
	Let the parameters $\alpha, \beta, \gamma, \eta$ be as above.
	Let $E=E^1\times \dots \times E^k \subseteq(\mc X^k)^{\otimes n} = \mc X^{\otimes n}$ be a product event with $\Pr_{Q^{\otimes n}}[E]\geq \alpha$.
	
	Consider any $i\in [n]$.
	Let $\mc R_i$ be the $(\frac{1}{\eta}, \eta)$-generalized random restriction as in Lemma~\ref{lemma:pairwise_pseud_fn_dist} (with respect to $E$, the fixed player strategies, and the parameter $\gamma \geq \omega(\log\log n)^{-1/4}$; and $\eta = n^{-\exp(-1/\gamma^4)}$), and suppose that it satisfies
	  \begin{equation} \E_{\rho\sim \mc R_i|E} \Norm{P_{X_i|E,E_\rho}-Q}_1 \leq \beta \leq o(1).\footnote{The conditional random restriction $\mc R_i|E$ is as in Definition~\ref{defn:cond_grr}; this is well-defined since $\eta< \alpha$.} \label{eq:pseudo_beta}
      \end{equation}
	Then, it holds that 
	\[ \Pr\sqbrac{\Win_i\st E} \leq \val(\mc G) + o_n(1).\]
\end{proposition}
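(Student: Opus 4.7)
The plan is to adapt the embedding argument from the proof of Lemma~\ref{lemma:main_rand_coord_hard}, replacing the $p$-random restrictions with the conditional generalized random restriction $\mc R_i \mid E$, and replacing Corollary~\ref{corr:bklm} with Corollary~\ref{corr:pair_conn_bklm}. Concretely, I will define the following honest strategy for a single copy of $\mc G$: given $\tilde X \sim Q$, the $k$ players use shared randomness to sample $\rho \sim \mc R_i \mid E$, and each player $j$ then draws $\tilde A^j \sim P_{A_i^j \mid X^j \in E^j,\, X_i^j = \tilde X^j,\, E_\rho^j}$ (with an arbitrary fallback when the conditioning event is empty). Writing $L$ for the losing event, $\Pr[L] \geq 1 - \val(\mc G)$ by Fact~\ref{fact:rand_no_help}, so it suffices to show $\Pr[L] \leq \Pr[\Lose_i \mid E] + o_n(1)$.

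Decomposing $\Pr[L] = \E_{\rho \sim \mc R_i \mid E}\, \E_{\tilde X \sim Q}\, \Pr[L \mid \rho, \tilde X]$, assumption~\eqref{eq:pseudo_beta} lets me swap the inner distribution on $\tilde X$ for $P_{X_i \mid E, E_\rho}$ at additive cost $\beta = o(1)$. After the swap the consistency condition holds almost surely, so $\Pr[L \mid \rho, \tilde X]$ expands to $\sum_{\tilde a: V(\tilde X, \tilde a) = 0} \prod_{j=1}^{k} \Pr[A_i^j = \tilde a^j \mid X^j \in E^j, X_i^j = \tilde X^j, E_\rho^j]$, and each factor can be written as the ratio of $\E[f^j_{i, \tilde X^j, \tilde a^j}]$ to $\E[F^j_{i, \tilde X^j}]$ under the measure $Q^{\otimes n} \mid E_\rho$, with $f^j$ and $F^j$ as in Definition~\ref{defn:pairwise_fns_to_make_pseudo}. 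By Lemma~\ref{lemma:pairwise_pseud_fn_dist}, with probability $1 - \gamma$ over $\rho \sim \mc R_i$ every such $f^j$ and $F^j$ is $(\sqrt{m(\rho)}, \gamma)$-product pseudorandom; since $\eta = o(\alpha) \leq o(\Pr[E])$, Lemma~\ref{lemma:cond_grr_prop} transfers this guarantee to $\rho \sim \mc R_i \mid E$ with probability $1 - O(\gamma/\alpha) = 1 - o(1)$. On such ``good'' $\rho$, Corollary~\ref{corr:pair_conn_bklm} (applicable since $\delta \geq \gamma$) replaces the product of expectations by the joint expectation up to additive error $\epsilon$, yielding (exactly as in Lemma~\ref{lemma:ques_ans_rr_term}) the bound
\[
\prod_{j=1}^{k} \Pr[A_i^j = \tilde a^j \mid X^j \in E^j, X_i^j = \tilde X^j, E_\rho^j] \leq \Pr[A_i = \tilde a \mid X \in E, X_i = \tilde X, E_\rho] + \frac{O(\epsilon)}{\Pr[X \in E \mid X_i = \tilde X, E_\rho]}.
\]
Summing over $\tilde a$ and averaging over $\rho$ and $\tilde X$, the first term telescopes to $\Pr[\Lose_i \mid E]$ (up to $O(\eta/\alpha)$ from Lemma~\ref{lemma:cond_grr_prop}), while a short calculation identifying $\E_{\tilde X \sim P_{X_i \mid E, E_\rho}}[1/\Pr[X \in E \mid X_i = \tilde X, E_\rho]]$ with $1/\Pr[E \mid E_\rho]$ bounds the error contribution by $\abs{\mc A} \cdot O(\epsilon)/\Pr[E] \leq O(\epsilon/\alpha) = o(1)$.

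The main obstacle is coordinating three concurrent error sources: (i) the $\beta = o(1)$ cost of switching to $P_{X_i \mid E, E_\rho}$ via~\eqref{eq:pseudo_beta}; (ii) the $O((\eta + \gamma)/\alpha) = o(1)$ loss in transferring the product-pseudorandomness guarantee of Lemma~\ref{lemma:pairwise_pseud_fn_dist} from $\mc R_i$ to $\mc R_i \mid E$ using Lemma~\ref{lemma:cond_grr_prop}; and (iii) the Corollary~\ref{corr:pair_conn_bklm} error $\epsilon$, amplified by $1/\Pr[X \in E \mid X_i = \tilde X, E_\rho]$ when undoing the $f/F$ ratio. The parameter setup $\alpha = \sqrt{\epsilon}$ and $\eta, \gamma, \epsilon = o(\alpha)$ is arranged precisely so that all three sources sum to $o_n(1)$.
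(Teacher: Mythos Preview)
Your proposal is correct and follows essentially the same embedding argument as the paper: the same honest strategy, the same decomposition of $\Pr[L]$, the same use of Lemma~\ref{lemma:pairwise_pseud_fn_dist} transferred to $\mc R_i\mid E$ via Lemma~\ref{lemma:cond_grr_prop}, and the same application of Corollary~\ref{corr:pair_conn_bklm} to replace the product of the $f^j/F^j$ ratios by the joint quantity. The only organizational difference is that the paper inserts an explicit swap lemma (Lemma~\ref{lemma:pairwise_swap_dist}) to pass from $\E_{\rho\sim\mc R_i\mid E}\E_{\tilde X\sim P_{X_i\mid E,E_\rho}}[\,\cdot\,]$ to $\E_{\tilde X\sim P_{X_i\mid E}}\E_{\rho\sim\mc R_i\mid E,X_i=\tilde X}[\,\cdot\,]$ before bounding the error term, whereas you keep the original order and use the identity $\E_{\tilde X\sim P_{X_i\mid E,E_\rho}}\bigl[1/\Pr[E\mid X_i=\tilde X,E_\rho]\bigr]\le 1/\Pr[E\mid E_\rho]$ together with $\E_{\rho\sim\mc R_i\mid E}\bigl[1/\Pr[E\mid E_\rho]\bigr]=1/\E_{\rho'\sim\mc R_i}\Pr[E\mid E_{\rho'}]\le 2/\Pr[E]$; this is a slightly cleaner bookkeeping of the same error.
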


For the remainder of the section, we fix some $i\in [n]$ and prove the above lemma.
We construct the following randomized strategy for the game $\mc G$:
\begin{enumerate}
	\item The verifier samples $\tilde{X}\sim Q$, and for each $j\in [k]$, gives player $j$ the input $\tilde{X}^j$.
	\item Using shared randomness, the players sample $\rho\sim \mc R_i|E$.
	\item For each $j\in [k]$, player $j$ does the following: 
	 
	We say that $\tilde{X}^j, \rho$ are consistent with $E^j$ if 
	\[\set{x^j\in (\mc X^j)^{\otimes n} : x^j\in E^j\cap E_\rho^j,\ x^j_i=\tilde{X}^j}\not=\emptyset,\]
	where $E_\rho^j\subseteq (\mc X^j)^{\otimes n}$ is the projection of the set $E_\rho$ on player $j$.\footnote{Recall that this simply ensures that certain coordinates (in $[n]$) are equal, and certain coordinates have some fixed value. Note it also holds that $E_\rho = \prod_{j=1}^k E_\rho^j$.} 
	\begin{enumerate}
		\item If the above consistency condition does not hold, output an arbitrary answer from $\mc A^j$; for example, we may assume the output is the first element of $\mc A^j$ under some ordering.
		\item Else, using private randomness, output $\tilde{A}^j \sim P_{A_i^j \st X^j\in E^j\cap E_\rho^j,\ X^j_i=\tilde{X}^j}$.
	\end{enumerate}	
	\item Let $\tilde{A}=(\tilde{A}^1,\tilde{A}^2,\dots,\tilde{A}^k)$; the players win if and only if $V(\tilde{X}, \tilde{A})=1$.
\end{enumerate}

Let $L$ be the event that $V(\tilde X, \tilde A) = 0$; that is, the players lose the game $\mc G$ when using the above strategy.
We prove that:

\begin{lemma}\label{lemma:pairwise_embed_strat_analysis} Under the hypotheses of Proposition \ref{prop:pairwise_hard_coor_under_assum}, it holds that
	\[  \Pr[L] \leq {\Pr[\Lose_i\st E]} + o(1). \]
\end{lemma}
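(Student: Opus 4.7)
The plan is to closely mirror the proof of Lemma~\ref{lemma:embed_strat_analysis} from Section~\ref{sec:ab_emb}, with two structural upgrades: use the conditional generalized random restriction $\mc R_i \st E$ (Definition~\ref{defn:cond_grr}) in place of the standard $(p, I, z)$-random restriction, and invoke the pseudorandomness guarantee of Lemma~\ref{lemma:pairwise_pseud_fn_dist} together with Corollary~\ref{corr:pair_conn_bklm} in place of Lemma~\ref{lemma:random_rr_noise_stab} and Corollary~\ref{corr:bklm}. Unrolling the embedding strategy, and treating the ``consistency failure'' event as contributing negligibly (since $\tilde{X}, \rho$ drawn from the correct distributions are almost surely consistent with $E$), we obtain
\[\Pr[L] = \E_{\rho \sim \mc R_i \st E}\ \E_{\tilde X \sim Q}\sqbrac{\sum_{\tilde a: V(\tilde X, \tilde a) = 0} \prod_{j=1}^k \Pr\sqbrac{A_i^j = \tilde a^j \st X^j \in E^j \cap E_\rho^j,\ X_i^j = \tilde X^j}}.\]
The hypothesis \eqref{eq:pseudo_beta} of Proposition~\ref{prop:pairwise_hard_coor_under_assum} then lets us swap $\tilde X \sim Q$ for $\tilde X \sim P_{X_i \st E, E_\rho}$ at an additive cost $O(\beta) = o(1)$.

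Next, by Lemma~\ref{lemma:pairwise_pseud_fn_dist}, all but a $\gamma$-fraction of $\rho \sim \mc R_i$ is ``good'' in the sense that every function in Definition~\ref{defn:pairwise_fns_to_make_pseudo} becomes $(\sqrt{m(\rho)}, \gamma)$-product pseudorandom after restriction by $\rho$. For such $\rho$, I would apply Corollary~\ref{corr:pair_conn_bklm} separately to the restricted families $(f_{i,\tilde x^j,\tilde a^j}^j)_\rho$ and $(F_{i,\tilde x^j}^j)_\rho$, obtaining
\[\prod_{j=1}^k \Pr\sqbrac{A_i^j = \tilde a^j,\ X^j \in E^j \st X_i^j = \tilde x^j,\ E_\rho^j} \leq \Pr\sqbrac{A_i = \tilde a,\ X \in E \st X_i = \tilde x,\ E_\rho} + \epsilon\]
together with a matching lower bound for the products of the $\Pr[X^j \in E^j \st \cdots]$ factors. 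Dividing these two approximations via Lemma~\ref{lemma:math_division} and summing over $\tilde a$, exactly as in the proof of Lemma~\ref{lemma:ques_ans_rr_term}, converts the product of per-player conditional answer probabilities into $\Pr[\Lose_i \st E, E_\rho, X_i = \tilde x]$, plus an additive error of order $\abs{\mc A}\epsilon / \Pr[E \st X_i = \tilde x, E_\rho]$. Averaging this over $\tilde x \sim P_{X_i \st E, E_\rho}$ and $\rho \sim \mc R_i \st E$, and invoking Lemma~\ref{lemma:cond_grr_prop} to transfer between $\mc R_i \st E$-expectations and $\mc R_i$-expectations, the error integrates to $O(\abs{\mc A}\epsilon / \alpha) = o(1)$, while the main term collapses by the tower property to $\Pr[\Lose_i \st E]$.

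The ``bad'' $\rho$ contribution is absorbed by the trivial bound $\Pr[L \st \rho, \tilde X] \leq 1$: the bad event has $\mc R_i$-mass at most $\gamma$, and since $(\mc R_i \st E)[\rho] \leq \mc R_i[\rho] / (\Pr[E] - \eta)$ by Definition~\ref{defn:cond_grr}, its $\mc R_i \st E$-mass is at most $\gamma / (\alpha - \eta) = o(1)$ by the choice of parameters. Combining the three pieces yields the desired bound $\Pr[L] \leq \Pr[\Lose_i \st E] + o(1)$.

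The main obstacle I anticipate is handling the discrepancy between the conditional distribution $Q^{\otimes n} \st E_\rho$ (which governs the probabilities appearing in $\Pr[L \st \rho, \tilde X]$) and the genuine product distribution $Q^{\otimes m(\rho)}$ to which Corollary~\ref{corr:pair_conn_bklm} directly applies to the restricted functions: for an individual $\rho$ these differ because, within each equivalence class $T_l$ of the restriction, the shared value has law proportional to $Q(y)^{\abs{T_l}}$ rather than $Q(y)$. I would resolve this by leveraging the averaging guarantees of Definition~\ref{defn:grr}(2) and Lemma~\ref{lemma:cond_grr_prop}(2), which bound the $\ell_1$-discrepancy between these two distributions on average over $\rho \sim \mc R_i \st E$ by $O(\eta / \alpha) = o(1)$, allowing the discrepancy to be absorbed into the overall $o(1)$ error budget.
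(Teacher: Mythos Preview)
Your proposal is correct and follows essentially the same route as the paper: the paper swaps $\tilde X \sim Q$ for $\tilde X \sim P_{X_i\mid E,E_\rho}$ via the hypothesis~\eqref{eq:pseudo_beta} and then invokes Lemma~\ref{lemma:pairwise_fixed_i_rr_term}, whose proof is exactly the pipeline you describe (Lemmas~\ref{lemma:pairwise_rand_is_good}, \ref{lemma:pairwise_swap_dist}, \ref{lemma:pairwise_ques_ans_rr_term}, and~\ref{lemma:pairwise_using_bklm_corr}). One remark: the paper makes your ``tower property'' collapse explicit as a sampling-order swap between $(\rho,\tilde X)$ via Lemma~\ref{lemma:pairwise_swap_dist}, and it does not discuss your anticipated obstacle at all---it simply identifies $\E_{Y\sim Q^{\otimes m}}[(\cdot)_\rho]$ with $\Pr[\,\cdot\mid E_\rho]$ (see the identities in the proof of Lemma~\ref{lemma:pairwise_using_bklm_corr} and the convention declared in Definition~\ref{defn:game_rep_gr}(1)).
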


Assuming this, we immediately have:
\begin{proof}[Proof of Proposition~\ref{prop:pairwise_hard_coor_under_assum}]
	This follows by Lemma~\ref{lemma:pairwise_embed_strat_analysis} and the fact that $\Pr[L] \geq 1-\val(\mc G)$.
\end{proof}

Now, we focus on proving Lemma~\ref{lemma:pairwise_embed_strat_analysis}.
First, we show that using our assumption on $\mc R_i$, it suffices to bound the losing probability assuming the input $\tilde{X}$ came from a different distribution, which is the one conditioned on $E, E_\rho$.
Formally, we show:

\begin{lemma}\label{lemma:pairwise_fixed_i_rr_term} Under the hypotheses of Proposition \ref{prop:pairwise_hard_coor_under_assum}, it holds that
	\[ \E_{\rho\sim\mc R_i|E}\E_{\tilde{X}\sim P_{X_i|E,E_\rho}} \sqbrac{\Pr[L\st \tilde{X},\rho]} \leq \Pr\sqbrac{\Lose_i| E} + o(1). \]
\end{lemma}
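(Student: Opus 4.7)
The plan is to adapt the embedding analysis of Lemma~\ref{lemma:fixed_i_rr_term}--\ref{lemma:ques_ans_rr_term} from the no-Abelian-embeddings case, replacing Corollary~\ref{corr:bklm} by the pairwise-connected inverse theorem Corollary~\ref{corr:pair_conn_bklm}, and replacing the usual random restrictions by the generalized random restriction $\mc R_i$ provided by Lemma~\ref{lemma:pairwise_pseud_fn_dist}. The principal bookkeeping burden stems from the fact that generalized random restrictions only \emph{approximately} preserve distributions (with error $\eta$), so every error must be divided by $\Pr[E]\geq \alpha$; the parameter choices $\eta,\gamma,\epsilon\leq o(\alpha)$ ensure that nothing blows up.

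First, for $\tilde X\sim P_{X_i|E,E_\rho}$ the consistency conditions in the embedding strategy hold almost surely, so we may expand
\[\Pr[L\mid \tilde X,\rho]=\sum_{\tilde a:\, V(\tilde X,\tilde a)=0}\prod_{j=1}^k\frac{\mb E_{Y^j\sim (Q^j)^{\otimes m}}[(f^j_{i,\tilde X^j,\tilde a^j})_\rho(Y^j)]}{\mb E_{Y^j\sim (Q^j)^{\otimes m}}[(F^j_{i,\tilde X^j})_\rho(Y^j)]},\]
where $m=m(\rho)$ is the number of free coordinates in $\rho$, and we used that conditioning on $E_\rho^j$ renders the free part of $X^j_{-i}$ distributed as $(Q^j)^{\otimes m}$. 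Define the good event $\Lambda(\rho)$ to hold when every one of the $O(|\mc X|\,|\mc A|\,k)$ relevant restricted functions $(f^j_{i,\tilde X^j,\tilde a^j})_\rho-\mu(\cdot)$ and $(F^j_{i,\tilde X^j})_\rho-\mu(\cdot)$ is $(\sqrt{m},\gamma)$-product pseudorandom. Lemma~\ref{lemma:pairwise_pseud_fn_dist} gives $\Pr_{\rho\sim \mc R_i}[\neg\Lambda(\rho)]\leq \gamma$; combined with Definition~\ref{defn:cond_grr} and $\mb E_{\rho\sim \mc R_i}[\Pr[E\mid E_\rho]]\geq \Pr[E]-\eta\geq \alpha/2$, this yields $\Pr_{\rho\sim \mc R_i|E}[\neg\Lambda(\rho)]\leq 2\gamma/\alpha=o(1)$.

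On the event $\Lambda(\rho)$ we invoke Corollary~\ref{corr:pair_conn_bklm} (with parameter $\epsilon$) twice to obtain
\begin{align*}
\prod_{j=1}^k\mb E[(f^j_{i,\tilde X^j,\tilde a^j})_\rho]&\leq \Pr[A_i=\tilde a,\,X\in E\mid X_i=\tilde X,E_\rho]+\epsilon,\\
\prod_{j=1}^k\mb E[(F^j_{i,\tilde X^j})_\rho]&\geq \Pr[X\in E\mid X_i=\tilde X,E_\rho]-\epsilon.
\end{align*}
Combining these via the elementary division inequality Lemma~\ref{lemma:math_division} and summing over the at most $|\mc A|$ losing $\tilde a$ yields
\[\Pr[L\mid \tilde X,\rho]\cdot \ind[\Lambda(\rho)]\leq \Pr[\Lose_i\mid E,E_\rho,X_i=\tilde X]+\frac{4|\mc A|\,\epsilon}{\Pr[X\in E\mid X_i=\tilde X,E_\rho]}.\]

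Finally, take expectations. The reciprocal $\mb E_{\tilde X\sim P_{X_i|E,E_\rho}}[1/\Pr[X\in E\mid X_i=\tilde X,E_\rho]]$ telescopes to $1/\Pr[X\in E\mid E_\rho]$ exactly as in the proof of Lemma~\ref{lemma:fixed_i_rr_term}, and $\mb E_{\rho\sim \mc R_i|E}[1/\Pr[E\mid E_\rho]]=1/\mb E_{\rho'\sim \mc R_i}[\Pr[E\mid E_{\rho'}]]\leq 2/\alpha$ by Definition~\ref{defn:cond_grr}, making the error contribution $O(|\mc A|\epsilon/\alpha)=O(\sqrt{\epsilon})=o(1)$. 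The main term becomes $\mb E_{\rho\sim \mc R_i|E}[\Pr[\Lose_i\mid E,E_\rho]]=\mb E_{\rho\sim \mc R_i}[\Pr[\Lose_i\cap E\mid E_\rho]]/\mb E_{\rho\sim \mc R_i}[\Pr[E\mid E_\rho]]=\Pr[\Lose_i\mid E]+O(\eta/\alpha)=\Pr[\Lose_i\mid E]+o(1)$, by property~2 of Definition~\ref{defn:grr}. Together with the $\neg\Lambda(\rho)$ contribution of at most $2\gamma/\alpha=o(1)$, the total error is $o(1)$ and the lemma follows. The only real obstacle is tracking carefully the $1/\alpha$ blow-ups coming from the approximate distribution-preservation of the generalized restriction under conditioning on $E$; no new conceptual ingredient is needed beyond the embedding argument of Section~\ref{sec:ab_emb}.
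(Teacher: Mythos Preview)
Your proposal is correct and follows the same embedding analysis as the paper: define the good event $\Lambda(\rho)$, bound its failure probability under $\mc R_i|E$, apply Corollary~\ref{corr:pair_conn_bklm} together with Lemma~\ref{lemma:math_division} on $\Lambda(\rho)$, and collect the $O(\epsilon/\alpha)$ and $O(\eta/\alpha)$ errors.

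One organizational difference is worth noting. The paper first swaps the order of sampling, passing from $(\rho\sim \mc R_i|E,\ \tilde X\sim P_{X_i|E,E_\rho})$ to $(\tilde X\sim P_{X_i|E},\ \rho\sim \mc R_i|(E,X_i=\tilde X))$ via Lemma~\ref{lemma:pairwise_swap_dist}, and then proves the approximate-independence estimate (Lemma~\ref{lemma:pairwise_ques_ans_rr_term}) for each fixed $\tilde x$ under the conditional restriction $\mc R_i|(E,X_i=\tilde x)$; this requires showing $\Pr[E,X_i=\tilde x]\geq \Omega(\alpha)$, which in turn invokes the hypothesis~\eqref{eq:pseudo_beta}. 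You instead keep the original order, apply the inverse theorem at each fixed $\rho\in\Lambda$, and then average over $\tilde X$ and $\rho$ directly, using the identity $\E_{\tilde X\sim P_{X_i|E,E_\rho}}[1/\Pr[E\mid X_i=\tilde X,E_\rho]]\leq 1/\Pr[E\mid E_\rho]$ and $\E_{\rho\sim \mc R_i|E}[1/\Pr[E\mid E_\rho]]\leq 1/\E_{\rho'\sim\mc R_i}[\Pr[E\mid E_{\rho'}]]$ (your ``$=$'' here should be ``$\leq$'', but the bound is what you need). This is slightly more direct: it dispenses with Lemma~\ref{lemma:pairwise_swap_dist} and does not use~\eqref{eq:pseudo_beta} at this stage. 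The paper's ordering has the mild advantage of isolating a clean per-$\tilde x$ statement (Lemma~\ref{lemma:pairwise_ques_ans_rr_term}), but your route reaches the same conclusion with one fewer auxiliary lemma.
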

Assuming these, the lemma follows easily:
\begin{proof}[Proof of Lemma~\ref{lemma:pairwise_embed_strat_analysis}]
	We can write
	\begin{align*}
		\Pr[L] &= \E_{\tilde{X}\sim Q}\E_{\rho\sim\mc R_i|E} \sqbrac{\Pr[L\st \tilde{X},\rho]}
		\\&= \E_{\rho\sim\mc R_i|E} \E_{\tilde{X}\sim Q}\sqbrac{\Pr[L\st \tilde{X},\rho]}
		\\&\leq \E_{\rho\sim\mc R_i|E} \Norm{P_{X_i|E,E_\rho}-Q}_1 + \E_{\rho\sim\mc R_i|E}\E_{\tilde{X}\sim P_{X_i|E,E_\rho}} \sqbrac{\Pr[L\st \tilde{X},\rho]}
		\\&\leq \beta +  \E_{\rho\sim\mc R_i|E}\E_{\tilde{X}\sim P_{X_i|E,E_\rho}} \sqbrac{\Pr[L\st \tilde{X},\rho]}
		\\&\leq o(1) +  \E_{\rho\sim\mc R_i|E}\E_{\tilde{X}\sim P_{X_i|E,E_\rho}} \sqbrac{\Pr[L\st \tilde{X},\rho]}.
	\end{align*}
	Now the result follows by Lemma~\ref{lemma:pairwise_fixed_i_rr_term}.
\end{proof}

Next, we complete the proof of Lemma~\ref{lemma:pairwise_fixed_i_rr_term}.
We want to upper bound the following quantity:
\[  \E_{\rho\sim\mc R_i|E}\E_{\tilde{X}\sim P_{X_i|E,E_\rho}} \sqbrac{\Pr[L\st \tilde{X},\rho]}.\]
To analyze the above, we prove two lemmas.

First, we show that we may assume (upto a small error) that the chosen random restriction $\rho\sim \mc R_i|E$ is good, in the sense it makes the relevant functions pseudorandom.
This is formally defined as:
\begin{definition}
	For a random restriction $\rho \in \supp(\mc R_i)$, we denote by $\Lambda(\rho)$ the event that all the $2k\abs{\mc X}\abs{\mc A}$ functions in Definition~\ref{defn:pairwise_fns_to_make_pseudo} with respect to coordinate $i$ are $(\sqrt{m(\rho)},\gamma)$-product pseudorandom.
\end{definition}

\begin{lemma}\label{lemma:pairwise_rand_is_good}
	\[ \Pr_{\rho\sim \mc R_i|E}\sqbrac{\lnot \Lambda(\rho)} \leq \frac{\gamma}{\alpha} + \frac{\eta}{\alpha} \leq o(1).\]
\end{lemma}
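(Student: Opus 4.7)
The plan is to convert the unconditional bound from Lemma~\ref{lemma:pairwise_pseud_fn_dist}, which states $\Pr_{\rho\sim \mc R_i}[\lnot \Lambda(\rho)] \leq \gamma$, into a bound on the conditional distribution $\mc R_i | E$, using the quantitative relationship between $\mc R_i$ and $\mc R_i | E$ given by Lemma~\ref{lemma:cond_grr_prop}. The key fact is that $\mc R_i$ is an $(m, \eta)$-generalized random restriction and $\Pr[E] \geq \alpha \gg \eta$, so the conditional distribution is essentially given by $(\mc R_i|E)[\rho] \approx \Pr[E|E_\rho] \cdot \mc R_i[\rho] / \Pr[E]$.

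First I would apply Lemma~\ref{lemma:cond_grr_prop}(1) to write
\[ (\mc R_i|E)[\rho] \leq \frac{\Pr[E|E_\rho]\cdot \mc R_i[\rho]}{\Pr[E]} + (\mc R_i|E)[\rho]\cdot \frac{\eta}{\Pr[E]}. \]
Summing this inequality over all $\rho$ such that $\lnot \Lambda(\rho)$ holds, and using the trivial bounds $\Pr[E|E_\rho] \leq 1$ on the first term and $\sum_{\rho:\lnot\Lambda(\rho)}(\mc R_i|E)[\rho] \leq 1$ on the second term, gives
\[ \Pr_{\rho\sim \mc R_i|E}[\lnot \Lambda(\rho)] \leq \frac{\Pr_{\rho\sim \mc R_i}[\lnot \Lambda(\rho)]}{\Pr[E]} + \frac{\eta}{\Pr[E]}. \]

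Then I would plug in the guarantee $\Pr_{\rho\sim \mc R_i}[\lnot \Lambda(\rho)] \leq \gamma$ from Lemma~\ref{lemma:pairwise_pseud_fn_dist} together with the hypothesis $\Pr[E]\geq \alpha$ to obtain the claimed bound $\gamma/\alpha + \eta/\alpha$. Finally, the $o(1)$ conclusion is immediate from the standing parameter assumptions listed at the start of Section~\ref{sec:pairwise_one_coord_hard_assum}, namely $\eta, \gamma \leq o(\alpha)$.

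There is no real obstacle here; the entire argument is a two-line change-of-measure computation, and the only thing to be slightly careful about is keeping track of the additive error $\eta/\Pr[E]$ coming from the fact that $\mc R_i$ only preserves $\mu^{\otimes n}$ approximately (property 2 of Definition~\ref{defn:grr}), rather than exactly. This is precisely what Lemma~\ref{lemma:cond_grr_prop} was set up to handle.
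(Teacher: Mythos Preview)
Your proposal is correct and follows essentially the same approach as the paper: both apply Lemma~\ref{lemma:cond_grr_prop}(1) to pass from the unconditional bound $\Pr_{\rho\sim \mc R_i}[\lnot\Lambda(\rho)]\leq \gamma$ of Lemma~\ref{lemma:pairwise_pseud_fn_dist} to the conditional one, picking up the additive $\eta/\Pr[E]$ error and then using $\Pr[E]\geq\alpha$ and $\gamma,\eta\leq o(\alpha)$. The computation is identical in content.
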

\begin{proof}
	We know by Lemma~\ref{lemma:pairwise_pseud_fn_dist} that $\Pr_{\rho\sim \mc R_i}\sqbrac{\lnot \Lambda(\rho)} \leq \gamma$.
	Hence, by Lemma~\ref{lemma:cond_grr_prop} we have
	\begin{align*}
		\Pr_{\rho\sim \mc R_i|E}\sqbrac{\lnot \Lambda(\rho)} &\leq  \sum_{\rho} \frac{\mc R[\rho]\cdot\Pr[E|E_\rho]}{\Pr[E]}\cdot \ind\sqbrac{\lnot \Lambda(\rho)} + \frac{\eta}{\Pr[E]}
		\\&\leq \frac{1}{\Pr[E]} \cdot \E_{\rho\sim \mc R}\sqbrac{\ind\sqbrac{\lnot \Lambda(\rho)}} + \frac{\eta}{\alpha}
		\leq \frac{\gamma}{\alpha} + \frac{\eta}{\alpha} \leq o(1). \qedhere
	\end{align*}
\end{proof}

Second, we show that the distribution obtained by sampling $\rho\sim\mc R_i|E$ and $\tilde{X}\sim P_{X_i|E,E_\rho}$ is essentially the same as the distribution obtained by sampling $\tilde{X}\sim P_{X_i|E}$ and $\rho\sim\mc R_i|(E,X_i=\tilde{X})$. Formally, we have:
\begin{lemma}\label{lemma:pairwise_swap_dist}
	For every $\tilde{x}\in \supp(Q)$, it holds that $\Pr\sqbrac{E,X_i=\tilde{x}} \geq \Omega(\alpha) > 2\eta$; in particular, this implies the distribution $\mc R_i|(E,X_i=\tilde{x})$ is well-defined.
	
	Moreover, we have
	\[ \sum_{\substack{\rho\in \supp(\mc R_i),\\ \tilde{x}\in \supp(Q)}} \Big|\  (\mc R_i|E)[\rho]\cdot \Pr[X_i=\tilde{x}\st E,E_\rho] - \Pr[X_i=\tilde{x}\st E]\cdot (\mc R_i|E,X_i=\tilde{x})[\rho]\ \Big| \leq\frac{2\abs{\mc X}\cdot \eta}{\alpha} \leq o(1) .\]
\end{lemma}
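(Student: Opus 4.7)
I would prove the two parts in sequence. For the lower bound on $\Pr[E, X_i = \tilde{x}]$, the plan is to combine \eqref{eq:pseudo_beta} with the second item of Lemma~\ref{lemma:cond_grr_prop} to show that $P_{X_i|E}$ is $o(1)$-close to $Q$ in $\ell_1$-distance. For the $\ell_1$-sum, I would introduce a common ``unnormalized'' intermediate joint quantity
\[ Q_0(\rho, \tilde{x}) := \frac{\mc R_i[\rho]\cdot \Pr[X_i=\tilde{x}, E \mid E_\rho]}{\Pr[E]}, \]
and bound the deviations of the two quantities in the sum from $Q_0$ by applying the first item of Lemma~\ref{lemma:cond_grr_prop} to the events $E$ and $E \cap \set{X_i=\tilde{x}}$ respectively, followed by a triangle inequality.

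For the first part, marginalizing the second item of Lemma~\ref{lemma:cond_grr_prop} onto coordinate $i$ yields
\[ \Norm{\E_{\rho \sim \mc R_i|E}\sqbrac{P_{X_i|E, E_\rho}} - P_{X_i|E}}_1 \leq \frac{2\eta}{\Pr[E]} \leq \frac{2\eta}{\alpha}. \]
Combined with \eqref{eq:pseudo_beta} via the triangle inequality, this gives $\Norm{P_{X_i|E} - Q}_1 \leq \beta + 2\eta/\alpha = o(1)$. Since $\min_{\tilde{x}\in \supp(Q)}Q(\tilde{x})$ is a positive constant depending only on $\mc G$, we obtain $\Pr[X_i = \tilde{x}\mid E] \geq Q(\tilde{x})-o(1) = \Omega(1)$ for every $\tilde{x}\in \supp(Q)$, and hence $\Pr[E, X_i=\tilde{x}] \geq \Omega(1)\cdot \Pr[E] \geq \Omega(\alpha) > 2\eta$ by the parameter setup $\eta\leq o(\alpha)$. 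This also makes $\mc R_i \mid (E, X_i=\tilde{x})$ well-defined per Definition~\ref{defn:cond_grr}.

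For the second part, let $Q_1(\rho, \tilde{x})$ and $Q_2(\rho, \tilde{x})$ denote the two terms in the $\ell_1$-sum. Using $\Pr[E \mid E_\rho]\cdot \Pr[X_i=\tilde{x}\mid E, E_\rho] = \Pr[X_i=\tilde{x}, E \mid E_\rho]$, the first item of Lemma~\ref{lemma:cond_grr_prop} applied to the event $E$ yields $\abs{Q_1(\rho,\tilde{x}) - Q_0(\rho,\tilde{x})} \leq (\mc R_i|E)[\rho]\cdot \Pr[X_i=\tilde{x}\mid E, E_\rho] \cdot \eta/\Pr[E]$; summing over $\rho$ and then $\tilde{x}\in \supp(Q)$ uses $\sum_{\tilde{x}}\Pr[X_i=\tilde{x}\mid E,E_\rho]\leq 1$ and $\sum_\rho (\mc R_i|E)[\rho]=1$ to collapse to $\eta/\Pr[E]\leq \eta/\alpha$. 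Similarly, applying the first item of Lemma~\ref{lemma:cond_grr_prop} to the event $E \cap \set{X_i=\tilde{x}}$ (which has mass $>\eta$ by the first part) gives $\abs{Q_2(\rho,\tilde{x})-Q_0(\rho,\tilde{x})} \leq Q_2(\rho,\tilde{x})\cdot \eta/\Pr[E, X_i=\tilde{x}]$; summing over $\rho$ gives $\Pr[X_i=\tilde{x}\mid E]\cdot \eta/\Pr[E, X_i=\tilde{x}] = \eta/\Pr[E]$, and summing over $\tilde{x}\in \supp(Q)$ gives at most $\abs{\supp(Q)}\eta/\Pr[E]\leq \abs{\mc X}\eta/\alpha$. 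The triangle inequality $\abs{Q_1-Q_2}\leq \abs{Q_1-Q_0}+\abs{Q_2-Q_0}$ then yields the claimed bound of $2\abs{\mc X}\eta/\alpha$.

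The main obstacle---and the reason the two parts must be proven together---is that the bound on $\abs{Q_2-Q_0}$ involves dividing by $\Pr[E, X_i=\tilde{x}]$, which requires this quantity to be bounded below by (at least) a quantity comparable to $\alpha$, and certainly to exceed $\eta$ so that Lemma~\ref{lemma:cond_grr_prop} is applicable to the event $E\cap \set{X_i=\tilde{x}}$ in the first place. This is precisely the content of the first part, and with it established the remaining computations amount to two independent applications of Lemma~\ref{lemma:cond_grr_prop} and careful bookkeeping of the telescoping sums.
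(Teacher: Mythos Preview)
Your proposal is correct and follows essentially the same approach as the paper: both parts use Lemma~\ref{lemma:cond_grr_prop} exactly as you describe, with the same common intermediate quantity $Q_0(\rho,\tilde{x})=\mc R_i[\rho]\cdot \Pr[X_i=\tilde{x},E\mid E_\rho]/\Pr[E]$ (the paper just writes it inline rather than naming it), and the same error bookkeeping yielding $\eta/\Pr[E]+\abs{\mc X}\eta/\Pr[E]\leq 2\abs{\mc X}\eta/\alpha$.
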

\begin{proof}
Observe that by Lemma~\ref{lemma:cond_grr_prop}, we have \[\Norm{\E_{\rho\sim \mc R_i|E}P_{X_i|E,E_{\rho}} - P_{X_i|E} }_1 \leq \frac{2\eta}{\alpha} .\]
This implies
\[
	\Norm{P_{X_i|E}-Q}_1 \leq \frac{2\eta}{\alpha} + \E_{\rho\sim \mc R_i|E}\Norm{P_{X_i|E,E_{\rho}}-Q} \leq \frac{2\eta}{\alpha}+\beta \leq o(1).
\]
In particular, this implies that for every $\tilde{x}\in \supp(Q)$,
\[ \Pr\sqbrac{X_i=\tilde{x}, E} \geq \alpha\cdot \brac{\Pr_Q[\tilde{x}] -o(1)} = \Omega(\alpha) > 2\eta. \]
Hence, the distribution $\mc R_i|(E,X_i=\tilde{X})$ is well-defined for each $\tilde{x}\in \supp(Q)$.

Now, by Lemma~\ref{lemma:cond_grr_prop}, we have
\begin{align*}
		&\sum_{\substack{\rho\in \supp(\mc R_i),\\ \tilde{x}\in \supp(Q)}} \Big|\  (\mc R_i|E)[\rho]\cdot \Pr[X_i=\tilde{x}\st E,E_\rho] - \Pr[X_i=\tilde{x}\st E]\cdot (\mc R_i|E,X_i=\tilde{x})[\rho]\ \Big| 
		\\& \leq  \sum_{\rho,\tilde{x}} \abs{ \frac{\mc R_i[\rho]\cdot \Pr[E|E_\rho]}{\Pr[E]}\cdot \Pr[X_i=\tilde{x}\st E,E_\rho] - \Pr[X_i=\tilde{x}\st E]\cdot \frac{\mc R_i[\rho]\cdot \Pr[E,X_i=\tilde{x}|E_\rho]}{\Pr[E,X_i=\tilde{x}]}}
		\\&\qquad+\sum_{\rho,\tilde{x}} (\mc R_i|E)[\rho]\cdot \frac{\eta}{\Pr[E]}\cdot \Pr[X_i=\tilde{x}\st E,E_\rho]
		\\&\qquad\qquad+\sum_{\rho,\tilde{x}} (\mc R_i|E,X_i=\tilde{x})[\rho]\cdot \frac{\eta}{\Pr[E,X_i=\tilde{x}]}\cdot \Pr[X_i=\tilde{x}\st E] 
		\\&\leq 0 + \frac{\eta}{\Pr[E]} + \frac{\abs{\mc X}\cdot \eta}{\Pr[E]}
		\leq \frac{2\abs{\mc X}\cdot \eta}{\alpha}. \qedhere
\end{align*}
\end{proof}

With the above two lemmas, we are ready to complete the proof of Lemma~\ref{lemma:pairwise_fixed_i_rr_term}, assuming the following \emph{approximate-independence lemma}, which we shall prove later:\footnote{Note that the conditional generalized random restriction in this lemma is well-defined by Lemma~\ref{lemma:pairwise_swap_dist}.}

\begin{lemma}\label{lemma:pairwise_ques_ans_rr_term}
	For every $\tilde{x}\in \mc X,\ \tilde{a}\in \mc A$ such that $\set{x\in \mc X^{\otimes n}:x_i=\tilde{x},x\in E}\not=\emptyset$, it holds that
	\begin{align*}
		& \E_{\rho\sim \mc R_i|E,X_i=\tilde{x}}   \sqbrac{ \prod_{j=1}^k \Pr\sqbrac{A_i^j=\tilde{a}^j | X^j\in E^j\cap E_\rho^j,\ X_i^j=\tilde{x}^j} \cdot \ind\sqbrac{\Lambda(\rho)}} 
		\\&\qquad\qquad\leq\ \Pr\sqbrac{A_i=\tilde{a} \st E,\ X_i=\tilde{x}} + \frac{2\eta+8\epsilon}{\Pr\sqbrac{ E,X_i=\tilde{x}}}.
	\end{align*}
\end{lemma}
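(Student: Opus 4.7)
The plan mirrors the proof of Lemma~\ref{lemma:ques_ans_rr_term}, the analogous statement in the no-Abelian-embeddings setting, replacing Corollary~\ref{corr:bklm} by Corollary~\ref{corr:pair_conn_bklm} and taking extra care for the fact that $\mc R_i$ is a generalized random restriction rather than a standard one. Fix $\tilde{x}\in \mc X$ and $\tilde{a}\in \mc A$ as in the hypothesis. For any $\rho$ with $\Lambda(\rho)$, the (mean-centered) restrictions of the functions in Definition~\ref{defn:pairwise_fns_to_make_pseudo} are $(\sqrt{m(\rho)},\gamma)$-product pseudorandom over $Q^{\otimes m(\rho)}$ on the free coordinates of $\rho$. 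Since $\gamma\leq \delta$, Corollary~\ref{corr:pair_conn_bklm} applies to both the $\set{(f^j_{i,\tilde{x}^j,\tilde{a}^j})_\rho}_j$ system and the $\set{(F^j_{i,\tilde{x}^j})_\rho}_j$ system, and unpacking the resulting expectations into probabilities as in Lemma~\ref{lemma:using_bklm_corr} gives
\begin{align*}
\prod_{j=1}^k \Pr\!\sqbrac{A^j_i=\tilde{a}^j,\ X^j\in E^j\st X^j_i=\tilde{x}^j,\ E_\rho^j} &\leq \Pr\!\sqbrac{A_i=\tilde{a},\ X\in E\st X_i=\tilde{x},\ E_\rho} + \epsilon, \\
\prod_{j=1}^k \Pr\!\sqbrac{X^j\in E^j\st X^j_i=\tilde{x}^j,\ E_\rho^j} &\geq \Pr\!\sqbrac{X\in E\st X_i=\tilde{x},\ E_\rho} - \epsilon.
\end{align*}
Dividing the first inequality by the second and invoking Lemma~\ref{lemma:math_division} yields, pointwise in $\rho$ with $\Lambda(\rho)$,
\[
\prod_{j=1}^k \Pr\!\sqbrac{A^j_i=\tilde{a}^j\st X^j\in E^j\cap E_\rho^j,\ X^j_i=\tilde{x}^j} \leq \Pr\!\sqbrac{A_i=\tilde{a}\st X\in E,\ X_i=\tilde{x},\ E_\rho} + \frac{4\epsilon}{\Pr\!\sqbrac{X\in E\st X_i=\tilde{x},\ E_\rho}}.
\]

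Next I would take expectation over $\rho\sim \mc R_i\st E, X_i=\tilde{x}$ and bound the two resulting contributions separately. For the first, by Lemma~\ref{lemma:cond_grr_prop} applied to the event $E' = E\cap \set{X_i=\tilde{x}}$ (which has probability $\Omega(\alpha)>2\eta$ by Lemma~\ref{lemma:pairwise_swap_dist}), the averaged conditional distribution is $\ell_1$-close to $P\st E'$ with error $2\eta/\Pr[E']$, giving the upper bound $\Pr[A_i=\tilde{a}\st E, X_i=\tilde{x}] + 2\eta/\Pr[E, X_i=\tilde{x}]$. For the second, I would expand the density of $\mc R_i\st E, X_i=\tilde{x}$ via Definition~\ref{defn:cond_grr}; the $\Pr[X\in E\st X_i=\tilde{x}, E_\rho]$ in the denominator cancels with the $\Pr[E\st X_i=\tilde{x}, E_\rho]$ factor in the numerator of the density. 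Since $\mc R_i$ leaves coordinate $i$ untouched, the remaining factor satisfies $\E_{\rho\sim \mc R_i}\Pr[X_i=\tilde{x}\st E_\rho] = Q[\tilde{x}]\leq 1$, so this contribution is at most $4\epsilon/\Pr[E, X_i=\tilde{x}]$ plus a lower-order $O(\eta)$ error coming from the normalizing constant of $\mc R_i\st E, X_i=\tilde{x}$; both are absorbed into the coefficient $8\epsilon$ of the stated bound.

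The main obstacle is handling the denominator $\Pr[X\in E\st X_i=\tilde{x},E_\rho]$ in the last display during the averaging step: it can be arbitrarily small for individual $\rho$, so no pointwise bound on the ratio is useful and the expectation cannot be pulled inside naively. The observation that makes the proof go through is that this denominator is exactly (up to the $Q[\tilde{x}]$ factor) the Radon-Nikodym weight of $\mc R_i\st E, X_i=\tilde{x}$ against $\mc R_i$, so the two cancel after expanding the density. Everything else in the proof is careful bookkeeping of the $O(\eta)$ errors inherent to generalized random restrictions (Definition~\ref{defn:grr}), which is cleanly packaged by Lemma~\ref{lemma:cond_grr_prop}.
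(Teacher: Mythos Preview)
Your proposal is correct and follows essentially the same approach as the paper: the paper packages your first paragraph as a separate Lemma~\ref{lemma:pairwise_using_bklm_corr}, then applies Lemma~\ref{lemma:math_division} pointwise, uses Lemma~\ref{lemma:cond_grr_prop} for the first term exactly as you describe, and for the second term expands the density of $\mc R_i\mid E,X_i=\tilde{x}$ so that $\Pr[E\mid X_i=\tilde{x},E_\rho]$ cancels, leaving $4\epsilon\cdot\E_{\rho\sim\mc R_i}[\Pr[X_i=\tilde{x}\mid E_\rho]]/\E_{\rho'\sim\mc R_i}[\Pr[E,X_i=\tilde{x}\mid E_{\rho'}]]\leq 4\epsilon/(\Pr[E,X_i=\tilde{x}]-\eta)\leq 8\epsilon/\Pr[E,X_i=\tilde{x}]$. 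Your identification of the cancellation as the key observation is exactly right.
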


\begin{proof}[Proof of Lemma~\ref{lemma:pairwise_fixed_i_rr_term}]
	Observe that  when $\rho\sim\mc R_i|E$ and $\tilde{X}\sim P_{X_i|E,E_\rho}$, the consistency condition (in the definition of the embedding strategy) holds almost surely for each player $j\in [k]$, and so they answer $\tilde{A}^j \sim P_{A_i^j \st X^j\in E^j\cap E_\rho^j,\ X^j_i=\tilde{X}^j}$.
	This, along with Lemma~\ref{lemma:pairwise_rand_is_good}, gives that
	\begin{align*}
	& \E_{\rho\sim \mc R_i|E} \E_{\tilde{X}\sim P_{X_i|E,E_\rho}} \sqbrac{\Pr[L\st \tilde{X},\rho]}
	\\&\quad=\ \E_{\rho\sim \mc R_i|E} \E_{\tilde{X}\sim P_{X_i|E,E_\rho}} \sum_{\tilde a \in \mc A: V(\tilde{X},\tilde a)=0} \sqbrac{ \prod_{j=1}^k \Pr\sqbrac{A_i^j=\tilde{a}^j | X^j\in E^j\cap E_\rho^j,\ X_i^j=\tilde{X}^j} }
	\\&\quad\leq\ \E_{\rho\sim \mc R_i|E} \E_{\tilde{X}\sim P_{X_i|E,E_\rho}} \sum_{\tilde a \in \mc A: V(\tilde{X},\tilde a)=0} \sqbrac{ \prod_{j=1}^k \Pr\sqbrac{A_i^j=\tilde{a}^j | X^j\in E^j\cap E_\rho^j,\ X_i^j=\tilde{X}^j}\cdot \ind\sqbrac{\Lambda(\rho)} } + o(1).
	\end{align*}
	Now, by Lemma~\ref{lemma:pairwise_swap_dist}, the above is at most
	\begin{align*}
	&\ \E_{\tilde{X}\sim P_{X_i|E}}\E_{\rho\sim \mc R_i|E,X_i=\tilde{X}}  \sum_{\tilde a \in \mc A: V(\tilde{X},\tilde a)=0} \sqbrac{ \prod_{j=1}^k \Pr\sqbrac{A_i^j=\tilde{a}^j | X^j\in E^j\cap E_\rho^j,\ X_i^j=\tilde{X}^j}\cdot \ind\sqbrac{\Lambda(\rho)} } + o(1)
	\\&\quad\leq\ \E_{\tilde{X}\sim P_{X_i|E}}\sum_{\tilde a \in \mc A: V(\tilde{X},\tilde a)=0}\E_{\rho\sim \mc R_i|E,X_i=\tilde{X}}   \sqbrac{ \prod_{j=1}^k \Pr\sqbrac{A_i^j=\tilde{a}^j | X^j\in E^j\cap E_\rho^j,\ X_i^j=\tilde{X}^j}\cdot \ind\sqbrac{\Lambda(\rho)} } + o(1)
	\end{align*}
	Using Lemma~\ref{lemma:pairwise_ques_ans_rr_term}, this is at most
	\begin{align*}
	& \E_{\tilde{X}\sim P_{X_i|E}} \sum_{\tilde a \in \mc A: V(\tilde{X},\tilde a)=0} \sqbrac{\Pr\sqbrac{A_i=\tilde{a} \st E,\ X_i=\tilde{x}} + \frac{2\eta+8\epsilon}{\Pr[E,X_i=\tilde{X}]}} + o(1)
	\\&\quad= \Pr\sqbrac{\Lose_i| E} + \sum_{\tilde{x}\in \mc X} \sum_{\tilde a \in \mc A: V(\tilde{x},\tilde a)=0}\Pr[X_i=\tilde{x}\st E]\cdot \frac{2\eta+8\epsilon}{\Pr\sqbrac{ E,X_i=\tilde{x}}} + o(1)
	\\&\quad\leq \Pr\sqbrac{\Lose_i| E} +\frac{(2\eta+8\epsilon)\cdot \abs{\mc X}\cdot \abs{\mc A}}{\Pr\sqbrac{ E}} + o(1).
	\\&\quad\leq  \Pr\sqbrac{\Lose_i| E} + o(1).
	\end{align*}
	In the last inequality, we used $\Pr[E]\geq \alpha$, and $\eta,\epsilon\leq o(\alpha)$.
\end{proof}

\subsubsection{Approximate Independence Under Random Restriction
}
Finally, in the remainder of this subsection, we prove Lemma~\ref{lemma:pairwise_ques_ans_rr_term}.
For this, we fix some $\tilde{x}\in \mc X,\ \tilde{a}\in \mc A$ satisfying $\set{x\in \mc X^{\otimes n}:x_i=\tilde{x},x\in E}\not=\emptyset$.

We start by observing that under the event $\Lambda(\rho)$, our functions satisfy an approximate independence property, by the inverse theorem.

\begin{lemma}\label{lemma:pairwise_using_bklm_corr}
	Let $\rho\in \supp(\mc R_i)$ be such that the event $\Lambda(\rho)$ holds.
	Then,
	\[ \prod_{j=1}^k \Pr\sqbrac{A_i^j = \tilde{a}^j, X^j\in E^j | X_{i}^j=\tilde{x}^j,\ X^j\in E_\rho^j} \leq \Pr\sqbrac{A_i=\tilde{a},\ X\in E| X_i=\tilde{x},\ X\in E_\rho} + \epsilon, \]
	\[ \prod_{j=1}^k \Pr\sqbrac{X^j\in E^j | X_{i}^j=\tilde{x}^j,\ X^j\in E_\rho^j} \geq \Pr\sqbrac{X\in E| X_i=\tilde{x},\ X\in E_\rho} - \epsilon. \]
\end{lemma}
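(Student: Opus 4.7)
The plan is to mirror the structure of Lemma~\ref{lemma:using_bklm_corr} from the no-Abelian-embeddings case, replacing Corollary~\ref{corr:bklm} with the stronger Corollary~\ref{corr:pair_conn_bklm} that handles pairwise-connected distributions with no-marginal-Abelian-embeddings. The key conceptual point is that conditioning on the event $E_\rho$ together with $X_i=\tilde{x}$ is exactly the same operation as applying the generalized restriction $\rho$ and freezing coordinate $i$, so each probability appearing in the lemma can be rewritten as an expectation of the restricted functions under $Q^{\otimes m(\rho)}$ (or $(Q^j)^{\otimes m(\rho)}$ for the player-wise versions).

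First I would set $m := m(\rho)$ and denote by $\tilde{F}^j := (F^j_{i,\tilde{x}^j})_\rho$ and $\tilde{f}^j := (f^j_{i,\tilde{x}^j,\tilde{a}^j})_\rho$ the $1$-bounded restricted functions on $(\mc X^j)^{\otimes m}$. Since $E_\rho$ is itself a product event $\prod_j E^j_\rho$ (as observed in a footnote), conditioning on $X\in E_\rho$ preserves independence across players and reduces the free-coordinate joint distribution to $Q^{\otimes m}$ and the per-player free-coordinate distribution to $(Q^j)^{\otimes m}$. Unfolding Definition~\ref{defn:pairwise_fns_to_make_pseudo}, the four quantities in the lemma become, respectively,
\[ \E_{Y\sim Q^{\otimes m}}\sqbrac{\prod_{j=1}^k \tilde{F}^j(Y^j)},\ \ \E_{Y\sim Q^{\otimes m}}\sqbrac{\prod_{j=1}^k \tilde{f}^j(Y^j)},\ \ \prod_{j=1}^k \E_{Y^j\sim (Q^j)^{\otimes m}}\sqbrac{\tilde{F}^j(Y^j)},\ \ \prod_{j=1}^k \E_{Y^j\sim (Q^j)^{\otimes m}}\sqbrac{\tilde{f}^j(Y^j)}. \]

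Second, I would invoke the event $\Lambda(\rho)$: by Lemma~\ref{lemma:pairwise_pseud_fn_dist} each of the $2k\abs{\mc X}\abs{\mc A}$ relevant restricted functions, after subtracting its mean, is $(\sqrt{m},\gamma)$-product pseudorandom. Using the parameter choices set at the start of the section, $\gamma\leq\delta$ and $m\geq 1/\eta\gg 1/\delta^2$, so monotonicity in $n'$ of Definition~\ref{defn:prod_pseudo} upgrades this to $(\delta m,\delta)$-product pseudorandomness, matching the hypothesis of Corollary~\ref{corr:pair_conn_bklm} with the parameter pair $(\epsilon,\delta)$. Applying that corollary once to the family $\set{\tilde{F}^j}_{j\in[k]}$ and once to $\set{\tilde{f}^j}_{j\in[k]}$ (legitimate because $Q$ is pairwise-connected with no-marginal-Abelian-embeddings) yields
\[ \abs{\E_{Y\sim Q^{\otimes m}}\sqbrac{\prod_j \tilde{F}^j(Y^j)} - \prod_j \mu_j(\tilde{F}^j)} \leq \epsilon, \qquad \abs{\E_{Y\sim Q^{\otimes m}}\sqbrac{\prod_j \tilde{f}^j(Y^j)} - \prod_j \mu_j(\tilde{f}^j)} \leq \epsilon. \]
After reinserting the probabilistic identifications from the first paragraph, these are precisely the two inequalities stated in the lemma (one direction of each being what is needed).

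The only mildly delicate point is verifying cleanly the identification in the first paragraph between conditional probabilities on $\mc X^{\otimes n}$ and expectations of restricted functions on $\mc X^{\otimes m}$; this reduces to checking that the generalized restriction $\rho$ acts coordinate-wise on each player and preserves the product structure of $E_\rho$, which is immediate from Definition~\ref{defn:gr}. With that bookkeeping done, the argument is a direct invocation of the pairwise-connected CSP inverse theorem, and I do not anticipate any substantive obstacle beyond ensuring the parameters $\gamma,\delta,\epsilon,m$ line up as set earlier in the section.
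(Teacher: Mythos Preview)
Your proposal is correct and follows essentially the same approach as the paper: both identify the conditional probabilities with expectations of the restricted functions $(F^j_{i,\tilde{x}^j})_\rho$ and $(f^j_{i,\tilde{x}^j,\tilde{a}^j})_\rho$ over $Q^{\otimes m}$, upgrade the $(\sqrt{m},\gamma)$-product pseudorandomness guaranteed by $\Lambda(\rho)$ to $(\delta m,\delta)$-product pseudorandomness via $\gamma\leq\delta$ and $\sqrt{m}\leq\delta m$, and then invoke Corollary~\ref{corr:pair_conn_bklm}. The only cosmetic difference is that you present the probabilistic identification first and the corollary application second, whereas the paper does the reverse.
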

\begin{proof}
	Let $\rho$ be such that the event $\Lambda(\rho)$ holds, and let $m=m(\rho)\geq n^{\exp(-1/\gamma^4)}$  be the number of free coordinates in $\rho$.
	By Corollary~\ref{corr:pair_conn_bklm} and the definition of the event $\Lambda(\rho)$, it holds that
	 \[ \prod_{j=1}^k \E_{Y^j\sim (Q^j)^{\otimes m}}\sqbrac{ (f_{i,\tilde{x}^j,\tilde{a}^j}^j)_{\rho}(Y^j)} \leq \E_{Y\sim Q^{\otimes m}}\sqbrac{ \prod_{j=1}^k (f_{i,\tilde{x}^j,\tilde{a}^j}^j)_{\rho}(Y^j)} + \epsilon, \]
	\[ \prod_{j=1}^k \E_{Y^j\sim (Q^j)^{\otimes m}}\sqbrac{ (F_{i,\tilde{x}^j}^j)_{\rho}(Y^j)} \geq \E_{Y\sim Q^{\otimes m}}\sqbrac{ \prod_{j=1}^k(F_{i,\tilde{x}^j}^j)_{\rho}(Y^j)} - \epsilon. \]
	We used that each of the functions  is $(\sqrt{m}, \gamma)$-product pseudorandom, and hence also $(\delta m, \delta)$-product pseudorandom, since $\sqrt{m}\leq \delta m$ and $\gamma \leq \delta$.

	By the definitions of the functions $(f_{i,\tilde{x}^j,\tilde{a}^j}^j)_{j\in [k]},\ (F_{i,\tilde{x}^j}^j)_{j\in [k]}$, we have
	\[ \E_{Y\sim Q^{\otimes m}}\sqbrac{ \prod_{j=1}^k(f_{i,\tilde{x}^j,\tilde{a}^j}^j)_{\rho}(Y^j)} = \Pr\sqbrac{A_i = \tilde{a}, X\in E | X_{i}=\tilde{x},\ X\in E_\rho},\]
	\[ \E_{Y\sim Q^{\otimes m}}\sqbrac{ \prod_{j=1}^k(F_{i,\tilde{x}^j}^j)_{\rho}(Y^j)} = \Pr\sqbrac{X\in E| X_i=\tilde{x},\ X\in E_\rho},\]
	and for every $j\in [k]$,
	\[ \E_{Y^j\sim (Q^j)^{\otimes m}}\sqbrac{ (f_{i,\tilde{x}^j,\tilde{a}^j}^j)_{\rho}(Y^j)} =  \Pr\sqbrac{A_i^j = \tilde{a}^j, X^j\in E^j | X_{i}^j=\tilde{x}^j,\ X^j\in E_\rho^j}, \]
	\[ \E_{Y^j\sim (Q^j)^{\otimes m}}\sqbrac{ (F_{i,\tilde{x}^j}^j)_{\rho}(Y^j)} =  \Pr\sqbrac{X^j\in E^j | X_{i}^j=\tilde{x}^j,\ X^j\in E_\rho^j}. \]
	Plugging these into the above inequalities, we obtain the desired result.	
\end{proof}

Next, we complete the proof:

\begin{proof}[Proof of Lemma~\ref{lemma:pairwise_ques_ans_rr_term}]
	We have
	\begin{align*}
		 &\E_{\rho\sim \mc R_i|E,X_i=\tilde{x}}   \sqbrac{ \prod_{j=1}^k \Pr\sqbrac{A_i^j=\tilde{a}^j | X^j\in E^j\cap E_\rho^j,\ X_i^j=\tilde{x}^j} \cdot \ind\sqbrac{\Lambda(\rho)}} 
		 \\&\quad=  \E_{\rho\sim \mc R_i|E,X_i=\tilde{x}}   \sqbrac{ \frac{\prod_{j=1}^k \Pr\sqbrac{A_i^j=\tilde{a}^j, X^j\in E^j | X^j\in E_\rho^j,\ X_i^j=\tilde{x}^j}}{\prod_{j=1}^k \Pr\sqbrac{X^j\in E^j | X^j\in  E_\rho^j,\ X_i^j=\tilde{x}^j}} \cdot \ind\sqbrac{\Lambda(\rho)}}
	\end{align*}
	
	By Lemma~\ref{lemma:pairwise_using_bklm_corr}, Lemma~\ref{lemma:math_division}, and Lemma~\ref{lemma:cond_grr_prop} the above is at most
	\begin{align*}
		& \E_{\rho\sim \mc R_i|E,X_i=\tilde{x}}\sqbrac{\frac{\Pr\sqbrac{A_i=\tilde{a},\ X\in E\ |\ X_i=\tilde{x}, X\in E_\rho} + 4\epsilon}{\Pr\sqbrac{X\in E\ |\ X_i=\tilde{x}, X\in E_\rho }}}
		\\&\quad= \E_{\rho\sim \mc R_i|E,X_i=\tilde{x}}\sqbrac{\Pr\sqbrac{A_i=\tilde{a}\ | X_i=\tilde{x}, E, E_\rho}
		+ \frac{4\epsilon}{\Pr\sqbrac{E\ |\ X_i=\tilde{x},E_\rho }}}
		\\&\quad\leq \Pr\sqbrac{A_i=\tilde{a}\ |\ E, X_i=\tilde{x}} + \frac{2\eta}{\Pr\sqbrac{ E,X_i=\tilde{x}}}+ \E_{\rho\sim \mc R_i|E,X_i=\tilde{x}}\sqbrac{\frac{4\epsilon}{\Pr\sqbrac{E\ |\ X_i=\tilde{x},  E_\rho }}}.
	\end{align*}
	The last term can now be bounded as
	\begin{align*}
		&\E_{\rho\sim \mc R_i|E,X_i=\tilde{x}}\sqbrac{\frac{4\epsilon}{\Pr\sqbrac{E\ |\ X_i=\tilde{x},  E_\rho }}}
		\\&\qquad = \E_{\rho\sim \mc R_i}\sqbrac{\Pr[E,X_i=\tilde{x}\st E_\rho]\cdot \frac{4\epsilon}{\Pr\sqbrac{E\ |\ X_i=\tilde{x},  E_\rho }} }\cdot \frac{1}{ \E_{\rho'\sim \mc R_i}\sqbrac{\Pr[E,X_i=\tilde{x}\st E_{\rho'}]}}
		\\&\qquad = 4\epsilon\cdot \frac{ \E_{\rho\sim \mc R_i}\sqbrac{\Pr[X_i=\tilde{x}\st E_{\rho}]}}{ \E_{\rho'\sim \mc R_i}\sqbrac{\Pr[E,X_i=\tilde{x}\st E_{\rho'}]}}
		\\&\qquad\leq  4 \epsilon \cdot \frac{1}{\Pr[E,X_i=\tilde{x}] - \eta}.
		\\&\qquad\leq  8 \epsilon \cdot \frac{1}{\Pr[E,X_i=\tilde{x}]}.
	\end{align*}
	We used Lemma~\ref{lemma:pairwise_swap_dist} to say that $\Pr[E,X_i=\tilde{x}] \geq \Omega(\alpha) > 2\eta$.
\end{proof}


\subsection{Ensuring That Random Restrictions Don't Give Too Much Information}\label{sec:pairwise_rr_information}

In this subsection, we show how to obtain the pseudorandomness assumption (see \eqref{eq:pseudo_beta}) in the above section, via a generalized random restriction.
The following lemma says that if some $\mc{R}_i$ changes the distribution of $X_i$ conditioned on $E$ by a lot, then in fact conditioning on $\rho\sim \mc{R}_i,\ X_i\sim Q$ increases the $\ell_2$ energy of $E$.

\begin{lemma}\label{lemma:pairwise_rr_information_iterate}
	For some sufficiently large $n$, consider the game  $\mc G^{\otimes n} = (\mc X^{\otimes n}, \mc A^{\otimes n}, P=Q^{\otimes n},  V^{\otimes n})$, and let $X= (X^1,\dots,X^k)$ be the random variable denoting the questions to the $k$ players in the game $\mc G^{\otimes n}$.
Let $E=E^1\times \dots \times E^k \subseteq \mc X^{\otimes n}$ be a product event with $\alpha:= \Pr_{Q^{\otimes n}}[E]$.

	Let $i\in [n]$, and let $\mc R_i$ be any $(m,\epsilon)$-generalized random restriction on $\mc X^{\otimes n-1}$ (on coordinates $[n]\setminus \set{i}$),\footnote{We shall also regard $\mc R_i$ as a random restriction on $\mc X^{\otimes n}$ which does nothing to coordinate $i$.}, with $\epsilon<\alpha$, and such that 
	\[ \E_{\rho\sim \mc R_i|E}\Norm{P_{X_i|E,E_\rho}-Q}_1 \geq \beta,\footnote{Note that the conditional random restriction $\mc R_i|E$ is well-defined when $\epsilon < \alpha$.}\]
	for some $0< \beta \leq 1$.
	
	Let $\mc R$ be the generalized random restriction on $\mc X^{\otimes n}$ defined as follows:  choose $\rho\sim \mc R_i$, $\tilde{x}\sim Q$; output the generalized restriction $\rho'$ which performs $\rho$ on coordinates $[n]\setminus \set{i}$, and fixes the input value in coordinate $i$ to $\tilde{x}$.
	Then, it holds:
	\begin{enumerate}
		\item $\mc R$ is a $(m,\epsilon)$-generalized random restriction on $\mc X^{\otimes n}$.
		\item The conditional mass of $E$ under $E_{\rho'}$ has non-trivially increased variance, i.e., \[ \E_{\rho'\sim 
		\mc R}\sqbrac{\Pr[E|E_{\rho'}]^2} \geq \alpha^2\brac{1+\beta^2 - \frac{6\epsilon}{\alpha}}.\]
	\end{enumerate}
\end{lemma}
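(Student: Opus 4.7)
The plan has two parts, matching the two conclusions of the lemma. The first (that $\mc R$ is a valid $(m,\epsilon)$-generalized random restriction on $\mc X^{\otimes n}$) is essentially formal: every $\rho'\in\supp(\mc R)$ has exactly $m(\rho)\geq m$ free coordinates, and since $\rho$ restricts only coordinates $[n]\setminus\set{i}$ while $\tilde x\sim Q$ is drawn independently, one may factor
\[
\E_{\rho'\sim \mc R}\sqbrac{Q^{\otimes n}\,|\,E_{\rho'}} \;=\; \E_{\rho\sim \mc R_i}\sqbrac{Q^{\otimes n-1}\,|\,E_\rho} \otimes Q,
\]
so the $\ell_1$-distance from $Q^{\otimes n}=Q^{\otimes n-1}\otimes Q$ equals the corresponding distance on $\mc X^{\otimes n-1}$, which is at most $\epsilon$ by assumption on $\mc R_i$.

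The heart of the argument is the $\ell_2$ increment. The crucial observation is that $E_\rho$ depends only on coordinates $[n]\setminus\set{i}$, so under $Q^{\otimes n}$ the variable $X_i\sim Q$ is independent of $E_\rho$. Writing $p_\rho:=\Pr[E\st E_\rho]$ and $p_{\rho,\tilde x}:=\Pr[E\st E_\rho,\,X_i=\tilde x]$, this independence yields $\E_{\tilde x\sim Q}[p_{\rho,\tilde x}]=p_\rho$, so by the variance decomposition followed by Jensen's inequality,
\[
\E_{\tilde x\sim Q}\sqbrac{p_{\rho,\tilde x}^2} \;=\; p_\rho^2 + \Var_{\tilde x\sim Q}[p_{\rho,\tilde x}] \;\geq\; p_\rho^2 + \brac{\E_{\tilde x\sim Q}\abs{p_{\rho,\tilde x}-p_\rho}}^2.
\]
A direct application of Bayes' rule (again using the same independence) gives $P_{X_i|E,E_\rho}(\tilde x)=Q(\tilde x)\,p_{\rho,\tilde x}/p_\rho$, whence $\Norm{P_{X_i|E,E_\rho}-Q}_1 = p_\rho^{-1}\E_{\tilde x\sim Q}\abs{p_{\rho,\tilde x}-p_\rho}$. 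Substituting yields the pointwise bound $\E_{\tilde x\sim Q}[p_{\rho,\tilde x}^2] \geq p_\rho^2\brac{1+\Norm{P_{X_i|E,E_\rho}-Q}_1^2}$.

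To close, rewrite the target as $\E_{\rho\sim\mc R_i}\E_{\tilde x\sim Q}[p_{\rho,\tilde x}^2]$ and expand the above. For the first term, Jensen gives $\E_\rho[p_\rho^2]\geq (\E_\rho[p_\rho])^2$, and the $\ell_1$ property of $\mc R_i$ combined with the tensoring identity from the first paragraph forces $\abs{\E_\rho[p_\rho]-\alpha}\leq \epsilon$, so $\E_\rho[p_\rho^2]\geq (\alpha-\epsilon)^2$. For the second term, Jensen again gives $\E_\rho\sqbrac{p_\rho^2\Norm{P_{X_i|E,E_\rho}-Q}_1^2}\geq \brac{\E_\rho[p_\rho\Norm{P_{X_i|E,E_\rho}-Q}_1]}^2$, and unfolding Definition~\ref{defn:cond_grr} yields
\[
\beta \;\leq\; \E_{\rho\sim\mc R_i|E}\Norm{P_{X_i|E,E_\rho}-Q}_1 \;=\; \frac{\E_{\rho\sim\mc R_i}\sqbrac{p_\rho\Norm{P_{X_i|E,E_\rho}-Q}_1}}{\E_{\rho\sim\mc R_i}[p_\rho]},
\]
so $\E_\rho[p_\rho\Norm{P_{X_i|E,E_\rho}-Q}_1]\geq \beta(\alpha-\epsilon)$. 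Summing the two lower bounds gives $\E_{\rho'\sim\mc R}[\Pr[E\st E_{\rho'}]^2] \geq (\alpha-\epsilon)^2(1+\beta^2)$; expanding $(\alpha-\epsilon)^2\geq \alpha^2-2\alpha\epsilon$ together with $\beta^2\leq 1$ and $\epsilon<\alpha$ yields the claimed bound $\alpha^2(1+\beta^2-6\epsilon/\alpha)$ (with a bit of slack). The proof is a direct computation with no conceptual obstacle; the only step that requires a bit of care is propagating the $\epsilon$ slack through the two Jensen applications and through the conversion between $\mc R_i$ and $\mc R_i|E$, which is precisely what produces the $O(\epsilon/\alpha)$ loss in the final bound.
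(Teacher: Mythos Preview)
Your proof is correct and follows essentially the same route as the paper's: both hinge on the independence of $X_i$ from $E_\rho$ to write $\E_{\tilde x\sim Q}[p_{\rho,\tilde x}]=p_\rho$ and $\Norm{P_{X_i|E,E_\rho}-Q}_1=p_\rho^{-1}\E_{\tilde x}\abs{p_{\rho,\tilde x}-p_\rho}$, then square via Cauchy--Schwarz/Jensen and use the $\ell_1$ property of $\mc R_i$ to control $\E_\rho[p_\rho]$. The only cosmetic difference is that you unfold Definition~\ref{defn:cond_grr} exactly rather than routing through Lemma~\ref{lemma:cond_grr_prop}, which spares you the $2\epsilon/\alpha$ loss and yields the slightly sharper bound $(\alpha-\epsilon)^2(1+\beta^2)\geq \alpha^2(1+\beta^2-4\epsilon/\alpha)$ in place of the paper's $\alpha^2(1+\beta^2-6\epsilon/\alpha)$.
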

Before we give the formal proof, we explain why this should hold. 
Since $\rho\sim \mc R_i$ does not act on coordinate $i$, we know that just conditioning on $\rho$ (ignoring $E$) should not affect the marginal distribution of $X_i$.
However, adding in $E$ does change it.
This means that restricting by $\rho\sim \mc{R}_i,\ X_i\sim Q$ should split\footnote{The term \emph{split} is justified by Property 2 of Definition~\ref{defn:grr}.} the mass of $E$ in an uneven manner, which then increases the $\ell_2$ energy.

\begin{proof}
	It follows by definitions that $\mc R$ is a $(m,\epsilon)$-generalized random restriction on $\mc X^{\otimes n}$.
	
	By Lemma~\ref{lemma:cond_grr_prop}, we have
	\begin{align*}
		\beta &\leq \E_{\rho\sim \mc R_i|E}\Norm{P_{X_i|E,E_\rho}-Q}_1
		\\&\leq \sum_{\rho} \mc R_i[\rho]\cdot \frac{\Pr\sqbrac{E|E_\rho}}{\Pr[E]}\cdot \Norm{P_{X_i|E,E_\rho}-Q}_1 + \frac{2\epsilon}{\Pr[E]}
		\\&= \sum_{\tilde{x},\rho} \mc R_i[\rho]\cdot \frac{\Pr\sqbrac{E|E_\rho}}{\alpha}\cdot \abs{\Pr\sqbrac{X_i=\tilde{x}\st E, E_\rho}-Q[\tilde{x}]} + \frac{2\epsilon}{\alpha}
		\\&= \frac{1}{\alpha}\cdot \E_{\rho\sim \mc R_i}\E_{\tilde{x}\sim Q} \abs{\frac{\Pr\sqbrac{E, X_i=\tilde{x}|E_\rho}}{ Q[\tilde{x}]} - \Pr\sqbrac{E|E_\rho}}+ \frac{2\epsilon}{\alpha}.
	\end{align*}
	Since $\rho$ only acts on coordinates $[n]\setminus\set{i}$, it holds that $\Pr\sqbrac{E, X_i=\tilde{x}|E_\rho} = \Pr[E|E_\rho, X_i=\tilde{x}]\cdot Q[\tilde{x}]$, and hence the above gives
	\begin{align*}
		\beta\alpha - 2\epsilon \leq \E_{\rho\sim \mc R_i}\E_{\tilde{x}\sim Q}\abs{\Pr[E|E_\rho, X_i=\tilde{x}]-\Pr[E|E_\rho]}.
	\end{align*}
	Using Cauchy-Schwarz, we get
	\begin{align*}
		\brac{\beta\alpha - 2\epsilon}^2 &\leq \E_{\rho\sim \mc R_i}\E_{\tilde{x}\sim Q}\abs{\Pr[E|E_\rho, X_i=\tilde{x}]-\Pr[E|E_\rho]}^2
		\\&= \E_{\rho\sim \mc R_i}\sqbrac{\E_{\tilde{x}\sim Q}\Pr[E|E_\rho, X_i=\tilde{x}]^2 + \Pr[E|E_\rho]^2  - 2\E_{\tilde{x}\sim Q} \Pr[E|E_\rho, X_i=\tilde{x}]\cdot \Pr[E|E_\rho]}
		\\&= \E_{\rho\sim \mc R_i}\sqbrac{\E_{\tilde{x}\sim Q}\Pr[E|E_\rho, X_i=\tilde{x}]^2 - \Pr[E|E_\rho]^2}
		\\&\leq \E_{\rho'\sim 
		\mc R}\sqbrac{\Pr[E|E_{\rho'}]^2} - \brac{\E_{\rho\sim \mc R_i}\Pr[E|E_\rho]}^2
		\\&\leq \E_{\rho'\sim 
		\mc R}\sqbrac{\Pr[E|E_{\rho'}]^2} - \brac{\alpha-\epsilon}^2.
	\end{align*}
	Rearranging, we get
	\[ \E_{\rho'\sim 
		\mc R}\sqbrac{\Pr[E|E_{\rho'}]^2} \geq (\beta\alpha-2\epsilon)^2+(\alpha-\epsilon)^2 \geq \alpha^2(1+\beta^2)-6\epsilon\alpha. \qedhere\]
\end{proof}

With the above, we state and prove the main result of this subsection. This follows by repeatedly iterating Lemma \ref{lemma:pairwise_rr_information_iterate} while some $\mc{R}_i$ violates \eqref{eq:pseudo_beta}.

\begin{lemma}\label{lemma:pairwise_rr_information}
	Let $n\in \N$ be sufficiently large.
	Let $\gamma\in (0,1)$ be such that $\gamma \geq \frac{1}{\log\log\log\log n}$;
	let $\alpha,\beta,\kappa\in (0,1)$ be parameters, such that $\kappa, \alpha, \beta \geq \gamma$.
	
	Consider the game  $\mc G^{\otimes n} = (\mc X^{\otimes n}, \mc A^{\otimes n}, Q^{\otimes n}, V^{\otimes n})$, and fix any strategy for the $k$ players in this game.
	Let $E=E^1\times \dots \times E^k \subseteq  \mc X^{\otimes n}$ be a product event with $\Pr_{Q^{\otimes n}}[E]\geq \alpha$.
	Then, there exists a $(n^{\exp(-1/\gamma^{10})}, n^{-\exp(-1/\gamma^{10})})$-generalized random restriction $\mc R$ on $\mc X^{\otimes n}$, such that with probability at least $1-\kappa$ over $\rho\sim \mc R|E$,\footnote{This conditional random restriction is well defined as $ n^{-\exp(-1/\gamma^{10})} < 2^{-\sqrt{\log n}} < \gamma\leq \alpha \leq \Pr[E]$.} it holds that:

	Let $m=m(\rho)$ be the number of free coordinates in $\rho$.
	Consider the game $\mc G^{\otimes n}$ when the inputs to the $k$-players are conditioned to be in the set $E_\rho$; this is the same as $\mc G^{\otimes m}= (\mc X^{\otimes m}, \mc A^{\otimes m},P=Q^{\otimes m}, V^{\otimes m})$.
	Let $E'\subseteq \mc X^{\otimes m}$ be the corresponding restricted event, and also consider restricted strategies for the game $\mc G^{\otimes m}$, as in Definition~\ref{defn:game_rep_gr}.
	Let $X$ be the random variable denoting the questions to the $k$ players in this game $\mc G^{\otimes m}$.
	For every coordinate $i\in [m]$ in this game, let $\mc R_i$ be the $(\frac{1}{\eta}, \eta)$-generalized random restriction as in Lemma~\ref{lemma:pairwise_pseud_fn_dist} (with respect to $E'$, the above player strategies, and the parameter $\gamma$;\footnote{Note that this is well-defined since $m\geq n^{-\exp(-1/\gamma^{10})}\geq  2^{\sqrt{\log n}}$, and hence $1/\gamma \leq o(\log \log m)^{1/4}$.} and $\eta = \eta(m,\gamma)= m^{-\exp(-1/\gamma^4)}$).
	Then, it holds that
	\begin{enumerate}
		\item $\Pr_{Q^{\otimes m}}[E'] = \Pr_{Q^{\otimes n}}[E|E_\rho]\geq \frac{\kappa}{4}\cdot \Pr_{Q^{\otimes n}}[E]$.
		\item For every $i\in [m]$ it holds:
			\begin{equation}\label{eqn:pairwise_rr_info_good}
				\E_{\rho'\sim \mc R_i|E'}\Norm{P_{X_i|E',E_{\rho'}}-Q}_1 \leq \beta.\footnote{This conditional random restriction is well-defined as $\eta\leq 2^{-\sqrt{\log n}}$ and $\Pr[E'] \geq \frac{\kappa}{4}\cdot \Pr[E] \geq \frac{\gamma^2}{4} > \eta$ .}
			\end{equation}
	\end{enumerate}
\end{lemma}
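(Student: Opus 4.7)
The plan is to construct $\mc R$ by an iterative energy-increment argument driven by Lemma~\ref{lemma:pairwise_rr_information_iterate}. Define the potential $\Phi(\mc R') := \E_{\rho \sim \mc R'}\sqbrac{\Pr[E \mid E_\rho]^2}$, which always lies in $[0,1]$. We build a sequence of $(m^{(t)}, \epsilon^{(t)})$-generalized random restrictions $\mc R^{(0)}, \mc R^{(1)}, \dots$ on $\mc X^{\otimes n}$, starting from the trivial one ($m^{(0)} = n$, $\epsilon^{(0)} = 0$), and stop once the current $\mc R^{(t)}$ already satisfies the lemma.

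At iteration $t$, for each $\rho \in \supp(\mc R^{(t)})$ we view the restricted game on $m(\rho)$ coordinates with its restricted event $E'_\rho$, and invoke Lemma~\ref{lemma:pairwise_pseud_fn_dist} to obtain the relevant $(1/\eta_\rho, \eta_\rho)$-restrictions $\{\mc R_i^\rho\}_{i \in [m(\rho)]}$ with $\eta_\rho = m(\rho)^{-\exp(-1/\gamma^4)}$. Call $\rho$ \emph{bad} if there is some $i_\rho \in [m(\rho)]$ with $\E_{\rho' \sim \mc R_{i_\rho}^\rho \mid E'_\rho}\Norm{P_{X_{i_\rho} \mid E'_\rho, E_{\rho'}} - Q}_1 \geq \beta$. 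If $\Pr_{\rho \sim \mc R^{(t)} \mid E}[\rho \text{ bad}] \leq \kappa/2$, stop and output $\mc R = \mc R^{(t)}$. Otherwise, form $\mc R^{(t+1)}$ by leaving good $\rho$'s alone and, for each bad $\rho$, composing with the extending restriction $\widetilde{\mc R}_\rho$ produced by applying Lemma~\ref{lemma:pairwise_rr_information_iterate} to $\mc R_{i_\rho}^\rho$ (which also samples coordinate $i_\rho$ from $Q$). Composition preserves the generalized structure with $m^{(t+1)} \geq (m^{(t)})^{\exp(-1/\gamma^4)}$ and $\epsilon^{(t+1)} \leq \epsilon^{(t)} + (m^{(t)})^{-\exp(-1/\gamma^4)}$.

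The iteration terminates quickly because $\Phi$ grows by a definite amount at every step. Lemma~\ref{lemma:pairwise_rr_information_iterate} gives $\E_{\rho'' \sim \widetilde{\mc R}_\rho}[\Pr[E \mid E_{\rho'' \circ \rho}]^2] \geq \Pr[E \mid E_\rho]^2(1+\beta^2) - 6\eta_\rho \Pr[E \mid E_\rho]$ for each bad $\rho$, and Cauchy--Schwarz yields
\[ \sum_{\rho \text{ bad}} \mc R^{(t)}[\rho] \Pr[E \mid E_\rho]^2 \;\geq\; \brac{\sum_{\rho \text{ bad}} \mc R^{(t)}[\rho] \Pr[E \mid E_\rho]}^2. \]
By Lemma~\ref{lemma:cond_grr_prop} the inner sum is at least $(\kappa/2)(\Pr[E] - \epsilon^{(t)}) \geq \kappa\alpha/3$, so $\Phi(\mc R^{(t+1)}) - \Phi(\mc R^{(t)}) \geq \beta^2 \kappa^2 \alpha^2 / C - O(\epsilon^{(t)})$ for a universal constant $C$. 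Since $\alpha, \beta, \kappa \geq \gamma$ and $\Phi \leq 1$, the process must halt after $T \leq O(1/\gamma^6)$ steps. Iterating the parameter recurrences then gives $m^{(T)} \geq n^{\exp(-T/\gamma^4)} \geq n^{\exp(-1/\gamma^{10})}$ and $\epsilon^{(T)} \leq T \cdot n^{-\exp(-T/\gamma^4)} \leq n^{-\exp(-1/\gamma^{10})}$ with appropriate constants.

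To finish, the stopping criterion gives condition (2) with failure probability $\leq \kappa/2$, and condition (1) follows from Markov: writing $(\mc R \mid E)[\rho] \approx \mc R[\rho] \Pr[E \mid E_\rho]/\Pr[E]$ via Lemma~\ref{lemma:cond_grr_prop}, we get $\Pr_{\rho \sim \mc R \mid E}[\Pr[E \mid E_\rho] < (\kappa/4)\Pr[E]] \leq \kappa/4 + O(\epsilon^{(T)}/\alpha) \leq \kappa/2$, and a union bound yields overall failure probability $\leq \kappa$. The main obstacle is the careful bookkeeping: every passage between $\mc R^{(t)}$ and $\mc R^{(t)} \mid E$, and between a bad-event probability and the weighted $\Pr[E \mid E_\rho]$-sum needed by the increment argument, picks up an $O(\epsilon^{(t)}/\alpha)$ slack from Lemma~\ref{lemma:cond_grr_prop}; ensuring these errors remain negligible compared to $\gamma^6$ throughout all $T$ iterations is what ultimately dictates the $1/\gamma^{10}$ in the exponent.
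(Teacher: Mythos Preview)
Your proposal is correct and follows essentially the same energy-increment strategy as the paper's proof: the same potential $\Phi(\mc R) = \E_{\rho\sim\mc R}[\Pr[E\mid E_\rho]^2]$, the same use of Lemma~\ref{lemma:pairwise_rr_information_iterate} on bad restrictions, the same Cauchy--Schwarz conversion from $\Pr_{\mc R\mid E}[\text{bad}]\ge\kappa/2$ to a lower bound on $\sum_{\text{bad}}\mc R[\rho]\Pr[E\mid E_\rho]^2$, and the same $O(1/\gamma^6)$ iteration bound yielding the $\exp(-1/\gamma^{10})$ parameters.

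The only organizational difference is that the paper folds the requirement $\Pr[E\mid E_\rho]\ge\tfrac{\kappa}{4}\Pr[E]$ into its definition of ``bad'' (so that every bad $\rho$ automatically has $\Pr[E'_\rho]\gg\eta_\rho$ and a per-$\rho$ positive increment), whereas you treat condition~(1) separately via a Markov argument at the end and absorb the possibly-negative individual increments (from $\rho$ with $\Pr[E'_\rho]$ barely above $\eta_\rho$) into the aggregate $-O(\epsilon^{(t)})$ slack. Both packagings work; yours just needs the convention that $\rho$ with $\Pr[E'_\rho]\le\eta_\rho$ are declared not bad (since $\mc R_i^\rho\mid E'_\rho$ is undefined there), which you leave implicit but which the final Markov bound on condition~(1) handles anyway.
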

\begin{proof}
	The proof proceeds via an iterative argument; we start with the generalized random restriction $\mc R^{(0)}$ on $\mc X^{\otimes n}$ that does nothing, and in each step refine it.
	We shall use a progress measure defined as follows: for any generalized random restriction $\mc R$ on $\mc X^{\otimes n}$, define \[ \mc Z(\mc R): = \E_{\rho\sim \mc R}\sqbrac{\Pr\sqbrac{E \st E_\rho}^2}. \]
	This satisfies $\mc Z\brac{\mc R^{(0)}} = \Pr[E]^2 \geq \alpha^2$.
		
	Let $T = \big\lceil \frac{8}{\gamma^6} \big\rceil$; for $t=1,2,\dots, T$, we define the random restriction $\mc R^{(t)}$ in the following manner:
	\begin{enumerate}
		\item Choose $\rho\sim \mc R^{(t-1)}$. Let $m=m(\rho)$ be the number of free coordinates in $\rho$, let $\eta = \eta(m(\rho), \gamma) = m^{-\exp(-1/\gamma^4)}$ be as in Lemma~\ref{lemma:pairwise_pseud_fn_dist}, let $E'\subseteq \mc X^{\otimes m}$ be the restriction of $E$ corresponding to $\rho$, and also consider restricted strategies for the game $\mc G^{\otimes m}$ as in Definition~\ref{defn:game_rep_gr}.
		\item We say that $\rho$ is bad if $\Pr[E'] =\Pr[E|E_\rho] \geq \frac{\kappa}{4}\cdot \Pr[E]$, and Equation~\ref{eqn:pairwise_rr_info_good} fails for some coordinate $i\in [m]$; else, we say it is good.
		\begin{enumerate}
			\item If $\rho$ is good, we do nothing and output $\rho$.
			\item\label{step:increment_step_rrinfo} Otherwise, $\Pr[E']\geq \frac{\kappa}{4}\cdot \Pr[E]$ and there exists $i\in [m]$ such that Equation~\ref{eqn:pairwise_rr_info_good} fails. Now, we apply Lemma~\ref{lemma:pairwise_rr_information_iterate} with respect to $\mc R_i$ to find a relevant $(\frac{1}{\eta}, \eta)$-generalized random restriction $\mc R_\rho$ on $\mc X^{\otimes m}$. Choose $\rho'\sim \mc R_\rho$ and output $\rho'\circ \rho$.
		\end{enumerate}
\end{enumerate}
	By induction, it is verified that for each $t=0,1,\dots,T$, the random restriction $\mc R^{(t)}$ is a $(m^{(t)},\epsilon^{(t)})$-generalized random restriction on $\Sigma^n$, with
	\[ m^{(t)} = \frac{1}{\eta(m^{(t-1)}, \gamma)}= \brac{m^{(t-1)}}^{\exp(-1/\gamma^4)} = n^{\exp(-t/\gamma^4)} \geq n^{\exp(-1/\gamma^{10})} \geq 2^{\sqrt{\log n}},\]
	and
	\begin{align*}
		\epsilon^{(t)} = \epsilon^{(t-1)} + \brac{m^{(t-1)}}^{-\exp(-1/\gamma^4)} \leq t\cdot n^{-\exp(-t/\gamma^4)} \leq n^{-\exp(-1/\gamma^{10})}\leq 2^{-\sqrt{\log n}}:=\epsilon.
	\end{align*}
	Note that this satisfies $\frac{1}{\gamma^4} \leq o(\log \log m^{(T)})^{1/4}\leq o(\log \log m(\rho))^{1/4}$ whenever Lemma~\ref{lemma:pairwise_pseud_fn_dist} is applied, as needed in the assumption of the lemma.
	Also, $\epsilon < \alpha$, so the distribution $\mc R^{(t)}|E$ is well-defined for every $t=0,1,\dots,T$.
	
	Now, suppose that at some point we found some $\rho$	 that is bad, and applied Step~\ref{step:increment_step_rrinfo} above to get $\mc R_\rho$; then, for  $\eta=\eta(m(\rho), \gamma)$, by Lemma~\ref{lemma:pairwise_rr_information_iterate}, we have
	\[ \E_{\rho'\sim \mc R_\rho}\sqbrac{ \Pr[E'|E_{\rho'}]^2} \geq \Pr[E']^2\brac{1+\beta^2-\frac{6\eta}{\Pr[E']}}.\]
	Since $\Pr[E'] \geq \frac{\kappa}{4}\cdot \Pr[E] \geq \frac{\gamma^2}{4}$ (as $\rho$ is bad), and $\eta \leq \epsilon =  2^{-\sqrt{\log n}}$, we have $\frac{6\eta}{\Pr[E']} \leq \frac{24\epsilon}{\gamma^2} \leq \frac{\gamma^2}{2} \leq \frac{\beta^2}{2}$, and
	\[ \E_{\rho'\sim \mc R_\rho}\sqbrac{ \Pr[E'|E_{\rho'}]^2} \geq \Pr[E']^2\brac{1+\frac{\beta^2}{2}}.\]
	This implies that for any $t=1,\dots, T$, we have
	\begin{align*}
	\mc Z(\mc R^{(t)}) &= \E_{\rho\sim \mc R^{(t)}}\sqbrac{\Pr[E|E_{\rho}]^2}
	\\&\geq \E_{\rho\sim \mc R^{(t-1)}}\sqbrac{\Pr[E|E_\rho]^2 \cdot \brac{1+\frac{\beta^2}{2}\cdot \ind\sqbrac{\rho\text{ is bad}} } } 
	\\&= \mc Z(\mc R^{(t-1)}) +\frac{\beta^2}{2}\cdot\E_{\rho\sim \mc R^{(t-1)}}\sqbrac{\Pr[E|E_\rho]^2 \cdot \ind\sqbrac{\rho\text{ is bad}} }
	\end{align*}

	Suppose, for the sake of contradiction, that $\mc R^{(t)}$ does not satisfy the statement of the lemma, for any $t=0,1,\dots,T$.
	By Lemma~\ref{lemma:cond_grr_prop}, we get
	\begin{align*}
		\kappa &\leq \Pr_{\rho\sim \mc R^{(t)}|E}\sqbrac{\Pr[E|E_\rho] < \frac{\kappa}{4}\cdot \Pr[E]\text{ or }\rho \text{ is bad}}
		\\&\leq \E_{\rho\sim \mc R^{(t)}}\sqbrac{\frac{\Pr[E|E_\rho]}{\Pr[E]}\cdot \brac{\ind\sqbrac{\Pr[E|E_\rho] < \frac{\kappa}{4}\cdot \Pr[E]} + \ind\sqbrac{\rho\text{ is bad}} }} + \frac{\epsilon}{\Pr[E]}
		\\&\leq \frac{\frac{\kappa}{4}\cdot \Pr[E]}{\Pr[E]} + \frac{1}{\Pr[E]}\cdot \E_{\rho\sim \mc R^{(t)}}\sqbrac{ \Pr[E|E_\rho]\cdot  \ind\sqbrac{\rho\text{ is bad}}} + \frac{\epsilon}{\Pr[E]}
		\\&\leq \frac{1}{\alpha}\cdot \E_{\rho\sim \mc R^{(t)}}\sqbrac{ \Pr[E|E_\rho]\cdot  \ind\sqbrac{\rho\text{ is bad}}} + \frac{\kappa}{4}+\frac{\epsilon}{\alpha}
		\\&\leq  \frac{1}{\alpha}\cdot \E_{\rho\sim \mc R^{(t)}}\sqbrac{ \Pr[E|E_\rho]\cdot  \ind\sqbrac{\rho\text{ is bad}}} + \frac{\kappa}{2}
	\end{align*}
	In the last inequality, we used $\kappa, \alpha \geq \gamma$ and $\epsilon \leq 2^{-\sqrt{\log n}}$.
	By Cauchy-Schwarz, this implies
	\[ \E_{\rho\sim \mc R^{(t)}}\sqbrac{\Pr[E|E_\rho]^2 \cdot \ind[\rho\text{ is bad}]} \geq \frac{\alpha^2\kappa^2}{4}.\]
	Plugging into the above, we get that for any $t=1,\dots,T$ it holds that
	\[ \mc Z(\mc R^{(t)}) \geq \mc Z(\mc R^{(t-1)}) + \frac{\alpha^2\beta^2\kappa^2}{8} \geq\mc Z(\mc R^{(0)})+\frac{\alpha^2\beta^2\kappa^2 t}{8} \geq \alpha^2+\frac{\alpha^2\beta^2\kappa^2 t}{8} \geq \alpha^2+\frac{\gamma^6 t}{8} .\]
	This is a contradiction for $t=T$.
	Hence, for some $t$, the generalized random restriction $\mc R^{(t)}$ satisfies the statement of the lemma.	
\end{proof}


\subsection{Combining Together: Hard Coordinates under Random Restrictions}\label{sec:pairwise_combining}

We combine the results in the above sections, and prove the following lemma.
It shows that in the game $\mc G^{\otimes n}$, conditioned on a large product event $E$, we can find hard coordinates after a generalized random restriction.

\begin{lemma}\label{lemma:pairwise_hard_coor_after_rr}
	Let $1\leq C\leq k^{O(k)}$ be a constant so that Corollary~\ref{corr:pair_conn_bklm} with respect to the probability space $(\mc X, Q)$ holds with $C$ logarithms.
	Let $n\in \N$ be sufficiently large; let $\alpha, \gamma\in (0,1)$ be such that $(\log\log\log\log n)^{-1}\leq \gamma \leq (\log\log\log \log \log n)^{-1}$, and $\alpha \geq \frac{1}{\log\log\cdots\log n}$ where the number of logarithms is $C+6$.
	Let $\kappa\in (0,1)$ be such that $\kappa\geq \alpha$.
	 
	Consider the game  $\mc G^{\otimes n} = (\mc X^{\otimes n}, \mc A^{\otimes n}, Q^{\otimes n}, V^{\otimes n})$, and fix any strategy for the $k$ players in this game.
	Let $E=E^1\times \dots \times E^k \subseteq  \mc X^{\otimes n}$ be a product event with $\Pr_{Q^{\otimes n}}[E]\geq \alpha$.
	Then, there exists a $\brac{n', \frac{1}{n'}}$-generalized random restriction $\mc R$ on $\mc X^{\otimes n}$, with $n' \geq n^{\exp(-1/\gamma^{10})}$, such that with probability at least $1-\kappa$ over $\rho \sim \mc R|E$,\footnote{This conditional random restriction is well-defined as $\frac{1}{n'} < \alpha$.} it holds that:
	
	Let $m=m(\rho)$ be the number of free coordinates in $\rho$.
	Consider the game $\mc G^{\otimes n}$ when the inputs to the $k$-players are conditioned to be in the set $E_\rho$; this is the same as $\mc G^{\otimes m}= (\mc X^{\otimes m}, \mc A^{\otimes m},P=Q^{\otimes m}, V^{\otimes m})$.
	Let $E'\subseteq \mc X^{\otimes m}$ be the corresponding restricted event, and also consider restricted strategies for the game $\mc G^{\otimes m}$, as in Definition~\ref{defn:game_rep_gr}.
	Let $X$ be the random variable denoting the questions to the $k$ players in this game $\mc G^{\otimes m}$, and let $A$ be the random variable denoting their answers with respect to the above strategies.
	For each $i\in [m]$, let $\Win_i$ be the event $V(X_i,A_i)=1$, denoting that the players win coordinate $i$ of the game.
	Then, we have:
	\begin{enumerate}
		\item $\Pr_{Q^{\otimes m}}[E'] = \Pr_{Q^{\otimes n}}[E|E_\rho]\geq \frac{\kappa}{4}\cdot \Pr_{Q^{\otimes n}}[E]$.
		\item For each $i\in [m]$,
		\[ \Pr_{Q^{\otimes m}}\sqbrac{\Win_i | E'} \leq \val(\mc G)+o(1) \leq 1-\Omega(1). \]
	\end{enumerate}	
\end{lemma}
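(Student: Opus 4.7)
The plan is to stitch together Lemma~\ref{lemma:pairwise_rr_information} and Proposition~\ref{prop:pairwise_hard_coor_under_assum}: the first produces a random restriction under which the pseudorandomness hypothesis \eqref{eq:pseudo_beta} is satisfied simultaneously at \emph{every} coordinate of the restricted game, and the second converts that hypothesis into coordinate-wise hardness. First I would invoke Lemma~\ref{lemma:pairwise_rr_information} with the given parameters $\alpha, \gamma, \kappa$ and the choice $\beta := \gamma$ (this satisfies $\beta \geq \gamma$ as required there, and $\beta = o(1)$ as required by the proposition). This produces a $(n', 1/n')$-generalized random restriction $\mc R$ with $n' \geq n^{\exp(-1/\gamma^{10})} \geq 2^{\sqrt{\log n}}$ such that, with probability at least $1-\kappa$ over $\rho \sim \mc R | E$, both $\Pr_{Q^{\otimes m}}[E'] \geq \tfrac{\kappa}{4}\Pr[E]$ (which is conclusion~(1) of the target lemma) and
\[ \E_{\rho'\sim \mc R_i|E'}\Norm{P_{X_i|E',E_{\rho'}}-Q}_1 \leq \beta \]
hold for every coordinate $i \in [m]$, where $m = m(\rho)$.

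Next, for any such good $\rho$ I would fix the restricted game $\mc G^{\otimes m}$ (with event $E'$ and restricted strategies per Definition~\ref{defn:game_rep_gr}) and apply Proposition~\ref{prop:pairwise_hard_coor_under_assum} once per coordinate $i \in [m]$, passing through the same values of $\beta$ and $\gamma$. The pseudorandomness hypothesis of the proposition is precisely the conclusion above, so the remaining work is parameter bookkeeping at the new scale $m$: (i) $m$ is sufficiently large since $m \geq 2^{\sqrt{\log n}} \to \infty$; (ii) the window $(\log\log\log m)^{-1} \leq \gamma \leq (\log\log\log\log\log m)^{-1}$ required by the proposition at scale $m$ follows from the given window at scale $n$, because $m \geq 2^{\sqrt{\log n}}$ makes the $k$-fold iterated logarithms of $m$ and $n$ agree up to $O(1)$ for $k \geq 2$; and (iii) the probability threshold $\Pr[E'] \geq (\log\log\cdots\log m)^{-1/2}$ (with $C+5$ logs) needed by the proposition is met because $\Pr[E'] \geq \tfrac{\kappa\alpha}{4} \geq \tfrac{\alpha^2}{4}$ (using $\kappa \geq \alpha$), and the hypothesis $\alpha \geq (\log\log\cdots\log n)^{-1}$ with $C+6$ logs makes $\alpha^2/4$ comfortably dominate $(\log\log\cdots\log m)^{-1/2}$ with $C+5$ logs. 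Applying the proposition at each $i$ yields $\Pr[\Win_i \mid E'] \leq \val(\mc G) + o_m(1) \leq 1 - \Omega(1)$, which is conclusion~(2), the final inequality using $\val(\mc G) < 1$.

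The hard part is nothing more than this bookkeeping, but it is delicate: one must track how the ``slow'' loss of one iterated logarithm when passing from $\alpha$ to $\alpha^2$ interacts with the even slower shrinkage of iterated logarithms when $n$ is replaced by $m \geq 2^{\sqrt{\log n}}$, and one must verify that each $o(1)$ slack term inside Proposition~\ref{prop:pairwise_hard_coor_under_assum} remains $o(1)$ after composition. The hypotheses of the present lemma are calibrated precisely to make all these estimates go through with room to spare, so the argument collapses to the two-step composition above once the parameter comparisons are carried out.
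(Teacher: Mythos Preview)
Your proposal is correct and matches the paper's own proof essentially step for step: invoke Lemma~\ref{lemma:pairwise_rr_information} with $\beta=\gamma$ to obtain $\mc R$, then apply Proposition~\ref{prop:pairwise_hard_coor_under_assum} at each coordinate of the restricted game, with the same parameter verifications (the $\gamma$-window at scale $m$ via $m\geq 2^{\sqrt{\log n}}$, and $\Pr[E']\geq \alpha^2/4$ dominating the required threshold with $C+5$ logarithms at scale $m$).
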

\begin{proof}
	Let $\mc R$ be the $(n',\frac{1}{n'})$-generalized random restriction, for $n'=n^{\exp(-1/\gamma^{10})}$, as in Lemma~\ref{lemma:pairwise_rr_information}, with the choice $\beta = \gamma$;\footnote{It is fine to choose $\beta$ to be any anything that satisfies $\gamma\leq \beta\leq o(1)$.} note that the assumption $\alpha, \kappa\geq \gamma$ is satisfied.
	This satisfies that with probability $1-\kappa$ over $\rho\sim \mc R|E$:
	
	Let $m:=m(\rho)$, and let $E'$ and player strategies for $\mc G^{\otimes m}$ be defined as in the lemma statement. Then, we know
	\begin{enumerate}
		\item $\Pr[E'] = \Pr[E|E_\rho]\geq \frac{\kappa}{4}\cdot \Pr[E]$.
		\item For each $i\in [m]$, let $\mc R_i$ be the $(\frac{1}{\eta}, \eta)$-generalized random restriction as in Lemma~\ref{lemma:pairwise_pseud_fn_dist} (with respect to $E'$, the restricted player strategies, and the parameter $\gamma$; and $\eta = \eta(m,\gamma)= m^{-\exp(-1/\gamma^4)}$).
			Then, \[\E_{\rho'\sim \mc R_i|E'}\Norm{P_{X_i|E',E_{\rho'}}-Q}_1 \leq \beta.\]		
	\end{enumerate}	
	Now, the result will follow by applying Proposition~\ref{prop:pairwise_hard_coor_under_assum} on the game $\mc G^{\otimes m}$, with respect to the event $E'$. We verify that the parameter assumptions in Section~\ref{sec:pairwise_one_coord_hard_assum} hold, as follows:
	\begin{enumerate}
		\item $C$ is the constant for Corollary~\ref{corr:pair_conn_bklm} with respect to the space $(\mc X, Q)$.
		\item $\beta \leq o(1)$ holds as $\beta=\gamma$, and $m\to\infty$ as $n\to \infty$.
		\item We have $(\log\log\log\log n)^{-1}\leq \gamma \leq (\log\log\log \log \log n)^{-1}$, and $n\geq m\geq n^{\exp(-1/\gamma^{10})} \geq 2^{\sqrt{\log n}}$. Hence, it holds that $(\log\log\log m)^{-1} \leq \gamma \leq (\log\log\log\log\log m)^{-1}$.
		\item $\eta = \eta(m,\gamma) = m^{-\exp(-1/\gamma^4)} $ is as in Lemma~\ref{lemma:pairwise_pseud_fn_dist}, as required.
		\item We know $\alpha \geq \frac{1}{\log\log\cdots\log n}$ where the number of logarithms is $C+6$, and $\Pr[E] \geq \alpha$.
		Hence, 
		\[ \Pr[E']\geq \frac{\kappa}{4}\cdot \Pr[E] \geq \frac{\alpha^2}{4} \geq \frac{1}{4\cdot (\underbrace{\log\log\cdots\log}_{C+6} n)^2}\geq \frac{1}{(\underbrace{\log\log\cdots\log}_{C+5} m)^{1/2}}.\qedhere \]
	\end{enumerate}	
\end{proof}


\subsection{Final Induction}\label{sec:pairwise_induction}

In this section, we finally prove the main theorem via an inductive argument.
The goal is to prove that for sufficiently large $N\in \N$, it holds that $\val(\mc G^{\otimes N}) \leq \frac{1}{\underbrace{\log\log\cdots\log}_{O(1)} N}$.
For this purpose, we shall fix some large enough $N\in \N$ for the remainder of this section.
Also, we fix the following parameters:
\[ \alpha := \frac{1}{\underbrace{\log\log\cdots\log}_{C+7} N},\quad \gamma := \frac{1}{\log\log\log\log\log N},\quad c=\frac{1-\val(\mc G)}{2} \geq \Omega(1),\]
where $1\leq C\leq k^{O(k)}$ is a constant so that Corollary~\ref{corr:pair_conn_bklm} with respect to the space $(\mc X, Q)$ holds with $C$ logarithms.
We make the following definition:

\begin{definition}\label{defn:pairwise_game_val}
	For integer $1\leq n\leq N$, and $\alpha\in [0,1]$, we define the quantity $\val(n, \alpha)$ as \[ \val(n, \alpha):= \max_{n\leq n'\leq N} \max_E \val\brac{\mc G^{\otimes n'}|E}, \]
	where the second maximum is over product events $E = E^1\times \dots \times E^k \subseteq \mc X^{\otimes n'}$ satisfying $\Pr_{Q^{\otimes n'}}[E]\geq \alpha$, and $\mc G^{\otimes n'}|E:= (\mc X^{\otimes n'}, \mc A^{\otimes n'}, Q^{\otimes n'}|E, V^{\otimes n'})$ is the game $\mc G^{\otimes n'}$ with the questions to the $k$ players drawn conditioned on $E$.

	Note that by definition it holds that $\val(\mc G^{\otimes N}) = \val(N, 1)$.
\end{definition}

We start by proving a simple lemma, which was (implicitly) used in Lemma~\ref{lemma:prod_set_hard_coor}.
\begin{lemma}\label{lemma:pairwise_induction_condition_on_i}
	For any integer $n\leq N$, consider the game $\mc G^{\otimes n} = (\mc X^{\otimes n}, \mc A^{\otimes n}, Q^{\otimes n}, V^{\otimes n})$, and
	fix any strategy for the $k$ players in this game.
	For each $i\in [n]$, let $\Win_i$ be the event that this strategy wins the $i$\textsuperscript{th} coordinate of the game.
	Let $\Win = \land_{i=1}^n \Win_i$.
	
	Let $E=E^1\times \dots \times E^k \subseteq \mc X^{\otimes n}$ be a product event, and let $i\in [n]$.
	Then, for any $\theta\in [0,1]$, we have 
	\[ \Pr[\Win \st E] \leq  \Pr[\Win_i\st E]\cdot  \val(n, \theta \Pr[E]) + \abs{\mc X}\abs{\mc A}\theta.\]
\end{lemma}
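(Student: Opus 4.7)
The plan is to decompose $\Pr[\Win\mid E]$ by conditioning on the question-answer pair $(X_i,A_i)$ in coordinate $i$, and then bound each piece using the quantity $\val(n,\cdot)$. The main point to get right is that after such conditioning, the event we condition on remains a product event across the $k$ players, so that $\val(n,\cdot)$ is applicable.

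First, by Fact~\ref{fact:rand_no_help}, I may assume the players' strategy is deterministic, so that for each $x^j\in\mc X^j$ and $a^j\in\mc A^j$ the event $\{X_i^j=x^j,\ A_i^j=a^j\}$ depends only on $X^j$. Combined with the fact that $E=E^1\times\cdots\times E^k$ is already a product event, this shows that for every $(x,a)\in\mc X\times\mc A$, the event
\[ E_{x,a} \;:=\; E\cap\{X_i=x\}\cap\{A_i=a\} \]
is a product event across the $k$ players. Moreover, when $V(x,a)=1$, the event $\Win_i$ holds identically on $E_{x,a}$, so $\Win\equiv\Win_{-i}:=\bigwedge_{j\neq i}\Win_j$ on $E_{x,a}$.

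Writing $\Pr[\Win\mid E]=\sum_{(x,a):V(x,a)=1}\Pr[X_i=x,A_i=a\mid E]\cdot\Pr[\Win\mid E_{x,a}]$, I will split the sum into ``good'' pairs $(x,a)$ satisfying $\Pr[E_{x,a}]\geq\theta\Pr[E]$ and ``bad'' pairs satisfying $\Pr[E_{x,a}]<\theta\Pr[E]$. For any good pair, since $E_{x,a}$ is a product event of measure at least $\theta\Pr[E]$, the fixed strategy for $\mc G^{\otimes n}$ achieves acceptance probability at most $\val(\mc G^{\otimes n}\mid E_{x,a})\leq\val(n,\theta\Pr[E])$; thus $\Pr[\Win\mid E_{x,a}]\leq\val(n,\theta\Pr[E])$. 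Summing the good contributions and pulling this factor out, what remains is at most $\sum_{(x,a):V(x,a)=1}\Pr[X_i=x,A_i=a\mid E]=\Pr[\Win_i\mid E]$, yielding the first term of the claimed bound.

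For the bad pairs, the defining inequality $\Pr[E_{x,a}]<\theta\Pr[E]$ rearranges to $\Pr[X_i=x,A_i=a\mid E]<\theta$, and using the trivial bound $\Pr[\Win\mid E_{x,a}]\leq 1$ together with the fact that there are at most $\abs{\mc X}\abs{\mc A}$ pairs $(x,a)$ gives a total contribution of at most $\abs{\mc X}\abs{\mc A}\theta$. Combining the two bounds yields the lemma. The argument is essentially a one-shot accounting step, so I do not anticipate any real obstacle; the only subtlety is the observation that deterministic strategies make $\{A_i=a\}$ a product event, which is what allows $E_{x,a}$ to feed into the definition of $\val(n,\cdot)$.
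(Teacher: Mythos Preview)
Your proposal is correct and follows essentially the same approach as the paper: decompose over the question--answer pair $(X_i,A_i)$ in coordinate $i$, split into pairs with conditional mass at least $\theta$ versus less than $\theta$, apply the definition of $\val(n,\cdot)$ to the former (using that $E\cap\{X_i=x,A_i=a\}$ is a product event), and crudely bound the latter by $\abs{\mc X}\abs{\mc A}\theta$. The only cosmetic difference is that you explicitly invoke determinism of the strategy to justify that $\{A_i=a\}$ is a product event, whereas the paper leaves this implicit since strategies are deterministic by definition in its setup.
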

\begin{proof}
	Let $X$ be the random variable denoting the questions to the players in $\mc G^{\otimes n}$, and let $A$ be the random variable denoting their answers with respect to the fixed strategy.
	
	Fix some $i\in [n]$, and let $Z = (X_i,A_i)$ be the random variable denoting the tuple of questions and answers in coordinate $i$.
	Let $\mc T = \set{z \in \mc X\times \mc A: V(z)=1}$ be the set of winning question and answer pairs in the base game $\mc G$, and let $\mc T'\subseteq \mc T$ consist of $z$ such that $\Pr[Z=z \st E] \geq \theta$.
	Then, it holds that
	\begin{align*}
			\Pr[\Win \st  E] &= \Pr[\Win \land\Win_i \st E]
			\\&= \sum_{z\in \mc T} \Pr[\Win \land (Z=z) \st E]
			\\&\leq \sum_{z\in \mc T'} \Pr[\Win \land (Z=z) \st E] + \abs{\mc X}\abs{\mc A}\theta 
			\\&= \sum_{z\in \mc T'}\Pr[Z=z\st E]\cdot  \Pr[\Win  \st E,Z=z] + \abs{\mc X}\abs{\mc A}\theta 
			\\&\leq  \sum_{z\in \mc T'}\Pr[Z=z\st E]\cdot  \val(n,\ \theta \Pr[E]) + \abs{\mc X}\abs{\mc A}\theta
			\\&\leq \Pr[\Win_i\st E]\cdot  \val(n, \theta \Pr[E]) + \abs{\mc X}\abs{\mc A}\theta.\qedhere
	\end{align*}
	We used that for every $z\in\mc T'$, it holds that $E,Z=z$ is a product event with measure $\Pr[E, Z=z]=\Pr[E]\cdot \Pr[Z=z|E]\geq \Pr[E]\cdot \theta$.
\end{proof}

Now, using Lemma~\ref{lemma:pairwise_hard_coor_after_rr}, we prove the following inductive lemma:

\begin{lemma}\label{lemma:pairwise_induction}
	Let $\theta\in (0,1)$ be a paramter satisfying $4\theta^3 \geq \alpha$.
	
	Let $n\in \N,\ \mu\in [0,1]$ be such that $2^{\sqrt{\log N}}\leq n\leq N$ and $\mu \geq \alpha$. Then, it holds that
	\[ \val(n, \mu) \leq (1-c)\cdot \val\brac{\big\lceil n^{\exp(-1/\gamma^{10})} \big\rceil, \theta^6\mu} + 6\abs{\mc X}\abs{\mc A}\cdot\theta^3.\]
\end{lemma}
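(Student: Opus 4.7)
The plan is to derive this lemma by chaining Lemma~\ref{lemma:pairwise_hard_coor_after_rr} (to produce a generalized random restriction after which every coordinate of the game is hard) with Lemma~\ref{lemma:pairwise_induction_condition_on_i} (to pay a $(1-c)$ factor by conditioning on one such hard coordinate). Fix any $n' \in [n, N]$, any product event $E \subseteq \mc X^{\otimes n'}$ of measure $\mu' \geq \mu$, and any strategy for $\mc G^{\otimes n'}$; I must show that the resulting winning probability conditioned on $E$ is at most the RHS.

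First, I apply Lemma~\ref{lemma:pairwise_hard_coor_after_rr} to this data with the choice $\kappa = 4\theta^3$. The hypothesis $4\theta^3 \geq \alpha$ supplies $\kappa \geq \alpha$, while $\mu' \geq \alpha$ supplies the required lower bound on $\Pr[E]$; the conditions on $\alpha$ and $\gamma$ follow from the global choices in Section~\ref{sec:pairwise_induction} once one uses $n \geq 2^{\sqrt{\log N}}$ to pass from $\log$-towers in $N$ to $\log$-towers in $n$. The lemma returns a generalized random restriction $\mc R$ on $\mc X^{\otimes n'}$ with $m(\rho) \geq n^{\exp(-1/\gamma^{10})}$ such that, with probability at least $1-4\theta^3$ over $\rho \sim \mc R \mid E$, the restricted game $\mc G^{\otimes m(\rho)}$ and the restricted event $E' \subseteq \mc X^{\otimes m(\rho)}$ (in the sense of Definition~\ref{defn:game_rep_gr}) satisfy $\Pr[E'] \geq \theta^3 \mu'$ and $\Pr[\Win_i \mid E'] \leq \val(\mc G) + o(1) \leq 1 - c$ for every $i \in [m(\rho)]$ (the last inequality using that $N$ and hence $n$ is large).

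For each such ``good'' $\rho$, I invoke Lemma~\ref{lemma:pairwise_induction_condition_on_i} inside the restricted game, with any hard coordinate $i$ and parameter $\theta^3$, to obtain
\[
\Pr[\Win \mid E'] \;\leq\; (1-c)\cdot\val\!\bigl(m(\rho),\, \theta^3\Pr[E']\bigr) \;+\; \abs{\mc X}\abs{\mc A}\theta^3.
\]
Since $m(\rho) \geq \lceil n^{\exp(-1/\gamma^{10})}\rceil$ and $\theta^3\Pr[E'] \geq \theta^6 \mu$, and $\val(\cdot,\cdot)$ is non-increasing in both arguments (larger first argument restricts the range of $n'$; larger second argument restricts the family of events), I replace the right-hand $\val$ with $\val\bigl(\lceil n^{\exp(-1/\gamma^{10})}\rceil,\, \theta^6\mu\bigr)$. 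On the ``bad'' event $\rho$ of mass at most $4\theta^3$, I use the trivial bound $\Pr[\Win \mid E, E_\rho] \leq 1$. Averaging over $\rho \sim \mc R\mid E$ and correcting $\Pr[\Win \mid E]$ against $\E_{\rho \sim \mc R\mid E}\Pr[\Win \mid E, E_\rho]$ via Lemma~\ref{lemma:cond_grr_prop} (which costs $2/(n'' \mu')$, negligible under $n \geq 2^{\sqrt{\log N}}$), the total additive error is at most $\abs{\mc X}\abs{\mc A}\theta^3 + 4\theta^3 + o(\theta^3) \leq 6\abs{\mc X}\abs{\mc A}\theta^3$, delivering the claim.

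The main obstacle I anticipate is not in the architecture of the argument, which is a short two-lemma chain, but in verifying that the quantitative regimes of Lemma~\ref{lemma:pairwise_hard_coor_after_rr} are genuinely satisfied throughout the induction. Specifically, one must check that the $\log$-tower bounds on $\alpha$ and the sandwich bounds on $\gamma$ stated in terms of the \emph{global} parameter $N$ imply the corresponding bounds in terms of the \emph{local} size $n$ of the subgame, and that the generalized-restriction error $1/n''$ (which is only $\approx 2^{-\sqrt{\log n}} \cdot \exp(-1/\gamma^{10})$) is small enough to be absorbed into the slack $6\abs{\mc X}\abs{\mc A}\theta^3$. Both points are standard once the parameter hierarchy of Section~\ref{sec:pairwise_induction} is applied, but writing them out carefully is the only substantive part of the proof.
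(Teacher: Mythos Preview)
Your proposal is correct and follows essentially the same approach as the paper: apply Lemma~\ref{lemma:pairwise_hard_coor_after_rr} with $\kappa = 4\theta^3$, then for each good $\rho$ invoke Lemma~\ref{lemma:pairwise_induction_condition_on_i} with parameter $\theta^3$, and finally average over $\rho \sim \mc R\mid E$ using Lemma~\ref{lemma:cond_grr_prop}, collecting the three additive errors $\abs{\mc X}\abs{\mc A}\theta^3 + 4\theta^3 + \theta^3 \leq 6\abs{\mc X}\abs{\mc A}\theta^3$. The paper's proof is identical in structure, including the parameter verifications you flag (passing from $\log$-towers in $N$ to $\log$-towers in $n$ via $n \geq 2^{\sqrt{\log N}}$) and the observation that one must work with an arbitrary $n' \in [n,N]$ to match Definition~\ref{defn:pairwise_game_val}.
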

\begin{proof}
	Let $\theta$ be as in the lemma statement.
	Let $2^{\sqrt{\log N}}\leq n\leq N$, and $\mu \geq \alpha$.
	Consider the game $\mc G^{\otimes n} = (\mc X^{\otimes n}, \mc A^{\otimes n}, Q^{\otimes n}, V^{\otimes n})$, and let $E = E^1\times \dots \times E^k \subseteq \mc X^{\otimes n}$ be a product event such that $\Pr_{Q^{\otimes n}}[E]\geq \mu$.
	Fix any strategy for the $k$ players for the game $\mc G^{\otimes n}$, and let $W$ be the event that the players win this game.
	It suffices to show\footnote{Formally, to actually prove the lemma one can apply this claim to an arbitrary $n'$ satisfying $n\leq n'\leq N$, and $E\subseteq \mc X^{\otimes n'}$, and then use that $\big\lceil n'^{\exp(-1/\gamma^{10})} \big\rceil \geq \big\lceil n^{\exp(-1/\gamma^{10})} \big\rceil$.} that 
	\[ \Pr[W|E]  \leq (1-c)\cdot \val\brac{\big\lceil n^{\exp(-1/\gamma^{10})} \big\rceil, \theta^6\mu} + 6\abs{\mc X}\abs{\mc A}\cdot\theta^3.\]
	
	First, we apply Lemma~\ref{lemma:pairwise_hard_coor_after_rr} (with $\kappa=4\theta^3$), to find a $\brac{\frac{1}{\eta}, \eta}$-generalized random restriction $\mc R$ on $\mc X^{\otimes n}$, with $\eta \leq n^{-\exp(-1/\gamma^{10})} \leq 2^{-(\log N)^{1/4}}$, such that with probability at least $1-4\theta^3$ over $\rho \sim \mc R|E$:	
	Let $m=m(\rho)$ be the number of free coordinates in $\rho$.
	Consider the game $\mc G^{\otimes n}$ when the inputs to the $k$-players are conditioned to be in the set $E_\rho$; this is the same as $\mc G^{\otimes m}= (\mc X^{\otimes m}, \mc A^{\otimes m},P=Q^{\otimes m}, V^{\otimes m})$.
	Let $E'\subseteq \mc X^{\otimes m}$ be the corresponding restricted event, and also consider restricted strategies for the game $\mc G^{\otimes m}$, as in Definition~\ref{defn:game_rep_gr}.
    Let $X$ be the random variable denoting the questions to the $k$ players in this game $\mc G^{\otimes m}$, and let $A$ be the random variable denoting their answers with respect to the above strategies.
	For each $i\in [m]$, let $\Win_i$ be the event $V(X_i,A_i)=1$ denoting that the players win coordinate $i$ of the game.
	Then, we have:
	\begin{enumerate}
		\item $\Pr[E'] = \Pr[E|E_\rho]\geq \theta^3\cdot \Pr[E] \geq \theta^3\mu$.
		\item For each $i\in [m]$,
		$\Pr_{Q^{\otimes m}}\sqbrac{\Win_i | E'} \leq \val(\mc G)+o(1) \leq 1-c. $
	\end{enumerate}	
 	Note that the lemma is applicable by the choice of parameters as:
 	\begin{enumerate}
 		\item $\kappa=4\theta^3 \geq \alpha$ by the lemma hypothesis.
 		\item Note that $2^{\sqrt{\log N}}\leq n\leq N$, and so $\frac{1}{2}\log\log N\leq \log\log n \leq \log\log N$.
 		\begin{enumerate}
 			\item $\alpha := \frac{1}{\log\log\cdots\log N}$ where the number of logarithms is $C+7$, and hence $\Pr[E]\geq \mu \geq \alpha\geq \frac{1}{\log\log\cdots\log n}$ where the number of logarithms is $C+6$.
 			\item $\gamma = \frac{1}{\log\log\log\log\log N}$ and hence $(\log\log\log\log n)^{-1}\leq \gamma \leq (\log\log\log \log \log n)^{-1}$.
 		\end{enumerate}
 	\end{enumerate}
 	
 	Now, let $\Gamma$ be the set of all $\rho$ for which the above properties hold.
 	Consider any $\rho\in \Gamma$ and let $m=m(\rho)$.
 	Denoting $\Win=\land_{i=1}^m \Win_i$, we have by Lemma~\ref{lemma:pairwise_induction_condition_on_i} (with parameters $m, \theta^3$), that
 	\begin{align*}
 		\Pr_{Q^{\otimes m}}[\Win \st E'] &\leq \Pr_{Q^{\otimes m}}\sqbrac{\Win_i | E'}\cdot \val(m,\ \theta^3\Pr[E']) + \abs{\mc X}\abs{\mc A}\theta^3
 		\\&\leq (1-c)\cdot \val(m,\  \theta^6\mu) + \abs{\mc X}\abs{\mc A}\theta^3
 		\\&\leq (1-c)\cdot \val\brac{\big\lceil n^{\exp(-1/\gamma^{10})} \big\rceil,\ \theta^6\mu} + \abs{\mc X}\abs{\mc A}\theta^3
 	\end{align*}
	Finally, by Lemma~\ref{lemma:cond_grr_prop}, in the game $\mc G^{\otimes n}$ we have that
	\begin{align*}
		\Pr_{Q^{\otimes n}}[W|E] &\leq \E_{\rho\sim \mc R|E}\sqbrac{\Pr[W\st E, E_{\rho}]} + \frac{2 \cdot 2^{-(\log N)^{1/4}}}{\Pr[E]}
		\\&\leq \E_{\rho\sim \mc R|E}\sqbrac{\Pr[W\st E, E_{\rho}]} + \theta^3
		\\&\leq (1-c)\cdot \val\brac{\big\lceil n^{\exp(-1/\gamma^{10})} \big\rceil,\  \theta^6\mu} + \abs{\mc X}\abs{\mc A}\theta^3 + \Pr_{\rho\sim \mc R|E}[\rho\not\in \Gamma]+ \theta^3
		\\&\leq (1-c)\cdot \val\brac{\big\lceil n^{\exp(-1/\gamma^{10})} \big\rceil,\  \theta^6\mu} + \abs{\mc X}\abs{\mc A}\theta^3 + 4\theta^3+ \theta^3,
		\\&\leq (1-c)\cdot \val\brac{\big\lceil n^{\exp(-1/\gamma^{10})} \big\rceil,\  \theta^6\mu} + 6\abs{\mc X}\abs{\mc A}\cdot\theta^3.
	\end{align*}
	We used that for each $\rho\in \Gamma$, the quantity $\Pr[W\st E, E_\rho]$ is bounded by the quantity $\Pr_{Q^{\otimes m}}[\Win|E']$ analyzed before.
\end{proof}

Next, we complete the proof of the main theorem:
\begin{proof}[Proof of Theorem~\ref{thm:pairconn_parrep}]
	The proof shall follow by applying Lemma~\ref{lemma:pairwise_induction} iteratively.
	We define the parameters
	\[ \theta := \frac{1}{\log\brac{\frac{1}{\alpha}}}=\frac{1}{\underbrace{\log\log\cdots\log}_{C+8} N},\quad T = \Big\lceil{\frac{2}{c} \log\log\brac{\frac{1}{\alpha}}}\Big\rceil. \]
	Note that this satisfies $4\theta^3\geq \alpha$.
	
	We show by induction that for every $t=0,1,2,\dots, T$, it holds that 
	\[ \val(\mc G^{\otimes N}) \leq (1-c)^t\cdot  \val\brac{\big\lceil N^{\exp(-t/\gamma^{10})} \big\rceil,\ \theta^{6t}} + 6\abs{\mc X}\abs{\mc A}\theta^3 t. \]
	The base case $t=0$ follows from the observation that $\val(\mc G^{\otimes N}) = \val(N, 1)$.
	
	For the inductive step, consider any $t=0,\dots, T-1$, and assume that the statement holds for $t$.		
	Let $n = \big\lceil N^{\exp(-t/\gamma^{10})} \big\rceil,\ \mu = \theta^{6t}$; observe that by the choice of parameters, these satisfy $2^{\sqrt{\log N}}\leq n\leq N$ and $\mu\geq \alpha$.
	By the inductive hypothesis, and Lemma~\ref{lemma:pairwise_induction}, we get
	\begin{align*}
		\val(\mc G^{\otimes N}) &\leq (1-c)^t\cdot  \val\brac{\big\lceil N^{\exp(-t/\gamma^{10})} \big\rceil,\ \theta^{6t}} + 6\abs{\mc X}\abs{\mc A}\theta^3 t
		\\&= (1-c)^t \cdot \val(n, \mu) + 6\abs{\mc X}\abs{\mc A}\theta^3 t
		\\&\leq (1-c)^t \cdot \brac{(1-c)\cdot \val\brac{ \big\lceil n^{\exp(-1/\gamma^{10})} \big\rceil,\ \theta^6\mu} + 6\abs{\mc X}\abs{\mc A}\theta^3} + 6\abs{\mc X}\abs{\mc A}\theta^3 t
		\\&\leq (1-c)^{t+1}  \cdot \val\brac{\mc G, \big\lceil N^{\exp(-(t+1)/\gamma^{10})} \big\rceil,\  \theta^{6(t+1)}} + 6\abs{\mc X}\abs{\mc A}\theta^3\cdot  (t+1),
	\end{align*}
as desired.

Finally, applying the above claim for $t=T$, we get
\begin{align*}
	\val(\mc G^{\otimes N})& \leq (1-c)^T\cdot  \val\brac{ \big\lceil N^{\exp(-T/\gamma^{10})} \big\rceil,\ \theta^{6T}} + 6\abs{\mc X}\abs{\mc A}\theta^3 T
	\\&\leq (1-c)^T + 6\abs{\mc X}\abs{\mc A}\cdot\theta^3\cdot T
	\\&\leq \brac{\log\brac{1/\alpha}}^{-2} + 6\abs{\mc X}\abs{\mc A}\cdot \brac{\log\brac{1/\alpha}}^{-3} \cdot \frac{4}{c} \log\log\brac{\frac{1}{\alpha}} 
	\\&\leq \brac{\log\brac{1/\alpha}}^{-1} = \frac{1}{\underbrace{\log\log\cdots\log}_{C+8} N}.
\end{align*}
Hence, the theorem holds with the constant $C+8 \leq k^{O(k)}$.	
\end{proof}

\appendix
\section{Some Useful Lemmas}

The following simple lemma appears as \cite[Lemma 8.4]{BKLM24a}.
\begin{lemma}\label{lemma:small_set_force_equality}
	Let $(\Sigma, \mu)$ be a finite probability space, and let $k\leq n\in \N$.
	For each $T\subseteq [n]$, let $E_T\subseteq \Sigma^n$ be the event defined as:
	\[ E_T = \set{x\in \Sigma^n : x_i=x_j \text{ for each }i,j\in T}.\]
	Now, let $S\subseteq [n]$ be any set, and let $1\leq k\leq \sqrt{\abs{S}}$ be an integer.
	Let $\nu$ be the distribution on $\Sigma^n$ obtained sampling $T\subseteq S, \abs{T}=k$ uniformly at random, and then sampling $x\sim \mu^{\otimes n}|E_T$.
	Then, it holds that \[ \Norm{\nu-\mu^{\otimes n}}_1 \leq \frac{Ck}{\sqrt{\abs{S}}} ,\]
	where $C$ is a constant depending on $\mu$.  
\end{lemma}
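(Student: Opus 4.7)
The plan is to bound $\Norm{\nu-\mu^{\otimes n}}_1$ by controlling the density $W(x):=\nu(x)/\mu^{\otimes n}(x)$ in $L^2$. By Bayes' rule and the observation that $\Pr_{\mu^{\otimes n}}[E_T]$ equals $Z:=\sum_y \mu(y)^k$ for every $k$-set $T$, I would first write
\[
 W(x) = \frac{1}{Z\binom{|S|}{k}}\sum_{U\subseteq S,\ |U|=k}\ind\sqbrac{x\text{ constant on }U},
\]
since the numerator simply counts the $k$-subsets $T\subseteq S$ for which $x\in E_T$. A direct computation gives $\E_{\mu^{\otimes n}}[W]=1$, so Cauchy--Schwarz yields $\Norm{\nu-\mu^{\otimes n}}_1 = \E|W-1| \leq \sqrt{\Var(W)}$, reducing the problem to a variance estimate.

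I would then expand the variance by grouping pairs $(U,U')$ by $r:=|U\cap U'|$. The covariance $\mathrm{Cov}(\ind_U,\ind_{U'})$ vanishes for $r=0$ (the indicators are independent when the sets are disjoint), and for $r\geq 1$ equals $Z_{2k-r}-Z^2$ with $Z_m:=\sum_y \mu(y)^m$, since requiring $x$ to be constant on both $U$ and $U'$ forces it to be constant on $U\cup U'$ of size $2k-r$. Normalizing,
\[
 \Var(W) = \frac{1}{Z^2}\sum_{r=1}^{k}h_r\brac{Z_{2k-r}-Z^2},
\]
where $h_r=\binom{k}{r}\binom{|S|-k}{k-r}/\binom{|S|}{k}$ is the hypergeometric PMF of $|U\cap U'|$ for two independent uniform $k$-subsets of $S$.

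To control this sum I plan to combine two elementary inequalities valid under $k\leq\sqrt{|S|}$: the hypergeometric tail bound $h_r \leq (2k^2/|S|)^r/r!$, which follows from $\binom{k}{r}\leq k^r/r!$ together with $\binom{|S|}{k-r}/\binom{|S|}{k}\leq (2k/|S|)^r$; and the comparison $Z_{2k-r}/Z^2 \leq p_{\max}^{-r}$, where $p_{\max}:=\max_y \mu(y)$, obtained from $Z_{2k-r}\leq p_{\max}^{k-r}Z$ and $Z\geq p_{\max}^k$. Dropping the $-Z^2$ term (it is non-positive since $Z^2\leq Z_{2k-r}$ by Cauchy--Schwarz for $r\geq 1$) and multiplying the two bounds, the series collapses to
\[
 \Var(W) \leq \sum_{r\geq 1}\frac{(2k^2/(|S|\,p_{\max}))^r}{r!} = \exp\!\brac{\frac{2k^2}{|S|\,p_{\max}}} - 1 \leq \frac{C^2 k^2}{|S|},
\]
where the last step linearizes the exponential using $2k^2/(|S|\,p_{\max})\leq 2/p_{\max}$ (a constant depending only on $\mu$), and $C$ absorbs this $\mu$-dependent constant. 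Taking square roots completes the proof.

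The main obstacle is that the naive bound $|Z_{2k-r}-Z^2|\leq Z$ introduces a multiplicative factor of $1/Z$, which can be exponentially small in $k$ (for instance $Z=2^{1-k}$ for the uniform distribution on a binary alphabet), so that the resulting variance bound would blow up with $k$. The key to avoiding this is the pairing of the $r$-th hypergeometric weight $(k^2/|S|)^r/r!$ with the sharper $\mu$-dependent bound $Z_{2k-r}/Z^2\leq p_{\max}^{-r}$: the two factors combine into a rapidly convergent series whose leading term already delivers the target $k^2/|S|$, and the assumption $k\leq\sqrt{|S|}$ is exactly what is needed to keep the argument of the exponential bounded by a $\mu$-dependent constant.
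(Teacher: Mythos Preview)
Your argument is correct. The density formula, the covariance computation $\mathrm{Cov}(\ind_U,\ind_{U'})=Z_{2k-r}-Z^2$ for $r=|U\cap U'|\ge 1$, the hypergeometric bound $h_r\le (2k^2/|S|)^r/r!$, and the key inequality $Z_{2k-r}/Z^2\le p_{\max}^{-r}$ all check out, and the exponential series indeed linearizes under $k\le\sqrt{|S|}$ to give $\Var(W)\le C^2k^2/|S|$. (One cosmetic point: your justification ``it is non-positive since $Z^2\le Z_{2k-r}$'' is slightly muddled; $-Z^2\le 0$ is trivial, and it is this that lets you drop the term. The Cauchy--Schwarz fact $Z^2\le Z_{2k-r}$ is true but unnecessary here.)

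The paper, however, takes a quite different route. It first decomposes the conditional distribution $\mu^{\otimes n}|E_T$ as a mixture over $a\in\Sigma$ (weighted by $\mu(a)^k/Z$) of the simpler distributions obtained by fixing $x_i=a$ for all $i\in T$, reducing to bounding $\Norm{\nu'_a-\mu^{\otimes n}}_1$ for each fixed letter $a$. It then proceeds by induction on $k$: the case $k=1$ is handled by a direct (variance) calculation on a single coordinate, and for $k>1$ one realizes a uniform $k$-subset as a uniform $(k-1)$-subset $T'$ together with one additional uniform point of $S\setminus T'$, picking up an additive $C/\sqrt{|S|}$ at each step. Your second-moment argument is more self-contained and makes the constant explicit (it depends only on $p_{\max}$), while the paper's inductive decoupling into a single fixed letter $a$ is shorter to state and conceptually closer to the way the lemma is used elsewhere. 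Both are valid; yours is arguably the cleaner full proof, whereas the paper's is more of a sketch whose details you would still need to fill in along essentially the same lines as your $k=1$ computation.
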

\begin{proof}
	For a fixed $a\in \Sigma$, let $\nu'$ be the distribution obtained sampling $T\subseteq S, \abs{T}=k$, and then sampling $x$ from $\mu^{\otimes n}$ conditioned on $x_i=a$ for each $i\in T$.
	It suffices to show $\Norm{\nu'-\mu^{\otimes n}}_1 \leq \frac{Ck}{\sqrt{\abs{S}}}$.
	This is proven by induction on $k$ as follows:
	\begin{enumerate}
		\item For $k=1$, this follows by direct calculation.
		\item For $k>1$, note that sampling $T\subseteq S$ can be done by sampling $T'\subseteq S,\ \abs{T'}=k-1$, and then sampling $t\in S\setminus T'$ and setting $T=T'\cup\set{t}$. The result follows by induction.\qedhere
	\end{enumerate}
\end{proof}

\begin{lemma}\label{lemma:complex_num_inequality}
	Let $x,y \in \C$ be such that $\abs{y}\leq 1$ and $\abs{x-y}\leq \epsilon$.
	Then, $\abs{x}^2 \geq \abs{y}^2 -2\epsilon.$
\end{lemma}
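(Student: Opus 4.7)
The plan is to reduce the inequality to a statement about the real magnitudes $|x|$ and $|y|$ via the reverse triangle inequality, and then handle two cases depending on whether $|y|$ is larger or smaller than $\epsilon$.

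First, I would observe that the reverse triangle inequality gives $\bigl||x|-|y|\bigr|\leq |x-y|\leq \epsilon$, so in particular $|x|\geq |y|-\epsilon$. This allows me to lower bound $|x|^2$ without worrying about the argument (phase) of $x$ or $y$; only the moduli matter. Thus the problem reduces to showing that $\max(0,|y|-\epsilon)^2 \geq |y|^2 - 2\epsilon$ under the assumption $|y|\leq 1$.

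Next I would split into two cases. If $|y|\geq \epsilon$, then $|x|^2\geq (|y|-\epsilon)^2 = |y|^2 - 2\epsilon|y|+\epsilon^2 \geq |y|^2 - 2\epsilon|y| \geq |y|^2 - 2\epsilon$, where the last step uses $|y|\leq 1$. If instead $|y|<\epsilon$, then $|y|^2 \leq |y| \leq \epsilon \leq 2\epsilon$ (using again $|y|\leq 1$, which forces $|y|^2\leq |y|$), so $|y|^2-2\epsilon\leq 0\leq |x|^2$ trivially. (The edge case $\epsilon>1$ is harmless: then $|y|\leq 1<\epsilon$ and the same bound applies.)

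There is no real obstacle here; the lemma is a routine consequence of the reverse triangle inequality together with the bound $|y|\leq 1$, which is precisely what is needed to convert the linear-in-$|y|$ error term $2\epsilon|y|$ into the claimed uniform error term $2\epsilon$. I would present the proof as a single short paragraph combining the two cases.
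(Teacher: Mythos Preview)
Your proof is correct. The paper takes a slightly different route: it expands $|x|^2 = |y+(x-y)|^2 = |y|^2 + |x-y|^2 + 2\operatorname{Re}(\overline{y}(x-y))$ directly and bounds the cross term by $-2|y|\,|x-y| \geq -2\epsilon$ in one line, avoiding any case analysis. Your approach instead passes through the reverse triangle inequality to reduce to real moduli and then splits on whether $|y|\geq\epsilon$. Both arguments are equally elementary; the paper's expansion is marginally slicker in that it never needs the case $|y|<\epsilon$, while your version makes the role of the hypothesis $|y|\leq 1$ perhaps a bit more transparent (it is exactly what turns $2\epsilon|y|$ into $2\epsilon$).
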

\begin{proof}
	We have
	\[ \abs{x}^2 = \abs{y+(x-y)}^2= \abs{y}^2+\abs{x-y}^2 - (y\cdot \overline{\brac{x-y}}-\overline{y}\cdot \brac{x-y})\geq \abs{y}^2+\epsilon^2 -2\abs{y}\epsilon \geq \abs{y}^2-2\epsilon.\qedhere \]
\end{proof}

\begin{lemma}\label{lemma:math_division}
	Let $x,y,a,b, \epsilon\in [0,1]$ be real numbers such that
	\[ b>0,\ y>0,\quad a\leq b,\ x\leq y,\quad a\leq x+\epsilon,\ b\geq y-\epsilon . \]
	Then,
		\[ \frac{a}{b} \leq \frac{x+4\epsilon}{y}. \]
\end{lemma}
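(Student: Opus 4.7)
The plan is to clear denominators and reduce the claim to a purely arithmetic inequality, then split into two cases based on the size of $b$ relative to $x+y$. Specifically, since $b,y>0$, the inequality $\frac{a}{b}\leq \frac{x+4\epsilon}{y}$ is equivalent to $ay-bx\leq 4b\epsilon$.

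The first step is to bound the left-hand side using the given hypotheses $a\leq x+\epsilon$, $b\geq y-\epsilon$, together with $x\geq 0$, $y\geq 0$, which yields
\[
ay - bx \leq (x+\epsilon)y - (y-\epsilon)x = \epsilon(x+y).
\]
So it suffices to establish $\epsilon(x+y)\leq 4b\epsilon$, i.e., $b\geq (x+y)/4$. This motivates the case split.

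In the first case $b\geq (x+y)/4$, the chain above closes immediately. In the remaining case $b<(x+y)/4$, I would combine this with $b\geq y-\epsilon$ to get $4(y-\epsilon)<x+y$, which rearranges to $3y<x+4\epsilon$, and hence $(x+4\epsilon)/y>3$; on the other hand $a/b\leq 1$ follows from $a\leq b$, so the target inequality holds trivially with room to spare.

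I do not expect a real obstacle: this is a routine arithmetic fact whose only subtlety is recognizing that a naive cross-multiplication fails when $b$ is very small (of order $\epsilon$), so one needs to handle that regime separately by observing that $(x+4\epsilon)/y$ then becomes large enough for the trivial bound $a/b\leq 1$ to suffice.
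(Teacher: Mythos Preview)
Your proof is correct. It takes essentially the same approach as the paper's: both arguments split into a trivial regime handled by $a/b\leq 1$ and a main regime where the key identity $(x+\epsilon)y-(y-\epsilon)x=\epsilon(x+y)$ (equivalently, the paper's computation of $\frac{x+\epsilon}{y-\epsilon}-\frac{x}{y}$) does the work. The only cosmetic difference is the case-split threshold: the paper splits on $y\leq 2\epsilon$ versus $y>2\epsilon$ and works directly with the fractions, whereas you clear denominators first and split on $b\geq (x+y)/4$ versus $b<(x+y)/4$.
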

\begin{proof}
	If $y\leq 2\epsilon$, the statement holds since $\frac{a}{b}\leq 1\leq 2\leq \frac{4\epsilon}{y} \leq \frac{x+4\epsilon}{y}$.
	Otherwise, if $y > 2\epsilon$, we have $y-\epsilon \geq \frac{y}{2}>0$, and hence
	\[ \frac{a}{b} - \frac{x}{y} \leq \frac{x+\epsilon}{y-\epsilon}-\frac{x}{y} = \frac{(x+y)\epsilon}{(y-\epsilon)y} \leq \frac{2y\epsilon}{\frac{y}{2}\cdot y} = \frac{4\epsilon}{y}. \qedhere\]
\end{proof}

\begin{fact} \label{fact:chernoff} (Chernoff Bounds, see~\cite{MU05} for reference)
	Let $X_1,\dots, X_n \in \set{0,1}$ be independent random variables each with mean $\mu$, and let $X = \sum_{i=1}^n X_i$. Then, for all $\delta \in (0,1)$, it holds that 
	\[\Pr\sqbrac{X \leq (1-\delta)\mu n} \leq e^{-\frac{\delta^2 \mu n}{2}},\]
	\[\Pr\sqbrac{X \geq (1+\delta)\mu n} \leq e^{-\frac{\delta^2 \mu n}{3}}.\]
\end{fact}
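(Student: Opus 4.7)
The plan is to prove both tails by the standard exponential moment (Chernoff) method: bound the moment generating function of $X$, apply Markov's inequality to a suitable exponential transform, optimize the free parameter, and then simplify the resulting expression via a Taylor-series comparison to obtain the clean Gaussian-type exponents $e^{-\delta^2 \mu n/3}$ and $e^{-\delta^2 \mu n/2}$.

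For the upper tail, I would fix $t>0$ and write
\[
\Pr[X \geq (1+\delta)\mu n] = \Pr\!\left[e^{tX} \geq e^{t(1+\delta)\mu n}\right] \leq \frac{\E[e^{tX}]}{e^{t(1+\delta)\mu n}}.
\]
By independence, $\E[e^{tX}] = \prod_{i=1}^n \E[e^{tX_i}] = (1 - \mu + \mu e^t)^n$, and the pointwise bound $1+x \leq e^x$ (with $x = \mu(e^t-1)$) gives $\E[e^{tX}] \leq e^{\mu n(e^t - 1)}$. Choosing $t = \ln(1+\delta)$ to minimize the resulting bound yields
\[
\Pr[X \geq (1+\delta)\mu n] \leq \left(\frac{e^\delta}{(1+\delta)^{1+\delta}}\right)^{\mu n}.
\]
For the lower tail, the symmetric argument (replacing $t$ by $-t$ with $t>0$ and using $\E[e^{-tX_i}] = 1 - \mu + \mu e^{-t}$, optimized at $t = -\ln(1-\delta)$) gives
\[
\Pr[X \leq (1-\delta)\mu n] \leq \left(\frac{e^{-\delta}}{(1-\delta)^{1-\delta}}\right)^{\mu n}.
\]

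To conclude, I would verify the two elementary calculus inequalities that convert these into the stated forms: for $\delta \in (0,1)$,
\[
\delta - (1+\delta)\ln(1+\delta) \leq -\frac{\delta^2}{3}, \qquad -\delta - (1-\delta)\ln(1-\delta) \leq -\frac{\delta^2}{2}.
\]
Both follow by taking the logarithm of the bound, expanding $\ln(1 \pm \delta)$ as a power series, and comparing term-by-term; the lower-tail inequality has a slightly better constant ($1/2$ versus $1/3$) because the series $-\delta - (1-\delta)\ln(1-\delta) = -\sum_{k\geq 2} \delta^k/(k(k-1))$ is an alternating-free sum dominated by its leading term $-\delta^2/2$, whereas the upper-tail series has mixed signs and only admits the weaker $-\delta^2/3$ bound on $(0,1)$.

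The only mildly technical step is the last one, verifying the two scalar inequalities on $(0,1)$; everything else is a routine application of Markov's inequality to $e^{\pm tX}$ together with independence. No further machinery from the paper is needed.
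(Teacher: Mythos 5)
Your proposal is correct and is exactly the standard moment-generating-function proof of the multiplicative Chernoff bounds found in the cited reference [MU05]; the paper itself states this as a Fact with a citation and gives no proof. The two scalar inequalities you defer to the end do hold on $(0,1)$ (the lower-tail one term-by-term from $-\delta-(1-\delta)\ln(1-\delta)=-\sum_{k\ge 2}\delta^k/(k(k-1))$, the upper-tail one by pairing consecutive terms of the alternating series or by a short derivative check), so there is no gap.
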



\section{An Inductive Parallel Repetition Criterion}\label{sec:inductive_crit}

The following lemma is an inductive parallel repetition criterion, similar to one in~\cite{Raz98}.
The proof is identical to the proof of Lemma 3.18 in~\cite{GHMRZ22}, and is included here for the sake of completeness.

\begin{lemma}\label{lemma:prod_set_hard_coor}
	Let $\mc G = (\mc X, \mc A, Q, V)$ be a $k$-player game, and consider its $n$-folds repetition $\mc G^{\otimes n} = (\mc X^{\otimes n}, \mc A^{\otimes n}, Q^{\otimes n}, V^{\otimes n})$ for some sufficiently large $n\in \N$.
	Fix an optimal strategy for the $k$ players in this game, and each for $i\in [n]$, let $\Win_i$ be the event that this strategy wins the $i$\textsuperscript{th} coordinate of the game.
	
	Suppose that $\epsilon,\alpha\in (0,1],\ \alpha\geq 2^{-n}$ are such that the following condition holds: For every product event $E=E^1\times \dots \times E^k \subseteq (\mc X^1)^{\otimes n}\times\dots\times (\mc X^k)^{\otimes n} = \mc X^{\otimes n}$ with $\Pr_{Q^{\otimes n}}[E]\geq \alpha$, there exists a coordinate $i\in [n]$ such that $\Pr\sqbrac{\Win_i\st E} \leq 1-\epsilon$.
	Then, it holds that \[\val(\mc G^{\otimes n}) \leq \brac{1-\frac{\epsilon}{2}}^{\frac{\log_2(1/\alpha)}{2\cdot \log_2(4\abs{\mc X} \abs{\mc A})} }.\]
	In particular, if $\epsilon>0$ is a constant, we get $\val(\mc G^{\otimes n}) \leq \alpha^{c}$, for some constant $c = c(\mc G)>0$.
\end{lemma}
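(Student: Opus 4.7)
The approach is an inductive decision-tree argument in the spirit of~\cite{Raz98}. By Fact~\ref{fact:rand_no_help}, I may assume without loss of generality that the fixed optimal strategies are deterministic, so each answer $A_i^j$ is a function of the question vector $X^j$ alone. The key structural observation is that any event of the form $\bigcap_{s}\{X_{i_s}^j = x_s^j,\ A_{i_s}^j = a_s^j\}$ decouples across players and is therefore a product event; this is what will let us iteratively apply the hypothesis.

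Concretely, I would build a decision tree of product events. The root is $E_0 = \Omega$. At a node $E_t$ of depth $t$: if $\Pr[E_t] < \alpha$, mark it as a (``dead'') leaf; otherwise, invoke the hypothesis to select a coordinate $i_{t+1}\in[n]$ with $\Pr[\Win_{i_{t+1}}\mid E_t] \leq 1-\epsilon$, and attach a child $E_{t+1}(x,a) := E_t \cap \{X_{i_{t+1}} = x,\ A_{i_{t+1}} = a\}$ for every winning pair $(x,a)\in \mc X\times \mc A$ with $V(x,a)=1$. Each child is again a product event by the observation above. Note that for every $s\leq t$, the event $E_t$ implies $\Win_{i_s}$, so $\Pr[\Win_{i_s}\mid E_t]=1$; the hypothesis therefore forces $i_{t+1}\notin\{i_1,\dots,i_t\}$ automatically, and the process can continue for at least $n$ steps. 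Stop the expansion at some depth $T$ to be chosen.

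The main payoff of this construction is that any $\omega \in \Win$ must trace a full trajectory through the tree: at each step $t$, the values $(X_{i_t}(\omega),A_{i_t}(\omega))$ satisfy $V=1$, so $\omega$ lies in a unique winning child and hence in some leaf. Therefore $\val(\mc G^{\otimes n}) = \Pr[\Win]\leq \sum_{\ell}\Pr[E_\ell]$, summed over leaves $\ell$. I would split this into two parts: (i) depth-$T$ leaves, whose masses obey the recursion $\sum_{\text{children of }E_t}\Pr[E_{t+1}] = \Pr[E_t\cap \Win_{i_{t+1}}] \leq (1-\epsilon)\Pr[E_t]$; iterating yields total depth-$T$ mass at most $(1-\epsilon)^T$; and (ii) early-stopped leaves, each of probability strictly less than $\alpha$, and at most $\sum_{t<T}(|\mc X||\mc A|)^t \leq 2(|\mc X||\mc A|)^T$ of them, contributing at most $2(|\mc X||\mc A|)^T \alpha$.

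Combining gives $\val(\mc G^{\otimes n}) \leq (1-\epsilon)^T + 2(|\mc X||\mc A|)^T \alpha$. Choosing $T = \bigl\lfloor \log_2(1/\alpha)/(2\log_2(4|\mc X||\mc A|))\bigr\rfloor$, one checks that $(4|\mc X||\mc A|)^T \leq \alpha^{-1/2}$, so $2(|\mc X||\mc A|)^T \alpha \leq 2\sqrt{\alpha}/4^T$. Using $(1-\epsilon/2)^T - (1-\epsilon)^T \geq (\epsilon T/2)(1-\epsilon)^{T-1}$ from the mean value theorem, for $\alpha \geq 2^{-n}$ with $n$ sufficiently large the residual $2\sqrt{\alpha}/4^T$ is absorbed into the gap, yielding $\val(\mc G^{\otimes n}) \leq (1-\epsilon/2)^T$ as required. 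There is no serious obstacle; the only conceptual subtlety is that the events $\Win_i$ are not themselves product events, but replacing $\Win_i$ by the finer conditioning on the pair $(X_i,A_i)$ preserves the product structure while still covering every $\omega \in \Win$, which is precisely what the decision-tree framework accomplishes.
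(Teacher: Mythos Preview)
Your decision-tree argument is essentially the paper's proof: both condition iteratively on the question--answer pair $(X_{i_t},A_{i_t})$ at a hard coordinate, observe that this remains a product event, and separate ``live'' nodes (probability $\ge\alpha$) from ``dead'' ones. The structural part, up to and including the bound $\val(\mc G^{\otimes n}) \le (1-\epsilon)^T + 2(|\mc X||\mc A|)^T\alpha$, is correct.

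The flaw is in your absorption step. The mean-value-theorem lower bound $(1-\epsilon/2)^T-(1-\epsilon)^T \ge (\epsilon T/2)(1-\epsilon)^{T-1}$ degenerates as $\epsilon\to 1$: for $\epsilon=1$ the right side is $0$, while the residual $2\sqrt\alpha/4^T$ is positive. More quantitatively, with $T\approx c\log_2(1/\alpha)$ one has $2\sqrt\alpha/4^T \approx \alpha^{1/2+2c}$ versus $(1-\epsilon)^{T-1}\approx \alpha^{-c\log_2(1-\epsilon)}$, and the latter is \emph{smaller} once $\epsilon>1-(16|\mc X||\mc A|)^{-1}$. Invoking ``$n$ sufficiently large'' does not help, since neither side of the inequality you need depends on $n$. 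The paper sidesteps this by folding the low-probability nodes into each inductive step (so the per-step decay is $(1-\epsilon)+\epsilon\cdot(\text{mass in dead children})\le 1-\epsilon/2$ directly). An easy fix within your framework: use $(1-\epsilon)\le (1-\epsilon/2)^2$ to get $(1-\epsilon)^T\le(1-\epsilon/2)^{2T}$, and take $T=\lfloor \log_2(1/\alpha)/\log_2(4|\mc X||\mc A|)\rfloor$ (no factor of $2$); then $2T$ covers the target exponent and the residual $2(|\mc X||\mc A|)^T\alpha\le 2/4^T$ is easily absorbed into $(1-\epsilon/2)^T-(1-\epsilon/2)^{2T}$ uniformly in $\epsilon$.
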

\begin{proof}
	Let $X\in \mc X^{\otimes n}$ be the random variable denoting the questions to the players in $\mc G^{\otimes n}$, and let $A\in \mc A^{\otimes n}$ be the random variable denoting their answers with respect to a fixed optimal strategy.
	Define a sequence of random variables $J_1,\dots,J_n \in [n]$, and $Z_1,\dots,Z_n\in \mc X\times \mc A$ as follows: for each $i=1,2,\dots,n$, let $J_i\in [n]\setminus\set{J_1,\dots,J_{i-1}}$ be a coordinate with the lowest winning probability, conditioned on the value of $(Z_1,\dots,Z_{i-1})$; let $Z_i=(X_{J_i}, A_{J_i})$.
	For each $i\in [n]$, let $W_i$ denote the event $\Win_{J_i}$ of the players winning the coordinate $J_i$.
	
	Let $s = \abs{\mc X} \abs{\mc A}$, and $m= \lfloor \frac{\log_2(1/\alpha)}{\log_2(4s)} \rfloor < n$; assume $m\geq 2$ or else the theorem holds trivially from the lemma hypothesis by taking $E=\mc X^{\otimes n}$.
	We claim that for every integer $k\in [m]$, it holds that $\Pr[W_{\leq k}] \leq \brac{1-\epsilon/2}^k$, where $W_{\leq k} = W_1\land W_2\land\dots \land W_k $.
	Then, substituting $k=m$ gives the desired result.
	
	We prove above claim by induction on $k$.
	The base case $k=1$ follows from the lemma hypothesis by taking $E=\mc X^{\otimes n}$.
	For the inductive step, consider any $1\leq k\leq m-1$, and suppose that $\Pr[W_{\leq k}]\leq \brac{1-\epsilon/2}^k$.
	Further, we may assume that $\Pr[W_{\leq k}] \geq 1/2^{k+1}$, or else we already have $\Pr[W_{\leq k+1}] \leq \Pr[W_{\leq k}] \leq 2^{-(k+1)}\leq \brac{1-\epsilon/2}^{k+1}$.
	Now, observe that $W_{\leq k}$ depends deterministically on the random variables $Z_{\leq k} = (Z_1,\dots, Z_k)$.
	Let $\mc T$ denote the set of all such tuples $z_{\leq k}$ satisfying $W_{\leq k}$, and let $\mc T' \subseteq \mc T$ consist of $z_{\leq k}\in \mc T$ such that $\Pr[Z_{\leq k} = z_{\leq k}] \geq \alpha$; note that by the lemma hypothesis, for each $z_{\leq k}\in \mc T'$, it holds that $\Pr[W_{k+1} \st Z_{\leq k}=z_{\leq k}] \leq 1-\epsilon$, since $Z_{\leq k}=z_{\leq k}$ is a product event of measure at least $\alpha$.
	Hence,
	\begin{align*}
		\Pr\sqbrac{W_{k+1}\st W_{\leq k}} &= \sum_{z_{\leq k}\in \mc T} \Pr\sqbrac{W_{k+1}\st Z_{\leq k} = z_{\leq k}}\cdot \frac{\Pr\sqbrac{Z_{\leq k} = z_{\leq k}}}{\Pr\sqbrac{W_{\leq k}}}
		\\&\leq \sum_{z_{\leq k}\in \mc T'} (1-\epsilon)\cdot \frac{\Pr[Z_{\leq k} = z_{\leq k}]}{\Pr[W_{\leq k}]} + \sum_{z_{\leq k}\in \mc T\setminus \mc T'} 1\cdot  \frac{\Pr[Z_{\leq k} = z_{\leq k}]}{\Pr[W_{\leq k}]}
		\\&= (1-\epsilon) + \epsilon \cdot \sum_{z_{\leq k}\in \mc T\setminus \mc T'}  \frac{\Pr[Z_{\leq k} = z_{\leq k}]}{\Pr[W_{\leq k}]}
		\\&\leq (1-\epsilon) + \epsilon \cdot \frac{\alpha}{2^{-(k+1)}}\cdot s^k 
		\\&\leq (1-\epsilon) + \epsilon\cdot \alpha \cdot (2s)^m \leq (1-\epsilon) + \frac{\epsilon}{2}\cdot \alpha\cdot (4s)^m \leq 1-\frac{\epsilon}{2}.
	\end{align*}	
	This implies $\Pr[W_{\leq k+1}]=\Pr[W_{k+1}\st W_{\leq k}]\cdot \Pr[W_{\leq k}] \leq (1-\epsilon/2)\cdot (1-\epsilon/2)^k = (1-\epsilon/2)^{k+1}$.
\end{proof}

\bibliographystyle{alpha}
\bibliography{main.bib}

\newcommand{\etalchar}[1]{$^{#1}$}
\begin{thebibliography}{BKLM24b}

\bibitem[ABSS97]{ABSS97}
Sanjeev Arora, L{\'{a}}szl{\'{o}} Babai, Jacques Stern, and Z.~Sweedyk.
\newblock The hardness of approximate optima in lattices, codes, and systems of
  linear equations.
\newblock {\em J. Comput. Syst. Sci.}, 54(2):317--331, 1997.

\bibitem[AK09]{AK09}
Noga Alon and Bo'az Klartag.
\newblock Economical toric spines via {C}heeger's inequality.
\newblock {\em J. Topol. Anal.}, 1(2):101--111, 2009.

\bibitem[ALM{\etalchar{+}}98]{ALMSS98}
Sanjeev Arora, Carsten Lund, Rajeev Motwani, Madhu Sudan, and Mario Szegedy.
\newblock Proof verification and the hardness of approximation problems.
\newblock {\em J. {ACM}}, 45(3):501--555, 1998.

\bibitem[AS98]{AS98}
Sanjeev Arora and Shmuel Safra.
\newblock Probabilistic checking of proofs: {A} new characterization of {NP}.
\newblock {\em J. {ACM}}, 45(1):70--122, 1998.

\bibitem[BBCR13]{BBCR13}
Boaz Barak, Mark Braverman, Xi~Chen, and Anup Rao.
\newblock How to compress interactive communication.
\newblock {\em {SIAM} J. Comput.}, 42(3):1327--1363, 2013.

\bibitem[BBK{\etalchar{+}}25]{BBKLM25}
Amey Bhangale, Mark Braverman, Subhash Khot, Yang~P. Liu, and Dor Minzer.
\newblock Parallel repetition for 3-player {XOR} games.
\newblock In Michal Kouck{\'{y}} and Nikhil Bansal, editors, {\em Proceedings
  of the 57th Annual {ACM} Symposium on Theory of Computing, {STOC} 2025,
  Prague, Czechia, June 23-27, 2025}, pages 104--110. {ACM}, 2025.

\bibitem[BBLV13]{BBLV13}
Jop Bri\"{e}t, Harry Buhrman, Troy Lee, and Thomas Vidick.
\newblock Multipartite entanglement in {XOR} games.
\newblock {\em Quantum Inf. Comput.}, 13(3-4):334--360, 2013.

\bibitem[BG15]{BG15}
Mark Braverman and Ankit Garg.
\newblock Small value parallel repetition for general games.
\newblock In Rocco~A. Servedio and Ronitt Rubinfeld, editors, {\em Proceedings
  of the Forty-Seventh Annual {ACM} on Symposium on Theory of Computing, {STOC}
  2015, Portland, OR, USA, June 14-17, 2015}, pages 335--340. {ACM}, 2015.

\bibitem[BGKW88]{BGKW88}
Michael {Ben{-}Or}, Shafi Goldwasser, Joe Kilian, and Avi Wigderson.
\newblock Multi-prover interactive proofs: How to remove intractability
  assumptions.
\newblock In Janos Simon, editor, {\em Proceedings of the 20th Annual {ACM}
  Symposium on Theory of Computing, May 2-4, 1988, Chicago, Illinois, {USA}},
  pages 113--131. {ACM}, 1988.

\bibitem[BGS98]{BGS98}
Mihir Bellare, Oded Goldreich, and Madhu Sudan.
\newblock Free bits, pcps, and nonapproximability-towards tight results.
\newblock {\em {SIAM} J. Comput.}, 27(3):804--915, 1998.

\bibitem[BKLM24a]{BKLM24a}
Amey Bhangale, Subhash Khot, Yang~P. Liu, and Dor Minzer.
\newblock On approximability of satisfiable k-{CSP}s: {VI}, 2024.
\newblock Available at \url{https://arxiv.org/pdf/2411.15133}.

\bibitem[BKLM24b]{BKLM24b}
Amey Bhangale, Subhash Khot, Yang~P. Liu, and Dor Minzer.
\newblock On approximability of satisfiable k-{CSP}s: {VII}, 2024.
\newblock Available at \url{https://arxiv.org/pdf/2411.15136}.

\bibitem[BKLM24c]{BKLM24c}
Amey Bhangale, Subhash Khot, Yang~P. Liu, and Dor Minzer.
\newblock Reasonable bounds for combinatorial lines of length three, 2024.
\newblock Available at \url{https://arxiv.org/pdf/2411.15137}.

\bibitem[BKM23a]{BKM3}
Amey Bhangale, Subhash Khot, and Dor Minzer.
\newblock On approximability of satisfiable k-csps: {III}.
\newblock In Barna Saha and Rocco~A. Servedio, editors, {\em Proceedings of the
  55th Annual {ACM} Symposium on Theory of Computing, {STOC} 2023, Orlando, FL,
  USA, June 20-23, 2023}, pages 643--655. {ACM}, 2023.

\bibitem[BKM23b]{BKM2}
Amey Bhangale, Subhash Khot, and Dor Minzer.
\newblock On approximability of satisfiable $k$-{CSP}s: {II}.
\newblock In {\em S{TOC}'23---{P}roceedings of the 55th {A}nnual {ACM}
  {S}ymposium on {T}heory of {C}omputing}, pages 632--642. ACM, New York,
  [2023] \copyright 2023.

\bibitem[BKM23c]{BKM23ghz}
Mark Braverman, Subhash Khot, and Dor Minzer.
\newblock Parallel repetition for the {GHZ} game: Exponential decay.
\newblock In {\em 64th {IEEE} Annual Symposium on Foundations of Computer
  Science, {FOCS} 2023, Santa Cruz, CA, USA, November 6-9, 2023}, pages
  1337--1341. {IEEE}, 2023.

\bibitem[BKM24]{BKM4}
Amey Bhangale, Subhash Khot, and Dor Minzer.
\newblock On approximability of satisfiable k-csps: {IV}.
\newblock In Bojan Mohar, Igor Shinkar, and Ryan O'Donnell, editors, {\em
  Proceedings of the 56th Annual {ACM} Symposium on Theory of Computing, {STOC}
  2024, Vancouver, BC, Canada, June 24-28, 2024}, pages 1423--1434. {ACM},
  2024.

\bibitem[BKM25a]{BKM1}
Amey Bhangale, Subhash Khot, and Dor Minzer.
\newblock On approximability of satisfiable k-csps: {I}.
\newblock {\em Comput. Complex.}, 34(2):8, 2025.

\bibitem[BKM25b]{BKM5}
Amey Bhangale, Subhash Khot, and Dor Minzer.
\newblock On approximability of satisfiable k-csps: {V}.
\newblock In Michal Kouck{\'{y}} and Nikhil Bansal, editors, {\em Proceedings
  of the 57th Annual {ACM} Symposium on Theory of Computing, {STOC} 2025,
  Prague, Czechia, June 23-27, 2025}, pages 62--71. {ACM}, 2025.

\bibitem[BM21]{BMtilesym}
Mark Braverman and Dor Minzer.
\newblock Optimal tiling of the euclidean space using permutation-symmetric
  bodies.
\newblock In Valentine Kabanets, editor, {\em 36th Computational Complexity
  Conference, {CCC} 2021, July 20-23, 2021, Toronto, Ontario, Canada (Virtual
  Conference)}, volume 200 of {\em LIPIcs}, pages 5:1--5:48. Schloss Dagstuhl -
  Leibniz-Zentrum f{\"{u}}r Informatik, 2021.

\bibitem[BRR{\etalchar{+}}09]{BRRRS09}
Boaz Barak, Anup Rao, Ran Raz, Ricky Rosen, and Ronen Shaltiel.
\newblock Strong parallel repetition theorem for free projection games.
\newblock In {\em APPROX-RANDOM}, pages 352--365, 2009.

\bibitem[BRWY13]{BRWY13}
Mark Braverman, Anup Rao, Omri Weinstein, and Amir Yehudayoff.
\newblock Direct products in communication complexity.
\newblock In {\em 54th Annual {IEEE} Symposium on Foundations of Computer
  Science, {FOCS} 2013, 26-29 October, 2013, Berkeley, CA, {USA}}, pages
  746--755. {IEEE} Computer Society, 2013.

\bibitem[CHTW04]{CHTW04}
Richard Cleve, Peter H{\o}yer, Benjamin Toner, and John Watrous.
\newblock Consequences and limits of nonlocal strategies.
\newblock In {\em 19th Annual {IEEE} Conference on Computational Complexity
  {(CCC} 2004), 21-24 June 2004, Amherst, MA, {USA}}, pages 236--249. {IEEE}
  Computer Society, 2004.

\bibitem[DGKR05]{DGKR05}
Irit Dinur, Venkatesan Guruswami, Subhash Khot, and Oded Regev.
\newblock A new multilayered {PCP} and the hardness of hypergraph vertex cover.
\newblock {\em {SIAM} J. Comput.}, 34(5):1129--1146, 2005.

\bibitem[DHVY17]{DHVY17}
Irit Dinur, Prahladh Harsha, Rakesh Venkat, and Henry Yuen.
\newblock Multiplayer parallel repetition for expanding games.
\newblock In {\em ITCS}, pages Art. No. 37, 16, 2017.

\bibitem[DRS05]{DRS05}
Irit Dinur, Oded Regev, and Clifford~D. Smyth.
\newblock The hardness of $3$-uniform hypergraph coloring.
\newblock {\em Comb.}, 25(5):519--535, 2005.

\bibitem[DS14]{DS14}
Irit Dinur and David Steurer.
\newblock Analytical approach to parallel repetition.
\newblock In David~B. Shmoys, editor, {\em Symposium on Theory of Computing,
  {STOC} 2014, New York, NY, USA, May 31 - June 03, 2014}, pages 624--633.
  {ACM}, 2014.

\bibitem[Fei91]{Fei91}
Uriel Feige.
\newblock On the success probability of the two provers in one-round proof
  systems.
\newblock In {\em Proceedings of the Sixth Annual Structure in Complexity
  Theory Conference, Chicago, Illinois, USA, June 30 - July 3, 1991}, pages
  116--123. {IEEE} Computer Society, 1991.

\bibitem[Fei98]{Fei98}
Uriel Feige.
\newblock A threshold of $\ln n$ for approximating set cover.
\newblock {\em J. {ACM}}, 45(4):634--652, 1998.

\bibitem[FGL{\etalchar{+}}96]{FGLSS96}
Uriel Feige, Shafi Goldwasser, L{\'{a}}szl{\'{o}} Lov{\'{a}}sz, Shmuel Safra,
  and Mario Szegedy.
\newblock Interactive proofs and the hardness of approximating cliques.
\newblock {\em J. {ACM}}, 43(2):268--292, 1996.

\bibitem[FK91]{FK91}
H.~Furstenberg and Y.~Katznelson.
\newblock A density version of the {H}ales-{J}ewett theorem.
\newblock {\em J. Anal. Math.}, 57:64--119, 1991.

\bibitem[FKO07]{FKO07}
Uriel Feige, Guy Kindler, and Ryan O'Donnell.
\newblock Understanding parallel repetition requires understanding foams.
\newblock In {\em 22nd Annual {IEEE} Conference on Computational Complexity
  {(CCC} 2007), 13-16 June 2007, San Diego, California, {USA}}, pages 179--192.
  {IEEE} Computer Society, 2007.

\bibitem[For89]{For89}
Lance Fortnow.
\newblock {\em Complexity theoretic aspects of interactive proof systems}.
\newblock PhD thesis, MIT, 1989.

\bibitem[FRS94]{FRS94}
Lance Fortnow, John Rompel, and Michael Sipser.
\newblock On the power of multi-prover interactive protocols.
\newblock {\em Theor. Comput. Sci.}, 134(2):545--557, 1994.

\bibitem[FV02]{FV02}
Uriel Feige and Oleg Verbitsky.
\newblock Error reduction by parallel repetition---a negative result.
\newblock {\em Combinatorica}, 22(4):461--478, 2002.

\bibitem[GHM{\etalchar{+}}21]{GHMRZ21}
Uma Girish, Justin Holmgren, Kunal Mittal, Ran Raz, and Wei Zhan.
\newblock Parallel repetition for the {GHZ} game: {A} simpler proof.
\newblock In {\em APPROX-RANDOM}, pages 62:1--62:19, 2021.

\bibitem[GHM{\etalchar{+}}22]{GHMRZ22}
Uma Girish, Justin Holmgren, Kunal Mittal, Ran Raz, and Wei Zhan.
\newblock Parallel repetition for all 3-player games over binary alphabet.
\newblock In {\em STOC}, pages 998--1009, 2022.

\bibitem[GHS02]{GHS02}
Venkatesan Guruswami, Johan Hastad, and Madhu Sudan.
\newblock Hardness of approximate hypergraph coloring.
\newblock {\em {SIAM} J. Comput.}, 31(6):1663--1686, 2002.

\bibitem[GMRZ22]{GMRZ22}
Uma Girish, Kunal Mittal, Ran Raz, and Wei Zhan.
\newblock Polynomial bounds on parallel repetition for all 3-player games with
  binary inputs.
\newblock In {\em APPROX-RANDOM}, pages 6:1--6:17, 2022.

\bibitem[Has01]{Has01}
Johan Hastad.
\newblock Some optimal inapproximability results.
\newblock {\em J. {ACM}}, 48(4):798--859, 2001.

\bibitem[HHR16]{HHR16}
Jan H{\k a}z{\l}a, Thomas Holenstein, and Anup Rao.
\newblock Forbidden subgraph bounds for parallel repetition and the density
  hales-jewett theorem.
\newblock {\em CoRR}, abs/1604.05757, 2016.
\newblock Available at \url{http://arxiv.org/abs/1604.05757}.

\bibitem[Hol09]{Hol09}
Thomas Holenstein.
\newblock Parallel repetition: simplifications and the no-signaling case.
\newblock {\em Theory Comput.}, 5:141--172, 2009.
\newblock (also in STOC 2007).

\bibitem[HR20]{HR20}
Justin Holmgren and Ran Raz.
\newblock A parallel repetition theorem for the ghz game.
\newblock {\em arXiv preprint arXiv:2008.05059}, 2020.
\newblock Available at \url{https://arxiv.org/pdf/2008.05059.pdf}.

\bibitem[Kho02a]{Khot02a}
Subhash Khot.
\newblock Hardness results for approximate hypergraph coloring.
\newblock In John~H. Reif, editor, {\em Proceedings on 34th Annual {ACM}
  Symposium on Theory of Computing, May 19-21, 2002, Montr{\'{e}}al,
  Qu{\'{e}}bec, Canada}, pages 351--359. {ACM}, 2002.

\bibitem[Kho02b]{Khot02b}
Subhash Khot.
\newblock Hardness results for coloring 3 -colorable 3 -uniform hypergraphs.
\newblock In {\em 43rd Symposium on Foundations of Computer Science {(FOCS}
  2002), 16-19 November 2002, Vancouver, BC, Canada, Proceedings}, pages
  23--32. {IEEE} Computer Society, 2002.

\bibitem[KORW08]{KORW08}
Guy Kindler, Ryan O'Donnell, Anup Rao, and Avi Wigderson.
\newblock Spherical cubes and rounding in high dimensions.
\newblock In {\em 49th Annual {IEEE} Symposium on Foundations of Computer
  Science, {FOCS} 2008, October 25-28, 2008, Philadelphia, PA, {USA}}, pages
  189--198. {IEEE} Computer Society, 2008.

\bibitem[Mit25]{Mit25}
Kunal Mittal.
\newblock Multiplayer parallel repetition is the same as high-dimensional
  extremal combinatorics, 2025.
\newblock Available at \url{https://arxiv.org/abs/2510.24910}.

\bibitem[Mos10]{Mossel10}
Elchanan Mossel.
\newblock Gaussian bounds for noise correlation of functions.
\newblock {\em Geom. Funct. Anal.}, 19(6):1713--1756, 2010.

\bibitem[MR21]{MR21}
Kunal Mittal and Ran Raz.
\newblock Block rigidity: strong multiplayer parallel repetition implies
  super-linear lower bounds for {T}uring machines.
\newblock In {\em ITCS}, pages Art. No. 71, 15, 2021.

\bibitem[MU05]{MU05}
Michael Mitzenmacher and Eli Upfal.
\newblock {\em Probability and computing}.
\newblock Cambridge University Press, Cambridge, 2005.
\newblock Randomized algorithms and probabilistic analysis.

\bibitem[Pal64]{Pal64}
Ilona Pal\'{a}sti.
\newblock On the connectedness of bichromatic random graphs.
\newblock {\em Magyar Tud. Akad. Mat. Kutat\'{o} Int. K\"{o}zl.}, 8:431--441
  (1964), 1964.

\bibitem[Pol12]{DHJ12}
D.~H.~J. Polymath.
\newblock A new proof of the density {H}ales-{J}ewett theorem.
\newblock {\em Ann. of Math. (2)}, 175(3):1283--1327, 2012.

\bibitem[PRW97]{PRW97}
Itzhak Parnafes, Ran Raz, and Avi Wigderson.
\newblock Direct product results and the {GCD} problem, in old and new
  communication models.
\newblock In Frank~Thomson Leighton and Peter~W. Shor, editors, {\em
  Proceedings of the Twenty-Ninth Annual {ACM} Symposium on the Theory of
  Computing, El Paso, Texas, USA, May 4-6, 1997}, pages 363--372. {ACM}, 1997.

\bibitem[Rao11]{Rao11}
Anup Rao.
\newblock Parallel repetition in projection games and a concentration bound.
\newblock {\em {SIAM} J. Comput.}, 40(6):1871--1891, 2011.

\bibitem[Raz98]{Raz98}
Ran Raz.
\newblock A parallel repetition theorem.
\newblock {\em SIAM J. Comput.}, 27(3):763--803, 1998.
\newblock (also in STOC 1995).

\bibitem[Raz10]{Raz10}
Ran Raz.
\newblock Parallel repetition of two prover games.
\newblock In {\em CCC}, pages 3--6, 2010.

\bibitem[Raz11]{Raz11}
Ran Raz.
\newblock A counterexample to strong parallel repetition.
\newblock {\em SIAM J. Comput.}, 40(3):771--777, 2011.
\newblock (also in FOCS 2008).

\bibitem[RR12]{RR12}
Ran Raz and Ricky Rosen.
\newblock A strong parallel repetition theorem for projection games on
  expanders.
\newblock In {\em CCC}, pages 247--257, 2012.

\bibitem[Ver96]{V96}
Oleg Verbitsky.
\newblock Towards the parallel repetition conjecture.
\newblock {\em Theoret. Comput. Sci.}, 157(2):277--282, 1996.

\end{thebibliography}

\end{document}